\newtheorem{thm}{Theorem}
\newtheorem{cor}[thm]{Corollary}
\newtheorem{lem}[thm]{Lemma}
\newtheorem{prop}[thm]{Proposition}
\theoremstyle{remark}
\newtheorem{rem}{Remark}
\newtheorem{conv}{Convention}
\newtheorem{defn}{Definition}
\newtheorem{notat}{Notation}
\newcommand{\mb}{\mathbf}
\DeclareMathOperator*{\Cov}{Cov}
\DeclareMathOperator*{\Tr}{Tr}
\newcommand{\EE}{\mathbb{E}}
\newcommand{\uh}{\hat{U}}
\newcommand{\bsigma}{\mb{\Sigma}}
\DeclareMathOperator*{\BL}{\mathcal{G}_{BL}}
\newcommand{\1}{^{(1)}}
\newcommand{\2}{^{(2)}}
\newcommand{\n}{^{(n)}}
\newcommand{\half}{{\frac{1}{2}}}
\begin{document}
\title{%An Extremal Inequality in Information Theory Related to Brascamp-Lieb Type Inequalities
%and the Logarithmic Probability Comparison Bounds
Information-Theoretic
Perspectives on Brascamp-Lieb Inequality and Its Reverse
%On the Dual Form of Brascamp-Lieb Type Inequality and the Gaussian Optimality in the Related Problems.
%A Generalization of the Extremal Problem
%A Generalized Brascamp-Lieb Inequality and Its Information-Theoretic Formulations
}%
\author{Jingbo Liu~~~~~~~~~
Thomas A. Courtade~~~~~~~~~
Paul Cuff~~~~~~~~~
Sergio Verd\'{u}
}%

\maketitle
\newcommand\blfootnote[1]{%
  \begingroup
  \renewcommand\thefootnote{}\footnote{#1}%
  \addtocounter{footnote}{-1}%
  \endgroup
}
\blfootnote{Jingbo Liu, Paul Cuff and Sergio Verd\'{u} are with the Department of Electrical Engineering, Princeton University. Emails: \{{jingbo,cuff,verdu}@princeton.edu\}.
Thomas A.~Courtade is with the Department of Electrical Engineering and Computer Sciences,
University of California, Berkeley. Email: courtade@eecs.berkeley.edu.
This paper was presented in part at ISIT 2016.
}

\begin{abstract}
%Carlen and Cordero-Erausquin generalized
%the Brascamp-Lieb inequality from functional analysis,
%and showed its duality with the subadditivity of relative entropy.
%In turn, we give a further generalization of the duality
%result by allowing for random transformations.
We introduce an inequality which may be viewed as a generalization of both the Brascamp-Lieb inequality and its reverse (Barthe's inequality), and prove its information-theoretic (i.e.\ entropic) formulation.
This result leads to a unified approach to
functional inequalities such as
the variational formula of R\'enyi entropy,
hypercontractivity and its reverse, strong data processing inequalities,
and transportation-cost inequalities,
whose utility in the proofs of various coding theorems has gained growing popularity recently.
We show that our information-theoretic
setting is convenient for proving properties such as data processing, tensorization, convexity
(Riesz-Thorin interpolation) and Gaussian optimality. In particular, we elaborate on a ``doubling trick''
used by Lieb and Geng-Nair to prove several results on Gaussian optimality.
Several applications are discussed, including
a generalization of the Brascamp-Lieb inequality involving Gaussian random transformations,
the determination of Wyner's  common information
of vector Gaussian sources,
and the achievable rate region of certain key generation problems in the case of vector Gaussian sources.
%
%Recently among the information theory community, there is a growing interest in functional inequalities such as hypercontractivity, strong data processing inequality and the logarithmic probability comparison bounds, partly due to their utility in the proof of (usually the converse part of) coding theorems. In a unified way, we show that these inequalities have equivalent information-theoretic formulations, by generalizing a theorem of Carlen and Cordero-Erausquin on the duality of Brascamp-Lieb type inequality and the subadditivity of entropy. The information-theoretic formulations are more convenient for proving certain properties such as data processing, tensorization, convexity (Riesz-Thorin interpolation) and Gaussian optimality. In particular, we prove several results on the Gaussian optimality based on an approach of Geng and Nair, but with important simplifications/strengthenings which allow us to generalize to the case of multiple outputs. Several consequences of these results are discussed, including a generalization of Brascamp-Lieb inequality to invole Gaussian random transformations, and the determination of the multi-variate common information of Gaussian sources which was previously unknown.
\end{abstract}

\section{Introduction}\label{sec1}
The Brascamp-Lieb inequality and its reverse \cite{brascamp1976best} concern the optimality of Gaussian functions in a certain type of integral inequality.\footnote{Not to be confused with the ``variance Brascamp-Lieb inequality'' (cf.~\cite{brascamp1976extensions}\cite{bobkov2000}\cite{cordero2015transport}),
which is a different type of inequality that generalizes the Poincar\'e inequality.}
They have been generalized in various ways over the nearly 40 years since their discovery.
To be concrete, let us take a look at a modern formulation of the result from Barthe's paper \cite{barthe1998reverse}:\footnote{
\cite[Theorem~1]{barthe1998reverse} actually contains additional assumptions, which make the best constants $D$ and $F$ positive and finite, but not really necessary for the conclusion to hold (\cite[Remark~1]{barthe1998reverse}).}
\begin{thm}[{\cite[Theorem~1]{barthe1998reverse}}]
Let $E$, $E_1$, \dots, $E_m$ be Euclidean spaces, and $\mb{B}_i\colon E\to E_i$ be linear maps. Let $(c_i)_{i=1}^m$ and $D$ be positive real numbers.
Then the \emph{Brascamp-Lieb} inequality
\begin{align}
\int \prod_{i=1}^m f_i(\mb{B}_i\mb{x})\,{\rm d}\mb{x}
\le
D\prod_{i=1}^m \|f_i\|_{\frac{1}{c_i}},
\label{e_bl}
\end{align}
for all $f_i\in L^{\frac{1}{c_i}}(E_i)$, $i=1,\dots,m$, holds if and only if it holds whenever $f_i$, $i=1,\dots,m$ are centered Gaussian functions\footnote{A centered Gaussian function is of the form $\mb{x}\mapsto \exp(-\mb{x}^{\top}\mb{A}\mb{x})$ where $\mb{A}$ is a positive semidefinite matrix.}.
Similarly, for $F$ a positive real number,
the \emph{reverse} Brascamp-Lieb inequality\footnote{$\mb{B}_i^*$ denotes the adjoint of $\mb{B}_i$. In other words, the matrix of $\mb{B}_i^*$ is the transpose of the matrix of $\mb{B}_i$.}
\begin{align}
\int \sup_{(\mb{y}_i)\colon \sum_{i=1}^mc_i\mb{B}_i^*
\mb{y}_i=\mb{x}}
\prod_{i=1}^mf_i(\mb{y}_i)
\,{\rm d}\mb{x}
\ge
F\prod_{i=1}^m \|f_i\|_{\frac{1}{c_i}},
\label{e_barthe}
\end{align}
for all nonnegative $f_i\in L^{\frac{1}{c_i}}(E_i)$, $i=1,\dots,m$, holds if and only if it holds for all centered Gaussian functions.
\end{thm}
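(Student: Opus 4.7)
\emph{Proof proposal.} The ``only if'' direction is immediate since centered Gaussians are a special case, so I focus on the non-trivial ``if'' direction. The plan is to follow Barthe's mass-transportation argument, which handles both \eqref{e_bl} and \eqref{e_barthe} at once by reducing them to pointwise determinant inequalities already certified by the Gaussian hypothesis. By homogeneity, normalize so that $\|f_i\|_{1/c_i}=1$, whence $g_i := f_i^{1/c_i}$ is a probability density on $E_i$. Fix centered Gaussian probability densities $\gamma_i$ on $E_i$ and let $T_i\colon E_i\to E_i$ be the Brenier transport map pushing $\gamma_i$ onto $g_i$, so $T_i=\nabla\varphi_i$ for some convex $\varphi_i$ and $g_i(T_i(\mb{y}))\det\nabla T_i(\mb{y})=\gamma_i(\mb{y})$ a.e.

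Define the aggregated map $\Theta(\mb{y}):=\sum_{i=1}^m c_i \mb{B}_i^* T_i(\mb{B}_i \mb{y})$ on $E$; its differential $\nabla\Theta(\mb{y})=\sum_i c_i\mb{B}_i^* \nabla T_i(\mb{B}_i\mb{y})\mb{B}_i$ is symmetric positive definite (being a positive combination of such), so $\Theta$ is itself the gradient of a convex function and admits the change of variables $\mb{x}=\Theta(\mb{y})$. Evaluating \eqref{e_barthe} on Gaussian test functions $f_i(\mb{y}_i)=\exp(-\mb{y}_i^\top\mb{M}_i\mb{y}_i)$ reduces the Gaussian hypothesis to the pointwise determinant inequality
\begin{equation}
\det\Bigl(\sum_{i=1}^m c_i\mb{B}_i^*\mb{M}_i\mb{B}_i\Bigr)\ge F\prod_{i=1}^m(\det \mb{M}_i)^{c_i}
\label{e_det_plan}
\end{equation}
for all positive definite $\mb{M}_i$, and the analogous reverse determinant inequality (with the sharp constant of \eqref{e_bl}) follows from specializing \eqref{e_bl} to Gaussians. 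For \eqref{e_barthe} I would then lower bound the supremum at $\mb{x}=\Theta(\mb{y})$ by the feasible assignment $\mb{y}_i=T_i(\mb{B}_i\mb{y})$ (admissible by the very definition of $\Theta$), change variables, substitute $f_i=g_i^{c_i}$ and the pushforward identity, and invoke \eqref{e_det_plan} pointwise at $\mb{M}_i=\nabla T_i(\mb{B}_i\mb{y})$ to cancel the Jacobian factors. What remains is $F\int\prod_i\gamma_i(\mb{B}_i\mb{y})^{c_i}\,d\mb{y}=F$, which is the Gaussian case. The inequality \eqref{e_bl} is proved symmetrically: transport $g_i$ onto $\gamma_i$ instead, apply the same change of variables to $\int\prod f_i(\mb{B}_i\mb{x})d\mb{x}$, and invoke the reverse determinant bound.

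The principal technical obstacle will be analytic rigor: Brenier maps are only Alexandrov-differentiable almost everywhere, so neither $T_i$ nor $\Theta$ is a priori $C^1$, and both the change of variables and the pointwise determinant step must be justified on the full-measure set where things are regular. The standard remedy is to first approximate each $f_i$ by smooth, compactly supported, strictly positive densities (for which the transports become smooth and $\Theta$ is a diffeomorphism onto $E$), run the computation cleanly, and then pass to the limit using lower semicontinuity of the sup-convolution in \eqref{e_barthe} together with monotone/dominated convergence arguments for \eqref{e_bl}.
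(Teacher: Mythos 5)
You are reconstructing Barthe's mass-transportation argument, which is in fact the original proof of this quoted result (the paper itself does not prove the theorem; it cites \cite{barthe1998reverse} and only points to the transport-map proofs), so the overall route is the right one. However, the constant bookkeeping --- which is the whole content of an ``if and only if with the same constant'' statement --- does not close as written. First, specializing \eqref{e_barthe} to $f_i=\exp(-\mb{y}_i^{\top}\mb{M}_i\mb{y}_i)$ does not yield $\det\bigl(\sum_i c_i\mb{B}_i^*\mb{M}_i\mb{B}_i\bigr)\ge F\prod_i(\det\mb{M}_i)^{c_i}$: computing the sup-convolution and the norms, and substituting $\mb{A}_i=c_i\mb{M}_i^{-1}$, the Gaussian hypothesis is equivalent to $\det\bigl(\sum_i c_i\mb{B}_i^*\mb{A}_i\mb{B}_i\bigr)\ge F^2\,\pi^{\sum_i c_i\dim E_i-\dim E}\prod_i(\det\mb{A}_i)^{c_i}$; in particular the constant is $F^2$, not $F$, and one must first dispose of the case $\dim E\neq\sum_i c_i\dim E_i$, where a scaling of the $\mb{A}_i$ shows the Gaussian hypothesis cannot hold with any $F>0$ (so the claim is vacuous there). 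Second, and more seriously, your closing identity ``$F\int\prod_i\gamma_i(\mb{B}_i\mb{y})^{c_i}\,{\rm d}\mb{y}=F$'' is false for an arbitrary fixed choice of centered Gaussian reference densities: with $\gamma_i\propto\exp(-\mb{z}^{\top}\mb{A}_i\mb{z})$ and under the homogeneity condition, that integral equals $\bigl(\prod_i(\det\mb{A}_i)^{c_i}/\det(\sum_i c_i\mb{B}_i^*\mb{A}_i\mb{B}_i)\bigr)^{1/2}$, which can be arbitrarily small, and it is not bounded below by the Gaussian case of either inequality (the Gaussian case of \eqref{e_bl} bounds it from above, not below). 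The argument only closes if you choose the $\gamma_i$ to (nearly) minimize the ratio $R(\mb{A}):=\det(\sum_i c_i\mb{B}_i^*\mb{A}_i\mb{B}_i)/\prod_i(\det\mb{A}_i)^{c_i}$: writing $K:=\inf R\ge F^2$, the transport step gives $\mathrm{LHS}\ge K\cdot R(\mb{A})^{-1/2}$, and letting $R(\mb{A})\downarrow K$ yields $K^{1/2}\ge F$. With ``fix centered Gaussian probability densities $\gamma_i$'' left arbitrary and the final integral claimed to equal one, your chain of inequalities produces a constant that may be strictly smaller than $F$.

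The forward case has the mirror-image of the same gap: specializing \eqref{e_bl} to Gaussians again gives a \emph{lower} bound $\det(\sum_i c_i\mb{B}_i^*\mb{A}_i\mb{B}_i)\ge D^{-2}\prod_i(\det\mb{A}_i)^{c_i}$ (there is no ``reverse determinant bound'' in the other direction to invoke), the transported integral is then controlled by $K^{-1}$ times a Gaussian sup-convolution integral equal to $R(\mb{A})^{1/2}$ for the chosen reference Gaussians, and one must again take near-extremal $\mb{A}_i$ to obtain $K^{-1/2}\le D$. Note also an asymmetry your sketch glosses over: the forward direction needs (essential) injectivity of $\Theta$ for the change of variables to go the right way, whereas the reverse direction only needs $\Theta(E)\subseteq E$ and the a.e.\ Jacobian inequality for Alexandrov derivatives; your smoothing remark addresses the regularity of the Brenier maps but should be extended to cover this point. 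All of this is fixable --- it is precisely what Barthe's proof does --- but as written the two key steps (the determinant inequality with constant $F$, and the evaluation of the final Gaussian integral as $1$) are incorrect, and the essential extremal choice of the reference Gaussians is missing.
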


Motivated by the problem of determining the exact constant in the sharp Young inequality, the original paper by Brascamp and Lieb \cite{brascamp1976best} considered \eqref{e_bl} with  one-dimensional linear projections $(\mb{B}_j)_{j=1}^m$, and showed that the inequality holds if and only if it holds for all real, centered Gaussian functions $(f_j)_{j=1}^m$.
Their proof is based on rearrangement inequalities and hinges on the fact that the linear projections are one-dimensional.
A reverse form of the sharp Young inequality, which is a special case of \eqref{e_barthe}, is also discussed in \cite{brascamp1976best}.

The Gaussian optimality result of \cite{brascamp1976best}
was later generalized by Lieb \cite{lieb1990gaussian} so that $(\mb{B}_j)_{j=1}^m$ in \eqref{e_bl} can be arbitrary surjective linear maps. Actually, Lieb's result \cite[Theorem~6.2]{lieb1990gaussian} covers complex functions and kernels,
including the important case of Fourier kernels.
We will only be concerned with real functions and kernels in the present paper.
%suppose
%$\mb{B}_{j}\colon\mathbb{R}^n\to\mathbb{R}^{n_{j}}$  Let $m,n\ge 1$ and $n_1,\dots,n_m\in\{1,2,\dots,\infty\}$ be integers, and $\mb{J}$ be a positive semidefinite matrix. Assume that  and $p_{j}>1$, for $j\in\{1,\dots,m\}$. Then, for any nonnegative functions $f_1,\dots,f_m$ with $f_{j}\in L^{p_{j}}(\mathbb{R}^{n_{j}})$,
%\begin{align}
%\int_{\mathbb{R}^n}\prod_{j=1}^m f_{j}(\mb{B}_{j}\mb{x})\exp(-\mb{x}^{\top}\mb{J}\mb{x})
%{\rm d}\mb{x}
%\le D\prod_{j=1}^m \|f_{j}\|_{p_{j}},
%\label{e_lieb}
%\end{align}
%where $D$ is the best constant such that \eqref{e_lieb} holds for all real, centered Gaussian $f_1,\dots,f_m$. Put in another way, \cite[Theorem~6.2]{lieb1990gaussian} says that \eqref{e_lieb} holds provided that it holds for all real, centered Gaussian $f_1,\dots,f_m$. A simple scaling argument shows that if $\mb{J}=\mb{0}$, then \eqref{e_lieb} cannot possibly hold unless $\sum_{j=1}^m \frac{n_{j}}{p_{j}}=n$.
Lieb's proof
%of the full version \eqref{e_lieb} in \cite[Theorem~6.2]{lieb1990gaussian}
used, among other things, a rotational invariance property of Gaussian random variables (also called ``doubling trick'' in \cite{carlen1991superadditivity}) which will also play a role in this paper.
For a result as fundamental as Lieb's theorem with far-reaching consequences,
alternative proof methods have received considerable attention,
including methods based on measure-preserving maps \cite{barthe1998optimal} \cite{barthe1998reverse}\footnote{See also Theorem~7 in the arXiv version of \cite{barthe1998reverse} for a beautiful result that contains \eqref{e_bl} and \eqref{e_barthe} as a limiting case (which, however, appears to be unrelated to our forward-reverse Brascamp-Lieb inequality in Section~\ref{sec_fr}).}, heat flow \cite{carlen2004sharp} \cite{carlen2009subadditivity} and the related semigroup argument \cite{barthe2010correlation}; see \cite[Remark~1.10]{bennett2008brascamp} for a brief account of the history. It is enlightening to summarize the properties of the Gaussian distribution which play a role in those proofs of Gaussian optimality:
\begin{itemize}
  \item \cite{brascamp1976best}  The tensor power of a one-dimensional Gaussian distribution is a multidimensional Gaussian distribution, which is stable under Schwarz symmetrization (i.e.~spherically decreasing rearrangement).
  \item \cite{lieb1990gaussian}  Rotational invariance: if $f$ is a one-dimensional Gaussian function, then
      \begin{align}
      f(x)f(y)
      =f\left(\frac{x-y}{\sqrt{2}}\right)
      f\left(\frac{x+y}{\sqrt{2}}\right).
      \end{align}
  \item \cite[Lemma~2]{barthe1998optimal} The convolution of Gaussian functions is Gaussian.
  \item \cite{carlen2009subadditivity} If a real valued random variable is added to an independent Gaussian noise, then the derivative of the differential entropy of the sum with respect to the variance of the noise is half the Fisher information (de Bruijn's identity), and of course the non-Gaussianness of the sum eventually disappears as the variance goes to infinity.
\end{itemize}

Inequality \eqref{e_bl} can be seen as a generalization of several other inequalities, including H\"{o}lder's inequality, the sharp Young inequality, the Loomis-Whitney inequality, the entropy power inequality (cf.~\cite{bennett2008brascamp} or the survey paper \cite{gardner2002brunn}), hypercontractivity and the logarithmic Sobolev inequality \cite{gross1975logarithmic}.
To see its connection to the sharp Young inequality, for example, consider \eqref{e_bl} with $\mb{B}_1,\mb{B}_2,\mb{B}_3$ being the following linear transforms from $\mathbb{R}^2$ to $\mathbb{R}$:
\begin{align}
\mb{B}_1&\colon (x_1,x_2)\mapsto x_1;
\\
\mb{B}_2&\colon (x_1,x_2)\mapsto x_1-x_2;
\\
\mb{B}_3&\colon (x_1,x_2)\mapsto x_2.
\end{align}
Then \eqref{e_bl} becomes an upper-bound on the inner product
\begin{align}
(f_1*f_2,f_3)&=\int\int f_1(x_1)f_2(x_1-x_2)f_3(x_2){\rm d}x_1{\rm d}x_2
\\
&\le D \|f_1\|_{p_1}\|f_2\|_{p_2}\|f_3\|_{p_3}
\end{align}
for appropriate values of $p_1,p_2,p_3$ and $D$, which is equivalent to the sharp Young inequality in view of the duality of the Banach spaces $L^{p_3}$ and $L^{q_3}$ where $q_3:=\frac{p_3}{p_3-1}$.
As observed by Dembo, Cover and Thomas \cite[Theorem~12]{DCT91},
the sharp Young inequality admits an equivalent formulation in terms of the R\'{e}nyi differential entropy, since the R\'{e}nyi differential entropy is (up to a factor) the logarithm of the norm of the density function of a random variable, and additions of vector-valued random variables translate to convolutions of their density functions.
As the orders of the R\'{e}nyi differential entropies converge to $1$, the well-known entropy power inequality is recovered.
Another information-theoretic implication of Lieb's result \cite[Theorem~6.2]{lieb1990gaussian}
is the Beckner-Hirschman inequality (also known as the entropic uncertainty principle, which strengthens the well-known Weyl-Heisenberg uncertainty principle).
This can be shown by specializing Lieb's theorem to the Fourier kernel to obtain the sharp Hausdorff-Young's inequality and then applying a differentiation argument \cite{beckner1975inequalities}.

A deeper and more general connection between \eqref{e_bl} and information measures was observed by Carlen and Cordero-Erausquin
\cite[Theorem~2.1]{carlen2009subadditivity} (see also \cite{carlen2004sharp} for a preliminary version on the sphere with $2$-norms).
By cleverly using the nonnegativity of relative entropy,
it is revealed that such a ``submultiplicativity'' of norms is equivalent to a superadditivity property of relative entropies with corresponding coefficients.
More precisely, \cite[Theorem~2.1]{carlen2009subadditivity}
states that \eqref{e_bl} is equivalent to
\begin{align}
-h(P_{\bf X})+\log D\ge -\sum_{j=1}^m c_j h(P_{{\bf Y}_j})
\end{align}
for all continuous probability measure $P_{\bf X}$ on $E$, where $h(\cdot)$ denotes the differential entropy, and $P_{{\bf Y}_j}$ is induced by the map $\mb{x}\mapsto\mb{B}_j\mb{x}$ (i.e.\ the push-forward).
This connection is less intuitive than the aforementioned connection between the Brascamp-Lieb inequality and R\'enyi differential entropy inequalities discussed in
\cite[Theorem~12]{DCT91}, in the sense that the functions in the functional inequalities cannot be interpreted as the probability densities in the corresponding information-theoretic inequality.
This connection is also very general, since in order for it to hold,
\begin{itemize}
\item The random variables in \eqref{e_bl} can be arbitrary rather than living on a space with additive structure.
\item The reference measures in \eqref{e_bl} need not be Gaussian or Lebesgue.
\item $(\mb{B}_j)_{j=1}^m$ may be replaced with arbitrary maps.
\end{itemize}
In \cite{ISIT_lccv2016} we referred to such an extension as a \emph{Brascamp-Lieb like inequality}\footnote{In the literature, e.g.~\cite{carlen2009subadditivity}, this has been referred to as ``Brascamp-Lieb type inequality''; we adopt a different name here to avoid possible connotations with the method of types in information theory.},
in order to distinguish it from the conventional notion of the Brascamp-Lieb inequality, which refers to the Gaussian optimality in \eqref{e_bl} in the case of Gaussian or Lebesgue measures and linear maps.
Using such a relation in conjunction with the superadditivity of Fisher's information,
Carlen and Cordero-Erausquin proposed a proof of the Brascamp-Lieb inequality with which the uniqueness of the extremizer in the inequality is simple to establish.
Similar connections between functional inequalities and information measures may be traced further back.
For example, in \cite[Theorem~5]{ahlswede1976spreading}
Ahlswede and G\'{a}cs proved an equivalent formulation of the \emph{strong data processing inequality} \cite[P45]{csiszar2011information} in terms of a functional inequality.
Indeed, we shall see that the results of Ahlswede-G\'{a}cs and Carlen--Cordero-Erausquin can in fact be subsumed in a common framework.
As for the reverse Brascamp-Lieb inequality, Lehec \cite[Theorem~18]{lehec2010representation} essentially proved in a special setting that it is implied by an entropic inequality but did not prove the converse implication (which, as we shall see, is the more nontrivial direction).
Due in part to their utility in establishing impossibility bounds,
these functional inequalities have attracted a lot of attention in information theory
\cite{erkip98}\cite{courtade2013outer}\cite{PW15}\cite{pw_2015}\cite{Liu}\cite{liu2015key}\cite{xu15}\cite{kamath2015non},
theoretical computer science
\cite{kahn1988influence}\cite{ganor2014exponential}\cite{dvir2014sylvester}\cite{braverman2015communication}\cite{garg2016algorithmic},
and statistics
\cite{talagrand1994russo}\cite{friedgut2002boolean}\cite{bourgain2002distribution}\cite{mossel2010noise}\cite{garban2010fourier}\cite{duchi13},
to name only a small subset of the literature.

In this paper, the connections between functional inequalities and information-theoretic (i.e.\ entropic) inequalities are further explored.
We propose a new inequality that generalizes both \eqref{e_bl} and \eqref{e_barthe}, and prove its properties using information-theoretic methods.
The organization is as follows.
In Section~\ref{sec_dual} we prove an extension of the duality of Carlen and Cordero-Erausquin, with a functional inequality that generalizes \eqref{e_bl} by allowing a cost function and non-deterministic transformations.
Both generalizations are essential for certain information-theoretic applications.
In Section~\ref{sec_fr}, a ``forward-reverse Brascamp-Lieb inequality'' is introduced, and we prove its information-theoretic formulation.
Although such an inequality essentially generalizes the forward inequality, the proof of its equivalent formulation is more involved and applies only to certain ``regular'' (though fairly general) spaces\footnote{More precisely, the ``entropic$\Rightarrow$functional''
direction is not more difficult than the case of forward inequality, but the ``functional$\Rightarrow$entropic'' direction requires sophisticated min-max theorems and is only proved in for Polish spaces.
In the finite alphabet case, the latter difficulty can be circumvented by using KKT conditions \cite{ISIT_lccv2016}.}.
Section~\ref{sec_special} discusses how the duality result unifies/generalizes the equivalent formulations of
R\'{e}nyi divergence,
the strong data processing inequality, hypercontractivity and its reverse (with positive or negative parameters), Loomis-Whitney inequality/Shearer's lemma, and transportation-cost inequalities, which have been proved by different methods (see for example \cite{atar2014information}\cite{ahlswede1976spreading}\cite{rad03}\cite{mt10}\cite{nair}\cite{beigi2016}\cite{bobkov1999}). The relationship among these inequalities is illustrated in Figure~\ref{fig_relation}.
In some of these examples (e.g.~strong data processing \cite{ahlswede1976spreading}) the previous approach relies heavily on the finiteness of the alphabet, whereas the present approach (essentially based on the nonnegativity of the relative entropy) is simpler and holds for general alphabets.

Sections~\ref{sec_ele}-\ref{sec_gaussian} illustrate several advantages of the information-theoretic formulation.
Data processing property, tensorization, and convexity are studied in Section~\ref{sec_ele}. Section~\ref{sec_gaussian} proves the Gaussian optimality in some information-theoretic optimization problems related to the dual (i.e.\ entropic) form of the Brascamp-Lieb inequality.
These can be viewed as generalizations of \eqref{e_bl} where the deterministic linear maps are replaced by Gaussian random transformations\footnote{That is, a random transformation $\mb{x}\mapsto \mb{A}\mb{x}+\mb{w}$ where $\mb{A}$ is deterministic and $\mb{w}$ is a Gaussian vector independent of $\bf x$.}.
In most cases, we are able to prove the Gaussian extremality and uniqueness of the minimizer under a certain non-degenerate assumption, while establishing the Gaussian exhaustibility in full generality\footnote{See the beginning of Section~\ref{sec_gaussian} for precise definitions of extremisability and exhaustibility.}.
In Section~\ref{sec_frg} we further establish the Gaussian optimality in the forward-reverse Brascamp-Lieb inequality.

Section~\ref{sec_consequence} discusses several implications of the Gaussian optimality results: some quantities/rate regions arising in information theory can be efficiently computed by solving a finite dimensional optimization problem in the Gaussian cases. Examples include multi-variate hypercontractivity, Wyner's common information for multiple variables, and certain secret key or common randomness generation problems. The relationship between the Gaussian optimality in the forward-reverse Brascamp-Lieb inequality and the transportation-cost inequalities for Gaussian measures is also discussed.

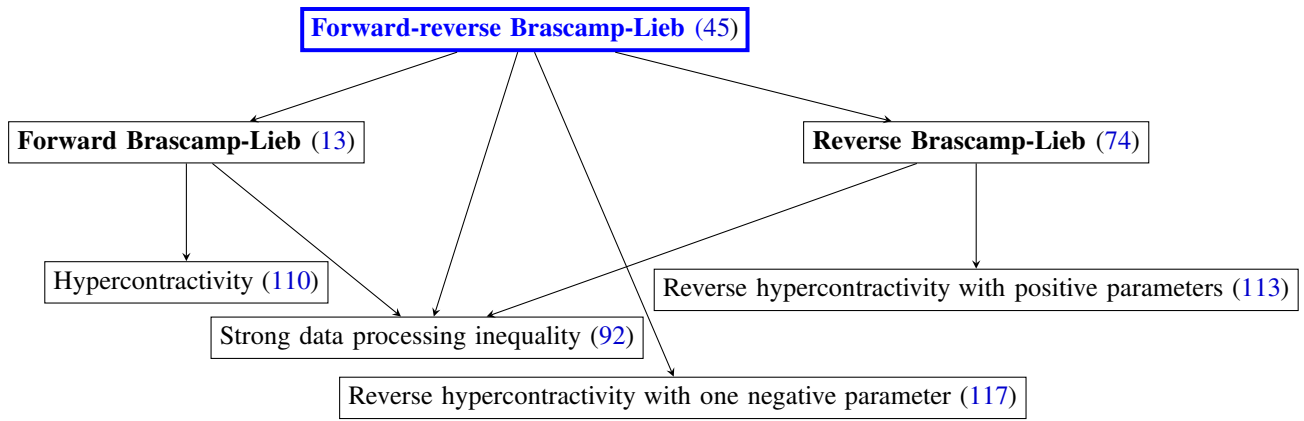
\begin{figure}[h!]
  \centering
\begin{tikzpicture}
[scale=2,
      dot/.style={draw,fill=black,circle,minimum size=0.7mm,inner sep=10pt},arw/.style={->,>=stealth}]
    \node[rectangle,draw,color=blue, line width=1.5pt] (X) {{\bf Forward-reverse Brascamp-Lieb \eqref{e_frbl_func}}};
    \node[rectangle,draw] (BL) [xshift=-45mm, yshift=-15mm]  {\bf Forward Brascamp-Lieb \eqref{e_func}};
    \node[rectangle,draw] (SDPI) [xshift=-13mm, yshift=-41mm]  {Strong data processing inequality \eqref{e_sdpi_func}};
    \node[rectangle,draw] (RHCN) [xshift=21mm,
    yshift=-49mm] {Reverse hypercontractivity with one negative parameter \eqref{e_rhcn}};
    \node[rectangle,draw] (RBL) [xshift=60mm,
    yshift=-15mm] {\bf Reverse Brascamp-Lieb \eqref{e_rbl}};
%    \node[rectangle,draw] (BL) [below=of SBL]
%    {BL};
    \node[rectangle,draw] (HC) [below=13mm of BL]
    {Hypercontractivity \eqref{e_hc}};
%    \node[rectangle,draw] (RBL) [below=of SRBL]
%    {RBL};
    \node[rectangle,draw] (RHC) [below=14mm of RBL]
    {Reverse hypercontractivity with positive parameters \eqref{e_rhc}};
  \draw [arw] (X) to node[midway,above]{} (BL);
  \draw [arw] (X) to node[midway,left]{} (SDPI);
  \draw [arw] (X) to node[midway,above]{} (RBL);
  \draw [arw] (X) to node[midway]{} (RHCN);
  \draw [arw] (BL) to node[] {} (HC);
  \draw [arw] (RBL) to node[] {} (SDPI);
  \draw [arw] (RBL) to node[] {} (RHC);
  \draw [arw] (BL) to node[] {} (SDPI);
\end{tikzpicture}
\caption{The ``partial relations'' between the inequalities discussed in %Section~\ref{sec_special}
this paper with respect to implication.}
\label{fig_relation}
\end{figure}

\section{Dual Formulation of the Forward Brascamp-Lieb
Inequality}\label{sec_dual}
In this section we introduce a generalization of the forward Brascamp-Lieb inequality allowing cost functions and non-deterministic transformations, and prove its equivalent entropic formulation.
This will set the stage for the forward-reverse Brascamp-Lieb inequality to be discussed in Section~\ref{sec_fr}.

Given two nonnegative $\sigma$-finite measures\footnote{We shall use Greek letters to indicate unnormalized non-negative measures, and use capital English letters such as $P$ and $Q$ for probability measures.}
$\theta\ll \mu$ on $\mathcal{X}$, define the \emph{relative information} as the logarithm of the Radon-Nikodym derivative:
\begin{align}
\imath_{\theta\|\mu}(x):=\log\frac{{\rm d}\theta}{{\rm d}\mu}(x)
\end{align}
where $x\in\mathcal{X}$. Note that there is no assumption about $|\mathcal{X}|$.
The relative entropy between a probability measure $P$ and a $\sigma$-finite measure $\mu$ on the same measurable space is defined as
\begin{align}
D(P\|\mu):=\mathbb{E}\left[\imath_{P\|\mu}(X)\right]
\label{e8}
\end{align}
where $X\sim P$, if $P\ll \mu$, and infinity otherwise.

\begin{thm}\label{thm_1}
Fix $Q_X$, integer $m\in\{1,2,\dots\}$, and $Q_{Y_j|X}$, $c_j\in(0,\infty)$ for $j\in\{1,\dots,m\}$. Let $(X,Y_j)\sim Q_XQ_{Y_j|X}$. Assume that $d\colon\mathcal{X}\to(-\infty,\infty]$ is a measurable function satisfying
\begin{align}
0<\mathbb{E}[\exp(-d(X))]< \infty.
\label{e51}
\end{align}
The following statements are equivalent:
\begin{enumerate}
  \item For any non-negative measurable functions $f_j\colon\mathcal{Y}_j\to \mathbb{R}$, $j\in\{1,\dots,m\}$, it holds that
      \begin{align}
      \mathbb{E}\left[\exp\left(\sum_{j=1}^m\mathbb{E}[\log f_j(Y_j)|X]-d(X)\right)\right]
      &\le \prod_{j=1}^m\|f_j\|_{\frac{1}{c_j}}\label{e_func}
      \end{align}
      where the norm $\|f_j\|_{\frac{1}{c_j}}$ is with respect to $Q_{Y_j}$.
  \item For any distribution $P_X\ll Q_X$, it holds that
      \begin{align}
      D(P_X||Q_X)+\mathbb{E}[d(\hat{X})]\ge\sum_{j=1}^m c_jD(P_{Y_j}||Q_{Y_j})
        \label{e_info}
      \end{align}
      where $\hat{X}\sim P_X$, and $P_X\to Q_{Y_j|X}\to P_{Y_j}$ for $j\in\{1,\dots,m\}$.
\end{enumerate}
\end{thm}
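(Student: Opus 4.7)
The plan is to treat the equivalence as two applications of the Gibbs (Donsker--Varadhan) variational identity
\[
\log\mathbb{E}_Q[e^\phi]=\sup_{P\ll Q}\bigl\{\mathbb{E}_P[\phi]-D(P\|Q)\bigr\},
\]
once against the pair $(P_X,Q_X)$ on $\mathcal{X}$, and once against each pair $(P_{Y_j},Q_{Y_j})$ on $\mathcal{Y}_j$. The hypothesis $0<\mathbb{E}[\exp(-d(X))]<\infty$ in \eqref{e51} plays the role of a regularity condition on the cost $d$ that keeps these variational expressions well-defined and avoids $\infty-\infty$ ambiguities.

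For the implication $(2)\Rightarrow(1)$, I would assume $0<\|f_j\|_{1/c_j}<\infty$ (the other cases are trivial) and set $\phi(x):=\sum_{j=1}^m\mathbb{E}[\log f_j(Y_j)\,|\,X=x]-d(x)$. Taking logarithms of the left-hand side of \eqref{e_func} and applying Donsker--Varadhan on $\mathcal{X}$ rewrites it as $\sup_{P_X\ll Q_X}\{\mathbb{E}_{P_X}[\phi]-D(P_X\|Q_X)\}$. The tower property collapses $\mathbb{E}_{P_X}[\mathbb{E}[\log f_j(Y_j)|X]]$ into $\mathbb{E}_{P_{Y_j}}[\log f_j(Y_j)]$, so the bracket becomes $\sum_j\mathbb{E}_{P_{Y_j}}[\log f_j]-\mathbb{E}_{P_X}[d]-D(P_X\|Q_X)$. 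Now use \eqref{e_info} to replace $D(P_X\|Q_X)+\mathbb{E}_{P_X}[d]$ by $\sum_j c_jD(P_{Y_j}\|Q_{Y_j})$, and then apply Donsker--Varadhan a second time on each $\mathcal{Y}_j$ with the test function $c_j^{-1}\log f_j$ to bound each resulting summand by $c_j\log\mathbb{E}_{Q_{Y_j}}[f_j^{1/c_j}]=\log\|f_j\|_{1/c_j}$. Summing in $j$ recovers \eqref{e_func}.

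For the converse $(1)\Rightarrow(2)$, given $P_X\ll Q_X$ with induced marginals $P_{Y_j}$, I would specialize \eqref{e_func} to the \emph{canonical} test functions $f_j:=(dP_{Y_j}/dQ_{Y_j})^{c_j}$, which satisfy $\|f_j\|_{1/c_j}=1$ and $\log f_j(Y_j)=c_j\,\imath_{P_{Y_j}\|Q_{Y_j}}(Y_j)$. The functional inequality then reduces to
\[
\mathbb{E}_{Q_X}\Bigl[\exp\Bigl(\textstyle\sum_j c_j\mathbb{E}[\imath_{P_{Y_j}\|Q_{Y_j}}(Y_j)|X]-d(X)\Bigr)\Bigr]\le 1.
\]
A single Donsker--Varadhan step against $(P_X,Q_X)$ with this exponent, together with the tower identity $\mathbb{E}_{P_X}\mathbb{E}[\imath_{P_{Y_j}\|Q_{Y_j}}(Y_j)|X]=D(P_{Y_j}\|Q_{Y_j})$, produces
\[
0\ge \sum_j c_jD(P_{Y_j}\|Q_{Y_j})-\mathbb{E}_{P_X}[d(X)]-D(P_X\|Q_X),
\]
which is exactly the rearrangement of \eqref{e_info}.

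The main obstacle here is bookkeeping rather than conceptual: one must check in each step that the Donsker--Varadhan identity applies, i.e.~that the relevant exponents are measurable and that the expectations and logarithms are well-defined in $[-\infty,+\infty]$. The two-sided hypothesis \eqref{e51} is the lever for controlling $d$; for the canonical choice $f_j=(dP_{Y_j}/dQ_{Y_j})^{c_j}$ one also needs the natural conventions $0\log 0=0$ and $e^{-\infty}=0$ so that null sets on which the Radon--Nikodym derivative vanishes or blows up cause no trouble. Beyond these routine verifications, the entire equivalence is driven by the Gibbs variational principle.
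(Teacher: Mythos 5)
Your proposal is correct and follows essentially the same route as the paper: the canonical choice $f_j=\left(\frac{{\rm d}P_{Y_j}}{{\rm d}Q_{Y_j}}\right)^{c_j}$ for $1)\Rightarrow 2)$ and the tilted measures implicit in your Donsker--Varadhan steps for $2)\Rightarrow 1)$ are exactly the auxiliary measures the paper constructs, the paper merely unpacking each variational step into ``define a tilted measure and use nonnegativity of relative entropy'' (a correspondence it acknowledges in a remark). The only extra care in the paper is the bookkeeping you flag: it first reduces $2)\Rightarrow 1)$ to $f_j$ bounded away from $0$ and $\infty$ and passes to limits by monotone convergence, and in $1)\Rightarrow 2)$ it treats separately the degenerate case where the normalizing constant $\exp(d_0)$ vanishes, matching the conventions of Remark~\ref{rem6}.
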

\begin{proof}
\begin{itemize}
  \item 1)$\Rightarrow$2) Define
  \begin{align}
  d_0:=\log\mathbb{E}\left[\exp(-d(X))
  \prod_{j=1}^m\exp(c_j\mathbb{E}
  [\imath_{P_{Y_j}||Q_{Y_j}}(Y_j)|X])\right].
  \end{align}
    Invoking statement 1) with
\begin{align}
    f_j\leftarrow \left(\frac{{\rm d}P_{Y_j}}{{\rm d}Q_{Y_j}}\right)^{c_j}
    \label{e_f}
\end{align}
    we obtain
      \begin{align}\label{e_31}
  \exp(d_0)\le \prod_{j=1}^m\left(\mathbb{E}\left[\frac{{\rm d}P_{ Y_j}}{{\rm d}Q_{Y_j}}(Y_j)\right]\right)^{c_j}=1.
  \end{align}
  Now if
  $0<\exp(d_0)\le 1$ then
  \begin{align}
  {\rm d}S_X(x):=\exp(-d(x)-d_0)\prod_{j=1}^m\exp(c_j\mathbb{E}
  [\imath_{P_{Y_j}||Q_{Y_j}}(Y_j)|X=x]){\rm d}Q_X(x)
  \end{align}
  is a probability measure.
  Then \eqref{e_31} combined with the nonnegativity of relative entropy shows that
  \begin{align}
  \sum_{j=1}^mc_jD(P_{Y_j}||Q_{Y_j})
  &\le D(P_X||S_X)-d_0+\sum_{j=1}^mc_jD(P_{Y_j}||Q_{Y_j})
  \\
  &=D(P_X||Q_X)+\mathbb{E}[d(\hat{X})]
  \label{e_32}
  \end{align}
    and statement 2) holds.
  On the other hand, if $\exp(d_0)=0$, then for $Q_X$-almost all $x$,
\begin{align}
  \exp(-d(x))\prod_{j=1}^m\exp(c_j\mathbb{E}
  [\imath_{P_{Y_j}||Q_{Y_j}}(Y_j)|X=x])=0.
\end{align}
    Taking logarithms on both sides and taking the expectation with respect to $P_X$, we have
  \begin{align}
  -\mathbb{E}[d(\hat{X})]+\sum_{j=1}^m c_jD(P_{ Y_j}||Q_{Y_j})=-\infty
  \end{align}
  and statement 2) also follows.

  \item 2)$\Rightarrow$1) It suffices to prove for $f_j$'s such that $0 < a < f_j < b < \infty$ for some $a$ and $b$, since the general case will then follow by taking limits (e.g.~using monotone convergence theorem).
      By this assumption and \eqref{e51}, we can always define $P_X$ through
  \begin{align}
\imath_{P_X\|Q_X}(x)=-d(x)-d_0+\mathbb{E}
\left[\left.\sum_{j=1}^m\log f_j(Y_j)\right|{ X=x}\right]
\label{e12}
\end{align}
and $S_{Y_j}$ through
  \begin{align}
  \imath_{S_{Y_j}\|Q_{Y_j}}({ y_j}):=\frac{1}{c_j}\log f_j({ y_j})-d_j,
  \end{align}
  for each $j\in\{1,\dots,m\}$,
  where $d_j\in\mathbb{R}$, $j\in\{0,\dots,m\}$ are normalization constants, therefore
  \begin{align}
  \exp(d_0)&=\mathbb{E}\left[\exp\left(-d(X)
  +\mathbb{E}\left[\left.\sum_{j=1}^m\log f_j(Y_j)\right|X\right]\right)\right];
  \label{e35}
  \\
  \exp(d_j)&=\mathbb{E}\left[\exp\left(\frac{1}{c_j}\log f_j(Y_j)\right)\right],\quad j\in\{1,\dots,m\}.\label{e36}
  \end{align}
  But direct computation gives
  \begin{align}
  D(P_X||Q_X)&=-\mathbb{E}[d(\hat{ X})]-d_0+\mathbb{E}\left[\sum_{j=1}^m\log f_j(\hat{Y}_j)\right]
  \\
  D(P_{Y_j}||Q_{Y_j})&=D(P_{Y_j}||S_{Y_j})
  +\mathbb{E}\left[\imath_{S_{Y_j}||Q_{Y_j}}(\hat{ Y}_j)\right]
  \\
  &=D(P_{Y_j}||S_{Y_j})-d_j+\mathbb{E}\left[\frac{1}{c_j}\log f_j(\hat{Y}_j)\right]
  \end{align}
  where $\hat{Y}_j\sim P_{Y_j}$. Therefore statement 2) yields
  \begin{align}
  -d_0+\mathbb{E}\left[\sum_{j=1}^m \log f_j(\hat{ Y}_j)\right]
  \ge
 \sum_{j=1}^m c_jD(P_{Y_j}||S_{Y_j})-\sum_{j=1}^m c_jd_j
  +\mathbb{E}\left[\sum_{j=1}^m \log f_j(\hat{Y}_j)\right].
  \end{align}
  Since $f_j$'s are assumed to be bounded,
  $-\infty<\mathbb{E}\left[\sum_{j=1}^m \log f_j(\hat{Y}_j)\right]<\infty$
  so we can cancel it from the two sides of the inequality. It then follows from the non-negativity of relative entropy that
  \begin{align}
  d_0\le \sum_{j=1}^m c_jd_j
  \end{align}
  which is equivalent to statement 1) in view of \eqref{e35} and \eqref{e36}.
\end{itemize}
\end{proof}
\begin{rem}
If \eqref{e_func} holds and the equality is achieved by some $f_1,\dots,f_m$\footnote{Note that it is possible that some $f_1,\dots,f_m$ achieve the equality in \eqref{e_func}, but the inequality \eqref{e_func} does not hold for all functions.}
then the $P_X$ defined through \eqref{e12} achieves the equality in \eqref{e_info} (where $d_0$ is a normalization constant); conversely, if
\eqref{e_info} holds and the equality is achieved by $P_X$ then
\begin{align}
f_j=\left(\frac{{\rm d}P_{Y_j}}{{\rm d}Q_{Y_j}}\right)^{c_j}
\end{align}
for $j\in\{1,\dots,m\}$ achieve the equality in \eqref{e_func}.
These can be immediately verified by inspecting the tightness of each step in the proof of Theorem~\ref{thm_1}.
\end{rem}
\begin{rem}
The special case where $d$ is a constant function and $Q_{Y_j|X}$, $j\in\{1,\dots,m\}$ are deterministic was proved by Carlen and Cordero-Erausquin \cite[Theorem~2.1]{carlen2009subadditivity}, where the proof is based on the Donsker-Varadhan variational formula for the relative entropy.
In contrast, we prove Theorem~\ref{thm_1} by defining certain auxiliary measures and then reducing \eqref{e_func} or \eqref{e_info} to the nonnegativity of relative entropy.
These two methods, however, are closely related, since the variational formula of the relative entropy may be proved using the nonnegativity of the relative entropy.
\end{rem}
\begin{rem}\label{rem6}
As a convention, the left side of \eqref{e_info} is understood as $\mathbb{E}\left[\imath_{P_X\|Q_X}(\hat{X})+d(\hat{X})\right]$ in case that it is otherwise undefined.
Note that the assumption \eqref{e51} ensures that
\begin{align}
{\rm d}T_X(x):=\frac{\exp(-d(x)){\rm d}Q_X(x)}{\mathbb{E}[\exp(-d(X))]}
\end{align}
defines a probability measure $T_X$ so that
\begin{align}
\mathbb{E}\left[\imath_{P_X\|Q_X}(\hat{X})+d(\hat{X})\right]
=D\left(P_X\|T_X\right)
-\log \mathbb{E}[\exp(-d(X))]
\end{align}
is always well-defined (finite or $+\infty$).
\end{rem}
%\begin{rem}
%Although the left of \eqref{e_func} may look complicated, it is simple to grasp since $\exp(\mathbb{E}[\log(\cdot)])$ is just the geometric mean.
%\end{rem}
\begin{rem}
By a result of Csisz\'{a}r regarding the $I$-projection onto a set specified by linear constraints \cite[Section~3]{csiszar1975divergence},
\eqref{e_info} for all $P_X\ll Q_X$ is equivalent to
\eqref{e_info} for all $P_X$ of the form \eqref{e12} (for some $(f_j)_{j=1}^m$).
\end{rem}
%\begin{rem}
%If $d(\cdot)=0$, $\mathcal{X}=\mathcal{Y}_1\times\mathcal{Y}_2\times,\dots,\mathcal{Y}_m$ and $f_j$ is the projection to the $j$-th coordinate ($j\in\{1,\dots,m\}$), then from \eqref{e12} we see that the ``optimal'' distribution $P_X$
%for the inequality \eqref{e_info} is such that $\frac{{\rm d}P_X}{{\rm d}Q_X}$ equals the product of $m$ functions each depending only on one coordinate of $x$. This can be seen from a property of the Csiszar's $I$-projection onto sets specified by linear constraints \cite[Theorem~3.1]{csiszar1975divergence}.
%\end{rem}

\begin{rem}
For finite alphabets, it might be possible to first prove the equivalence of \eqref{e_func} and \eqref{e_info} for the source distribution $Q_{XY^m}$ and the random transformations $Q_{X|XY^m},Q_{Y_1|XY^m},\dots,Q_{Y_m|XY^m}$ (that is, when the random transformations are simply projections onto the coordinates), and then obtain the equivalence for the source distribution $Q_X$ and the random transformations $(Q_{Y_j|X})_{j=1}^m$ by taking suitable limits, in a similar manner that hypercontractivity is shown to recover the strong data processing inequality
(cf.~\cite[Theorem 5a]{ahlswede1976spreading}).
However, the argument of taking limits is technically involved even for finite alphabets.
Moreover, that approach appears to be insufficient if we are further interested in the cases of equality in Theorem~\ref{thm_1}.
\end{rem}
\begin{rem}\label{thm_1_ext}
In the statements of Theorem~\ref{thm_1}, $Q_X$ is a probability measure and
$Q_X$ and $Q_{Y_j}$ are connected through the random transformation $Q_{Y_j|X}$. These help to keep our notations simple and suffice for most applications in our paper.
However, from the proof it is clear that these restrictions are not really necessary.
In other words, we have the extension of Theorem~\ref{thm_1} that the following two statements are equivalent:
      \begin{align}
      \int\exp\left(\sum_{j=1}^m\mathbb{E}[\log f_j(Y_j)|X=x]-d(x)\right){\rm d}\nu(x)
      &\le \prod_{j=1}^m\|f_j\|_{\frac{1}{c_j}},
      \quad\forall f_1,\dots,f_m;
      \label{e27_ext}
      \end{align}
      \begin{align}
      D(P_X||\nu)
      +\int d(x){\rm d}\nu(x)
      \ge\sum_{j=1}^m c_jD(P_{Y_j}||\mu_j),
      \quad \forall P_X,
      \label{e_35_ext}
      \end{align}
      where again
      $f_j\colon\mathcal{Y}_j\to \mathbb{R}, j\in\{1,\dots,m\}$
      are nonnegative measurable functions,
      $P_X\ll \nu$, and $P_X\to Q_{Y_j|X}\to P_{Y_j}$ for $j\in\{1,\dots,m\}$.
      Here $\nu$ and $\mu_j$ need not be normalized and need not be connected by $Q_{Y_j|X}$,
      and $\|\cdot\|_{\frac{1}{c_j}}$ is with respect to $\mu_j$.
\end{rem}

\section{Dual Formulation of a Forward-Reverse Brascamp-Lieb Inequality}\label{sec_fr}
In this section we introduce a new type of functional inequality which may be called ``forward-reverse Brascamp-Lieb inequality'', and prove its equivalent entropic formulation.
As alluded in Section~\ref{sec1}, the proof of the ``functional$\Rightarrow$entropic'' direction for the forward-reverse inequality is much more sophisticated than for the forward inequality, and some regularity assumptions on the alphabets and the measures appear to be necessary.

Throughout this paper, when the forward-reverse inequality is considered, we always assume that the alphabets are Polish spaces, and the measures are Borel measures\footnote{A Polish space is a complete separable metric space. It enjoys several nice properties that we use heavily in this section, including Prokhorov theorem and Riesz-Kakutani theorem (the latter is related to the fact that every Borel probability measure on a Polish space is inner regular, hence a Radon measure). Short introductions on the Polish space can be found in e.g.~\cite{villani2003topics}\cite{dembo2009large}.}. Of course, this covers the cases where the alphabet is Euclidean or discrete (endowed with the Hamming metric, which induces the discrete topology, making every function on the discrete set continuous), among others. Readers interested in finite-alphabets only may refer to the (much simpler) argument in \cite{ISIT_lccv2016} based on the KKT condition.

\begin{notat}
Let $\mathcal{X}$ be a topological space.
\begin{itemize}
\item $C_c(\mathcal{X})$ denotes the space of continuous functions on $\mathcal{X}$ with a compact support;
\item $C_0(\mathcal{X})$ denotes the space of all continuous function $f$ on $\mathcal{X}$ that vanishes at infinity (i.e.~for any $\epsilon>0$ there exists a compact set $\mathcal{K}\subseteq \mathcal{X}$ such that $|f(x)|<\epsilon$ for $x\in \mathcal{X}\setminus\mathcal{K}$);
\item $C_b(\mathcal{X})$ denotes the space of bounded continuous functions on $\mathcal{X}$;
\item $\mathcal{M}(\mathcal{X})$ denotes the space of finite signed Borel measures on $\mathcal{X}$;
\item $\mathcal{P}(\mathcal{X})$ denotes the space of probability measures on $\mathcal{X}$.
\end{itemize}
\end{notat}
We consider $C_c$, $C_0$ and $C_b$ as topological vector spaces, with the topology induced from the sup norm. The following theorem, usually attributed to Riesz, Markov and Kakutani, is well-known in functional analysis and can be found in, e.g.~\cite{lax}\cite{tao2009}.
\begin{thm}[Riesz-Markov-Kakutani]\label{thm_rmk}
If $\mathcal{X}$ is a locally compact, $\sigma$-compact Polish space, the dual\footnote{The dual of a topological vector space consists of all continuous linear functionals on that space, which is naturally also topological vector space (with the weak$^*$ topology).} of both $C_c(\mathcal{X})$ and $C_0(\mathcal{X})$ is $\mathcal{M}(\mathcal{X})$.
\end{thm}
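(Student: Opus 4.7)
The plan is to verify the two inclusions separately. The forward inclusion is clear: every $\nu\in\mathcal{M}(\mathcal{X})$ induces a continuous linear functional on both $C_c(\mathcal{X})$ and $C_0(\mathcal{X})$ via $\Lambda_\nu(f):=\int f\,{\rm d}\nu$, with operator norm bounded by the total variation $\|\nu\|_{\mathrm{TV}}$; injectivity of $\nu\mapsto\Lambda_\nu$ follows from Urysohn's lemma, which is available in the locally compact Hausdorff setting.

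For the reverse inclusion, I would first reduce to positive functionals. Given a continuous linear $\Lambda$, set
$$\Lambda^+(f):=\sup\{\Lambda(g):g\in C_c(\mathcal{X}),\;0\le g\le f\}$$
for nonnegative $f\in C_c(\mathcal{X})$, extend by linearity, and define $\Lambda^-:=\Lambda^+-\Lambda$. Continuity of $\Lambda$ under the sup norm together with the estimate $|\Lambda(g)|\le\|\Lambda\|\cdot\|f\|_\infty$ whenever $0\le g\le f$ makes $\Lambda^\pm$ well-defined positive continuous linear functionals.

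Next, for each positive $\Lambda^\pm$, I would construct a finite Borel measure $\mu^\pm$ by the Daniell--Riesz recipe. Define a set function on open sets by
$$\mu(U):=\sup\{\Lambda^\pm(f):f\in C_c(\mathcal{X}),\;0\le f\le 1,\;\mathrm{supp}(f)\subseteq U\},$$
and extend to the outer measure $\mu^*(E):=\inf\{\mu(U):U\supseteq E\text{ open}\}$. Using Urysohn's lemma and subordinate partitions of unity (available in the locally compact Hausdorff setting), one proves countable subadditivity of $\mu$ on open sets and Carath\'eodory measurability of every Borel set, so that $\mu^*$ restricts to a Borel measure with $\mu^\pm(\mathcal{X})\le\|\Lambda^\pm\|<\infty$. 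The representation $\Lambda^\pm(f)=\int f\,{\rm d}\mu^\pm$ for $f\in C_c(\mathcal{X})$ is then obtained by slicing the range of $f$ into the level sets $\{f>k\epsilon\}$, realizing their indicators as monotone limits of subordinate $C_c$ functions, and letting $\epsilon\to 0$. Passage from $C_c$ to $C_0$ uses the density of $C_c$ in $C_0$ under the sup norm, which is automatic when $\mathcal{X}$ is locally compact Hausdorff.

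The main obstacle is verifying countable subadditivity of $\mu$ on open sets together with Carath\'eodory measurability of every Borel set; both hinge on constructing subordinate partitions of unity, which is precisely where local compactness enters. The $\sigma$-compactness hypothesis is then used to exhaust $\mathcal{X}$ by an increasing sequence of compact sets, securing inner regularity of $\mu^\pm$ on all Borel sets and so placing $\mu^\pm$ in $\mathcal{M}(\mathcal{X})$ as a genuine finite Radon measure rather than merely a Borel-outer-regular set function.
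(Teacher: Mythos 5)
The paper does not actually prove Theorem~\ref{thm_rmk}; it invokes it as a classical result and points to the cited references (Lax, Tao), and your sketch is exactly the standard argument found there: reduce to positive functionals by the Jordan-type decomposition $\Lambda=\Lambda^+-\Lambda^-$, build the representing measure for a positive functional by the Riesz outer-regular construction using Urysohn functions and subordinate partitions of unity, and pass from $C_c(\mathcal{X})$ to $C_0(\mathcal{X})$ by sup-norm density. The sketch is correct as stated, with the bound $\mu^{\pm}(\mathcal{X})\le\|\Lambda^{\pm}\|<\infty$ and regularity (via the $\sigma$-compact exhaustion you describe, or directly from Ulam's theorem since $\mathcal{X}$ is Polish) guaranteeing that the representing measures are genuine finite Borel measures, i.e.\ elements of $\mathcal{M}(\mathcal{X})$.
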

\begin{rem}
The dual space of $C_b(\mathcal{X})$ can be strictly larger than $\mathcal{M}(\mathcal{X})$, since it also contains those linear functionals that depend on the ``limit at infinity'' of a function $f\in\mathcal{C}_b(\mathcal{X})$ (originally defined for those $f$ that do have a limit at the infinity, and then extended to the whole $C_b(\mathcal{X})$ by Hahn-Banach theorem; see e.g.~\cite{lax}).
\end{rem}

Of course, any $\mu\in\mathcal{M}(\mathcal{X})$ is a continuous linear functional on $C_0(\mathcal{X})$ or $C_c(\mathcal{X})$, given by
\begin{align}
f\mapsto \int f{\rm d}\mu
\end{align}
where $f$ is a function in $C_0(\mathcal{X})$ or $C_c(\mathcal{X})$. Remarkably, Theorem~\ref{thm_rmk} states that the converse is also true under mild regularity assumptions on the space. Thus, we can view measures as continuous linear functionals on a certain function space;\footnote{In fact, some authors prefer to construct the measure theory by \emph{defining} a measure as a linear functional on a suitable measure space; see Lax \cite{lax} or Bourbaki \cite{bourbaki}.}
this justifies the shorthand notation
\begin{align}
\mu(f):=\int f{\rm d}\mu
\label{e_shorthand}
\end{align}
which we employ in the rest of the paper. This viewpoint is the most natural for our setting since in the proof of the equivalent formulation of the forward-reverse Brascamp-Lieb inequality we shall use the Hahn-Banach theorem to show the existence of certain linear functionals.

\begin{defn}\label{defn_LFT}
Let $\Lambda\colon C_b(\mathcal{X})\to (-\infty,+\infty]$ be a lower semicontinuous, proper convex function.
Its \emph{Legendre-Fenchel transform} $\Lambda^*\colon\mathcal{C}_b(\mathcal{X})^*\to (-\infty,+\infty]$ is given by
\begin{align}
\Lambda^*(\ell):=\sup_{u\in C_b(\mathcal{X})}[\ell(u)-\Lambda(u)].
\end{align}
\end{defn}
Let $\nu$ be a nonnegative finite Borel measure on a Polish space $\mathcal{X}$, and define a convex functional on $C_b(\mathcal{X})$:
\begin{align}
\Lambda(f)&:= \log\nu(\exp(f))
\\
&=\log\int \exp(f){\rm d}\nu.
\label{e38}
\end{align}
Then note that the relative entropy has the following alternative definition: for any $\mu\in\mathcal{M}(\mathcal{\mathcal{X}})$,
\begin{align}
D(\mu\|\nu):=\sup_{f\in C_b(\mathcal{X})}[\mu(f)-\Lambda(f)]
\label{e_dv}
\end{align}
which agrees with the definition \eqref{e8} when $\nu$ is a probability measure, by the Donsker-Varadhan formula (c.f.~\cite[Lemma~6.2.13]{dembo2009large}).
If $\mu$ is not a probability measure, then $D(\mu\|\nu)$ as defined in \eqref{e_dv} is $+\infty$.

Given a bounded linear operator $T\colon C_b(\mathcal{Y})\to C_b(\mathcal{X})$, the dual operator $T^*\colon C_b(\mathcal{X})^*\to C_b(\mathcal{Y})^*$ is defined in terms of
\begin{align}
T^*\mu_X\colon f\in C_b(\mathcal{Y})\mapsto \mu_X(Tf),
\end{align}
for any $\mu_X\in C_b(\mathcal{X})^*$. Since $\mathcal{P}(\mathcal{X})\subseteq \mathcal{M}(\mathcal{X})\subseteq C_b(\mathcal{X})^*$,
we can define a \emph{conditional expectation operator} as any $T$ such that $T^*P\in \mathcal{P}(\mathcal{Y})$ for any $P\in\mathcal{P}(\mathcal{X})$.
A \emph{random transformation} $T^*$ is defined as the dual of some conditional expectation operator.

\begin{rem}
From the viewpoint of category theory (see for example \cite{category}\cite{hatcher606algebraic}),
$C_b$ is a functor from the category of topological spaces to the category of topological vector spaces, which is contra-variant because for any continuous,
$\phi\colon \mathcal{X}\to\mathcal{Y}$ (morphism between topological spaces),
we have $C_b(\phi)\colon C_b(\mathcal{Y})\to C_b(\mathcal{X})$, $u\mapsto u\circ f$ where $u\circ \phi$ denotes the composition of two continuous functions,
reversing the arrows in the maps (i.e.\ the morphisms). On the other hand, $\mathcal{M}$ is a covariant functor and $\mathcal{M}(\phi)\colon \mathcal{M}(\mathcal{X})\to \mathcal{M}(\mathcal{Y})$, $\mu\mapsto \mu\circ \phi^{-1}$, where $\mu\circ \phi^{-1}(\mathcal{B}):=\mu(\phi^{-1}(\mathcal{B}))$ for any Borel measurable $\mathcal{B}\subseteq \mathcal{Y}$.
``Duality'' itself is a contra-variant functor between the category of topological spaces (note the reversal of arrows in Fig.~\ref{fig_unification}).
Moreover, $C_b(\mathcal{X})^*=\mathcal{M}(\mathcal{X})$ and $C_b(\phi)^*=\mathcal{M}(\phi)$ if $\mathcal{X}$ and $\mathcal{Y}$ are compact metric spaces and $\phi\colon\mathcal{X}\to\mathcal{Y}$ is continuous.
Definition~\ref{exp_defn1} can therefore be viewed as the special case where $\phi$ is the projection map:
\end{rem}

\begin{defn}\label{exp_defn1}
Suppose $\phi\colon\mathcal{Z}_1\times\mathcal{Z}_2\to \mathcal{Z}_1,\,(z_1,z_2)\mapsto z_1$ is the projection to the first coordinate.
\begin{itemize}
  \item $C_b(\phi)\colon C_b(\mathcal{Z}_1)\to C_b(\mathcal{Z}_1\times \mathcal{Z}_2)$ is called a \emph{canonical map}, whose action is almost trivial: it sends a function of $z_i$ to itself, but viewed as a function of $(z_1,z_2)$.
  \item $\mathcal{M}(\phi)\colon \mathcal{M}(\mathcal{Z}_1\times\mathcal{Z}_2)\to \mathcal{M}(\mathcal{Z}_1)$ is called \emph{marginalization}, which simply takes a joint distribution to a marginal distribution.
\end{itemize}
\end{defn}

\subsection{Compact $\mathcal{X}$}
We first state a duality theorem for the case of compact alphabets to streamline the proof. Later we show that the argument can be extended to a particular non-compact case.\footnote{Theorem~\ref{thm_unification}
is not included in the conference paper \cite{ISIT_lccv2016}, but was announced in the conference presentation.}
Our proof based on the Legendre-Fenchel duality (Theorem~\ref{thm_FRduality} in Appendix~\ref{app_fr}) was inspired by the proof of the Kantorovich duality in the theory of optimal transportation (see \cite[Chapter~1]{villani2003topics}, where the idea was credited to Brenier).

\begin{thm}[Dual formulation of forward-reverse Brascamp-Lieb inequality]\label{thm_unification}
Assume that
\begin{itemize}
\item $m$ and $l$ are positive integers, $d\in\mathbb{R}$, $\mathcal{X}$ is a compact metric space (hence also a Polish space);
\item For each $i=1,\dots,l$, $b_i\in (0,\infty)$; $\nu_i$ is a finite Borel measure on a Polish space $\mathcal{Z}_i$, and $Q_{Z_i|X}=S_i^*$ is a random transformation;
\item For each $j=1,\dots,m$, $c_j\in (0,\infty)$, $\mu_j$ is a finite Borel measure on a Polish space $\mathcal{Y}_j$, and $Q_{Y_j|X}=T_j^*$ is a random transformation.
\item For any $P_{Z_i}$ such that $D(P_{Z_i}\|\nu_i)<\infty$, $i=1,\dots,l$, there exists $P_X\in\bigcap_i(S_i^*)^{-1}P_{Z_i}$ such that $\sum_{j=1}^m c_j D(P_{Y_j}\|\mu_j)< \infty$, where $P_{Y_j}:=T_j^*P_X$.
\end{itemize}
Then the following two statements are equivalent:
\begin{enumerate}
  \item If nonnegative continuous functions $(g_i)$, $(f_j)$ are bounded away from $0$ and such that
      \begin{align}
      \sum_{i=1}^l b_iS_i\log g_i\le \sum_{j=1}^m c_jT_j\log f_j
      \label{e34}
      \end{align}
      then (see \eqref{e_shorthand} for the notation of the integral)
      \begin{align}
      \prod_{i=1}^l\nu_i^{b_i}(g_i)\le
      \exp(d)\prod_{j=1}^m\mu_j^{c_j}(f_j).
      \label{e_frbl_func}
      \end{align}
  \item For any $(P_{Z_i})$ such that $D(P_{Z_i}\|\nu_i)<\infty$\footnote{Of course, this assumption is not essential (once we adopt the convention that the infimum in \eqref{e_frbl_entropic} is $+\infty$ when it runs over an empty set).}, $i=1,\dots,l$,
      \begin{align}
      \sum_{i=1}^l b_iD(P_{Z_i}\|\nu_i)+d
      \ge \inf_{P_X}\sum_{j=1}^m c_j D(P_{Y_j}\|\mu_j)
      \label{e_frbl_entropic}
      \end{align}
      where the infimum is over $P_X$ such that $S_i^*P_X=P_{Z_i}$, $i=1,\dots,l$, and $P_{Y_j}:=T_j^*P_X$, $j=1,\dots,m$.
\end{enumerate}
\end{thm}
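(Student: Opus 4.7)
The plan is to prove the two implications separately. The direction $2 \Rightarrow 1$ should be a direct consequence of the Donsker-Varadhan variational formula \eqref{e_dv}, while $1 \Rightarrow 2$ will require a Fenchel-Rockafellar--type duality argument (Theorem~\ref{thm_FRduality} in Appendix~\ref{app_fr}), modeled after Brenier's proof of the Kantorovich duality.

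For $2 \Rightarrow 1$, I would start with continuous $g_i, f_j$ bounded away from $0$ satisfying \eqref{e34}, fix marginals $(P_{Z_i})$ of finite relative entropy, and use the compatibility hypothesis to pick any $P_X$ with $S_i^* P_X = P_{Z_i}$, writing $P_{Y_j} := T_j^* P_X$. Integrating \eqref{e34} against $P_X$ yields $\sum_i b_i P_{Z_i}(\log g_i) \le \sum_j c_j P_{Y_j}(\log f_j)$, and DV gives $c_j P_{Y_j}(\log f_j) \le c_j \log\mu_j(f_j) + c_j D(P_{Y_j}\|\mu_j)$. Summing over $j$, passing to the infimum over admissible $P_X$, and invoking hypothesis~(2) to bound $\inf_{P_X} \sum_j c_j D(P_{Y_j}\|\mu_j)$ by $\sum_i b_i D(P_{Z_i}\|\nu_i) + d$, I obtain
\begin{align*}
\sum_i b_i \bigl[P_{Z_i}(\log g_i) - D(P_{Z_i}\|\nu_i)\bigr] \le \sum_j c_j \log\mu_j(f_j) + d.
\end{align*}
Taking the supremum over $(P_{Z_i})$, which splits into $l$ independent one-variable suprema by the compatibility hypothesis, and applying DV in the reverse direction converts the left side into $\sum_i b_i \log\nu_i(g_i)$; exponentiating yields \eqref{e_frbl_func}.

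For $1 \Rightarrow 2$, which is the hard direction, I would fix $(P_{Z_i})$ of finite entropy, represent $\sum_j c_j D(P_{Y_j}\|\mu_j) = \sup_{(u_j)} \sum_j c_j[P_{Y_j}(u_j) - \Lambda_{\mu_j}(u_j)]$ via DV with $\Lambda_{\mu_j}(u) := \log\mu_j(e^u)$, and encode the marginal constraints $S_i^* P_X = P_{Z_i}$ through Lagrange multipliers $b_i \tilde v_i \in C_b(\mathcal{Z}_i)$, obtaining the saddle representation
\begin{align*}
\inf_{P_X :\, S_i^*P_X = P_{Z_i}} \sum_j c_j D(P_{Y_j}\|\mu_j) = \inf_{P_X \in \mathcal{P}(\mathcal{X})} \sup_{(\tilde v_i),(u_j)} \Bigl[ P_X\bigl(\textstyle\sum_j c_j T_j u_j - \sum_i b_i S_i \tilde v_i\bigr) + \sum_i b_i P_{Z_i}(\tilde v_i) - \sum_j c_j \Lambda_{\mu_j}(u_j) \Bigr].
\end{align*}
The key step is to swap $\inf$ and $\sup$: compactness of $\mathcal{X}$ renders $\mathcal{P}(\mathcal{X})$ weakly compact (Prokhorov), Theorem~\ref{thm_rmk} identifies $C_b(\mathcal{X})^* = \mathcal{M}(\mathcal{X})$, and the compatibility hypothesis furnishes a feasible point of continuity, making Theorem~\ref{thm_FRduality} applicable. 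After swapping, the inner $\inf_{P_X}$ equals $-\infty$ unless $\sum_j c_j T_j u_j \ge \sum_i b_i S_i \tilde v_i$ (precisely \eqref{e34} with $g_i = e^{\tilde v_i}$, $f_j = e^{u_j}$) and equals $0$ otherwise, reducing the dual to $\sup_{(\tilde v_i),(u_j):\,\eqref{e34}}\bigl[\sum_i b_i P_{Z_i}(\tilde v_i) - \sum_j c_j \Lambda_{\mu_j}(u_j)\bigr]$. One further application of DV in the form $P_{Z_i}(\tilde v_i) \le \Lambda_{\nu_i}(\tilde v_i) + D(P_{Z_i}\|\nu_i)$ bounds this by $\sum_i b_i D(P_{Z_i}\|\nu_i)$ plus $\sup_{(\tilde v_i,u_j):\,\eqref{e34}}\bigl[\sum_i b_i \Lambda_{\nu_i}(\tilde v_i) - \sum_j c_j \Lambda_{\mu_j}(u_j)\bigr]$; the logarithmic form of hypothesis~(1) shows the remaining sup is $\le d$, producing \eqref{e_frbl_entropic}.

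The hard part will be justifying the $\inf$--$\sup$ swap rigorously. Most technical work goes into verifying the hypotheses of Theorem~\ref{thm_FRduality}: joint convexity/concavity of the saddle functional, which is clear from the DV form (affine in $P_X$, concave in $(\tilde v_i, u_j)$ since each $\Lambda$ is convex); lower semicontinuity of the relative entropy in the weak topology on $\mathcal{P}(\mathcal{X})$; and a Slater-type qualification condition, which is precisely what the compatibility hypothesis is designed to supply. The role of compact $\mathcal{X}$ is twofold: it makes $\mathcal{P}(\mathcal{X})$ weakly compact, and it ensures via Riesz-Markov-Kakutani that $C_b(\mathcal{X})^* = \mathcal{M}(\mathcal{X})$, so no pathological elements of the bidual complicate the duality. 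Extending to non-compact Polish $\mathcal{X}$ should then be possible by a separate truncation/approximation argument, as indicated at the end of this subsection.
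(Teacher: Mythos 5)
Your proposal is correct and follows essentially the same route as the paper: the weak direction 2)$\Rightarrow$1) via two applications of Donsker--Varadhan wrapped around the entropic inequality, and the hard direction 1)$\Rightarrow$2) by dualizing the constrained infimum of $\sum_j c_j D(T_j^*P_X\|\mu_j)$ through the Legendre--Fenchel duality of Theorem~\ref{thm_FRduality} in Appendix~\ref{app_fr} (with compactness of $\mathcal{X}$ used so that $C_b(\mathcal{X})^*=\mathcal{M}(\mathcal{X})$), arriving at exactly the dual program \eqref{e_52} and finishing with statement 1) plus Donsker--Varadhan. Two small repairs for the write-up: the inner infimum of $P_X(\sum_j c_jT_ju_j-\sum_i b_iS_i\tilde v_i)$ over $P_X\in\mathcal{P}(\mathcal{X})$ is $\min_{x}\bigl(\sum_j c_jT_ju_j-\sum_i b_iS_i\tilde v_i\bigr)(x)$, not $0$ or $-\infty$ --- either relax to nonnegative measures (the role of the paper's $\Theta_0$, normalization being recovered from $S_i^*\pi=P_{Z_i}$) or use invariance under adding constants to the $\tilde v_i$ --- and the finiteness/continuity (Slater-type) point required by Theorem~\ref{thm_FRduality} is supplied by constant functions together with finiteness of the $\mu_j$, whereas the compatibility hypothesis is what keeps the functionals corresponding to the entropy terms and the marginal constraints proper (never $-\infty$) and well-defined.
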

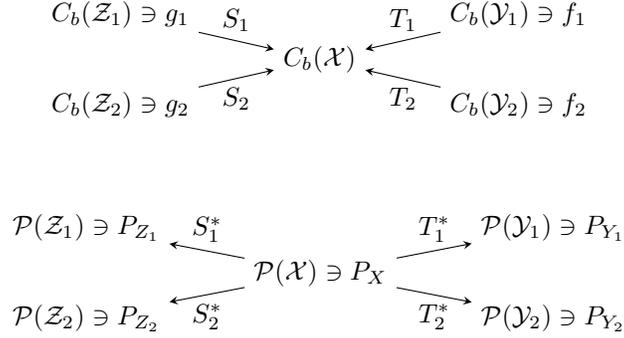
\begin{figure}[h!]
  \centering
\begin{tikzpicture}
[scale=2,
      dot/.style={draw,fill=black,circle,minimum size=0.7mm,inner sep=0pt},arw/.style={->,>=stealth}]
    \node[rectangle] (X) {$C_b(\mathcal{X})$};
    \node[rectangle] (Z1) [above left=of X, xshift=0mm, yshift=-10mm] {$C_b(\mathcal{Z}_1)\ni g_1$};
    \node[rectangle] (Z2) [below left=of X, xshift=0mm, yshift=10mm] {$C_b(\mathcal{Z}_2)\ni g_2$};
    \node[rectangle] (Y1) [above right=of X, xshift=0mm, yshift=-10mm] {$C_b(\mathcal{Y}_1)\ni f_1$};
    \node[rectangle] (Y2) [below right=of X, xshift=0mm, yshift=10mm] {$C_b(\mathcal{Y}_2)\ni f_2$};

  \draw [arw] (Z1) to node[midway,above]{$S_1$} (X);
  \draw [arw] (Z2) to node[midway,below]{$S_2$} (X);
  \draw [arw] (Y1) to node[midway,above]{$T_1$} (X);
  \draw [arw] (Y2) to node[midway,below]{$T_2$} (X);
\end{tikzpicture}
\\
\vspace{30pt}
%\end{figure}
%\begin{figure}[h!]
%  \centering
\begin{tikzpicture}
[scale=2,
      dot/.style={draw,fill=black,circle,minimum size=0.7mm,inner sep=0pt},arw/.style={->,>=stealth}]
    \node[rectangle] (X) {$\mathcal{P}(\mathcal{X})\ni P_X$};
    \node[rectangle] (Z1) [above left=of X, xshift=0mm, yshift=-10mm] {$\mathcal{P}(\mathcal{Z}_1)\ni P_{Z_1}$};
    \node[rectangle] (Z2) [below left=of X, xshift=0mm, yshift=10mm] {$\mathcal{P}(\mathcal{Z}_2)\ni P_{Z_2}$};
    \node[rectangle] (Y1) [above right=of X, xshift=0mm, yshift=-10mm] {$\mathcal{P}(\mathcal{Y}_1)\ni P_{Y_1}$};
    \node[rectangle] (Y2) [below right=of X, xshift=0mm, yshift=10mm] {$\mathcal{P}(\mathcal{Y}_2)\ni P_{Y_2}$};

  \draw [arw] (X) to node[midway,above]{$S_1^*$} (Z1);
  \draw [arw] (X) to node[midway,below]{$S_2^*$} (Z2);
  \draw [arw] (X) to node[midway,above]{$T_1^*$} (Y1);
  \draw [arw] (X) to node[midway,below]{$T_2^*$} (Y2);
\end{tikzpicture}
\caption{Diagrams for Theorem~\ref{thm_unification}.}
\label{fig_unification}
\end{figure}
\begin{proof} We can safely assume $d=0$ below without loss of generality (since otherwise we can always substitute $\mu_1\leftarrow\exp\left(\frac{d}{c_1}\right)\mu_1$).
\begin{description}
  \item [1)$\Rightarrow$2)]
%  This is the direction that requires the extra ``regularity assumptions'' compared with Theorem~\ref{thm_1}. Namely, in the first two steps below we prove certain properties about the ``optimal coupling'', which correspond to the ``splitting property'' established in \cite{ISIT_lccv2016} in the finite-alphabet case. Readers interested in the finite-alphabet case only may skip Steps~1-2 and refer to the argument in \cite{ISIT_lccv2016} based on the KKT condition instead.
%
%
%
%      \emph{Step 2}: If we pick $P_X$ as a minimizer in Step~1 and put $P_{Y_j}:=T_j^* P_X$, then by the assumption in the theorem, $D(P_{Y_j}\|\mu_j)<\infty$, $j=1,\dots,m$. In particular, $P_{Y_j}\ll \mu_j$ for each $j$.
This is the nontrivial direction which relies on certain (strong) min-max type results.
In Theorem~\ref{thm_FRduality} in Appendix~\ref{app_fr}, put\footnote{In \eqref{e37}, $u\le 0$ means that $u$ is pointwise non-positive.}
\begin{align}
\Theta_0\colon u\in C_b(\mathcal{X})
\mapsto
\left\{
\begin{array}{cc}
0 & u\le0; \\
+\infty & \textrm{otherwise}.
\end{array}
\right.
\label{e37}
\end{align}
Then,
\begin{align}
\Theta_0^*
\colon
\pi\in \mathcal{M}(\mathcal{X})
\mapsto\left\{
\begin{array}{cc}
0 & \pi\ge0; \\
+\infty & \textrm{otherwise}.
\end{array}
\right.
\end{align}
For each $j=1,\dots,m$, set
\begin{align}
\Theta_j(u)
=c_j\inf\log \mu_j \left(\exp\left(\frac{1}{c_j}v\right)\right)
\label{e39}
\end{align}
where the infimum is over $v\in C_b(\mathcal{Y})$ such that $u=T_jv$; if there is no such $v$ then $\Theta_j(u):=+\infty$ as a convention. Observe that
\begin{itemize}
  \item $\Theta_j$ is convex: indeed given arbitrary $u^0$ and $u^1$, suppose that $v^0$ and $v^1$ respectively achieve the infimum in \eqref{e39} for $u^0$ and $u^1$ (if the infimum is not achievable, the argument still goes through by the approximation and limit argument).
      Then for any $\alpha\in [0,1]$, $v^{\alpha}:=(1-\alpha)v^0+\alpha v^1$ satisfies $u^{\alpha}=T_jv^{\alpha}$ where $u^{\alpha}:=(1-\alpha)u^0+\alpha u^1$.
      Thus the convexity of $\Theta_j$ follows from the convexity of the functional in \eqref{e38};
  \item $\Theta_j(u)>-\infty$ for any $u\in C_b(\mathcal{X})$. If otherwise, for any $P_X$ and $P_{Y_j}:=T_j^*P_X$ we have
      \begin{align}
      D(P_{Y_j}\|\mu_j)
      &=\sup_v\{P_{Y_j}(v)-\log \mu_j(\exp(v))\}
      \label{e40}
      \\
      &=\sup_v\{P_X(T_jv)-\log \mu_j(\exp(v))\}
      \\
      &=\sup_{u\in C_b(\mathcal{X})}\left\{P_X(u)-\frac{1}{c_j}\Theta_j(c_ju)\right\}
      \label{e42}
      \\
      &=+\infty
      \end{align}
      which contradicts the assumption that $\sum_{j=1}^m c_j D(P_{Y_j}\|\mu_j)< \infty$ in the theorem;
  \item From the steps \eqref{e40}-\eqref{e42}, we see $\Theta_j^*(\pi)=c_jD(T_j^*\pi\|\mu_j)$ for any $\pi\in\mathcal{M}(\mathcal{X})$, where the definition of $D(\cdot\|\mu_j)$ is extended using the Donsker-Varadhan formula (that is, it is infinite when the argument is not a probability measure).
\end{itemize}
Finally, for the given $(P_{Z_i})_{i=1}^l$, choose
\begin{align}
\Theta_{m+1} \colon u\in C_b(\mathcal{X})
\mapsto
\left\{
\begin{array}{cc}
\sum_{i=1}^l P_{Z_i}(w_i)  & \textrm{if $u=\sum_{i=1}^l S_iw_i$ for some $w_i\in C_b(\mathcal{Z}_i)$}; \\
+\infty & \textrm{otherwise}.
\end{array}
\right.
\label{e44}
\end{align}
Notice that
\begin{itemize}
  \item $\Theta_{m+1}$ is convex;
  \item $\Theta_{m+1}$ is well-defined (that is, the choice of $(w_i)$ in \eqref{e44} is inconsequential). Indeed if $(w_i)_{i=1}^l$ is such that $\sum_{i=1}^lS_i w_i=0$, then
      \begin{align}
      \sum_{i=1}^l P_{Z_i}(w_i)
      &=
      \sum_{i=1}^l S_i^*P_X(w_i)
      \\
      &=
      \sum_{i=1}^l P_X(S_iw_i)
      \\
      &=0,
      \end{align}
  where $P_X$ is such that $S_i^*P_X=P_{Z_i}$, $i=1,\dots,l$, whose existence is guaranteed by the assumption of the theorem. This also shows that $\Theta_{m+1}>-\infty$.
  \item
  \begin{align}
  \Theta_{m+1}^*(\pi)
  &:=\sup_u\{\pi(u)-\Theta_{m+1}(u)\}
  \\
  &=\sup_{w_1,\dots,w_l}
  \left\{\pi\left(\sum_{i=1}^lS_iw_i\right)-\sum_{i=1}^l P_{Z_i}(w_i)\right\}
  \\
  &=\sup_{w_1,\dots,w_l}
  \left\{\sum_{i=1}^lS_i^*\pi(w_i)-\sum_{i=1}^l P_{Z_i}(w_i)\right\}
  \\
  &=\left\{
  \begin{array}{cc}
    0 & \textrm{if }S_i^*\pi=P_{Z_i},\quad i=1,\dots,l; \\
    +\infty & \textrm{otherwise}.
  \end{array}
  \right.
  \end{align}
\end{itemize}
Invoking Theorem~\ref{thm_FRduality} in Appendix~\ref{app_fr} (where the $u_j$ in Theorem~\ref{thm_FRduality} can be chosen as the constant function $u_j\equiv1$, $j=1,\dots,m+1$):
\begin{align}
&\quad\inf_{
\pi\colon \pi\ge0,\,
S_i^*\pi=P_{Z_i}}
\sum_{j=1}^m c_j D(T_j^*\pi\|\mu_j)\\
&=-\inf_{v^m,w^l\colon \sum_{j=1}^mT_jv_j
+\sum_{i=1}^lS_iw_i\ge0}\left[\sum_{j=1}^mc_j\log\mu_j\left(\exp\left(
\frac{1}{c_j}v_j\right)\right)
+\sum_{i=1}^lP_{Z_i}(w_i)\right]
\label{e_52}
\end{align}
where $v^m$ denotes the collection of the functions $v_1,\dots,v_m$, and similarly for $w^l$. Note that the left side of \eqref{e_52} is exactly the right side of \eqref{e_frbl_entropic}. For any $\epsilon>0$, choose $v_j\in C_b(\mathcal{Y}_j)$, $j=1,\dots,m$ and $w_i\in C_b(\mathcal{Z}_i)$, $i=1,\dots,l$ such that $\sum_{j=1}^mT_jv_j
+\sum_{i=1}^lS_iw_i\ge0$ and
\begin{align}
\epsilon-\sum_{j=1}^mc_j\log\mu_j\left(\exp\left(
\frac{1}{c_j}v_j\right)\right)
-\sum_{i=1}^lP_{Z_i}(w_i)
>\inf_{
\pi\colon \pi\ge0,\,
S_i^*\pi=P_{Z_i}}
\sum_{j=1}^mc_j D(T_j^*\pi\|\mu_j)
\label{e_54}
\end{align}
Now invoking \eqref{e_frbl_func} with $f_j:=\exp\left(\frac{1}{c_j}v_j\right)$, $j=1,\dots,m$ and $g_i:=\exp\left(-\frac{1}{b_i}w_i\right)$, $i=1,\dots,l$,
we upper bound the left side of \eqref{e_54} by
\begin{align}
\epsilon-\sum_{i=1}^lb_i\log\nu_i(g_i)+\sum_{i=1}^lb_iP_{Z_i}(\log g_i)
\le
\epsilon+\sum_{i=1}^lb_iD(P_{Z_i}\|\nu_i)
\end{align}
where the last step follows by the Donsker-Varadhan formula.
Therefore \eqref{e_frbl_entropic} is established since $\epsilon>0$ is arbitrary.

  \item [2)$\Rightarrow$1)] Since $\nu_i$ is finite and $g_i$ is bounded by assumption, we have $\nu_i(g_i)<\infty$, $i=1,\dots,l$. Moreover \eqref{e_frbl_func} is trivially true when $\nu_i(g_i)=0$ for some $i$, so we will assume below that $\nu_i(g_i)\in (0,\infty)$ for each $i$. Define $P_{Z_i}$ by
      \begin{align}
      \frac{{\rm d}P_{Z_i}}{{\rm d}\nu_i}=\frac{g_i}{\nu_i(g_i)},\quad i=1,\dots,l.
      \end{align}
      Then for any $\epsilon>0$,
      \begin{align}
      \sum_{i=1}^lb_i\log \nu_i(g_i)
      &=
      \sum_{i=1}^lb_i[P_{Z_i}(\log g_i)-D(P_{Z_i}\|\nu_i)]
      \label{e64}
      \\
      &<
      \sum_{j=1}^mc_jP_{Y_j}(\log f_j)
      +\epsilon-\sum_{j=1}^m c_jD(P_{Y_j}\|\mu_j)
      \label{e_58}
      \\
      &\le \epsilon+\sum_{j=1}^mc_j\log\mu_j(f_j)
      \label{e_59}
      \end{align}
      where
      \begin{itemize}
        \item \eqref{e_58} uses the Donsker-Varadhan formula, and we have chosen $P_X$, $P_{Y_j}:=T_j^*P_X$, $j=1,\dots,m$ such that
            \begin{align}
            \sum_{i=1}^l b_iD(P_{Z_i}\|\nu_i)
      > \sum_{j=1}^m c_j D(P_{Y_j}\|\mu_j)-\epsilon
            \end{align}
        \item \eqref{e_59} also follows from the Donsker-Varadhan formula.
      \end{itemize}
      The result follows since $\epsilon>0$ can be arbitrary.
\end{description}
\end{proof}

\begin{rem}
The infimum in \eqref{e_frbl_entropic} is in fact achievable: For any $(P_{Z_i})$,
there exists a $P_X$ that minimizes $\sum_{j=1}^m c_j D(P_{Y_j}\|\mu_j)$ subject to the constraints $S_i^*P_X=P_{Z_i}$, $i=1,\dots m$, where $P_{Y_j}:=T_j^*P_X$, $j=1,\dots,m$.
Indeed, since the singleton $\{P_{Z_i}\}$ is weak$^*$-closed and $S^*_i$ is weak$^*$-continuous\footnote{Generally, if $T\colon A\to B$ is a continuous map between two topologically vector spaces, then $T^*\colon B^*\to A^*$ is a weak$^*$ continuous map between the dual spaces. Indeed, if $y_n\to y$ is a weak$^*$-convergent subsequence in $B^*$, meaning $y_n(b)\to y(b)$ for any $b\in B$, then we must have $T^* y_n(a)=y_n(Ta)\to y(Ta)=T^*y(a)$ for any $a\in A$, meaning that $T^* y_n$ converges to $T^*y$ in the weak$^*$ topology.}, the set $\bigcap_{i=1}^l(S^*_i)^{-1}P_{Z_i}$ is weak$^*$-closed in $\mathcal{M}(X)$;
hence its intersection with $\mathcal{P}(\mathcal{X})$ is weak$^*$-compact in $\mathcal{P}(\mathcal{X})$,
because $\mathcal{P}(\mathcal{X})$ is weak$^*$-compact by (a simple version for the setting of a compact underlying space $\mathcal{X}$ of) the Prokhorov theorem \cite{prokhorov1956convergence}.
Moreover, by the weak$^*$-lower semicontinuity of $D(\cdot\|\mu_j)$ (easily seen from the variational formula/Donsker-Varadhan formula of the relative entropy, cf.~\cite{verdubook}) and the weak$^*$-continuity of $T_j^*$, $j=1,\dots,m$, we see $\sum_{j=1}^m c_j D(T_j^*P_X\|\mu_j)$ is weak$^*$-lower semicontinuous in $P_X$, and hence the existence of a minimizing $P_X$ is established.
\end{rem}

\begin{rem}
Abusing the terminology from the min-max theory, Theorem~\ref{thm_unification} may be interpreted as a ``strong duality'' result which establishes the equivalence of two optimization problems. The 1)$\Rightarrow$2) part is the non-trivial direction which  requires regularity on the spaces. In contrast, the 2)$\Rightarrow$1) direction can be thought of as a ``weak duality'' which establishes only a partial relation but holds for more general spaces.
\end{rem}

\begin{rem}
The equivalent formulations of the forward Brascamp-Lieb inequality (Theorem~\ref{thm_1}) can be recovered from Theorem~\ref{thm_unification} by taking $l=1$, $\mathcal{Z}_1=\mathcal{X}$, and letting $S_1$ be the identity map/isomorphism, except that Theorem~\ref{thm_1} is established for completely general alphabets.
In other words, the forward Brascamp-Lieb inequality is the special case of the forward-reverse Brascamp-Lieb inequality when there is only one \emph{reverse channel} which is the identity.
\end{rem}

\subsection{Noncompact $\mathcal{X}$}
Our proof of 1)$\Rightarrow$2) in Theorem~\ref{thm_unification} makes use of the Hahn-Banach theorem, and hence relies crucially on the fact that the measure space is the dual of the function space. Naively, one might want to extend the the proof to the case of \emph{locally compact} $\mathcal{X}$ by considering $C_0(\mathcal{X})$ instead of $C_b(\mathcal{X})$, so that the dual space is still $\mathcal{M}(\mathcal{X})$. However, this would not work: consider the case when $\mathcal{X}=\mathcal{Z}_1\times,\dots,\times\mathcal{Z}_l$ and each $S_i$ is the canonical map. Then $\Theta_{m+1}(u)$ as defined in \eqref{e44} is $+\infty$ unless $u\equiv0$ (because $u\in C_0(\mathcal{X})$ requires that $u$ vanishes at infinity), thus $\Theta_{m+1}^*\equiv0$. Luckily, we can still work with $C_b(\mathcal{X})$; in this case $\ell\in C_b(\mathcal{X})^*$ may not be a measure, but we can decompose it into $\ell=\pi+R$ where $\pi\in\mathcal{M}(\mathcal{X})$ and $R$ is a linear functional ``supported at the infinity''.
Below we use the techniques in \cite[Chapter 1.3]{villani2003topics}
to prove a particular extension of Theorem~\ref{thm_unification} to a non-compact case.
\begin{thm}\label{thm_noncpt}
Theorem~\ref{thm_unification} still holds if
\begin{itemize}
  \item The assumption that $\mathcal{X}$ is a compact metric space is relaxed to the assumption that it is a locally compact and $\sigma$-compact Polish space;
  \item $\mathcal{X}=\prod_{i=1}^l\mathcal{Z}_i$ and $S_i\colon C_b(\mathcal{Z}_i)\to C_b(\mathcal{X})$, $i=1,\dots,l$ are canonical maps (see Definition~\ref{exp_defn1}).
\end{itemize}
\end{thm}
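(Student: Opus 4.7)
The plan is to repeat the proof of Theorem~\ref{thm_unification} verbatim on $C_b(\mathcal{X})$, with the same convex functionals $\Theta_0,\dots,\Theta_{m+1}$ as in \eqref{e37}--\eqref{e44} and the same invocation of Theorem~\ref{thm_FRduality}. The direction 2)$\Rightarrow$1) uses only the Donsker--Varadhan formula and never touches compactness of $\mathcal{X}$, so it carries over word for word. The delicate direction is 1)$\Rightarrow$2): Fenchel--Rockafellar now produces a primal infimum over all $\ell\in C_b(\mathcal{X})^*$, which is strictly larger than $\mathcal{M}(\mathcal{X})$, and the challenge is to show that this infimum still coincides with the one over $\mathcal{M}(\mathcal{X})$, so that the computations $\Theta_j^*(\ell)=c_jD(T_j^*\ell\|\mu_j)$ and $\Theta_{m+1}^*(\ell)=0$ under $S_i^*\ell=P_{Z_i}$ retain their intended meaning.

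The crux is the following claim: \emph{any} $\ell\in C_b(\mathcal{X})^*$ satisfying $\ell\ge 0$ (i.e.\ $\Theta_0^*(\ell)=0$) together with $S_i^*\ell=P_{Z_i}$ for every $i$ is in fact a Radon measure on $\mathcal{X}$. To prove this, decompose $\ell=\pi+R$, where $\pi$ is the restriction of $\ell$ to $C_0(\mathcal{X})$, which is a positive Radon measure by Theorem~\ref{thm_rmk}, and $R:=\ell-\pi$ is a positive linear functional that vanishes on $C_0(\mathcal{X})$ (the ``part at infinity''). The goal is to show $R\equiv 0$, which will follow from tightness.

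Fix $\epsilon>0$. Since $P_{Z_i}$ is a Borel probability measure on a Polish space, Ulam's theorem gives a compact $K_i\subseteq\mathcal{Z}_i$ with $P_{Z_i}(K_i)>1-\epsilon/l$. Because $\mathcal{X}=\prod_i\mathcal{Z}_i$ is locally compact and $\sigma$-compact and the coordinate projections are open, each $\mathcal{Z}_i$ is likewise locally compact, so Urysohn provides $\chi_i\in C_c(\mathcal{Z}_i)$ with $0\le\chi_i\le 1$ and $\chi_i\equiv 1$ on $K_i$. Set $\chi:=\prod_{i=1}^l S_i\chi_i\in C_c(\mathcal{X})$ (a genuinely compactly supported function, since the product of functions compactly supported in each coordinate is compactly supported in the product). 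The elementary identity
\begin{align}
1-\prod_{i=1}^l S_i\chi_i = \sum_{i=1}^l(1-S_i\chi_i)\prod_{j<i}S_j\chi_j
\le \sum_{i=1}^lS_i(1-\chi_i)
\end{align}
together with positivity of $\ell$ and the marginal constraints yields
\begin{align}
\ell(1-\chi)\le\sum_{i=1}^l S_i^*\ell(1-\chi_i)=\sum_{i=1}^l P_{Z_i}(1-\chi_i)<\epsilon.
\end{align}
Since $\chi\in C_0(\mathcal{X})$, $\pi(\chi)=\ell(\chi)>1-\epsilon$, so the total mass of $\pi$ satisfies $\pi(\mathcal{X})\ge 1-\epsilon$. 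Letting $\epsilon\to 0$ gives $\pi(\mathcal{X})=\ell(1)=1$, hence the positive linear functional $R$ obeys $R(1)=0$, which forces $R\equiv 0$ on nonnegative functions and hence on all of $C_b(\mathcal{X})$ by linearity.

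Once $\ell$ is known to lie in $\mathcal{M}(\mathcal{X})$, the identifications of $\Theta_j^*$ and $\Theta_{m+1}^*$ are the same as in the compact case, and the remainder of the proof of Theorem~\ref{thm_unification} (the passage from the Fenchel--Rockafellar equality through the choice of $v_j,w_i$ and comparison with \eqref{e_frbl_func}) transfers without modification. The main obstacle is precisely the tightness argument above: the product structure $\mathcal{X}=\prod_i\mathcal{Z}_i$ with canonical $S_i$ is used to build the global bump function $\chi$ from coordinatewise bumps, and local compactness of $\mathcal{X}$ is needed to guarantee that these coordinatewise bumps can themselves be chosen compactly supported. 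Relaxing either hypothesis would destroy the propagation of marginal tightness to a single compact set in $\mathcal{X}$, which is what eliminates the ``mass at infinity'' $R$ in the dual variable.
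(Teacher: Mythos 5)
Your proof is correct and takes essentially the same route as the paper: the 2)$\Rightarrow$1) direction is untouched, and the 1)$\Rightarrow$2) direction is reduced to showing that any nonnegative $\ell\in C_b(\mathcal{X})^*$ satisfying the marginal constraints is in fact a probability measure on $\mathcal{X}$, after which the compact-case identifications of $\Theta_j^*$ and $\Theta_{m+1}^*$ apply verbatim. The only difference is that the paper delegates precisely this step to \cite[Lemma~1.25]{villani2003topics}, whereas you prove it directly via the decomposition $\ell=\pi+R$ and the coordinatewise bump-function tightness argument, which is essentially a self-contained proof of that cited lemma.
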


\begin{proof}
The proof of the ``weak duality'' part 2)$\Rightarrow$1) still works in the noncompact case, so we only need to explain what changes need to be made in the proof of 1)$\Rightarrow$2) part. Let $\Theta_0$ be defined as before, in \eqref{e37}. Then for any $\ell\in C_b(\mathcal{X})^*$,
\begin{align}
\Theta_0^*(\ell)=\sup_{u\le 0}\ell(u)
\end{align}
which is $0$ if $\ell$ is nonnegative (in the sense that $\ell(u)\ge 0$ for every $u\ge 0$), and $+\infty$ otherwise. This means that when computing the infimum on the left side of \eqref{e62}, we only need to take into account of those nonnegative $\ell$.

Next, let $\Theta_{m+1}$ be also defined as before. Then directly from the definition we have
\begin{align}
\Theta_{m+1}^*(\ell)
&=\left\{
\begin{array}{cc}
0 & \textrm{if }\ell(\sum_iS_iw_i)=\sum_iP_{Z_i}(w_i),\quad \forall w_i\in C_b(\mathcal{Z}_i),\,i=1,\dots l; \\
+\infty & \textrm{otherwise}.
\end{array}
\right.
\label{e72}
\end{align}
For any $\ell\in C_b^*(\mathcal{X})$.
Generally, the condition in the first line of \eqref{e72} does not imply that $\ell$ is a measure.
However, if $\ell$ is also nonnegative, then using a technical result in \cite[Lemma~1.25]{villani2003topics} we can further simplify:
\begin{align}
\Theta_{m+1}^*(\ell)
  &=\left\{
  \begin{array}{cc}
    0 & \textrm{if }\ell\in\mathcal{M}(\mathcal{X})\textrm{ and }S_i^*\ell=P_{Z_i},\quad i=1,\dots,l; \\
    +\infty & \textrm{otherwise}.
  \end{array}
  \right.
\label{e65}
\end{align}
This further shows that when we compute the left side of \eqref{e62} the infimum can be taken over $\ell$ which is a coupling of $(P_{Z_i})$. In particular, if $\ell$ is a probability measure, then $\Theta_j^*(\ell)=c_jD(T_j^*\ell\|\mu_j)$ still holds with the $\Theta_j$ defined in \eqref{e39}, $j=1,\dots,m$. Thus the rest of the proof can proceed as before.
\end{proof}
\begin{rem}
The second assumption is made in order to achieve \eqref{e65} in the proof.
\end{rem}
\begin{rem}\label{rem_rbl}
In \cite{ISIT_lccv2016} we studied a version of ``reverse Brascamp-Lieb inequality''
which is a special case of Theorem~\ref{thm_noncpt} when there is only one
\emph{forward channel}:
in the setting of Theorem~\ref{thm_unification} consider $m=1$, $c_1=1$, $\mathcal{X}=\mathcal{Z}_1\times,\dots,\times\mathcal{Z}_l$. Let $S_i$ be the canonical map, $g_i\leftarrow g_i^{\frac{1}{b_i}}$, $i=1,\dots,l$.
Then \eqref{e_frbl_func} becomes
\begin{align}
\prod_{i=1}^l\|g_i\|_{\frac{1}{b_i}}
\le
\exp(d)\mu_1(f_1)
\label{e_rbl}
\end{align}
for any nonnegative continuous $(g_i)_{i=1}^l$ and $f_1$ bounded away from $0$ and $+\infty$ such that
\begin{align}
\sum_{i=1}^l \log g_i(z_i)\le \mathbb{E}[\log f_1(Y_1)|Z^l=z^l],\quad\forall z^l.
\label{e_rbl_cond}
\end{align}
Note that \eqref{e_rbl_cond} can be simplified in the deterministic special case:
let $\phi\colon \mathcal{X}\to\mathcal{Y}_1$ be any continuous function, and $T_1\leftarrow C_b(\phi)$ (that is, $T_1$ sends a function $f$ on $\mathcal{Y}_1$ to the function $f\circ \phi$ on $\mathcal{X}$).
Then \eqref{e_rbl_cond} becomes
\begin{align}
\prod_{i=1}^l g_i(z_i)\le f_1(\phi(z_1,\dots,z_l)),\quad\forall z_1,\dots,z_l.
\end{align}
Then the optimal choice of $f_1$ admits an explicit formula,
since for any given $(g_i)$, to verify \eqref{e_rbl} we only need to consider
\begin{align}
f_1(y)
:= \sup_{\phi(z_1,\dots,z_l)=y}\left\{\prod_i g_i(z_i)\right\},
\quad\forall y.
\label{e_rv_f}
\end{align}
Thus when $\phi$ is a linear function, \eqref{e_rbl} is essentially Barthe's formulation of reverse BL \eqref{e_barthe} (the exception being that Theorem~\ref{thm_noncpt}, in contrast to \eqref{e_barthe}, restricts attention to finite $\nu_i$, $i=1,\dots,l$ and $\mu_1$).
The more straightforward part of the duality (entropic inequality$\Rightarrow$functional inequality) has essentially been proved by Lehec \cite[Theorem~18]{lehec2010representation} in a special setting.
\end{rem}

\subsection{Extension to General Convex Functionals}
For certain applications (e.g.\ the transportation-cost inequalities, see Section~\ref{sec_transportation} ahead),
we may be interested in convex functionals beyond the relative entropy.
Recall that given a lower semicontinuous, proper convex function $\Lambda(\cdot)$, its Legendre-Fenchel transform (Definition~\ref{defn_LFT}) is denoted as $\Lambda^*(\cdot)$.
From convex analysis (see for example \cite[Lemma~4.5.8]{dembo2009large}) we have
\begin{align}
\Lambda(u)=\sup_{\ell\in C_b(\mathcal{X})^*}[\ell(u)-\Lambda^*(\ell)],
\label{e75}
\end{align}
for any $u\in C_b(\mathcal{X})$. Moreover, if $\Lambda^*(\ell)=+\infty$ for any $\ell\notin \mathcal{P}(\mathcal{X})$, then from \eqref{e75} we must also have
\begin{align}
\Lambda(u)=\sup_{\ell\in \mathcal{P}(\mathcal{X})}[\ell(u)-\Lambda^*(\ell)].
\label{e76}
\end{align}
For example, the function $\Lambda$ defined in \eqref{e38} satisfies the property in \eqref{e76}.
We need this property in the proof of Theorem~\ref{thm_unification} because of step \eqref{e64}.
From the proof of Theorem~\ref{thm_unification} we see that we can obtain the following generalization to convex functionals with no additional cost. An application of this generalization to transportation-cost inequalities is given in Section~\ref{sec_transportation}.

\begin{thm}\label{thm_GeneralConvex}
Assume that
\begin{itemize}
\item $m$ and $l$ are positive integers, $d\in\mathbb{R}$, $\mathcal{X}$ is a compact metric space (hence also a Polish space);
\item For each $i=1,\dots,l$, $\mathcal{Z}_i$ is a Polish space, $\Lambda_i\colon C_b(\mathcal{Z}_i)\to\mathbb{R}\cup\{+\infty\}$ is proper convex such that $\Lambda_i^*(\ell)=+\infty$ for $\ell\notin \mathcal{P}(\mathcal{Z}_i)$, and $S_i\colon C_b(\mathcal{Z}_i)\mapsto C_b(\mathcal{X})$ is a conditional expectation operator;
\item For each $j=1,\dots,m$, $\mathcal{Y}_j$ is a Polish space, $\Theta_j\colon C_b(\mathcal{Y}_j)\to\mathbb{R}\cup\{+\infty\}$ is proper convex such that $\Theta_j(u)<\infty$ for some $u\in C_b(\mathcal{Y}_j)$ which is bounded below,
    and $T_j\colon C_b(\mathcal{Y}_j)\to C_b(\mathcal{X})$ is a conditional expectation operator;
\item For any $\ell_{Z_i}\in \mathcal{M}(\mathcal{Z}_i)$ such that $\Lambda_i^*(\ell_{Z_i})<\infty$, $i=1,\dots,l$, there exists $\ell_X\in\bigcap_i(S_i^*)^{-1}\ell_{Z_i}$ such that $\sum_{j=1}^m \Theta_j^*(\ell_{Y_j})< \infty$, where $\ell_{Y_j}:=T_j^*\ell_X$.
\end{itemize}
Then the following two statements are equivalent:
\begin{enumerate}
  \item If $g_i\in C_b(\mathcal{Z}_i)$, $f_j\in C_b(\mathcal{Y}_j)$, $i=1,\dots,l$, $j=1,\dots,m$ satisfy
      \begin{align}
      \sum_{i=1}^l S_ig_i\le \sum_{j=1}^m T_jf_j
      \label{e_GConvex1}
      \end{align}
      then
      \begin{align}
      \sum_{i=1}^l\Lambda_i(g_i)\le
      \sum_{j=1}^m\Theta_j(f_j).
      \label{e_GConvex_func}
      \end{align}
  \item For any\footnote{Since by assumption $\Lambda_i^*(\ell_{Z_i})=+\infty$ when $\ell\notin \mathcal{P}(\mathcal{Z}_i)$, in which case \eqref{e_GConvex_entropic} is trivially true, it is equivalent to assume here that $\ell\in \mathcal{P}(\mathcal{Z}_i)$.} $\ell_{Z_i}\in\mathcal{M}(\mathcal{Z}_i)$, $i=1,\dots,l$,
      \begin{align}
      \sum_{i=1}^l \Lambda_i^*(\ell_{Z_i})
      \ge \inf_{\ell_X}\sum_{j=1}^m \Theta_j^*(\ell_{Y_j})
      \label{e_GConvex_entropic}
      \end{align}
      where the infimum is over $\ell_X$ such that $S_i^*\ell_X=P_{Z_i}$, $i=1,\dots,l$, and $\ell_{Y_j}:=T_j^*\ell_X$, $j=1,\dots,m$.
\end{enumerate}
\end{thm}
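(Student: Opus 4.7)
The plan is to mirror the proof of Theorem~\ref{thm_unification} (compact case) essentially verbatim. The key observation is that the specific log-moment-generating functionals $c_j\log\mu_j(\exp(\cdot/c_j))$ and $b_i\log\nu_i(\exp(\cdot/b_i))$ enter the original argument only through two structural features that are now posited abstractly: their Legendre--Fenchel biconjugacy (the Donsker--Varadhan formula), guaranteed by \eqref{e75}--\eqref{e76}, and the fact that the conjugate equals $+\infty$ off $\mathcal{P}$, which is exactly the standing hypothesis on $\Lambda_i^*$. Throughout I will read the infimum in \eqref{e_GConvex_entropic} as effectively restricted to nonnegative $\ell_X$, since this is both the form produced by the 1)$\Rightarrow$2) direction and the form needed by 2)$\Rightarrow$1).

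For the strong-duality direction 1)$\Rightarrow$2), I would apply Theorem~\ref{thm_FRduality} on $C_b(\mathcal{X})$ with the same three convex functionals used in the proof of Theorem~\ref{thm_unification}: $\Theta_0(u):=0$ if $u\le 0$ (and $+\infty$ else); $\widetilde\Theta_j(u):=\inf\{\Theta_j(v):T_jv=u\}$ for $j=1,\dots,m$ (with $+\infty$ if no such $v$ exists); and $\Theta_{m+1}(u):=\sum_{i=1}^l\ell_{Z_i}(w_i)$ whenever $u=\sum_iS_iw_i$ (otherwise $+\infty$). Their Legendre--Fenchel conjugates are the convex indicators (equal to $0$ on the set and $+\infty$ off it) of $\{\ell\ge 0\}$ and of $\{\ell:S_i^*\ell=\ell_{Z_i}\,\forall i\}$, together with the identity $\widetilde\Theta_j^*=\Theta_j^*\circ T_j^*$. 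Convexity of each, well-definedness of $\Theta_{m+1}$, and the essential coercivity $\widetilde\Theta_j>-\infty$ follow verbatim as in Theorem~\ref{thm_unification}, the last invoking \eqref{e75} for $\Theta_j$ together with the coupling hypothesis $\sum_j\Theta_j^*(T_j^*\ell_X)<\infty$. Theorem~\ref{thm_FRduality} then delivers
\[
\inf_{\substack{\ell_X\ge 0\\ S_i^*\ell_X=\ell_{Z_i}}}\sum_{j=1}^m\Theta_j^*(T_j^*\ell_X)
\;=\;-\inf_{\substack{v_j,\,w_i\\\sum_jT_jv_j+\sum_iS_iw_i\ge 0}}\Bigl[\sum_{j=1}^m\Theta_j(v_j)+\sum_{i=1}^l\ell_{Z_i}(w_i)\Bigr].
\]
Substituting $f_j:=v_j$ and $g_i:=-w_i$ turns the constraint into \eqref{e_GConvex1}, so hypothesis 1) gives $\sum_j\Theta_j(v_j)\ge\sum_i\Lambda_i(-w_i)$; minimizing the residual $\sum_i\Lambda_i(-w_i)+\sum_i\ell_{Z_i}(w_i)$ over $w_i$ and applying \eqref{e75} to each $\Lambda_i$ yields the lower bound $-\sum_i\Lambda_i^*(\ell_{Z_i})$, which is precisely \eqref{e_GConvex_entropic}.

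For the weak-duality direction 2)$\Rightarrow$1), given $(g_i),(f_j)$ satisfying \eqref{e_GConvex1} and $\epsilon>0$, I would use \eqref{e76} to pick $\ell_{Z_i}\in\mathcal{P}(\mathcal{Z}_i)$ with $\Lambda_i(g_i)\le\ell_{Z_i}(g_i)-\Lambda_i^*(\ell_{Z_i})+\epsilon$, then invoke statement 2) to find $\ell_X\ge 0$ with $S_i^*\ell_X=\ell_{Z_i}$ and $\sum_j\Theta_j^*(T_j^*\ell_X)\le\sum_i\Lambda_i^*(\ell_{Z_i})+\epsilon$. Positivity of $\ell_X$ combined with \eqref{e_GConvex1} yields $\sum_i\ell_{Z_i}(g_i)=\ell_X\!\bigl(\sum_iS_ig_i\bigr)\le\sum_j(T_j^*\ell_X)(f_j)$, while the variational bound \eqref{e75} for $\Theta_j$ gives $(T_j^*\ell_X)(f_j)\le\Theta_j(f_j)+\Theta_j^*(T_j^*\ell_X)$. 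Summing these and letting $\epsilon\downarrow 0$ delivers \eqref{e_GConvex_func}.

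The main obstacle I anticipate is the abstract coercivity check $\widetilde\Theta_j>-\infty$ and the identification $\widetilde\Theta_j^*=\Theta_j^*\circ T_j^*$; both were essentially free in the relative-entropy setting of Theorem~\ref{thm_unification} but must now be extracted from the sole hypothesis that $\Theta_j(u)<\infty$ for some $u\in C_b(\mathcal{Y}_j)$ bounded below---which is exactly the minimal assumption ensuring that the variational formula \eqref{e75} is nondegenerate in the range of $T_j$ and thus reproduces the role formerly played by Donsker--Varadhan.
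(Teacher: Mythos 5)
Your proposal is correct and matches the paper's intended argument: the paper gives no separate proof of Theorem~\ref{thm_GeneralConvex}, stating only that it follows from the proof of Theorem~\ref{thm_unification} ``with no additional cost,'' and your adaptation --- replacing the log-moment-generating functionals by the abstract $\Lambda_i,\Theta_j$, using \eqref{e75}--\eqref{e76} in place of the Donsker--Varadhan formula, and invoking Theorem~\ref{thm_FRduality} with the same three functionals $\Theta_0$, $u\mapsto\inf\{\Theta_j(v)\colon T_jv=u\}$, and $u\mapsto\sum_i\ell_{Z_i}(w_i)$ --- is exactly that adaptation. Your reading of the infimum in \eqref{e_GConvex_entropic} as running over nonnegative $\ell_X$ is likewise consistent with the paper, since in Theorem~\ref{thm_unification} the corresponding infimum is over probability measures $P_X$, and nonnegativity is precisely what the step $\ell_X\left(\sum_i S_ig_i\right)\le\ell_X\left(\sum_j T_jf_j\right)$ in the weak-duality direction requires.
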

\begin{rem}\label{rem_ncpt}
Just like Theorem~\ref{thm_noncpt}, it is possible to extend Theorem~\ref{thm_GeneralConvex} to the case of noncompact $\mathcal{X}$, provided that $\mathcal{X}=\mathcal{Z}_1\times,\dots,\times\mathcal{Z}_l$ and $S_i$, $i=1,\dots,l$ are canonical maps.
\end{rem}

\section{Some Special Cases of the Forward-Reverse Brascamp-Lieb Inequality}\label{sec_special}
%{\bf [talk about previous proofs using different methods]}
In this section we discuss some notable special cases of the duality results for the forward-reverse Brascamp-Lieb Inequality
(Theorems~\ref{thm_1}-\ref{thm_GeneralConvex}).
Some of these special cases have been noticed in the literature (some proved using different methods that crucially rely on the finiteness of the alphabet).
\subsection{Variational Formula of R\'{e}nyi Divergence}
\label{sec_lpcb}
As the first example, we show how \eqref{e_func} recovers the variational formula of R\'{e}nyi divergence \cite{dvijotham} \cite{atarRobust} in a special case.
A prototype of the variational formula of R\'{e}nyi divergence appeared in the context of control theory \cite{dvijotham} as a technical lemma. Its utility in information theory was then noticed by \cite{atarRobust} \cite{atar2014information}, which further developed the result and elaborated on its applications in other areas of probability theory. %Let us start by reviewing a related functional inequality.
Suppose $R$ and $Q$ are nonnegative measures on $\mathcal{X}$, $\alpha\in(0,1)\cup(1,\infty)$, and $g\colon\mathcal{X}\to\mathbb{R}$ is a bounded measurable function. Also, let $T$ be a probability measure such that $R,Q\ll T$. Define the R\'{e}nyi divergence
\begin{align}
D_{\alpha}(Q\|R)
:=\frac{1}{\alpha-1}\log\left(\mathbb{E}\left[\exp\left(
\alpha\imath_{Q\|T}(\bar{X})+
(1-\alpha)\imath_{R\|T}(\bar{X})\right)\right]\right)
\end{align}
where $\bar{X}\sim T$,
which is independent of the particular choice of the reference measure $T$ \cite{erven2014}.
%Below we replace $\theta$ and $\nu$ with probability measures $Q$ and $R$ so that we can express some formulas conveniently as expectations, but all results continue to hold if for general nonnegative measures once we replace the expectations with integrals.
Then the variational formula of R\'{e}nyi divergence \cite[Remark~2.2]{atarRobust} can be equivalently stated as the functional inequality\footnote{Note that our definition of R\'{e}nyi divergence is different from \cite{atarRobust} by a factor of $\alpha$.}
\begin{align}
\frac{1}{\alpha-1}\log\mathbb{E}[ \exp((\alpha-1)g(\hat{X}))]-
\frac{1}{\alpha}\log\mathbb{E}[\exp(\alpha g(X))]
\le \frac{1}{\alpha}D_{\alpha}(Q\|{R})
\label{e_62}
\end{align}
where $X\sim {R}$ and $\hat{X}\sim{Q}$,
with equality achieved when
\begin{align}
\frac{{\rm d}{Q}}{{\rm d}{R}}(x)=\frac{\exp( g(x))}{\mathbb{E}[\exp( g(X))]}.
\end{align}
The well-known variational formula of the relative entropy (see e.g.~\cite{verdubook}) can be recovered by taking $\alpha\to1$.
In the $\alpha\in(1,\infty)$ case,
we can choose $\exp(g)$ to be the indicator function of an arbitrary measurable set $\mathcal{A}\subseteq\mathcal{X}$, to obtain the \emph{logarithmic probability comparison bound} (LPCB) \cite{atarRobust}\footnote{\cite{atarRobust} focuses on the case of probability measure, but \eqref{e_lpcb} continues to hold if ${Q}$ and ${R}$ are replaced by any unnormalized nonnegative measures.}
\begin{align}
\frac{1}{\alpha-1}\log{Q}(\mathcal{A})
-\frac{1}{\alpha}\log{R}(\mathcal{A})
\le \frac{1}{\alpha}D_{\alpha}({Q}\|{R}).
\label{e_lpcb}
\end{align}

Now we give a new proof of the functional inequality \eqref{e_62} using a well-known entropic inequality in information theory.
First consider $\alpha\in(1,\infty)$.
% using Theorem~\ref{thm_1}.
%This can be done by invoking Theorem~\ref{thm_1} and setting the cost function $d(\cdot)$ as a relative information, which will play a role in change-of-measure.
In Theorem~\ref{thm_1}, set $m\leftarrow 1$, $\mathcal{Y}_1=\mathcal{X}$, $c\leftarrow \frac{\alpha-1}{\alpha}$, $P_{Y_1|X}={\sf id}$ (the identity mapping).
We may assume without loss of generality that ${Q}\ll{R}$, since otherwise $D_{\alpha}({Q}\|{R})=\infty$ and \eqref{e_62} always holds. Thus, setting the cost function as
\begin{align}
d(x)\leftarrow
-\imath_{{Q}\|{R}}(x)
+\frac{\alpha-1}{\alpha}D_{\alpha}({Q}\|{R})
\label{e_d}
\end{align}
we see that \eqref{e_info} is reduced to
\begin{align}
D(P\|{R})+\mathbb{E}\left[
-\imath_{{Q}\|{R}}(\hat{X})\right]
+\frac{\alpha-1}{\alpha}D_{\alpha}({Q}\|{R})
\ge\frac{\alpha-1}{\alpha}D(P\|{R})
\label{e_64}
\end{align}
which, by our convention in Remark~\ref{rem6},
can be simplified to
\begin{align}
D(P\|{Q})+\frac{\alpha-1}{\alpha}D_{\alpha}({Q}\|{R})
\ge \frac{\alpha-1}{\alpha}D(P\|{R}).
\label{e_65}
\end{align}
It is a well-known result that \eqref{e_65} holds for all $P$ absolutely continuous with respect to ${Q}$ and ${R}$ (see for example \cite[Theorem~30]{erven2014}, \cite[Theorem~1]{shayevitz2010}, \cite[Corollary~2]{sason2015}), due to its relation to the fundamental problem of characterizing the error exponents in  binary hypothesis testing. By Theorem~\ref{thm_1} with $m=1$ we translate \eqref{e_64} into the functional inequality:
\begin{align}
\mathbb{E}\left[\exp\left(\log f(\hat{X})
+\imath_{{Q}\|{R}}(\hat{X})-\frac{\alpha-1}{\alpha}D_{\alpha}
({Q}\|{R})\right)\right]
\le\left(\mathbb{E}\left[ f^{\frac{\alpha}{\alpha-1}}(X)\right]\right)
^{\frac{\alpha-1}{\alpha}}
\end{align}
for all nonnegative measurable $f$. Finally, by taking the logarithms and dividing by $\alpha-1$ on  both sides and setting
$f\leftarrow \exp((\alpha-1)g)$, the functional inequality \eqref{e_62}
is recovered.

Note that the choice of $d(\cdot)$ in \eqref{e_d} is simply for the purpose of change-of-measure, since the two relative entropy terms in \eqref{e_65} have different reference measures $Q$ and $R$.
Thus, an alternative proof is to take $d(\cdot)=0$ but invoke the extension of Theorem~\ref{thm_1} in Remark~\ref{thm_1_ext} with $\mu\leftarrow Q$ and $\nu\leftarrow R$.

The case of $\alpha\in(0,1)$ can be proved in a similar fashion using Theorem~\ref{thm_unification} with $m\leftarrow2$, $l\leftarrow 1$, $\mathcal{X}=\mathcal{Y}_1=\mathcal{Y}_2\leftarrow \mathcal{X}$, $Q_{Y_1|X}=Q_{Y_2|X}={\sf id}$, $Q_{Y_1}\leftarrow Q$, $Q_{Y_2}\leftarrow R$, and $|\mathcal{Z}|=1$; we omit the details here.

Note that the original proofs of the functional inequality \eqref{e_62} in \cite{dvijotham}\cite{atarRobust} were based on H\"{o}lder's inequality, whereas the present proof relies on the duality between functional inequalities and entropic inequalities, and the property \eqref{e_65} (which amounts to the nonnegativity of relative entropy). A third proof of \eqref{e_62} based on the non-negativity of R\'{e}nyi divergence by the first and fourth named authors will be given in \cite{verdubook}. Moreover, the weaker probability version
\eqref{e_lpcb} can be easily proved by a data processing argument; see for example \cite[Section~II.B]{polyanskiy2010arimoto}\cite{sasonverdu2017}.

\subsection{Strong Data Processing Constant}
The strong data processing inequality (SDPI) \cite{ahlswede1976spreading}\cite{csiszar2011information}\cite{anan_13}
has received considerable interests recently. It has been proved fruitful in providing impossibility bounds in various problems; see \cite{pw_2015} for a recent list of its applications.
It generally refers to an inequality of the form
\begin{align}
D(P_X\|Q_X)\ge c D(P_Y\|Q_Y),\quad \textrm{for all $P_X\ll Q_X$}
\label{e_sdp}
\end{align}
where $P_X\to Q_{Y|X}\to P_Y$, and we have fixed $Q_{XY}=Q_XQ_{Y|X}$. The conventional data processing inequality corresponds to the case of $c=1$. The study of the best (largest) constant $c$ for \eqref{e_sdp} to hold can be traced to Ahlswede and G\'{a}cs \cite{ahlswede1976spreading}, who
showed, among other things, its equivalence to the functional inequality
\begin{align}
\mathbb{E}[\exp(\mathbb{E}[\log f(Y)|X])]\le \|f\|_{\frac{1}{c}}.\quad\textrm{for all nonnegative $f$}.
\label{e_sdpi_func}
\end{align}
The strong data processing inequality can be viewed as a special case of the forward-reverse Brascamp-Lieb inequality where there is only one forward and one identity reverse channel.
In other words, the equivalence between \eqref{e_sdp}
and \eqref{e_sdpi_func}
can be readily seen from either Theorem~\ref{thm_1} or Remark~\ref{rem_rbl}.
Its original proof of such an equivalence \cite[Theorem~5]{ahlswede1976spreading}, on the other hand, relies on a limiting property of hypercontractivity, which relies heavily on the finiteness of the alphabet and the proof is quite technical even in that case.

As we saw in Section~\ref{sec_lpcb}, a functional inequality often implies an inequality of the probabilities of sets when specialized to the indicator functions. In the case of \eqref{e_sdpi_func}, however, a more rational choice is
\begin{align}
f(y)=(1_{\mathcal{A}}(y)+ Q_Y(\mathcal{A}) 1_{\mathcal{\bar{A}}}(y))^c
\end{align}
where $\mathcal{A}$ is an arbitrary measurable subset of $\mathcal{Y}$ and $\mathcal{\bar{A}}:=\mathcal{Y}\setminus \mathcal{A}$. Then using \eqref{e_sdpi_func},
\begin{align}
Q_Y(\mathcal{A})^{c\epsilon}Q_X(x\colon Q_{Y|X=x}(\mathcal{A})\ge 1-\epsilon)
&=
Q_Y(\mathcal{A})^{c\epsilon}Q_X(x\colon Q_{Y|X=x}(\mathcal{\bar{A}})\le \epsilon)
\\
&\le \int\exp\left(
\log Q_Y(\mathcal{A})^c \cdot Q_{Y|X=x}(\mathcal{\bar{A}})\right){\rm d}Q_X(x)
\\
&=\mathbb{E}[\exp(\mathbb{E}[\log f(Y)|X])]
\\
&\le \mathbb{E}^c[f^{\frac{1}{c}}(Y)]
\\
&\le 2^cQ_Y^{c}(\mathcal{A}).
\end{align}
Rearranging, we obtain the following bound on conditional probabilities:
\begin{align}
Q_X(x\colon Q_{Y|X=x}(\mathcal{A})\ge 1-\epsilon)\le 2^c Q_Y^{c(1-\epsilon)}(\mathcal{A})
\label{e54}
\end{align}
which, by a blowing-lemma argument (cf.~\cite{ahlswede1976bounds}), would imply the asymptotic result of \cite[Theorem~1]{ahlswede1976bounds}, a useful tool in establishing strong converses in source coding problems. Note that \cite[Section~2]{ahlswede1976bounds} proved a result essentially the same as \eqref{e54} by working on \eqref{e_sdp} rather than \eqref{e_sdpi_func}.\footnote{Another difference is that \cite[Theorem~1]{ahlswede1976bounds} involves an auxiliary r.v.~$U$ with $|\mathcal{U}|\le 3$, where the cardinality bound comes from convexifying a subset in $\mathbb{R}^2$. Here \eqref{e54} holds if \eqref{e_sdp}, which is slightly simpler involving only relative entropy terms, because we are essentially working with the supporting lines of the convex hull,
and the supporting line of a set is the same as the supporting line of its convex hull.}

%
%We may also consider a multi-output extension of SDPI:
%\begin{align}
%D(P_X\|Q_X)\ge \sum_{j=1}^m c_j D(P_{Y_j}\|Q_{Y_j}),\quad \textrm{for all $P_X\ll Q_X$}
%\label{e_sdp1}
%\end{align}
%where $P_X\to Q_{Y_j|X}\to P_{Y_j}$, and we have fixed $Q_{XY_j}=Q_XQ_{Y_j|X}$. Such a multi-output SDPI dictates the maximal efficiency of generating common randomness among $m+1$ terminals where only one can communicate to others \cite[Theorem~4.2]{ahlswede1998common}\cite{liu2015key}. As a special case of Theorem~\ref{thm_1} with $d=0$, inequality \eqref{e_sdp1} holds if and only if
%      \begin{align}
%      \mathbb{E}\left[\exp\left(\sum_{j=1}^m\mathbb{E}[\log f_j(Y_j)|X]\right)\right]
%      &\le \prod_{j=1}^m\|f_j\|_{\frac{1}{c_j}}
%      \label{e_sdp2}
%      \end{align}
%for any non-negative measurable functions $f_j\colon\mathcal{Y}_j\to \mathbb{R}$, $j\in\{1,\dots,m\}$.

%\subsection{A Generalized Brascamp-Lieb Inequality}
%If Gaussian functions are indeed optimal for \eqref{e_func}, then Theorem~\ref{thm_1} implies that
%\begin{align}\label{e42}
%D(P_Z||Q_Z)\ge \mathbb{E}[d(\hat{Z})]+\sum_{j=1}^m c_jD(P_{X_j}||Q_{X_j})
%\end{align}
%for Gaussian measure $Q_Z$, Gaussian random transformations $Q_{X_j|Z}$'s, and quadratic function $d$. Moreover from the proof of Theorem~\ref{thm_1} there is a correspondence between the optimal $f_j$'s and the optimal $P_Z$, and so the uniqueness of the latter follows from the uniqueness of the former.

%It would also be nice if Theorem~\ref{thm_1} had certain extensions to R\'{e}nyi divergence. But I haven't found anything yet.

\subsection{Loomis-Whitney Inequality and Shearer's Lemma}
The duality between Loomis-Whitney Inequality and Shearer's Lemma is yet another special case of Theorem~\ref{thm_1}. This is already contained in the duality theorem of Carlen and Cordero-Erausquin \cite{carlen2009subadditivity}, but we briefly discuss it here.

The combinatorial Loomis-Whitney inequality \cite[Theorem~2]{loomis1949} says that if $A$
is a subset of $\mathcal{A}^m$, where $\mathcal{A}$ is a finite or countably infinite set, then
\begin{align}
|A|\le \prod_{j=1}^m |\pi_j(A)|^{\frac{1}{m-1}}
\label{e_clw}
\end{align}
where we defined the projection
\begin{align}
\pi_j\colon \mathcal{A}^m&\to \mathcal{A}^{m-1},
\label{e_proj}
\\
(x_1,\dots,x_m)&\mapsto (x_1,\dots,x_{j-1},x_{j+1},\dots,x_m).
\label{e_proj2}
\end{align}
for each $j\in\{1,\dots,m\}$.
The combinatorial inequality \eqref{e_clw} can be recovered from the following integral inequality: let $\mu$ be the counting measure on $\mathcal{A}^m$, then
\begin{align}
\int_{\mathcal{A}^m}\prod_{j=1}^m f_{j}(\pi_j(x))
{\rm d}\mu(x)
\le \prod_{j=1}^m \|f_{j}\|_{m-1}
\label{e_ilw}
\end{align}
for all nonnegative $f_j$'s, where the norm on the right side is with respect to the counting measure on $\mathcal{A}^{m-1}$.
This is an inequality of the form \eqref{e_func}.
To see how \eqref{e_ilw} recovers \eqref{e_clw},
let $f_j$ be the indicator function of $\pi_j(A)$ for each $j$.
Then the left side of \eqref{e_ilw} upper-bounds the left side of \eqref{e_clw},
while the right side of \eqref{e_ilw} is equal to the right side of \eqref{e_clw}.
Now we invoke Remark~\ref{thm_1_ext} with $X\leftarrow\mathcal{A}^m$, $\nu$ and $\mu_j$ being the counting measure on $\mathcal{A}^m$ and $\mathcal{A}^{m-1}$, respectively, $Q_{Y_j|X}$ being the projection mappings in \eqref{e_proj}-\eqref{e_proj2}, and let $(X_1,\dots,X_m)$ be distributed according to a given $P_X$, to obtain
\begin{align}
-H(X_1,\dots,X_m)&=D(P_X\|\nu)
\\
&\ge \sum_{j=1}^m\frac{1}{m-1}D(P_{Y_j}\|\mu_j)
\\
&\ge -\sum_{j=1}^m\frac{1}{m-1}
H(X_1,\dots,X_{j-1},X_{j+1},\dots,X_{m})
\end{align}
where $H(\cdot)$ is the Shannon entropy.
This is Shearer's Lemma \cite{shearer} \cite{shearer2} when the cardinality of the subset is one less than the cardinality of the whole set of random variables.

Similarly, the continuous Loomis-Whitney inequality for Lebesgue measure, that is,
\begin{align}
\int_{\mathbb{R}^m}\prod_{j=1}^m f_{j}(\pi_j(x^n))
{\rm d}x^n
\le \prod_{j=1}^m \|f_{j}\|_{m-1}
\label{e_lwc}
\end{align}
is the dual of a continuous version of Shearer's lemma involving differential entropies:
\begin{align}
h(X^m)\le \sum_{j=1}^m\frac{1}{m-1}h(X_1,\dots,X_{j-1},X_{j+1},\dots,X_m).
\end{align}

\subsection{Hypercontractivity}\label{sec_hc}
\begin{figure}[h!]
  \centering
\begin{tikzpicture}
[scale=2,
      dot/.style={draw,fill=black,circle,minimum size=0.7mm,inner sep=0pt},arw/.style={->,>=stealth}]
    \node[rectangle] (X) {$\mathcal{P}(\mathcal{Y}_1\times\mathcal{Y}_2)$};
    \node[rectangle] (Z1) [left=of X, xshift=0mm, yshift=0mm] {$\mathcal{P}(\mathcal{Z}_1)$};
    \node[rectangle] (Y1) [above right=of X, xshift=0mm, yshift=-10mm] {$\mathcal{P}(\mathcal{Y}_1)$};
    \node[rectangle] (Y2) [below right=of X, xshift=0mm, yshift=10mm] {$\mathcal{P}(\mathcal{Y}_2)$};

  \draw [arw] (X) to node[midway,above]{$\cong$} (Z1);
  \draw [arw] (X) to node[midway,above]{$T_1^*$} (Y1);
  \draw [arw] (X) to node[midway,below]{$T_2^*$} (Y2);
\end{tikzpicture}
\caption{Diagram for hypercontractivity (HC)}
\end{figure}
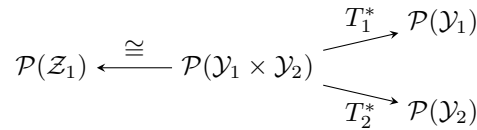
Fix a joint probability distribution $Q_{Y_1Y_2}$ and nonnegative continuous functions $F_1$ and $F_2$ on $\mathcal{Y}_1$ and $\mathcal{Y}_2$, respectively,
both bounded away from 0.
In Theorem~\ref{thm_unification}, take $l\leftarrow1$, $m\leftarrow2$, $b_1\leftarrow1$, $d\leftarrow0$, $f_1\leftarrow F_1^{\frac{1}{c_1}}$, $f_2\leftarrow F_2^{\frac{1}{c_2}}$, $\nu_1\leftarrow Q_{Y_1Y_2}$, $\mu_1\leftarrow Q_{Y_1}$, $\mu_2\leftarrow Q_{Y_2}$.
Also, put $Z_1=X=(Y_1,Y_2)$, and let $T_1$ and $T_2$ be the canonical maps (Definition~\ref{exp_defn1}).
The constraint \eqref{e34} translates to
\begin{align}
g_1(y_1,y_2)\le F_1(y_1)F_2(y_2),\quad\forall y_1,y_2
\end{align}
and the optimal choice of $g_1$ is when the equality is achieved.
We thus obtain the equivalence between\footnote{By a standard dense-subspace argument, we see that it is inconsequential that $F_1$ and $F_2$ in \eqref{e_hc} are not assumed to be continuous nor bounded away from zero.
It is also easy to see that the nonnegativity of $F_1$ and $F_2$ is inconsequential for \eqref{e_hc}.}
\begin{align}
\|F_1\|_{\frac{1}{c_1}} \|F_2\|_{\frac{1}{c_2}}
\ge
\mathbb{E}[F_1(Y_1)F_2(Y_2)],\quad \forall F_1\in L^{\frac{1}{c_1}}(Q_{Y_1}),\,
F_2\in L^{\frac{1}{c_2}}(Q_{Y_2})
\label{e_hc}
\end{align}
and
\begin{align}
\forall P_{Y_1Y_2},\quad
D(P_{Y_1Y_2}\|Q_{Y_1Y_2}) \ge c_1D(P_{Y_1}\|Q_{Y_1})
+ c_2D(P_{Y_2}\|Q_{Y_2}).
\end{align}
This equivalence can also be obtained from Theorem~\ref{thm_1}.
By H\"{o}lder's inequality, \eqref{e_hc} is equivalent to saying that the norm of the linear operator sending $F_1\in L^{\frac{1}{c_1}}(Q_{Y_1})$ to $\mathbb{E}[F_1(Y_1)|Y_2=\cdot]\in L^{\frac{1}{1-c_2}}(Q_{Y_2})$ does not exceed 1.
The interesting case is $\frac{1}{1-c_2}>\frac{1}{c_1}$, hence the name hypercontractivity.
The equivalent formulation of hypercontractivity was shown in \cite{nair} using a different proof via the method of types/typicality, which relies on the finite nature of the alphabet. In contrast, the proof based on the nonnegativity of relative entropy removes this constraint, allowing one to prove Nelson's Gaussian hypercontractivity from the information-theoretic formulation (see Section~\ref{sec_hypercontractivity}).

\subsection{Reverse Hypercontractivity (Positive Parameters\protect\footnote{By ``positive parameters'' we mean the $b_1$ and $b_2$ in \eqref{e_rhc_entropic} are positive.})}\label{sec_rhc}
\begin{figure}[h!]
  \centering
\begin{tikzpicture}
[scale=2,
      dot/.style={draw,fill=black,circle,minimum size=0.7mm,inner sep=0pt},arw/.style={->,>=stealth}]
    \node[rectangle] (X) {$\mathcal{P}(\mathcal{Z}_1\times\mathcal{Z}_2)$};
    \node[rectangle] (Z1) [above left=of X, xshift=0mm, yshift=-10mm] {$\mathcal{P}(\mathcal{Z}_1)$};
    \node[rectangle] (Z2) [below left=of X, xshift=0mm, yshift=10mm] {$\mathcal{P}(\mathcal{Z}_2)$};
    \node[rectangle] (Y1) [right=0.5cm of X, xshift=0mm, yshift=0mm] {$\mathcal{P}(\mathcal{Y}_1)$};

  \draw [arw] (X) to node[midway,above]{$S_1^*$} (Z1);
  \draw [arw] (X) to node[midway,below]{$S_2^*$} (Z2);
  \draw [arw] (X) to node[midway,above]{$\cong$} (Y1);
\end{tikzpicture}
\caption{Diagram for reverse HC}
\end{figure}
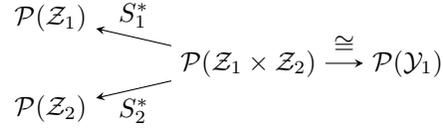
Let $Q_{Z_1Z_2}$ be a given joint probability distribution, and let $G_1$ and $G_2$ be nonnegative functions on $\mathcal{Z}_1$ and $\mathcal{Z}_2$, respectively, both bounded away from 0.
In Theorem~\ref{thm_unification}, take $l\leftarrow2$, $m\leftarrow1$, $c_1\leftarrow1$, $d\leftarrow0$, $g_1\leftarrow G_1^{\frac{1}{b_1}}$, $g_2\leftarrow G_2^{\frac{1}{b_2}}$, $\mu_1\leftarrow Q_{Z_1Z_2}$, $\nu_1\leftarrow Q_{Z_1}$, $\nu_2\leftarrow Q_{Z_2}$.
Also, put $Y_1=X=(Z_1,Z_2)$, and let $S_1$ and $S_2$ be the canonical maps (Definition~\ref{exp_defn1}).
Note that the constraint
\eqref{e34}
translates to
\begin{align}
f_1(z_1,z_2)\ge G_1(z_1)G_2(z_2),
\quad \forall z_1,z_2.
\end{align}
and the equality case yields the optimal choice of $f_1$ for \eqref{e_frbl_func}.
By Theorem~\ref{thm_unification} we thus obtain the equivalence between
\begin{align}
\|G_1\|_{\frac{1}{b_1}} \|G_2\|_{\frac{1}{b_2}}
\le
\mathbb{E}[G_1(Z_1)G_2(Z_2)],\quad \forall G_1,G_2
\label{e_rhc}
\end{align}
and
\begin{align}
\forall P_{Z_1},P_{Z_2},\,\exists P_{Z_1Z_2},\quad
D(P_{Z_1Z_2}\|Q_{Z_1Z_2}) \le b_1D(P_{Z_1}\|Q_{Z_1})
+ b_2D(P_{Z_2}\|Q_{Z_2}).
\label{e_rhc_entropic}
\end{align}
Note that in this set-up, if $\mathcal{Z}_1$ and $\mathcal{Z}_2$ are finite, then the condition in the last bullet in Theorem~\ref{thm_unification} is equivalent to $Q_{Z_1Z_2}\ll Q_{Z_1}\times Q_{Z_2}$.
The equivalent formulations of reverse hypercontractivity were observed in \cite{kamath_reverse}, where the proof is based on the method of types argument.

\subsection{Reverse Hypercontractivity (One Negative Parameter\protect\footnote{By ``one negative parameter'' we mean the $b_1$ is positive and $-c_2$ is negative in \eqref{e_rhcn_entropic}.})}\label{sec_rhcn}
\begin{figure}[h!]
  \centering
\begin{tikzpicture}
[scale=2,
      dot/.style={draw,fill=black,circle,minimum size=0.7mm,inner sep=0pt},arw/.style={->,>=stealth}]
    \node[rectangle] (X) {$\mathcal{P}(\mathcal{Z}_1\times\mathcal{Y}_2)$};
    \node[rectangle] (Z1) [left=of X, xshift=0mm, yshift=0mm] {$\mathcal{P}(\mathcal{Z}_1)$};
    \node[rectangle] (Y1) [above right=of X, xshift=0mm, yshift=-10mm] {$\mathcal{P}(\mathcal{Y}_1)$};
    \node[rectangle] (Y2) [below right=of X, xshift=0mm, yshift=10mm] {$\mathcal{P}(\mathcal{Y}_2)$};

  \draw [arw] (X) to node[midway,above]{$S_1^*$} (Z1);
  \draw [arw] (X) to node[midway,above]{$\cong$} (Y1);
  \draw [arw] (X) to node[midway,below]{$T_2^*$} (Y2);
\end{tikzpicture}
\caption{Diagram for reverse HC with one negative parameter}
\end{figure}
In Theorem~\ref{thm_unification}, take $l\leftarrow1$, $m\leftarrow2$, $c_1\leftarrow1$, $d\leftarrow0$.
Let $Y_1=X=(Z_1,Y_2)$, and let $S_1$ and $T_2$ be the canonical maps (Definition~\ref{exp_defn1}).
Suppose that $Q_{Z_1Y_2}$ is a given joint probability distribution, and set $\mu_1\leftarrow Q_{Z_1Y_2}$, $\nu_1\leftarrow Q_{Z_1}$, $\mu_2\leftarrow Q_{Y_2}$
in Theorem~\ref{thm_unification}.
Suppose that $F$ and $G$ be arbitrary nonnegative continuous functions on $\mathcal{Y}_2$ and $\mathcal{Z}_1$, respectively,
which are bounded away from $0$.
Take $g_1\leftarrow G^{\frac{1}{b_1}}$,
$f_2\leftarrow F^{-\frac{1}{c_2}}$.
in Theorem~\ref{thm_unification}.
The constraint \eqref{e34}
translates to
\begin{align}
f_1(z_1,y_2)\ge G(z_1)F(y_2),\quad\forall z_1,y_2.
\label{e_constraint3}
\end{align}
Note that
\eqref{e_frbl_func}
translates to
\begin{align}
\|G\|_{\frac{1}{b_1}}\le Q_{Y_2Z_1}(f_1)Q_{Y_2}^{c_2}(F^{-\frac{1}{c_2}})
\label{e_nfunc}
\end{align}
for all $F$, $G$, and $f_1$ satisfying \eqref{e_constraint3}.
It suffices to verify \eqref{e_nfunc} for the optimal choice $f_1=GF$, so \eqref{e_nfunc} is reduced to
\begin{align}
\|F\|_{\frac{1}{-c_2}} \|G\|_{\frac{1}{b_1}}
\le
\mathbb{E}[F(Y_2)G(Z_1)],\quad \forall F,G.
\label{e_rhcn}
\end{align}
By Theorem~\ref{thm_unification}, \eqref{e_rhcn} is equivalent to
\begin{align}
\forall P_{Z_1},\,\exists P_{Z_1Y_2},\quad
D(P_{Z_1Y_2}\|Q_{Z_1Y_2}) \le b_1D(P_{Z_1}\|Q_{Z_1})
+ (-c_2)D(P_{Y_2}\|Q_{Y_2}).
\label{e_rhcn_entropic}
\end{align}
Inequality \eqref{e_rhcn} is called reverse hypercontractivity with a negative parameter in \cite{beigi2016}, where the entropic version
\eqref{e_rhcn_entropic}
is established for finite alphabets using the method of types.
Multiterminal extensions of \eqref{e_rhcn} and \eqref{e_rhcn_entropic} (called reverse Brascamp-Lieb type inequality with negative parameters in \cite{beigi2016}) can also be recovered from Theorem~\ref{thm_unification} in the same fashion, i.e., we move all negative parameters to the other side of the inequality so that all parameters become positive.

In summary, from the viewpoint of Theorem~\ref{thm_unification},
the results in~\ref{sec_hc},\ref{sec_rhc} and \ref{sec_rhcn} are degenerate special cases, in the sense that in any of the three cases the optimal choice of one of the functions in \eqref{e_frbl_func} can be explicitly expressed in terms of the other functions, hence this ``hidden function'' disappears in \eqref{e_hc}, \eqref{e_rhc} or \eqref{e_rhcn}.

\subsection{Transportation-Cost Inequalities}\label{sec_transportation}
\begin{defn}[see for example \cite{villani2008optimal}]\label{defn_tp}
We say that a probability measure $Q$ on a metric space $(\mathcal{Z},d)$ satisfies ${\rm T}_p(\lambda)$ inequality, $p\in[1,\infty)$, $\lambda\in(0,\infty)$, if
\begin{align}
\inf_{\pi}\mathbb{E}^{\frac{1}{p}}[d^p(X,Y)]\le \sqrt{2\lambda D(P\|Q)}
\label{e_tp_defn}
\end{align}
for every $P\ll Q$, where the infimum is over all coupling $\pi$ of $P$ and $Q$, and $(X,Y)\sim \pi$.
It suffices to focus on the case of $\lambda=1$,
since results for general $\lambda\in (0,\infty)$ can
usually be obtained by a scaling argument.
\end{defn}
As a consequence of Theorem~\ref{thm_GeneralConvex}
and Remark~\ref{rem_ncpt},
we have
\begin{cor}\label{cor_tp}
Let $(\mathcal{Z},d)$ be a locally compact, $\sigma$-compact Polish space.
\begin{description}
  \item[(a)]A probability measure $Q$ on $\mathcal{Z}$ satisfies ${\rm T}_2(1)$ inequality if and only if for any $f\in C_b(\mathcal{Z})$,
      \begin{align}
      \log Q\left(\exp\left(\inf_{z\in \mathcal{Z}}\left[f(z)+\frac{d^2(\cdot,z)}{2}
      \right]\right)\right)
      \le
      Q(f).
      \label{e_t2}
      \end{align}
  \item[(b)]A probability measure $Q$ on $\mathcal{Z}$ satisfies ${\rm T}_p(1)$ inequality, $p\in [1,2)$, if and only if:
      \begin{align}
      \log Q\left(t\inf_{z\in\mathcal{Z}}\left[f(z)
      +\frac{d^p(\cdot,z)}{p}\right]\right)
      \le \left(\frac{1}{p}-\frac{1}{2}\right)t^{\frac{2}{2-p}}
      +tQ(f),\quad \forall t\in [0,\infty), \, f\in C_b(\mathcal{Z}).
      \label{e_tp}
      \end{align}
\end{description}
\end{cor}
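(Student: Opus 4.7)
The plan is to apply Theorem~\ref{thm_GeneralConvex} together with its noncompact extension (Remark~\ref{rem_ncpt}) in a single configuration: take $l=2$, $m=1$, $\mathcal{Z}_1=\mathcal{Z}_2=\mathcal{Z}$, $\mathcal{X}=\mathcal{Z}\times\mathcal{Z}$, $\mathcal{Y}_1=\mathcal{X}$, with $S_1,S_2$ the canonical maps (Definition~\ref{exp_defn1}) associated with the two coordinate projections and $T_1$ the identity. On the ``$Z$-side'' I set $\Lambda_1(g_1):=\log Q(\exp g_1)$, whose Legendre-Fenchel transform is $\Lambda_1^*(P)=D(P\|Q)$, and $\Lambda_2(g_2):=Q(g_2)$, whose transform is $0$ at $\ell_{Z_2}=Q$ and $+\infty$ otherwise. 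This forces $\ell_{Z_2}=Q$ in \eqref{e_GConvex_entropic}, so the minimization there runs over couplings $\pi$ of $P$ and $Q$.

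For part~(a) I set $\Theta_1(f_1):=0$ if $f_1\le d^2/2$ pointwise and $+\infty$ otherwise (the unboundedness of $d^2/2$ is handled by the standard truncation $d^2\wedge N\uparrow d^2$). Then $\Theta_1^*(\pi)=\tfrac12\pi(d^2)$ on probability measures, and \eqref{e_GConvex_entropic} becomes $D(P\|Q)\ge\inf_\pi\tfrac12\pi(d^2)$, which is exactly the squared $\mathrm{T}_2(1)$ inequality. On the functional side, the constraint \eqref{e_GConvex1} with the optimal $f_1=d^2/2$ reduces to $g_1(x)+g_2(y)\le d^2(x,y)/2$; taking $g_2=-f$ and the optimal $g_1(x)=\inf_z[f(z)+d^2(x,z)/2]$, \eqref{e_GConvex_func} is precisely \eqref{e_t2}.

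For part~(b), $\mathrm{T}_p(1)$ reads $\tfrac12[\inf_\pi\pi(d^p)]^{2/p}\le D(P\|Q)$. The Fenchel-Young identity
\begin{equation*}
\tfrac{u^{2/p}}{2}=\sup_{t\ge 0}\left[\tfrac{tu}{p}-\tfrac{2-p}{2p}\,t^{\tfrac{2}{2-p}}\right],\qquad u\ge 0,
\end{equation*}
combined with $\tfrac{2-p}{2p}=\tfrac{1}{p}-\tfrac{1}{2}$, shows that $\mathrm{T}_p(1)$ is equivalent to the family
\begin{equation*}
\inf_{\pi}\pi\!\left(\tfrac{t\,d^p}{p}\right)\le D(P\|Q)+\left(\tfrac{1}{p}-\tfrac{1}{2}\right)t^{\tfrac{2}{2-p}},\qquad \forall t\ge 0.
\end{equation*}
For each fixed $t\ge 0$ I apply the construction of part~(a) after replacing $d^2/2$ with $td^p/p$ in $\Theta_1$ and absorbing the additive constant into $\Lambda_1$ via $\Lambda_1(g_1):=\log Q(\exp g_1)-(\tfrac{1}{p}-\tfrac{1}{2})t^{2/(2-p)}$. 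The same infimal-convolution choice $g_2=-tf$, $g_1(x)=t\inf_z[f(z)+d^p(x,z)/p]$ then delivers \eqref{e_tp}.

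The main obstacle is technical rather than conceptual: the cost functions $d^2$ and $d^p$ are generically unbounded and $\mathcal{X}=\mathcal{Z}\times\mathcal{Z}$ is only locally compact $\sigma$-compact Polish, so one is really using Remark~\ref{rem_ncpt} rather than the compact version. One must also verify the last hypothesis of Theorem~\ref{thm_GeneralConvex}---existence, for every $P$ with $D(P\|Q)<\infty$, of some coupling with $Q$ of finite $d^p$-cost---under suitable integrability of $Q$; when it fails the transportation-cost inequality is vacuous and nothing needs proving. Finally, one must check that the infimal convolution $\inf_z[f(z)+d^p(\cdot,z)/p]$ lies in $C_b(\mathcal{Z})$, which follows from the boundedness of $f$ and the continuity of $d$, possibly after an approximation argument replacing the constraint $f_1\le td^p/p$ by its truncations $f_1\le(td^p/p)\wedge N$ before letting $N\uparrow\infty$.
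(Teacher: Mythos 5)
Your proposal is correct and uses essentially the paper's approach: the same instantiation of Theorem~\ref{thm_GeneralConvex} (via Remark~\ref{rem_ncpt}) with $l=2$, $m=1$, $\mathcal{X}=\mathcal{Z}\times\mathcal{Z}$, the log-moment functional $\Lambda_1$ with conjugate $D(\cdot\|Q)$, the linear functional $\Lambda_2=Q(\cdot)$ forcing $\ell_{Z_2}=Q$, and an indicator-type $\Theta_1$ whose conjugate is the transport cost; your part~(a) coincides with the paper's up to a harmless rescaling (the paper uses $\Lambda_1(u)=2\log Q(\exp(u/2))$ and the constraint $u\le d^2$). The only genuine difference is in part~(b): the paper keeps a single application of the theorem by choosing a power-type $\Theta_1$ whose Legendre transform is $[\ell(d^p)]^{2/p}$, whereas you first linearize the exponent through the scalar Fenchel--Young identity $\tfrac12 u^{2/p}=\sup_{t\ge0}[\tfrac{tu}{p}-(\tfrac1p-\tfrac12)t^{2/(2-p)}]$ and then apply the linear-cost duality of part~(a) once for each $t$, absorbing the constant into $\Lambda_1$. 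Both routes are valid; yours avoids computing the conjugate of the more elaborate $\Theta_1$ at the cost of an (elementary) extra duality step and a $t$-indexed family of applications, and it also makes transparent where the $t^{2/(2-p)}$ term in \eqref{e_tp} comes from (your derivation yields $\log Q(\exp(t\inf_z[\,\cdot\,]))$ on the left, which is evidently the intended reading of \eqref{e_tp}). Your flagged technicalities (truncating the unbounded costs, continuity/boundedness of the infimal convolution, and the finite-cost-coupling hypothesis of Theorem~\ref{thm_GeneralConvex}) are handled at the same level of rigor as, or slightly more explicitly than, the paper's own proof.
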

\begin{proof}
\begin{description}
\item[(a)]In Theorem~\ref{thm_GeneralConvex}, put $l=2$, $m=1$, $\mathcal{Z}_1=\mathcal{Z}_2\leftarrow\mathcal{Z}$,
$\mathcal{X}=\mathcal{Y}_1\leftarrow\mathcal{Z}\times\mathcal{Z}$,
and
\begin{align}
\Lambda_1(u)&:=2\log Q\left(\exp\left(\frac{u}{2}\right)\right);
\\
\Lambda_2(u)&:=Q(u);
\\
\Theta_1(u)&:=
\left\{
\begin{array}{cc}
  0 & u\le d^2; \\
  +\infty & \textrm{otherwise}.
\end{array}
\right.
\end{align}
For $\ell\in \mathcal{M}(\mathcal{X})$, we can compute
\begin{align}
\Lambda_1^*(\ell)&=2D(\ell\|Q);
\\
\Lambda_2^*(\ell)&:=
\left\{\begin{array}{cc}
         0 & \ell=Q; \\
         +\infty & \textrm{otherwise}  ;
       \end{array}
\right.
\label{e_119}
\\
\Theta_1^*(\ell)&=
\left\{
\begin{array}{cc}
  \ell(d^2) & \ell\ge 0; \\
  +\infty & \textrm{otherwise}.
\end{array}
\right.
\end{align}
We also have $\Lambda_i^*(\ell)=+\infty$ for any $\ell\notin \mathcal{P}(\mathcal{X})$, $i=1,2$.
Thus by Theorem~\ref{thm_GeneralConvex}, \eqref{e_t2} is equivalent to the following: for any $f_1\in C_b(\mathcal{Z}_1\times \mathcal{Z}_2)$, $g_1\in C_b(\mathcal{Z}_1)$, $g_2\in C_b(\mathcal{Z}_2)$ such that
$
g_1+g_2\le f_1
$,
it holds that
\begin{align}
\Lambda_1(g_1)+\Lambda_2(g_2)\le \Theta_1(f_1).
\end{align}
By the monotonicity of $\Lambda_1$, this is equivalent to
\begin{align}
\Lambda_1\left(\inf_z[d^2(\cdot,z)-g_2(z)]\right)+\Lambda_2(g_2)\le 0
\end{align}
for any $g_2\in C_b(\mathcal{Z})$,
which is the same as \eqref{e_t2}.

\item[(b)]The proof is similar to Part~(a), except that we now pick
    \begin{align}
    \Theta_1(f):=
    \left\{
    \begin{array}{cc}
      2^{-\frac{2}{2-p}}p^{\frac{p}{2-p}}(2-p)
      \sup^{\frac{2}{2-p}}\left(\frac{f}{d^p}\right) & \textrm{ if }\sup f\ge 0; \\
      0 & \textrm{otherwise},
    \end{array}
    \right.
    \end{align}
    so that for any $\ell\ge0$,
    \begin{align}
    \Theta_1^*(\ell)=[\ell(d^p)]^{\frac{2}{p}}.
    \end{align}
\end{description}
\end{proof}
\begin{rem}
Actually, the proof of Corollary~\ref{cor_tp} does not use the assumption that $d$ is a metric (other than that it is a continuous function which is bounded below). The equivalent formulation of ${\rm T}_1$ inequality (special case of \eqref{e_tp}) was known to Rachev \cite{rachev1991} and Bobkov and G\"otze \cite{bobkov1999} (who actually slightly simplified the formula using the fact that $d$ is a metric).
The equivalent formulation of ${\rm T}_2$ inequality in \eqref{e_t2} also appeared in \cite{bobkov1999}, and was employed in \cite{bobkov2000}\cite{bobkov2001} to show a connection to the logarithmic Sobolev inequality.
The equivalent formulation of ${\rm T}_p$ inequality, $p\in [1,2)$ in \eqref{e_tp} appeared in \cite[Proposition~22.3]{villani2008optimal}.
\end{rem}
Transportation-cost inequalities have been fruitful in obtaining measure concentration results (since \cite{marton1986simple}\cite{marton1996bounding}).
We discuss more on ${\rm T}_2$ inequalities in the Gaussian case in Section~\ref{sec_talagrand}.

\section{Data Processing, Tensorization and Convexity}
\label{sec_ele}
Given $Q_X$ and $(Q_{Y_j|X})$, denote by $\BL(Q_X,(Q_{Y_j|X}))$ the set of $(d,(c_j))$ in Theorem~\ref{thm_1} (forward Brascamp-Lieb inequality) such that either \eqref{e_func} or \eqref{e_info} holds. In this section we show that some elementary properties of $\BL(Q_X,(Q_{Y_j|X}))$ follows conveniently from the information-theoretic characterization \eqref{e_info}.

\subsection{Data Processing}
Loosely speaking, the set $\BL(Q_X,(Q_{Y_j|X}))$ characterizes the level of ``uncorrelatedness'' between $X$ and $(Y_1,\dots,Y_m)$. The following data processing property captures this intuition:
\begin{prop}
\begin{enumerate}
  \item Given $Q_W$, $Q_{X|W}$ and $(Q_{Y_j|X})_{j=1}^m$, assume that $Q_{WXY_j}=Q_WQ_{X|W}Q_{Y_j|X}$ for each $j$.
      If $(0,(c_j))\in\BL(Q_X,(Q_{Y_j|X}))$, then $(0,(c_j))\in\BL(Q_W,(Q_{Y_j|W}))$.
  \item Given $Q_X$, $(Q_{Y_j|X})_{j=1}^m$ and $(Q_{Z_j|Y_j})_{j=1}^m$, assume that $Q_{XY_jZ_j}=Q_XQ_{Y_j|X}Q_{Z_j|Y_j}$ for each $j$. Then
      $\BL(Q_X,(Q_{Y_j|X}))\subset\BL(Q_X,(Q_{Z_j|X}))$.
\end{enumerate}
\end{prop}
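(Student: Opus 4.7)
The plan is to exploit the entropic characterization \eqref{e_info} of Theorem~\ref{thm_1} and reduce both parts to the data processing inequality for relative entropy. This cleanly illustrates the advertised advantage of the information-theoretic reformulation: the corresponding derivations carried out directly on the functional side \eqref{e_func} would require manipulating integrals of products of functions under composed kernels and be considerably less transparent.

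For Part 1, I would pick an arbitrary $P_W \ll Q_W$ and let $P_X$ denote its pushforward through $Q_{X|W}$, so in particular $P_X \ll Q_X$. The assumption $Q_{WXY_j} = Q_W Q_{X|W} Q_{Y_j|X}$ encodes the Markov chain $W - X - Y_j$ under both the $Q$-law and the pushforward law, so the $P_{Y_j}$ induced from $P_W$ via $Q_{Y_j|W}$ coincides with the one induced from $P_X$ via $Q_{Y_j|X}$. Data processing gives $D(P_X\|Q_X) \le D(P_W\|Q_W)$, and the hypothesis $(0,(c_j)) \in \BL(Q_X,(Q_{Y_j|X}))$ applied to this $P_X$ then closes the argument. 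The restriction to $d = 0$ looks essential to keeping the statement symmetric: a generic cost function on $\mathcal{X}$ would have to be transported to $\mathcal{W}$ via $w \mapsto \mathbb{E}[d(X)\mid W = w]$, yielding a weaker form of the statement.

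For Part 2, I would argue in the opposite direction. Given $(d,(c_j)) \in \BL(Q_X,(Q_{Y_j|X}))$ and any $P_X \ll Q_X$, let $P_{Y_j}$ and $P_{Z_j}$ be the pushforwards of $P_X$ through $Q_{Y_j|X}$ and through the composed kernel $Q_{Z_j|X} := Q_{Y_j|X} \circ Q_{Z_j|Y_j}$, respectively. Coordinatewise data processing under $Q_{Z_j|Y_j}$ gives $D(P_{Z_j}\|Q_{Z_j}) \le D(P_{Y_j}\|Q_{Y_j})$ for each $j$, so applying the hypothesis produces
\begin{align*}
D(P_X\|Q_X) + \mathbb{E}[d(\hat{X})]
\ge \sum_j c_j D(P_{Y_j}\|Q_{Y_j})
\ge \sum_j c_j D(P_{Z_j}\|Q_{Z_j}),
\end{align*}
which is precisely the entropic characterization of $(d,(c_j)) \in \BL(Q_X,(Q_{Z_j|X}))$.

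No genuine obstacle is anticipated; both parts are one-line applications of data processing once the entropic formulation is in hand. The only bookkeeping subtlety is checking that the Markov structures stated in the hypotheses indeed ensure that the pushforward laws entering the two characterizations are the same, which is immediate from the assumed factorizations of $Q_{WXY_j}$ and $Q_{XY_jZ_j}$.
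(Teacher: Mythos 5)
Your proposal is correct and matches the paper's intent exactly: the paper omits the proof, noting only that both parts "follow immediately from the monotonicity of the relative entropy and \eqref{e_info}," which is precisely the data-processing argument you spell out. Your additional checks (that the Markov factorizations make the induced $P_{Y_j}$ agree in Part~1, and that $d=0$ is what lets the cost term transfer) are the right bookkeeping and introduce no deviation from the paper's route.
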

The proof is omitted since it follows immediately from the monotonicity of the relative entropy and \eqref{e_info}.

\subsection{Tensorization}
The term "tensorization" refers to the phenomenon of additivity/multiplicativity in certain functional inequalities under tensor products.
In information theory this is a central feature of many converse proofs, and is closely related to the fact that some operational problems admit single-letter solutions. In functional analysis, this provides a ``particularly cute'' \cite{tao_tens} tool for proving many inequalities in arbitrary dimensions. As a close example, Lieb's proof \cite{lieb1990gaussian} of the Brascamp Lieb inequality relies on a special case of Proposition~\ref{prop_tens} below, where the proof uses the (functional version of) Brascamp-Lieb inequality and the Minkowski inequality. The original proof of Brascamp-Lieb inequality \cite{brascamp1976best} is also based on a tensor power construction.
\begin{prop}\label{prop_tens}
Suppose $(d^{(i)},(c_j))\in
\BL(Q_X^{(i)},(Q^{(i)}_{Y_j|X}))$
for $i=1,2$.
Then
$$\left(
d^{(1)}+d^{(2)},
(c_j)\right)
\in
\BL\left(Q_X^{(1)}\times Q_X^{(2)},
\left(Q_{Y_j|X}^{(1)}\times Q_{Y_j|X}^{(2)}\right)\right)$$
where $d\1+d\2$ is defined as the function
\begin{align}
\mathcal{X}\1\times\mathcal{X}\2&\to\mathbb{R};
\\
(x_1,x_2)&\mapsto d\1(x_1)+d\2(x_2).
\end{align}
\end{prop}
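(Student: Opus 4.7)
The plan is to verify the entropic criterion \eqref{e_info} directly on the product, reducing the claim to the two given marginal inequalities plus a single data-processing step. Fix $P_{X_1X_2}\ll Q_X^{(1)}\times Q_X^{(2)}$, let $P_{X_1}$ and $P_{X_2|X_1}$ denote its marginal and regular conditional, and write $P_{Y_j^{(1)}}$ for the image of $P_{X_1}$ through $Q_{Y_j|X}^{(1)}$, $P_{Y_j^{(2)}|X_1}$ for the image of $P_{X_2|X_1}$ through $Q_{Y_j|X}^{(2)}$, and $P_{Y_j^{(1)}Y_j^{(2)}}$ for the image of $P_{X_1X_2}$ through the product channel. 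The chain rule for relative entropy against a product reference,
\begin{align}
D(P_{X_1X_2}\|Q_X^{(1)}\times Q_X^{(2)})=D(P_{X_1}\|Q_X^{(1)})+\mathbb{E}_{\hat X_1\sim P_{X_1}}\bigl[D(P_{X_2|X_1=\hat X_1}\|Q_X^{(2)})\bigr],
\end{align}
lets me apply the first hypothesis to $P_{X_1}$ and the second hypothesis to each conditional $P_{X_2|X_1=x_1}$. Integrating the latter in $x_1\sim P_{X_1}$ and adding it to the former produces
\begin{align}
D(P_{X_1X_2}\|Q_X^{(1)}\times Q_X^{(2)})+\mathbb{E}[d^{(1)}(\hat X_1)+d^{(2)}(\hat X_2)]\geq \sum_{j=1}^m c_j\Bigl[D(P_{Y_j^{(1)}}\|Q_{Y_j^{(1)}})+D(P_{Y_j^{(2)}|X_1}\|Q_{Y_j^{(2)}}\,|\,P_{X_1})\Bigr].
\end{align}

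To finish, I need to show for each $j$ that the bracket dominates $D(P_{Y_j^{(1)}Y_j^{(2)}}\|Q_{Y_j^{(1)}}\times Q_{Y_j^{(2)}})$. Applying the chain rule a second time,
\begin{align}
D(P_{Y_j^{(1)}Y_j^{(2)}}\|Q_{Y_j^{(1)}}\times Q_{Y_j^{(2)}})=D(P_{Y_j^{(1)}}\|Q_{Y_j^{(1)}})+D(P_{Y_j^{(2)}|Y_j^{(1)}}\|Q_{Y_j^{(2)}}\,|\,P_{Y_j^{(1)}}),
\end{align}
so the remaining inequality is
\begin{align}
D(P_{Y_j^{(2)}|X_1}\|Q_{Y_j^{(2)}}\,|\,P_{X_1})\geq D(P_{Y_j^{(2)}|Y_j^{(1)}}\|Q_{Y_j^{(2)}}\,|\,P_{Y_j^{(1)}}).
\end{align}
This is pure data processing: the independence of the two channels makes $Y_j^{(1)}-X_1-Y_j^{(2)}$ a Markov chain, so rewriting both sides as the joint relative entropies $D(P_{X_1Y_j^{(2)}}\|P_{X_1}\times Q_{Y_j^{(2)}})$ and $D(P_{Y_j^{(1)}Y_j^{(2)}}\|P_{Y_j^{(1)}}\times Q_{Y_j^{(2)}})$, the random map that pushes only the first coordinate through $Q_{Y_j|X}^{(1)}$ sends the first pair to the second, and monotonicity of relative entropy under a common transformation yields the bound.

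The main potential obstacle I anticipate is the bookkeeping around infinite quantities: the conditional $P_{X_2|X_1=x_1}$ may fail absolute continuity on a $P_{X_1}$-null set, and the costs $d^{(i)}$ may take the value $+\infty$. Both are dispatched by the convention of Remark~\ref{rem6} together with the $\sigma$-finite extension of Theorem~\ref{thm_1} in Remark~\ref{thm_1_ext}, which interpret both sides of \eqref{e_info} as well-defined elements of $(-\infty,+\infty]$, making the $x_1$-integration and the final summation unambiguous.
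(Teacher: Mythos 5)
Your proof is correct and follows essentially the same route as the paper's: the chain rule on the product reference, applying the hypothesis marginally on the first factor and conditionally (integrated over $x_1$) on the second, and then dominating $D(P_{Y_j^{(2)}|Y_j^{(1)}}\|Q_{Y_j^{(2)}}\,|\,P_{Y_j^{(1)}})$ by $D(P_{Y_j^{(2)}|X_1}\|Q_{Y_j^{(2)}}\,|\,P_{X_1})$ via the Markov chain $Y_j^{(1)}-X_1-Y_j^{(2)}$. The paper phrases that last step as Jensen's inequality plus the Markov chain, while you phrase it as monotonicity of relative entropy under a common kernel acting on the first coordinate, but these are the same fact.
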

We provide a simple information-theoretic proof using the chain rules of the relative entropy. Note that the algebraic expansions here are similar to the ones in the proof of Gaussian optimality in Section~\ref{sec_gaussian} or the converse proof for the key generation problem in Section~\ref{sec_key}.
\begin{proof}
For any arbitrary $P_{X^{(1)}X^{(2)}}$, define $P_{X\1X\2Y_j\1Y_j\2}:=P_{X\1X\2}Q_{Y_j|X}\1Q_{Y_j|X}\2$.
Observe that
\begin{align}
D(P_{X^{(1)}X^{(2)}}\|Q_X^{(1)}\times Q_X^{(2)})
=D(P_{X\1}\|Q_X^{(1)})
+D(P_{X\2|X\1}\|Q_X\2|P_{X\1}).
\label{e_51}
\end{align}
\begin{align}
D(P_{Y_j\1Y_j\2}\|Q_{Y_j}\1\times Q_{Y_j}\2)
&=D(P_{Y_j\1}\|Q_{Y_j}\1)+D(P_{Y_j\2|Y_j\1}\|Q_{Y_j}\2|P_{Y_j\1})
\\
&\le D(P_{Y_j\1}\|Q_{Y_j}\1)
+D(P_{Y_j\2|X\1Y_j\1}\|Q_{Y_j}\2|P_{X\1Y_j\1})\label{e52}
\\
&=D(P_{Y_j\1}\|Q_{Y_j}\1)
+D(P_{Y_j\2|X\1}\|Q_{Y_j}\2|P_{X\1})
\label{e53}
\end{align}
where \eqref{e52} uses Jensen's inequality, and \eqref{e53} is from the Markov chain $\hat{Y}_j\2-\hat{X}\1-\hat{Y}_j\1$,
wherein $(\hat{X}^{(i)},\hat{Y}_j^{(i)})\sim P_{X^{(i)}Y_j^{(i)}}$ for $i=1,2$, $j=1,\dots,m$.
By the assumption and the law of total expectation,
\begin{align}
D(P_{X\1}\|Q_X\1)+\mathbb{E}[d(\hat{X}\1)]
&\ge \sum_{j=1}^m c_jD(P_{Y_j\1}\|Q_{Y_j}\1);
\label{e56}
\\
D(P_{X\2|X\1}\|Q_X\2|P_{X\1})+\mathbb{E}[d(\hat{X}\2)]
&\ge \sum_{j=1}^m c_jD(P_{Y_j\2|X\1}\|Q_{Y_j}\2|P_{X\1}).
\label{e57}
\end{align}
Adding up \eqref{e56} and \eqref{e57} and applying \eqref{e_51} and \eqref{e53}, we obtain
\begin{align}
D(P_{X^{(1)}X^{(2)}}\|Q_X^{(1)}\times Q_X^{(2)})
+\mathbb{E}[(d\1+d\2)(X\1,X\2)]
\ge \sum_{j=1}^m c_jD(P_{Y_j\1Y_j\2}\|Q_{Y_j}\1\times Q_{Y_j}\2)
\end{align}
as desired.
\end{proof}
A functional proof of the tensorization of reverse Brascamp-Lieb inequalities can be given by generalizing the proof of the tensorization of the Pr\'ekopa-Leindler inequality (see for example \cite{tao_leindler}).
Alternatively, information-theoretic proofs of the tensorization of these reverse-type inequalities can be extracted from the proof of the Gaussian optimality in Theorem~\ref{thm_fr_optimality} ahead, and we omit the repetition here.

\subsection{Convexity}
Another property which follows conveniently from the information-theoretic characterization of $\BL(\cdot)$ is convexity:
\begin{prop}\label{prop_conv}
If $(d^i,(c_j^i))\in\BL(Q_X,(Q_{Y_j|X}))$ for $i=0,1$, then
$(d^{\theta},(c_j^{\theta}))\in\BL(Q_X,(Q_{Y_j|X}))$ for $\theta\in[0,1]$, where we have defined
\begin{align}
d^{\theta}&:=(1-\theta)d^0+\theta d^1,
\\
c_j^{\theta}&:=(1-\theta)c_j^0+\theta c_j^1,\quad\forall j\in\{1,\dots,m\}.
\end{align}
\end{prop}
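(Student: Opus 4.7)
The plan is to exploit the information-theoretic characterization \eqref{e_info} in Theorem~\ref{thm_1}, which has the crucial feature of being \emph{linear} in the parameters $(d,(c_j))$. In contrast to the functional formulation \eqref{e_func} (where the analogous convexity statement is an interpolation result of Riesz--Thorin flavor and is substantially less transparent), the entropic version makes convexity nearly automatic.

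Concretely, fix an arbitrary $P_X\ll Q_X$ and let $P_{Y_j}$ be induced by $P_X\to Q_{Y_j|X}\to P_{Y_j}$ for each $j$. By hypothesis, $(d^i,(c_j^i))\in\BL(Q_X,(Q_{Y_j|X}))$ for $i=0,1$, so Theorem~\ref{thm_1} gives
\begin{align}
D(P_X\|Q_X)+\mathbb{E}[d^i(\hat{X})]\ge \sum_{j=1}^m c_j^i\, D(P_{Y_j}\|Q_{Y_j}),\qquad i=0,1,
\end{align}
where $\hat{X}\sim P_X$. Multiplying the $i=0$ inequality by $(1-\theta)$ and the $i=1$ inequality by $\theta$ and summing, the coefficient on $D(P_X\|Q_X)$ remains $1$, the expectation terms combine (by linearity of expectation) to $\mathbb{E}[d^\theta(\hat{X})]$, and the right-hand sides collapse to $\sum_{j=1}^m c_j^\theta\, D(P_{Y_j}\|Q_{Y_j})$. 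We obtain
\begin{align}
D(P_X\|Q_X)+\mathbb{E}[d^\theta(\hat{X})]\ge \sum_{j=1}^m c_j^\theta\, D(P_{Y_j}\|Q_{Y_j}),
\end{align}
which is statement~2) of Theorem~\ref{thm_1} for $(d^\theta,(c_j^\theta))$.

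Since $P_X$ was arbitrary, applying the equivalence in Theorem~\ref{thm_1} once more yields $(d^\theta,(c_j^\theta))\in\BL(Q_X,(Q_{Y_j|X}))$. A minor technical point worth noting is that the convention of Remark~\ref{rem6} ensures the left-hand sides are unambiguously defined whenever the hypothesis \eqref{e51} holds for $d^0$ and $d^1$ (and thus for $d^\theta$, by convexity of the exponential), so that the convex combination of the two inequalities is well-posed even when individual terms could take the value $+\infty$. There is no real obstacle here: the entire content of the proposition is the linearity of \eqref{e_info} in $(d,(c_j))$, and the value of the argument is methodological rather than technical, illustrating that a statement with nontrivial content on the functional side becomes nearly trivial on the entropic side.
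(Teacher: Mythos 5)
Your proof is correct and is exactly the paper's argument: the paper's proof of Proposition~\ref{prop_conv} is a one-line remark that the claim follows from \eqref{e_info} by taking convex combinations, which is precisely what you carry out. (One small caveat on your side remark: convexity of $\exp$ gives finiteness of $\mathbb{E}[\exp(-d^\theta(X))]$ but not its strict positivity in degenerate cases where $d^0$ and $d^1$ are infinite on complementary sets; this is immaterial, since membership in $\BL$ only requires \eqref{e_info} to hold, which your convex-combination step establishes directly.)
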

\begin{proof}
Follows immediately from the \eqref{e_info} and taking convex combinations.
\end{proof}

Note that by taking $m=2$, $X=(Y_1,Y_2)$, $d^i(\cdot)=0$ and $Q_{Y_j|X}$ to be the projection to the coordinates, we recover the Riesz-Thorin theorem on the interpolation of operator norms in the special case of nonnegative kernels. This information-theoretic proof (for this special case) is much simpler than the common proof of the Riesz-Thorin theorem in functional analysis based on the three-lines lemma, because the $c_j$'s only affect the right side of \eqref{e_info} as linear coefficients, rather than as tilting of the distributions or functions.

\section{Gaussian Optimality Associated with the Forward inequality}\label{sec_gaussian}
In this section we prove the Gaussian extremality in several information-theoretic inequalities related to the forward Brascamp-Lieb inequality. Specifically, we first establish this for an inequality involving conditional differential entropies, which immediately implies the variants involving conditional mutual informations or differential entropies; the latter is directly connected to the Brascamp-Lieb inequality, as Theorem~\ref{thm_1} showed. These extremal inequalities have implications for certain operational problems in information theory, and quite interestingly, the essential steps in the proofs of these extremal inequalities follow the same patterns as the converse proofs for the corresponding operational problems.

Roughly speaking, the proof method is essentially based on the fact that two independent random variables are both Gaussian if their sum is independent of their difference (i.e.~Cramer's theorem \cite{cramer1936}). This rotation invariance argument\footnote{This argument was referred to as ``$O(2)$-invariance'' in \cite{bennett2008brascamp} and ``doubling trick'' in \cite{carlen1991superadditivity}.} has been used in establishing Gaussian extremality by Lieb \cite{lieb1990gaussian}, Carlen \cite{carlen1991superadditivity} and recently
in information theory by Geng-Nair \cite{geng2014yanlin} \cite{nairextremal}, Courtade-Jiao \cite{courtade2014extremal} and Courtade \cite{ISIT_courtade2016}.
Some related ideas have also appeared in the literature on the Brascamp-Lieb inequality, such as the observation that convolution preserves the extremizers of Brascamp-Lieb inequality \cite[Lemma~2]{barthe1998optimal} due to Ball.
However, as keenly noted in \cite{geng2014yanlin}, applying the rotation invariance/doubling trick on the information-theoretic formulation has certain advantages. For example, the chain rules provide convenient tools, and the establishment of the extremality usually follows similar steps as the converse proofs of the corresponding operational problems in information theory.
Since the optimization problems we consider involve many information-theoretic terms, we introduce a simplification/strengthening of the Geng-Nair approach by perturbing the coefficients in the objective function (see Remark~\ref{rem_expansion}), thus giving rise to some identities which become handy in the proof. A similar idea was used in \cite{courtade2014extremal}, and this should be applicable to a wide range of other problems.

%In Appendices~\ref{app_b} and \ref{app_c} we demonstrate the possibility of taking an alternative route; that is, one can first prove an extremal inequality of mutual information and then deduce the variants for (conditional) differential entropy as corollaries.

In this section, $\mathcal{X},\mathcal{Y}_1,\dots,\mathcal{Y}_m$ are assumed to be Euclidean spaces of dimensions $n,n_1,\dots,n_m$. To be specific about the notions of Gaussian optimality, we adopt some terminologies from \cite{bennett2008brascamp}:
\begin{defn}\label{defn_extremisability}
\begin{itemize}
  \item Extremisability: a certain supremization/infimization is finitely attained by some argument.
  \item Gaussian extremisability: a certain supremization/infimization is finitely attained by Gaussian function/Gaussian distributions.
  \item Gaussian exhaustibility: the value of a certain supremization/infimization does not change when the arguments are restricted to the subclass of Gaussian functions/Gaussian distributions.
\end{itemize}
\end{defn}
Most of the times, we will be able to prove Gaussian extremisability in a certain non-degenerate case, while showing Gaussian exhaustibility in general.

\subsection{Optimization of Conditional Differential Entropies}\label{sec_cond}
Fix $\mb{M}\succeq 0$,  $c_0\in[0,\infty)$, $c_1,\dots,c_m\in(0,\infty)$, and Gaussian random transformations $Q_{\mb{Y}_j|\mb{X}}$ for $j\in\{1,\dots,m\}$. For each $P_{\mb{X}U}$\footnote{In the case of standard Borel space, the conditional distribution $P_{{\bf X}|U=\cdot}$ can be uniquely defined from the joint distribution $P_{{\bf X}U}$, $P_U$-almost surely; see e.g.~\cite{verdubook}.}, define
\begin{align}
F(P_{\mb{X}U} ) := h(\mb{X}|U)-\sum_{j=1}^m c_j h(\mb{Y}_j|U)-c_0 \Tr [\mb{M} \mb{\Sigma}_{\mb{X}|U}],\label{gFunc}
\end{align}
where $\mb{X}\sim P_{\mb{X}}$ (the marginal of $P_{\mb{X}U}$) and $\mb{Y}_j$ has distribution induced by $P_{\mb{X}} \to Q_{\mb{Y}_j|\mb{X}}\to P_{\mb{Y}_j}$.
We have defined the differential entropy and the conditional differential entropies as
\begin{align}
h(\mb{X})&:=-D(P_{\mb{X}}\|\lambda);
\\
h(\mb{X}|U=u)&:=-D(P_{\mb{X}|U=u}\|\lambda),\quad\forall u\in\mathcal{U};
\\
h(\mb{X}|U)&:=\int h(\mb{X}|U=u){\rm d}P_U,\label{e79}
\end{align}
where $\lambda$ is the Lebesgue measure (with the same dimension as $\mb{X}$), and \eqref{e79} is defined whenever the integral exists.
Moveover, we have used the notation $\bsigma_{\mb{X}|U}:=\mathbb{E}[\Cov(\mb{X}|U)]$ for the expectation of the conditional covariance matrix.

\begin{defn}\label{defn_ND}
We say $(Q_{\mb{Y}_1|\mb{X}},\dots,Q_{\mb{Y}_m|\mb{X}})$ is \emph{non-degenerate} if each $Q_{\mb{Y}_j|\mb{X=0}}$ is a $n_j$-dimensional Gaussian distribution with invertible covariance matrix.
\end{defn}

In the non-degenerate case, we can show an \emph{extremal} result for the following optimization with a regularization on the covariance of the input.

\begin{thm}\label{thm:GaussEntropy}
If $(Q_{\mb{Y}_1|\mb{X}},\dots,Q_{\mb{Y}_m|\mb{X}})$ is non-degenerate, then $\sup_{P_{\mb{X}U}} \{ F(P_{\mb{X}U})   \colon  \bsigma_{\mb{X}|U}\preceq \mb{\Sigma} \}$ is finite and is attained by a Gaussian $\mb{X}$ and constant $U$. Moreover, the covariance of such $\mb{X}$ is unique.
\end{thm}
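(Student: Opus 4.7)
The plan is to (a) show the supremum is finite and attained, (b) apply a rotation-invariance (``doubling'') argument to the maximizer to force the conditional law $P_{\mb{X}\mid U=u}$ to be Gaussian, and (c) use a concavity/uniqueness argument to collapse $U$ to a constant and pin down the covariance.

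\textbf{Finiteness and attainment.} Write $\mb{Y}_j=\mb{A}_j\mb{X}+\mb{W}_j$ with $\mb{W}_j$ Gaussian of invertible covariance $\bsigma_{\mb{W}_j}$ (non-degeneracy), so that $h(\mb{Y}_j\mid U)\ge h(\mb{W}_j)$. The constraint $\bsigma_{\mb{X}\mid U}\preceq\bsigma$ together with the maximum-entropy property of the Gaussian bounds $h(\mb{X}\mid U)\le\tfrac12\log((2\pi e)^n\det\bsigma)$, and $c_0\Tr[\mb{M}\bsigma_{\mb{X}\mid U}]\ge 0$; these yield an explicit finite upper bound on $F$. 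For attainment, translating $\mb{X}$ by $-\EE[\mb{X}\mid U]$ leaves $F$ invariant, so one may assume $\EE[\mb{X}\mid U]=\mb{0}$, whence $\Cov(\mb{X})\preceq\bsigma$ by the law of total covariance, making the feasible family tight. After reducing the auxiliary alphabet $\mathcal{U}$ by a Carath\'eodory-type argument, weak compactness combined with the upper semicontinuity of $F$ (Gaussian smoothing makes each $h(\mb{Y}_j\mid U)$ weakly lower semicontinuous in $P_{\mb{X}U}$) produces a maximizer $P^*_{\mb{X}U}$.

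\textbf{Doubling.} Let $(\mb{X}_i,U_i,(\mb{Y}_j^{(i)})_j)$, $i=1,2$, be two independent copies of the joint law $P^*_{\mb{X}U}\prod_j Q_{\mb{Y}_j\mid\mb{X}}$, and set $\mb{X}_\pm:=(\mb{X}_1\pm\mb{X}_2)/\sqrt 2$, $\mb{Y}_j^\pm:=(\mb{Y}_j^{(1)}\pm\mb{Y}_j^{(2)})/\sqrt 2$. Rotational invariance of the Gaussian noises implies that the conditional law of $\mb{Y}_j^\pm$ given $\mb{X}_\pm$ is again $Q_{\mb{Y}_j\mid\mb{X}}$, while $\bsigma_{\mb{X}_\pm\mid U_1U_2}=\bsigma_{\mb{X}\mid U}\preceq\bsigma$, so both $(\mb{X}_\pm,(U_1,U_2))$ are feasible. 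A chain-rule expansion then gives the key identity
\begin{align*}
F(P_{\mb{X}_+,(U_1,U_2)})+F(P_{\mb{X}_-,(U_1,U_2)})-2F^\star=I(\mb{X}_+;\mb{X}_-\mid U_1,U_2)-\sum_{j=1}^m c_j\,I(\mb{Y}_j^+;\mb{Y}_j^-\mid U_1,U_2),
\end{align*}
where $F^\star$ denotes the supremum. Optimality forces the left side to be $\le 0$; the first-order identities obtained by infinitesimally perturbing the coefficients $(c_j,c_0)$, combined with the (strict) data-processing inequality through the Gaussian channels $\mb{X}_\pm\to\mb{Y}_j^\pm$, then pin down $I(\mb{X}_+;\mb{X}_-\mid U_1,U_2)=0$. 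Since $\mb{X}_1,\mb{X}_2$ are conditionally iid given $(U_1,U_2)$, independence of their sum and difference forces by the Kac-Bernstein/Cramer theorem that $P^*_{\mb{X}\mid U=u}$ is Gaussian for $P^*_U$-a.e.\ $u$.

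\textbf{Constant $U$, uniqueness, and main obstacle.} Once $P^*_{\mb{X}\mid U=u}$ is Gaussian with mean $\mb{m}_u$ and covariance $\bsigma_u$, $F(P^*_{\mb{X}U})$ reduces to a $P^*_U$-average of $\phi(\bsigma_u,\mb{m}_u)$, in which the mean enters only through a piece linear in $\mb{m}_u$ while
\begin{align*}
\phi(\bsigma,\mb{0})=\tfrac12\log\det\bsigma-\tfrac12\sum_{j=1}^m c_j\log\det(\mb{A}_j\bsigma\mb{A}_j^\top+\bsigma_{\mb{W}_j})-c_0\Tr[\mb{M}\bsigma]+\mathrm{const}.
\end{align*}
Although $\phi(\cdot,\mb{0})$ is the difference of a concave and a convex term and need not be globally concave, non-degeneracy together with the second-order optimality conditions at the maximizer yields strict concavity in the directions that matter; Jensen's inequality then collapses $U$ to a constant and pins down the optimal covariance $\bsigma^\star$ uniquely. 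The main obstacle in the proof is the doubling step: the plain identity above only yields $I(\mb{X}_+;\mb{X}_-\mid U_1U_2)\le\sum_j c_j\,I(\mb{Y}_j^+;\mb{Y}_j^-\mid U_1U_2)$, which combined with data processing forces the mutual information to vanish only when $\sum_j c_j<1$. For arbitrary coefficients one must couple this inequality with the extra identities supplied by the coefficient-perturbation trick and, in particular, show that the optimal covariance varies regularly with $(c_j,c_0)$; this coupling is where the bulk of the work lies.
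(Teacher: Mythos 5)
Your identity in the doubling step is correct, but the way you use it does not close the proof, and you say so yourself: the symmetric expansion $h(\mb{X}^{(1)}|U^{(1)})+h(\mb{X}^{(2)}|U^{(2)})=h(\mb{X}^+|\uh)+h(\mb{X}^-|\uh)-I(\mb{X}^+;\mb{X}^-|\uh)$ (and likewise for the $\mb{Y}_j$'s) only gives $I(\mb{X}^+;\mb{X}^-|\uh)\le\sum_j c_j I(\mb{Y}_j^+;\mb{Y}_j^-|\uh)$, which together with data processing forces $I=0$ only when $\sum_j c_j<1$. Your proposed repair --- coupling this with perturbation of $(c_0,\dots,c_m)$ and a regularity claim about how the optimal covariance varies with the coefficients --- is left entirely unexecuted, and it is not needed. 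The paper's proof avoids the issue by expanding the chain rule \emph{asymmetrically}: $h(\mb{X}^+,\mb{X}^-|\uh)=h(\mb{X}^+|\uh)+h(\mb{X}^-|\mb{X}^+,\uh)$ and $h(\mb{Y}_j^+,\mb{Y}_j^-|\uh)\ge h(\mb{Y}_j^+|\uh)+h(\mb{Y}_j^-|\mb{X}^+,\uh)$ (from the Markov chain $\mb{Y}_j^--\mb{X}^+\uh-\mb{Y}_j^+$), while $\bsigma_{\mb{X}^-|\mb{X}^+\uh}\preceq\bsigma_{\mb{X}^-|\uh}\preceq\bsigma$ keeps feasibility. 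This sandwiches $2F^\star$ between itself, so $P_{\mb{X}^+,\uh}$, $P_{\mb{X}^-,\uh\mb{X}^+}$ and (by symmetry) $P_{\mb{X}^+,\uh\mb{X}^-}$ are all maximizers. Running the whole argument along a ray of coefficients $ta_0^m$ and invoking the envelope lemma (the derivative of $t\mapsto\max\{p+tq\}$ pins down $p$ and $q$ on the set of maximizers for a.e.\ $t$) gives directly $h(\mb{X}^+|\uh)=h(\mb{X}^+|\mb{X}^-,\uh)$, i.e.\ $I(\mb{X}^+;\mb{X}^-|\uh)=0$, for almost every $t$, with no restriction on the size of the $c_j$'s; Skitovic--Darmois then yields Gaussianity, and concavity/continuity in $t$ extends the conclusion to all $t$. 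This asymmetric conditioning on $\mb{X}^+$ is the missing idea in your plan.

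A second gap is the last step. Your claim that ``second-order optimality conditions yield strict concavity in the directions that matter,'' so that Jensen collapses $U$ and gives uniqueness of $\bsigma^\star$, is unsubstantiated: $\bsigma\mapsto\half\log\det\bsigma-\half\sum_j c_j\log\det(\mb{A}_j\bsigma\mb{A}_j^{\top}+\bsigma_{\mb{W}_j})-c_0\Tr[\mb{M}\bsigma]$ is in general neither concave nor obviously strictly concave near the optimum, and the paper never uses such a property. Both constancy of $U$ and uniqueness of the covariance come out of the doubling itself: the paper runs the argument with two \emph{possibly distinct} maximizers $P_{\mb{X}^{(1)}U^{(1)}}$ and $P_{\mb{X}^{(2)}U^{(2)}}$ (you take two copies of the same one), and the Skitovic--Darmois lemma, applied conditionally on $\uh=(u_1,u_2)$, asserts that $\mb{X}^{(1)}|\{U^{(1)}=u_1\}$ and $\mb{X}^{(2)}|\{U^{(2)}=u_2\}$ are Gaussian with \emph{identical} covariances; since this holds for a.e.\ pair $(u_1,u_2)$, the conditional covariance cannot depend on $u$ (so $U$ may be taken constant), and any two maximizers share the same covariance, which is exactly the uniqueness statement. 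If you rework the proposal with the asymmetric chain-rule step and the two-maximizer doubling, the rest of your outline (finiteness, attainment via Carath\'eodory, tightness and semicontinuity) is consistent with the paper's Appendix on existence.
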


In Theorem~\ref{thm:GaussEntropy},
we assume that the supremum is over $P_{\mb{X}U}$
such that $P_{\mb{X}|U=u}$ is absolutely continuous with respect to the Lebesgue measure (hence having a density function) for almost every $u$\footnote{Since the integral in \eqref{e79} may not be well-defined in general, some authors have restricted the attention to finite $\mathcal{U}$ in the optimization problems. In this paper, we are allowed to drop this restriction as long as the integral in \eqref{e79} is well-defined. These distinctions do not appear to make an essential difference for our purpose; see Footnote~\ref{f18}.}. Additionally, we adopt the following convention in all the optimization problems in Section~\ref{sec_gaussian} and Section~\ref{sec_consequence}, unless otherwise specified. This eliminates situations such as $\infty+\infty$ or $a+\infty$ which can be considered as legitimate calculations but are technically difficult to deal with.
\begin{conv}\label{conv1}
The $\sup$ or $\inf$ are taken over all arguments such that \emph{each term} in the objective function (e.g.~\eqref{gFunc}) is well-defined and finite.
\end{conv}
\begin{proof}[Proof of Theorem~\ref{thm:GaussEntropy}]
%It can be shown that the supremum is attained using lower semi-continuity of relative entropy, so we'll ignore that issue for now.
Assume that both $P_{\mb{X}^{(1)}U^{(1)}}$ and $P_{\mb{X}^{(2)}U^{(2)}}$ are maximizers of \eqref{gFunc} subject to $ \bsigma_{\mb{X}|U}\preceq \bsigma$; the proof of the existence of maximizer is deferred to Appendix~\ref{app_exist}. Let $(U^{(1)}, \mb{X}^{(1)}, \mb{Y}_1^{(1)}, \dots, \mb{X}_m^{(1)})\sim P_{\mb{X}^{(1)}U^{(1)}}Q_{\mb{Y}_1|\mb{X}}\dots Q_{\mb{Y}_m|\mb{X}}$ and $(U^{(2)}, \mb{X}^{(2)}, \mb{Y}_1^{(2)}, \dots, \mb{X}_m^{(2)})\sim P_{\mb{X}^{(2)}U^{(2)}}Q_{\mb{Y}_1|\mb{X}}\dots Q_{\mb{Y}_m|\mb{X}}$ be mutually independent. Define
\begin{align}
&{ \mb X}^+ = \frac{1}{\sqrt{2}}\left( \mb{X}^{(1)} +   \mb{X}^{(2)}\right) &{ \mb{X}}^-  = \frac{1}{\sqrt{2}}\left(  \mb{X}^{(1)} -   \mb{X}^{(2)}\right) .
\end{align}
Define $\mb{Y}_j^{+}$ and $\mb{Y}_j^{-}$ similarly for $j=1, \dots, m$, and put $\hat{U}=(U^{(1)},U^{(2)})$.  We now make three important observations:
\begin{enumerate}
\item First, due to the Gaussian nature of $Q_{\mb{Y}_j|\mb{X}}$, it is easily seen that $\mb{Y}_j^{+}|\{\mb{X}^+=\mb{x}^+,\mb{X}^-=\mb{x}^-,
    \hat{U}=\hat{u}\}\sim Q_{\mb{Y}_j|\mb{X}=\mb{x}^+}$ is independent of $\mb{x}^-$.
    Thus $\mb{Y}_j^{+}|\{\mb{X}^+=\mb{x},
    \hat{U}=\hat{u}\}\sim Q_{\mb{Y}_j|\mb{X}=\mb{x}}$ as well.
Similarly, $\mb{Y}_j^{-}|\{\mb{X}^-=\mb{x},\hat{U}=u\}\sim Q_{\mb{Y}_j|\mb{X}=\mb{x}}$ for $j=1, \dots, m$.
\item  Second, observe that
for each $\hat{u}=(u_1,u_2)$ we can verify the algebra
\begin{align}
\bsigma_{\mb{X}^+|\uh=\hat{u}}
&=\mathbb{E}[(\mb{X}^+-\mu_{\mb{X}^+|\uh})
(\mb{X}^+-\mu_{\mb{X}^+|\uh})^{\top}|\hat{U}=\hat{u}]
\nonumber
\\
&=
\half\mathbb{E}[(\mb{X}^{(1)}-\mu_{\mb{X}^{(1)}|\uh})
(\mb{X}^{(1)}-\mu_{\mb{X}^{(1)}|\uh})^{\top}|\uh=\hat{u}]
\nonumber
\\
&\quad+
\half\mathbb{E}[(\mb{X}^{(2)}-\mu_{\mb{X}^{(2)}|\uh})
(\mb{X}^{(2)}-\mu_{\mb{X}^{(2)}|\uh})^{\top}|\uh=\hat{u}]
\nonumber
\\
&\quad+
\mathbb{E}[(\mb{X}^{(1)}-\mu_{\mb{X}^{(1)}|\uh})
(\mb{X}^{(2)}-\mu_{\mb{X}^{(2)}|\uh})^{\top}|\uh=\hat{u}]
\nonumber
\\
&=
\half\mathbb{E}[(\mb{X}^{(1)}-\mu_{\mb{X}^{(1)}|U\1})
(\mb{X}^{(1)}-\mu_{\mb{X}^{(1)}|U\1})^{\top}|\uh=\hat{u}]
\nonumber
\\
&\quad+
\half\mathbb{E}[(\mb{X}^{(2)}-\mu_{\mb{X}^{(2)}|U\2})
(\mb{X}^{(2)}-\mu_{\mb{X}^{(2)}|U\2})^{\top}|\uh=\hat{u}]
\nonumber
\\
&\quad+
\mathbb{E}[(\mb{X}^{(1)}-\mu_{\mb{X}^{(1)}|U\1})
(\mb{X}^{(2)}-\mu_{\mb{X}^{(2)}|U\2})^{\top}|\uh=\hat{u}].
\end{align}
The last term above vanishes upon averaging over $(u_1,u_2)$ because of the independence $(U\1,\mb{X}\1)\perp(U\2,\mb{X}\2)$. Thus
\begin{align}
\bsigma_{\mb{X}^+|\uh}=\half \bsigma_{\mb{X}\1|U\1}
+\half \bsigma_{\mb{X}\2|U\2}
\preceq\bsigma.
\label{e+}
\end{align}
By the same token,
\begin{align}
\bsigma_{\mb{X}^-|\uh}=\half \bsigma_{\mb{X}\1|U\1}
+\half \bsigma_{\mb{X}\2|U\2}
\preceq\bsigma.
\label{e-}
\end{align}
These combined with $\bsigma_{\mb{X}^{-}|\mb{X}^+\hat{U}}\preceq  \bsigma_{\mb{X}^{-}|\hat{U}}$ (which is a consequence of the convexity of the square function) justify that both $P_{\mb{X}^+,\hat{U}}$ and $P_{\mb{X}^-,\hat{U}\mb{X}^+}$ satisfy the covariance constraint
$\bsigma_{\mb{X}|U}\preceq \mb{\Sigma}$ in the theorem.

\item Third, we have
\begin{align}
%\!\!\!\!\!\!\! \!\!\!\!\!\!\!
&\quad\sum_{k=1}^2 \left[h(\mb{X}^{(k)}|U^{(k)}) - \sum_{j=1}^m c_j h(\mb{Y}^{(k)}_j|U^{(k)})\right]
\\
& =h(\mb{X}\1,\mb{X}\2|\hat{U}) - \sum_{j=1}^m c_j h(\mb{Y}\1_j,\mb{Y}\2_j|\hat{U})
\\
& =h(\mb{X}^{+},\mb{X}^-|\hat{U}) - \sum_{j=1}^m c_j h(\mb{Y}^{^+}_j,\mb{Y}^-_j|\hat{U})
\\
&=h(\mb{X}^{+} |\hat{U}) - \sum_{j=1}^m c_j h(\mb{Y}^{^+}_j|\hat{U}) +h(\mb{X}^{-}|\mb{X}^+,\uh) - \sum_{j=1}^m c_j h(\mb{Y}^{-}_j|\mb{Y}^+_j,\uh)\\
&\leq h(\mb{X}^{+} |\hat{U}) - \sum_{j=1}^m c_j h(\mb{Y}^{^+}_j|\hat{U}) +h(\mb{X}^{-}|\mb{X}^+,\uh) - \sum_{j=1}^m c_j h(\mb{Y}^{-}_j|\mb{X}^+,\uh),
\label{e_h_tens}
\end{align}
where the final inequality follows from the Markov chain $\mb{Y}^{-}_j-\mb{X}^+\uh-\mb{Y}^+_j$,
which is because the joint distribution factorizes as $P_{\uh\mb{X}^+\mb{X}^-\mb{Y}_j^+\mb{Y}_j^-}
=P_{\uh\mb{X}^+\mb{X}^-}Q_{\mb{Y}_j|\mb{X}}Q_{\mb{Y}_j|\mb{X}}$.
\end{enumerate}
Thus, we can conclude that
\begin{align}
\sum_{i=1}^2F(P_{\mb{X}^{(i)}U^{(i)}})
&= \sum_{i=1}^2 \left[h(\mb{X}^{(k)}|U^{(k)}) - \sum_{j=1}^m c_j h(\mb{Y}^{(k)}_j|U^{(k)}) -c_0 \Tr [M \bsigma_{\mb{X}^{(k)}|U^{(k)}}]\right]
\label{e86}
\\
&\le h(\mb{X}^{+}|\uh) - \sum_{j=1}^m c_j h(\mb{Y}^{+}_j|\uh) -c_0 \Tr [\mb{M}\bsigma_{\mb{X}^{+}|\uh}]\notag\\
&~~+h(\mb{X}^-|\mb{X}^{+},\uh) - \sum_{j=1}^m c_j h(\mb{Y}^-_j|\mb{X}^{+},\uh) -c_0 \Tr [\mb{M} \bsigma_{\mb{X}^-|\mb{X}^{+},\uh}]
\label{e82}
\\
&\le \sum_{i=1}^2F(P_{\mb{X}^{(i)}U^{(i)}}),
\label{e83}
\end{align}
where
\begin{itemize}
  \item \eqref{e82} follows from \eqref{e+},\eqref{e-} and \eqref{e_h_tens};
  \item \eqref{e83} follows since $P_{\mb{X}^+,\uh}$ and $P_{\mb{X}^-,\uh \mb{X}^+}$ are candidate optimizers of \eqref{gFunc} subject to the given covariance constraint
whereas $P_{\mb{X}^{(i)},U^{(i)}}$ are the optimizers by assumption ($i=1,2$).
\end{itemize}
Then, the equalities in \eqref{e86}-\eqref{e83} must be achieved throughout, so
both $P_{\mb{X}^+,\hat{U}}$ and $P_{\mb{X}^-,\hat{U}\mb{X}^+}$ (and also $P_{\mb{X}^+,\hat{U}\mb{X}^-}$, by symmetry of the argument) are maximizers of \eqref{gFunc} subject to $ \bsigma_{\mb{X}|U}\preceq \bsigma$.

So far, we have considered fixed coefficients $c_0^m=(c_0,\dots,c_m)$. The same argument applies for coefficients on a line:
\begin{align}
c_0^m(t):= ta_0^m,\quad t>0
\label{e_line}
\end{align}
for any fixed $a_0\in[0,\infty)$, $a_1,\dots,a_m\in(0,\infty)$, and we next show several properties for a dense subset of this line.
Applying Lemma~\ref{lem:invariance} (following this proof) with
\begin{align}
p(P_{\mb{X}U})&\leftarrow h(\mb{X}|U);
\\
q(P_{\mb{X}U})&\leftarrow -\sum_{j=1}^m a_j h(\mb{Y}_j|U)-a_0 \Tr [\mb{M} \mb{\Sigma}_{\mb{X}|U}],
\\
f(t)&\leftarrow \max_{P_{\mb{X}U}}[p(P_{\mb{X}U})+tq(P_{\mb{X}U})],
\end{align}
we obtain from the optimality of $P_{\mb{X}^+,\hat{U}}$ $P_{\mb{X}^-,\hat{U}\mb{X}^+}$ and $P_{\mb{X}^+,\hat{U}\mb{X}^-}$
that
\begin{align}
h(\mb{X}^{+}|\uh) = h(\mb{X}^{-}|\mb{X}^{+},\uh)
=h(\mb{X}^{+}|\mb{X}^{-},\uh)
\label{e_92}
\end{align}
for almost all $t\in(0,\infty)$, where $P_{\mb{X}^+\mb{X}^-\hat{U}}$ depends implicitly on $t$.
Note that \eqref{e_92} implies that $I(\mb{X}^+;\mb{X}^-|\uh)=0$ hence $\mb{X}^+$ and $\mb{X}^-$ are independent conditioned on $\uh$. Recall the following Skitovic-Darmois characterization of Gaussian distributions (with the extension to the vector Gaussian case in \cite{geng2014yanlin}):
\begin{lem}\label{SDcharacterization}%\cite{ghurye1962characterization, GengNairCor3}
%Theorem 3 (Theorem 1 in [8]).
Let $\mathbf{A}_1$ and $\mathbf{A}_2$ be mutually independent $n$-dimensional random vectors.  If $\mathbf{A}_1+\mathbf{A}_2$ is independent of $\mathbf{A}_1-\mathbf{A}_2$, then  $\mathbf{A}_1$ and $\mathbf{A}_2$ are normally distributed with identical covariances.
\end{lem}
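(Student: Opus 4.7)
The plan is to run the classical characteristic-function proof of Skitovic-Darmois, first reducing to the symmetric case so that Cram\'er's theorem cleanly de-symmetrizes at the end. Let $\phi_i$ denote the characteristic function of $\mb{A}_i$, let $\mb{A}_i'$ be an independent copy of $\mb{A}_i$ (all four vectors mutually independent), and set $\tilde{\mb{A}}_i:=\mb{A}_i-\mb{A}_i'$, whose characteristic function $\tilde\phi_i=|\phi_i|^2$ is real, even and non-negative. The independence of $\mb{A}_1+\mb{A}_2$ and $\mb{A}_1-\mb{A}_2$ passes to the symmetrized pair by taking characteristic functions, so it will suffice to prove each $\tilde{\mb{A}}_i$ is centered Gaussian with a common covariance $\bsigma$: then Cram\'er's theorem (the summands of an independent sum that is Gaussian are themselves Gaussian) forces each $\mb{A}_i$ to be Gaussian with covariance $\bsigma/2$, which is the conclusion.

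Independence of $\tilde{\mb{A}}_1\pm\tilde{\mb{A}}_2$ translates into the functional equation
\begin{align*}
\tilde\phi_1(\mb{s}+\mb{t})\tilde\phi_2(\mb{s}-\mb{t})
=\tilde\phi_1(\mb{s})\tilde\phi_1(\mb{t})\tilde\phi_2(\mb{s})\tilde\phi_2(\mb{t}),\qquad\forall\,\mb{s},\mb{t}\in\mathbb{R}^n.
\end{align*}
Setting $\mb{s}=\mb{t}$ yields the doubling identity $\tilde\phi_1(2\mb{s})=\tilde\phi_1(\mb{s})^2\tilde\phi_2(\mb{s})^2$ (and symmetrically for $\tilde\phi_2$). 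Since $\tilde\phi_i$ is continuous with $\tilde\phi_i(\mb{0})=1$, it is positive on a neighborhood of the origin, and iterating doubling propagates positivity to all of $\mathbb{R}^n$. Taking $\psi_i:=\log\tilde\phi_i$ globally, the functional equation becomes
\begin{align*}
\psi_1(\mb{s}+\mb{t})+\psi_2(\mb{s}-\mb{t})
=\psi_1(\mb{s})+\psi_2(\mb{s})+\psi_1(\mb{t})+\psi_2(\mb{t}).
\end{align*}
Substituting $\mb{t}\mapsto-\mb{t}$ and using $\psi_i(-\mb{t})=\psi_i(\mb{t})$ leaves the right-hand side unchanged while swapping $\mb{s}+\mb{t}$ and $\mb{s}-\mb{t}$ inside the combination $\psi_1-\psi_2$ on the left, so $(\psi_1-\psi_2)(\mb{s}+\mb{t})=(\psi_1-\psi_2)(\mb{s}-\mb{t})$; the change of variables $\mb{u}=\mb{s}+\mb{t}$, $\mb{v}=\mb{s}-\mb{t}$ shows $\psi_1-\psi_2$ is constant, and since it vanishes at the origin one obtains $\psi_1\equiv\psi_2=:\psi$. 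The equation then collapses to the parallelogram identity $\psi(\mb{s}+\mb{t})+\psi(\mb{s}-\mb{t})=2\psi(\mb{s})+2\psi(\mb{t})$.

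Continuous solutions of the parallelogram identity are precisely quadratic forms, so $\psi(\mb{s})=-\tfrac12 \mb{s}^\top\bsigma\mb{s}$; the bound $\tilde\phi_i\le 1$ forces $\bsigma\succeq 0$. Hence each $\tilde{\mb{A}}_i$ is centered Gaussian with covariance $\bsigma$, and a final application of Cram\'er's theorem closes the argument. The two main technical hurdles are the global non-vanishing of $\tilde\phi_i$ (handled by the doubling identity and a continuity/density argument) and the promotion of a merely continuous solution of the parallelogram identity to a quadratic form (a short mollification reduces to the smooth case, where two differentiations finish). An equivalent shortcut that sidesteps the multivariate bookkeeping is to fix an arbitrary $\mb{u}\in\mathbb{R}^n$, apply the scalar Skitovic-Darmois theorem to $\mb{u}^\top\mb{A}_1$ and $\mb{u}^\top\mb{A}_2$, and then invoke the Cram\'er-Wold device to pass back to vector-valued Gaussianity with matching covariances.
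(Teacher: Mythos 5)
Your proof is correct; note, however, that the paper itself does not prove this lemma at all — it imports it as a known result, citing the classical Skitovich--Darmois/Bernstein characterization with the vector extension attributed to Geng--Nair — so there is no in-paper argument to match, and what you have supplied is a complete self-contained substitute. Your route is sound at every step: the symmetrization is legitimate (since $(\mb{A}_1+\mb{A}_2,\mb{A}_1'+\mb{A}_2')$ is independent of $(\mb{A}_1-\mb{A}_2,\mb{A}_1'-\mb{A}_2')$ and each pair is internally independent by hypothesis, the four sums/differences are mutually independent, so the hypothesis does pass to $\tilde{\mb{A}}_1,\tilde{\mb{A}}_2$); the functional equation for $\tilde\phi_1,\tilde\phi_2$ is the correct transcription of independence; the doubling identities propagate strict positivity of both $\tilde\phi_1$ and $\tilde\phi_2$ from a neighborhood of the origin to all of $\mathbb{R}^n$; the $\mb{t}\mapsto-\mb{t}$ cancellation forces $\psi_1=\psi_2$; the parallelogram identity with continuity gives a quadratic form, and $\tilde\phi_i\le 1$ gives $\bsigma\succeq 0$; finally Cram\'er's decomposition theorem de-symmetrizes, and $\Cov(\mb{A}_i)=\bsigma/2$ for both $i$ gives the identical covariances. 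Two points deserve an explicit sentence in a polished write-up: the justification of the symmetrization step just indicated, and the fact that the vector form of Cram\'er's theorem follows from the scalar one by one-dimensional projections (every linear functional of $\mb{A}_i$ is Gaussian, hence $\mb{A}_i$ is Gaussian). Your closing shortcut — applying the scalar Bernstein/Skitovich--Darmois theorem to $\mb{u}^{\top}\mb{A}_1$ and $\mb{u}^{\top}\mb{A}_2$ for each $\mb{u}$ and then matching covariances via $\mb{u}^{\top}\Cov(\mb{A}_1)\mb{u}=\mb{u}^{\top}\Cov(\mb{A}_2)\mb{u}$ for all $\mb{u}$ — is equally valid and is in fact closer in spirit to how the cited vector extension is usually deduced from the one-dimensional theorem; the multivariate argument you wrote out buys a proof that never leaves $\mathbb{R}^n$ and makes the role of the parallelogram law transparent.
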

Using Lemma \ref{SDcharacterization}, we can conclude that for almost all $t\in(0,\infty)$, $\mb{X}^{(i)}$ must be Gaussian, with covariance not depending on $U^{(i)}$, thus $U^{(i)}$ can be chosen as a constant ($i=1,2$).
Thus for all such $t$,
\begin{align}
f(t)=\max_{P_{\mb{X}U}\colon U=\textrm{const.},\,{\bf X}\textrm { Gaussian}}
[p(P_{\mb{X}U})+tq(P_{\mb{X}U})].
\label{e139}
\end{align}
Since both sides of \eqref{e139} are concave in $t$, hence continuous on $(0,\infty)$, we see \eqref{e139} actually holds for all $t\in(0,\infty)$.
The proof is completed since $a_0^m$ can be arbitrarily chosen.
\end{proof}

\begin{lem}\label{lem:invariance}
Let $p$ and $q$ be real-valued functions on an arbitrary set $\mathcal{D}$.  If
$f(t):= \max_{x\in \mathcal{D}} \{p(x) + t q(x)\}$
is always attained,
%$f(t)\triangleq \min_{x\in \mathcal{D}} \{p(x) + t h(x)\}$,
then for almost all $t$, $f'(t)$ exists and
\begin{align}
f'(t)=q(x^{\star}), ~~~\forall x^{\star}\in \arg \max_{x\in \mathcal{D}} \{p(x) + t q(x)\}.
\end{align}
In particular, for all such $t$, $
q(x^{\star}) = q(\tilde{x}^{\star})$ and
$p(x^{\star}) = p(\tilde{x}^{\star})$ for all $x^{\star},\tilde{x}^{\star}\in \arg \max_{x\in \mathcal{D}} \{p(x) + t q(x)\}$.
%Let $\mathcal{D}$ be an arbitrary domain.  If  $f(t) \triangleq \min_{x\in \mathcal{D}} \{p(x) + t h(x)\}$, then $f$ is differentiable almost everywhere with derivative
%\begin{align}
%f'(t) = h(x^{\star}) ~~~\forall x^{\star}\in \arg \min_{x\in \mathcal{D}} \{p(x) + t h(x)\}~~ \mbox{for almost every $t$.}
%\end{align}
%In particular, for each $t$ such that $f'(t)$ exists, $
%h(x^{\star}) = h(\tilde{x}^{\star})$ and
%$p(x^{\star}) = p(\tilde{x}^{\star})$ for all $x^{\star},\tilde{x}^{\star}\in \arg \min_{x\in \mathcal{D}} \{p(x) + t h(x)\}$.
\end{lem}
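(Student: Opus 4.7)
The plan is to exploit the convex-analytic structure of $f$. Observe that for each fixed $x \in \mathcal{D}$, the map $t \mapsto p(x) + tq(x)$ is affine, and $f$ is by definition the pointwise supremum of this family of affine functions. Hence $f\colon \mathbb{R} \to \mathbb{R}$ (finite-valued by the extremisability hypothesis) is convex. A standard fact is that a convex function on an open interval of $\mathbb{R}$ is differentiable except at a countable set of points, and in particular is differentiable Lebesgue-almost everywhere.

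Fix any $t_0$ at which $f'(t_0)$ exists and any maximizer $x^{\star} \in \arg\max_{x\in\mathcal{D}}\{p(x)+t_0 q(x)\}$. Set $\ell(t) := p(x^{\star}) + tq(x^{\star})$. By definition of $f$ we have $\ell(t) \le f(t)$ for all $t\in\mathbb{R}$, with equality at $t=t_0$ since $x^{\star}$ is a maximizer. Thus $\ell$ is a supporting (affine) minorant of the convex function $f$ at $t_0$. At a point of differentiability of a convex function, the set of subgradients is the singleton $\{f'(t_0)\}$, so any supporting affine minorant must have slope exactly $f'(t_0)$. Comparing slopes gives $q(x^{\star}) = f'(t_0)$, and then
\begin{align}
p(x^{\star}) = f(t_0) - t_0\,q(x^{\star}) = f(t_0) - t_0 f'(t_0),
\end{align}
which is again independent of the particular maximizer chosen. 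Applying this to two maximizers $x^{\star},\tilde{x}^{\star}$ gives $q(x^{\star})=q(\tilde{x}^{\star})$ and $p(x^{\star})=p(\tilde{x}^{\star})$, as claimed.

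There is no serious obstacle here; the only thing worth being careful about is a clean invocation of the two convex-analysis facts used, namely (i) a finite convex function on $\mathbb{R}$ is differentiable off a countable (hence Lebesgue-null) set, and (ii) at a point of differentiability the subdifferential is the single value $\{f'(t_0)\}$, so supporting affine minorants are unique. Both are standard and can be cited from, e.g., \cite{dembo2009large}. No regularity of $\mathcal{D}$ or of $p,q$ is needed beyond what is assumed; in particular, measurability or topology on $\mathcal{D}$ plays no role since all the work is pushed onto the real line via the Legendre-type transform $f$.
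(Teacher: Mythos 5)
Your proof is correct and follows essentially the same route as the paper: $f$ is convex as a pointwise supremum of affine functions, hence differentiable almost everywhere, and each maximizer gives a supporting affine minorant whose slope $q(x^{\star})$ must equal $f'(t)$ at points of differentiability (the paper phrases this via the difference-quotient sandwich, which is the same subgradient argument). Nothing further is needed.
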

Geometrically, $f(t)$ is the \emph{support function} \cite{rockafellar2015convex} of the set $\mathcal{S}:=\{(p(x),q(x))\}_{x\in\mathcal{D}}$ evaluated at $(1,t)$. Hence $f(\cdot)$ is convex, and the left and the right derivatives are determined by the two extreme points of the intersection between $\mathcal{S}$ and the supporting hyperplane.
\begin{proof}[Proof of Lemma~\ref{lem:invariance}]
The function $f$ is convex since it is a pointwise supremum of linear functions, and is therefore differentiable almost everywhere.  Moreover,  $f'(t)$ (which is well-defined in the a.e.\ sense) is monotone increasing by convexity, and is therefore continuous almost everywhere.

Let $x^{\star}_t$ denote an arbitrary element of $\arg \max_{x\in \mathcal{D}} \{p(x) + t q(x)\}$. By definition, for any $s \in \mathbb{R}$,
$f(s) \ge p( x^{\star}_t) + s q( x^{\star}_t)$.  Thus, for $\delta>0$,
\begin{align}
&\frac{f(t+\delta) - f(t)}{\delta} \ge q(x^{\star}_{t}) \mbox{~~~and~~~}
\frac{f(t) - f(t-\delta)}{\delta} \le q(x^{\star}_{t}).
\end{align}
If $f'(t)$ exists, that is, the left sides of the two inequalities above have the same limit $f'(t)$ as $\delta\downarrow0$, then $f'(t)=q(x^{\star}_{t})$. The second claim of the lemma follows immediately from the first.
\end{proof}
\begin{rem}
Let us remark on an interesting connection between the above proof of the optimality of Gaussian random variable and Lieb's proof of that Gaussian functions maximize Gaussian kernels. Recall that \cite[Theorem~3.2]{lieb1990gaussian} wants to show that for an operator $G$ given by a two-variate Gaussian kernel function and $p,q>1$, the ratio $\frac{\|Gf\|_q}{\|f\|_p}$ is maximized by Gaussian $f$. First, a tensorization property is proved, implying that $f^*(x_1)f^*(x_2)$ is a maximizer for $G\otimes G$ if $f^*$ is any maximizer of $G$. Then, Lieb made two important observations:
\begin{enumerate}
  \item By a rotation invariance property of the Lebesgue measure/isotropic Gaussian measure, $f^*(\frac{x_1+x_2}{\sqrt{2}})f^*(\frac{x_1-x_2}
      {\sqrt{2}})$ is also a maximizer of $G\otimes G$.
  \item An examination of the equality condition in the proof of tensorization property reveals that any maximizer for $G\otimes G$ must be of a product form.
\end{enumerate}
Thus Lieb concluded that $f^*(\frac{x_1+x_2}{\sqrt{2}})f^*(\frac{x_1-x_2}
      {\sqrt{2}})=\alpha(x_1)\beta(x_2)$ for some functions $\alpha$ and $\beta$, and $f^*$ must be a Gaussian function. This is very similar to the above proof, once we think of $f^*$ as the density function of $P^*_{X|U=u}$ in our proof.
\end{rem}
\begin{rem}\label{rem_expansion}
%\bf comparison with Nair's proof: no need to do multiple expansions (see beginning of next section).
Our proof technique is essentially following ideas of
Geng and Nair \cite{geng2014yanlin}\cite{nairextremal} who established the Gaussian optimality for several information-theoretic regions. However, we also added the important ingredient of Lemma~\ref{lem:invariance}.\footnote{The similar idea of differentiating the coefficients has been used in another paper of the second named author \cite{courtade2014extremal}.} That is, by differentiating with respect to the linear coefficients, we can conveniently obtain information-theoretic identities which helps us to conclude the conditional independence of $\mb{X}^+$ and $\mb{X}^-$ quickly.
For fixed $m$, in principle, this may be avoided by trying various expansions of the two-letter quantities manually (e.g.~as done in \cite{nairextremal}), but that approach will become increasingly complicated and unstructured as $m$ increases.
Finally, we also note that a simple rotational invariance argument/doubling trick has been used for proving that the capacity achieving distribution for an additive Gaussian channel is Gaussian (cf.~\cite[P36]{PW_lec}), which does not involve Lemma~\ref{SDcharacterization} and whose extension to problems involving auxiliary random variables is not clear.
\end{rem}

If we do not have the non-degenerate assumption and the regularization $\bsigma_{\mb{X}|U}\preceq \mb{\Sigma}$, it is very well possible that the optimization in Theorem~\ref{thm:GaussEntropy} is nonfinite and/or not attained by any $P_{U\mb{X}}$. In this case, we can show that the optimization is \emph{exhausted} by Gaussian distributions.
To state the result conveniently, for any $P_{\mb{X}}$, define
\begin{align}
F_0(P_{\mb{X}}):=h(\mb{X})-\sum_{j=1}^m c_j h(\mb{Y}_j)-c_0\Tr[\mb{M}\bsigma_{\mb{X}}],
\label{e_103}
\end{align}
where $(\mb{X},\mb{Y}_j)\sim P_{\mb{X}}Q_{\mb{Y}_j|\mb{X}}$. Apparently, $F(P_{\mb{X}U})=F_0(P_{\mb{X}})$ when $U$ is constant.

\begin{thm}\label{thm:exhaustibility}
In the general (possibly degenerate) case,
\begin{enumerate}
\item For any given positive semidefinite $\mb{\Sigma}$,
\begin{align}
\sup_{P_{\mb{X}U},\,\bsigma_{\mb{X}|U}\preceq \bsigma} F(P_{\mb{X}U})
=
\sup_{P_{\mb{X}}\textrm{ Gaussian},\,\bsigma_{\mb{X}}\preceq \bsigma} F_0(P_{\mb{X}})
\label{e_78}
\end{align}
where the left side of \eqref{e_78} follows Convention~\ref{conv1}.
\item
\begin{align}
\sup_{P_{\mb{X}U}} F(P_{\mb{X}U})
=
\sup_{P_{\mb{X}}\textrm{ Gaussian}} F_0(P_{\mb{X}}).
\label{e_78_1}
\end{align}
\end{enumerate}
\end{thm}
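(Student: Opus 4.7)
The strategy is to deduce Theorem~\ref{thm:exhaustibility} from the non-degenerate case Theorem~\ref{thm:GaussEntropy} by a Gaussian regularization of the channels $Q_{\mb{Y}_j|\mb{X}}$ followed by a limiting argument, and then to derive part~2 from part~1 by relaxing the covariance constraint. The ``$\ge$'' direction of \eqref{e_78} is immediate, since any Gaussian $P_\mb{X}$ with constant $U$ satisfies $\bsigma_{\mb{X}|U}=\bsigma_\mb{X}\preceq\bsigma$, so only ``$\le$'' needs work.

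For part~1, fix $\epsilon>0$ and define $\tilde{\mb{Y}}_j^{(\epsilon)}:=\mb{Y}_j+\epsilon\mb{Z}_j$, where $\mb{Z}_1,\dots,\mb{Z}_m$ are mutually independent standard Gaussians, independent of everything else. The perturbed channels $\tilde{Q}^{(\epsilon)}_{\mb{Y}_j|\mb{X}}$ are non-degenerate in the sense of Definition~\ref{defn_ND} because $\tilde{\mb{Y}}_j^{(\epsilon)}\mid\{\mb{X}=\mb{0}\}$ has covariance $\succeq\epsilon^2\mb{I}$. Denoting by $F^{(\epsilon)}$ and $F_0^{(\epsilon)}$ the analogues of $F$ and $F_0$ with $\mb{Y}_j$ replaced by $\tilde{\mb{Y}}_j^{(\epsilon)}$, Theorem~\ref{thm:GaussEntropy} yields, for every $P_{\mb{X}U}$ with $\bsigma_{\mb{X}|U}\preceq\bsigma$,
\begin{align}
F^{(\epsilon)}(P_{\mb{X}U})\le\sup_{P_\mb{X}\textrm{ Gaussian},\,\bsigma_\mb{X}\preceq\bsigma} F_0^{(\epsilon)}(P_\mb{X}).
\label{eq:epsreg}
\end{align}
I would then pass to the limit $\epsilon\downarrow 0$ on both sides of \eqref{eq:epsreg}.

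The main obstacle is precisely this passage to the limit, which hinges on the continuity $h(\mb{Y}_j+\epsilon\mb{Z}_j\mid U)\to h(\mb{Y}_j\mid U)$ whenever $h(\mb{Y}_j\mid U)$ is finite (guaranteed by Convention~\ref{conv1}). The inequality $h(\mb{Y}_j+\epsilon\mb{Z}_j\mid U)\ge h(\mb{Y}_j\mid U)$ is elementary; the matching upper bound is the classical continuity of differential entropy under vanishing Gaussian smoothing, which can be proved by rewriting both entropies as $-D(\cdot\,\|\,\gamma)$ against a common Gaussian reference $\gamma$ and combining the weak convergence $P_{\tilde{\mb{Y}}_j^{(\epsilon)}\mid U}\Rightarrow P_{\mb{Y}_j\mid U}$ with lower semicontinuity of relative entropy (plus a moment check to handle the Gaussian normalization). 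Granting this, the left side of \eqref{eq:epsreg} converges to $F(P_{\mb{X}U})$; for the right side, $F_0^{(\epsilon)}(P_\mb{X})\uparrow F_0(P_\mb{X})$ pointwise for every Gaussian $P_\mb{X}$ with $\bsigma_\mb{X}\preceq\bsigma$, so $\sup_{P_\mb{X}\textrm{ Gauss}}F_0^{(\epsilon)}\uparrow\sup_{P_\mb{X}\textrm{ Gauss}}F_0$ by monotone convergence (using $F_0^{(\epsilon)}\le F_0$). This establishes part~1.

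For part~2, I would apply part~1 with $\bsigma=k\mb{I}$ and let $k\to\infty$. Both sides are monotone non-decreasing in $k$; the Gaussian side visibly increases to $\sup_{P_\mb{X}\textrm{ Gaussian}}F_0(P_\mb{X})$, and the general side similarly exhausts $\sup_{P_{\mb{X}U}}F(P_{\mb{X}U})$, since any $P_{\mb{X}U}$ admissible under Convention~\ref{conv1} has $\bsigma_{\mb{X}|U}$ bounded on any subspace on which $\mb{M}$ acts nontrivially (via the trace constraint); any remaining unconstrained direction carries no penalty in $F$ and can be truncated without altering the value, reducing to the bounded-covariance case already handled by part~1.
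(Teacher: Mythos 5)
Your overall skeleton---perturb the channels with $\epsilon$-Gaussian noise, invoke Theorem~\ref{thm:GaussEntropy} for the now non-degenerate channels, and pass to the limit $\epsilon\downarrow 0$---is the same as the paper's, but your handling of the limit in part 1 is genuinely different, and it works. The paper does not claim $h(\mb{Y}_j+\epsilon\mb{Z}_j|U)\to h(\mb{Y}_j|U)$ for arbitrary admissible inputs; it first reduces to finite $\mathcal{U}$ and approximates each $P_{\mb{X}|U=u}$ by a smooth, compactly supported density, precisely so that the vanishing-noise continuity can be justified via pointwise convergence of densities and dominated convergence (its footnote cites the Bobkov--Chistyakov example, where $h(\mb{X}+\mb{Z})=\infty$ for every independent $\mb{Z}$ of finite differential entropy). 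Your observation is that under $\bsigma_{\mb{X}|U}\preceq\bsigma$ the conditional second moments are finite for $P_U$-a.e.\ $u$, so the upper half of the continuity follows from weak lower semicontinuity of relative entropy against a Gaussian reference---this is exactly the paper's own Lemma~\ref{lem28} applied to $\mb{Y}_j+\epsilon\mb{Z}_j$ conditioned on $U=u$---while the lower half is monotonicity in $\epsilon$; the Bobkov--Chistyakov pathology forces infinite second moment (otherwise the Gaussian maximum-entropy bound makes $h(\mb{X}+\mb{Z})$ finite), so it does not arise here. This lets you bypass the paper's compact-support/smoothing reduction in part 1. You do compress some routine steps that should be stated: admissibility (Convention~\ref{conv1}) of the given $P_{\mb{X}U}$ for the perturbed functional, the interchange of $\epsilon\to 0$ with the integral over $u$ (monotonicity plus the integrable envelope $\frac{n_j}{2}\log(2\pi e)+\half\log|\Cov(\mb{Y}_j|U=u)+\mb{I}|$ and Jensen suffice), and the corner case where the noise covariance of $Q_{\mb{Y}_j|\mb{X}}$ is degenerate on the orthogonal complement of the image of $\mb{B}_j$, which the paper dispatches in its opening line.

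Part 2 is where your proposal has a genuine gap. Without a covariance constraint, Convention~\ref{conv1} does not force $\bsigma_{\mb{X}|U}$ to be finite when $c_0=0$ or $\mb{M}$ is singular: an admissible $P_{\mb{X}U}$ can have infinite conditional variance along a null direction of $\mb{M}$ while all differential entropy terms stay finite, so such distributions are not covered by any $\bsigma=k\mb{I}$. Your fix---``any remaining unconstrained direction carries no penalty in $F$ and can be truncated without altering the value''---asserts exactly the nontrivial step: truncating (conditioning $\mb{X}$ on a large ball) changes $h(\mb{X}|U)$ and every $h(\mb{Y}_j|U)$, and one must prove the change is at most $\epsilon$. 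The paper supplies precisely this via its conditioning-plus-dominated-convergence approximation (the same device used in its part 1), after first reducing part 2 to constant $U$ through $F(P_{\mb{X}U})=\int F_0(P_{\mb{X}|U=u})\,{\rm d}P_U(u)\le\sup_{P_{\mb{X}}}F_0(P_{\mb{X}})$, which makes the truncation cleaner. If you add that approximation lemma (for every admissible $P_{\mb{X}}$ and $\epsilon>0$ a compactly supported $P_{\mb{X}'}$ with $F_0(P_{\mb{X}'})\ge F_0(P_{\mb{X}})-\epsilon$), your $k\to\infty$ reduction to part 1 goes through.
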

\begin{rem}
Theorem~\ref{thm:exhaustibility} reduces an infinite dimensional optimization problem to a finite dimensional one. In particular, in the degenerate case we can verify that the left side of \eqref{e_78} to be extremisable (Definition~\ref{defn_extremisability}) if the the right side of \eqref{e_78} is extremisable.
\end{rem}
\begin{proof}[Proof of Theorem~\ref{thm:exhaustibility}]
Note that the Gaussian random transformation $Q_{\mb{Y}_j|\mb{X}}$ can be realized as a linear transformation $\mb{B}_j$ to $\mathbb{R}^{n_j}$ followed by adding an independent Gaussian noise of covariance $\bsigma_j$. We will assume that $\bsigma_j$ is non-degenerate in the orthogonal complement of the image of $\mb{B}_j$, since otherwise both sides of \eqref{e_78} are $+\infty$. In particular, if $Q_{\mb{Y}_j|\mb{X}}$ is a deterministic linear transform $\mb{B}_j$, then it must be onto $\mathbb{R}^{n_j}$.
\begin{enumerate}
  \item\label{p1} We can use the argument at the beginning of Appendix~\ref{app_exist} to show that the left side of \eqref{e_78} equals
\begin{align}
\sup_{P_{\mb{X}U}\colon \mathcal{U}=\{1,\dots,D\},\bsigma_{\mb{X}|U}\preceq \mb{\Sigma}} F(P_{\mb{X}U})
\label{e_79}
\end{align}
where $D$ is a fixed integer (depending only on the dimension of $\bsigma$).

Next, we argue that it is without loss of generality to add a restriction to \eqref{e_79} that $P_{\mb{X}|U=u}$ has a smooth density with compact support for each $u\in\{1,\dots,D\}$, in which case each $P_{\mb{Y}_j|U=u}$ will have a smooth density by the assumption made at the beginning of the proof.
For this, define $F_0(\cdot)$ as in \eqref{e_103}, and we will show that for any $P_{\mb{X}}$ absolutely continuous with respect to the Lebesgue measure and any $\epsilon>0$, we can choose a distribution $P_{\mb{X}''}$ whose density is smooth with compact support such that
\begin{align}
F_0(P_{\mb{X}''})\ge F_0(P_{\mb{X}})-\epsilon
\label{e80}
\end{align}
and
\begin{align}
\bsigma_{\mb{X}''}\preceq (1+\epsilon)\bsigma_{\mb{X}}.
\label{e81}
\end{align}
First, by conditioning $\bf X$ on a large enough compact set and applying dominated convergence theorem, there exists $P_{\mb{X}'}$ which is supported on a compact set and whose density $f_{\mb{X}'}$ is bounded, such that each term in the definition of $F_0(P_{\mb{X}})$ is well-approximated when $\mb{X}$ is replaced with $\mb{X}'$, so that
\begin{align}
F_0(P_{\mb{X}'})\ge F_0(P_{\mb{X}})-\frac{\epsilon}{2}
\label{e_82}
\end{align}
and
\begin{align}
\bsigma_{\mb{X}'}\preceq \sqrt{1+\epsilon}\bsigma_{\mb{X}}.
\label{e_83}
\end{align}
Second, we can pick $P_{\mb{X}''}$ with smooth density with compact support such that $\|f_{\mb{X}''}-f_{\mb{X}'}\|_1$ can be made arbitrarily small, which implies that $\|f_{\mb{Y}_j''}-f_{\mb{Y}_j'}\|_1$ is small by Jensen's inequality, and that $\Tr[\mb{M}\bsigma_{\mb{X}''}]-\Tr[\mb{M}\bsigma_{\mb{X}'}]$ is small by the fact that the densities are supported on a compact set. Since $x\mapsto x\log x$ is Lipschitz on a bounded interval, the differential entropy terms can be made small as well (here we used the boundedness of $f_{\bf X'}$). As a result, we can ensure that
\begin{align}
F_0(P_{\mb{X}''})\ge F_0(P_{\mb{X}}')-\frac{\epsilon}{2}
\end{align}
and
\begin{align}
\bsigma_{\mb{X}''}\preceq \sqrt{1+\epsilon}\bsigma_{\mb{X}'},
\end{align}
which, combined with \eqref{e_82} and \eqref{e_83}, yield \eqref{e80}-\eqref{e81}. This and the observation in \eqref{e_79} imply that
\begin{align}
\inf_{\epsilon>0}\,\sup_{P_{\mb{X}U}\in\mathcal{C}_{\epsilon}} F(P_{\mb{X}U})
\le \sup_{P_{\mb{X}U}\colon \bsigma_{\mb{X}|U}\preceq \mb{\Sigma}} F(P_{\mb{X}U})
\label{e_86}
\end{align}
where $\mathcal{C}_{\epsilon}$ denotes the set of $P_{\mb{X}U}$ such that  $\mathcal{U}=\{1,\dots,D\}$, $\bsigma_{\mb{X}|U}\preceq \mb{(1+\epsilon)\Sigma}$, and the density of $P_{\mb{X}|U=u}$ is smooth with compact support for each $u$. In the final steps, we write $F$ as $F^Q$ to indicate its dependence on $(Q_{\mb{Y}_1|\mb{X}},\dots,Q_{\mb{Y}_m|\mb{X}})$, and define $\mathcal{\tilde{Q}}$ the set of non-degenerate $(Q_{\mb{\tilde{Y}}_1|\mb{X}},\dots,
Q_{\mb{\tilde{Y}}_m|\mb{X}})$ where each $\mb{\tilde{Y}}_j$ is obtained by adding an independent Gaussian noise with covariance $\tilde{\bsigma}_j$ to $\mb{Y}_j$ (here the random transformation from $\bf X$ to ${\bf Y}_j$ is fixed and $\tilde{\bsigma}_j$ runs over the set of positive definite matrices of the given dimension, $j=1,\dots,m$).
Then
\begin{align}
\sup_{P_{\mb{X}U}\in\mathcal{C}_{\epsilon}} F^{Q}(P_{\mb{X}U})
&=
\sup_{P_{\mb{X}U}\in\mathcal{C}_{\epsilon}}
\sup_{\tilde{Q}\in\mathcal{\tilde{Q}}}
F^{\tilde{Q}}(P_{\mb{X}U})
\label{e_87}
\\
&=\sup_{\tilde{Q}\in\mathcal{\tilde{Q}}}
\sup_{P_{\mb{X}U}\in\mathcal{C}_{\epsilon}}
F^{\tilde{Q}}(P_{\mb{X}U})
\label{e88}
\\
&=\sup_{\tilde{Q}\in\mathcal{\tilde{Q}}}
\sup_{P_{\mb{X}}\,\textrm{Gaussian},\bsigma_{\mb{X}}
\preceq (1+\epsilon)\bsigma}F_0^{\tilde{Q}}
(P_{\mb{X}})
\label{e89}
\\
&=
\sup_{P_{\mb{X}}\,\textrm{Gaussian},\bsigma_{\mb{X}}
\preceq (1+\epsilon)\bsigma}\sup_{\tilde{Q}\in\mathcal{\tilde{Q}}}
F_0^{\tilde{Q}}
(P_{\mb{X}})
\\
&=
\sup_{P_{\mb{X}}\,\textrm{Gaussian},\bsigma_{\mb{X}}
\preceq (1+\epsilon)\bsigma}
F_0^{Q}
(P_{\mb{X}})
\label{e_91}
\end{align}
where \begin{itemize}
        \item \eqref{e_87} and \eqref{e_91} are because, first,  $h(\tilde{\mb{Y}}_j)\ge h(\mb{Y}_j)$ by the entropy power inequality; second, $\inf_{\tilde{\bsigma}_j\succ \mb{0}}
            h(\tilde{\mb{Y}}_j)=h(\mb{Y}_j)$. The proof of the second claim is standard, since $\mb{\tilde{Y}}_j$ has smooth density with fast (Gaussian like) decay, and we can obtain pointwise convergence of the density function and apply dominated convergence theorem.\footnote{Such a continuity in the variance of the additive noise can fail terribly when $f_{\mb{X}}$ does not have the decay properties; in \cite[Proposition~4]{bobkov2015} Bobkov and Chistyakov provided an example of random vector $\mb{X}$ with finite differential entropy such that $h(\mb{X}+\mb{Z})=\infty$ for each $\mb{Z}$ independent of $\mb{X}$ and having finite differential entropy. In their example, $\sup_{\mb{x}:\|\mb{x}\|\ge r}f_{\mb{X}}(\mb{x})=1$ for every $r>0$, and we cannot obtain a dominating function for $|f_{\mb{X}+\mb{Z}}\log f_{\mb{X}+\mb{Z}}|$ to apply the dominated convergence theorem.}
        \item \eqref{e89} is from Gaussian extremality in non-degenerate case.
      \end{itemize}
      Then \eqref{e_86} and \eqref{e_91} gives
      \begin{align}
      \sup_{P_{\mb{X}U}\colon \bsigma_{\mb{X}|U}\preceq \mb{\Sigma}} F(P_{\mb{X}U})
      &\le
      \sup_{\epsilon>0}\,\sup_{P_{\mb{X}}\,
      \textrm{Gaussian},\bsigma_{\mb{X}}
        \preceq (1+\epsilon)\bsigma}
        F_0^{Q}
        (P_{\mb{X}})
        \label{e92}
         \\
         &=\sup_{P_{\mb{X}}\,\textrm{Gaussian},\bsigma_{\mb{X}}
        \preceq \bsigma}
        F_0^{Q}
        (P_{\mb{X}})
        \label{e93}
      \end{align}
      where \eqref{e93} is a property that can be verified without much difficulty for Gaussian distributions. Thus the $\le$ part of \eqref{e_78} is established. The other direction is immediate from the definition.
  \item First, observe that it is without loss of generality to assume that $U$ is constant, that is,
      \begin{align}
      \sup_{P_{\mb{X}U}}F(P_{\mb{X}U})
      =\sup_{P_{\mb{X}}}F_0(P_{\mb{X}}).
      \label{eq94}
      \end{align}
      Next, by the same argument as \ref{p1}), for any $P_{\mb{X}}$ and $\epsilon>0$, there exists $P_{\mb{X}'}$ such that $\|\mb{X}'\|<M$ with probability one for some finite $M>0$, and that
      \begin{align}
      F_0(P_{\mb{X}'})
      \ge F_0(P_{\mb{X}})-\epsilon.
      \label{eq95}
      \end{align}
      Then
      \begin{align}
      F_0(P_{\mb{X}'})
      &\le \sup_{\bsigma\succeq \mb{0}}
      \,\sup_{P_{\mb{X}U}\colon \bsigma_{\bf X}\preceq \bsigma}F(P_{\mb{X}U})
      \\
      &\le \sup_{\bsigma\succeq \mb{0}}
      \,\sup_{P_{\mb{X}}\,\textrm{Gaussian},\bsigma_{\mb{X}}
        \preceq \bsigma}
        F_0
        (P_{\mb{X}})
        \label{e97}
            \\
      &= \sup_{P_{\mb{X}}\textrm{ Gaussian}}
        F_0
        (P_{\mb{X}})
        \label{eq98}
      \end{align}
    where \eqref{e97} was established in \eqref{e93}. Finally, \eqref{eq94}, \eqref{eq95}, \eqref{eq98} and arbitrariness of $\epsilon$ give
    \begin{align}
    \sup_{P_{\mb{X}U}}F(P_{\mb{X}U})
    \le\sup_{P_{\mb{X}}\textrm{ Gaussian}}
        F_0
        (P_{\mb{X}}).
    \end{align}
    Thus the $\le$ part of \eqref{e_78_1} is established. The other direction is trivial from the definition.
\end{enumerate}
\end{proof}

\subsection{Optimization of Mutual Informations}
Let $\mb{X},\mb{Y}_1,\dots,\mb{Y}_m$ be jointly Gaussian vectors, $a_0,\dots,a_m$ be nonnegative real numbers, and $\mb{M}$ be a positive-semidefinite matrix. We are interested in minimizing $I(U;\mb{X})$ over $P_{U|\mb{X}}$ (that is, $U-{\bf X-Y}^m$ must hold) subject to $I(U;\mb{Y}_i)\ge a_i$, $i\in\{1,\dots,m\}$
and $\Tr[\mb{M}\bsigma_{\mb{X}|U}]\le a_0$.
This is relevant to many problems in information theory including the Gray-Wyner network \cite{wyner1975common} and some common randomness/key generation problems \cite{liu2015key} (to be discussed in  Section~\ref{sec_key}). We have the following result for the Lagrange dual of such an optimization problem:
\begin{thm}\label{thm1}
Fix $\mb{M}\succeq \mb{0}$,  positive constants $c_0,c_1, \dots, c_m$, and jointly Gaussian vectors $(\mb{X}, \mb{Y}_1, \dots, \mb{Y}_m)\sim Q_{\mb{X}\mb{Y}_1,\dots,\mb{Y}_m}$.  Define the function
\begin{align}
G(P_{U|\mb{X}} ) := \sum_{j=1}^m c_j I(\mb{Y}_j;U)-I(\mb{X};U)-c_0 \Tr[\mb{M} \bsigma_{\mb{X}|U}].\label{fFunc}
\end{align}
Then
\begin{enumerate}
  \item\label{p2} If $(Q_{\mb{Y}_1|\mb{X}},\dots,Q_{\mb{Y}_m|\mb{X}})$ is non-degenerate, then $\sup_{P_{U|\mb{X}}}{G(P_{U|\mb{X}})}$
is achieved by $U^{\star}$ for which $\mb{X}|\{U^{\star}=u\}$ is normal for $P_{U^{\star}}$-a.e. $u$, with covariance not depending on $u$. This can be realized by a Gaussian vector $\mb{U}^*$ with the same dimension as $\mb{X}$.
    \item In general, $\sup_{P_{U|\mb{X}}}G(P_{U|\mb{X}})
        =\sup_{P_{\mb{U}|\mb{X}}\colon\textrm{$(\mb{U},\mb{X})$ is Gaussian}}G(P_{U|\mb{X}})$.
\end{enumerate}
\end{thm}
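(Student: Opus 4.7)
The plan is to directly replay the doubling-trick machinery of Theorem~\ref{thm:GaussEntropy}, taking advantage of the identity
\begin{align*}
G(P_{U|\mb{X}}) \,=\, F(P_{\mb{X}U}) \,+\, \sum_{j=1}^m c_j h(\mb{Y}_j) - h(\mb{X}).
\end{align*}
Since $P_{\mb{X}}$ (hence each $P_{\mb{Y}_j}$) is a fixed Gaussian by hypothesis, the two summands on the right are constants, so maximizing $G$ over $P_{U|\mb{X}}$ is the same, up to an additive constant, as maximizing $F$ over joint laws $P_{\mb{X}U}$ with the prescribed Gaussian $\mb{X}$-marginal. This marginal constraint is more restrictive than the covariance constraint $\bsigma_{\mb{X}|U}\preceq\bsigma$ in Theorem~\ref{thm:GaussEntropy}, so the conclusion of the latter does not apply verbatim, but the proof adapts cleanly and is in fact simplified because the $\mb{X}$-marginal is preserved exactly by rotations of independent copies.

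For part~(1), under the non-degeneracy assumption, existence of a maximizer $P^*_{U|\mb{X}}$ follows from a weak-$*$ compactness and lower-semicontinuity argument analogous to Appendix~\ref{app_exist}. I take two independent copies $(U^{(i)}, \mb{X}^{(i)}, (\mb{Y}_j^{(i)}))_{i=1,2}$ of the optimizer and form $\mb{X}^\pm := (\mb{X}^{(1)}\pm \mb{X}^{(2)})/\sqrt{2}$, $\mb{Y}_j^\pm := (\mb{Y}_j^{(1)}\pm \mb{Y}_j^{(2)})/\sqrt{2}$ and $\hat U := (U^{(1)}, U^{(2)})$. Writing the Gaussian channel as $\mb{Y}_j = \mb{A}_j \mb{X} + \mb{N}_j$ with $\mb{N}_j$ independent Gaussian, a direct computation gives $\mb{Y}_j^\pm = \mb{A}_j \mb{X}^\pm + \mb{N}_j^\pm$ with $\mb{N}_j^\pm$ independent Gaussian of the same covariance, and (since $\mb{X}^{(1)}, \mb{X}^{(2)}$ are i.i.d.\ Gaussian) $\mb{X}^+, \mb{X}^-$ are two exactly i.i.d.\ copies of $\mb{X}$, so both $(\hat U, \mb{X}^+)$ and $((\hat U, \mb{X}^+), \mb{X}^-)$ are feasible conditioning variables for the same problem. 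The chain-rule manipulation of \eqref{e86}--\eqref{e83}, together with the Markov chain $\mb{Y}_j^- - (\mb{X}^+, \hat U) - \mb{Y}_j^+$ (valid because $\mb{Y}_j^+$ depends on the $\mb{X}^{(i)}$'s only through $\mb{X}^+$ and an independent noise term), then forces both doubled solutions to be optimal. Perturbing the coefficients along a ray $(c_0,\dots,c_m) = t(a_0,\dots,a_m)$ and invoking Lemma~\ref{lem:invariance} yields $h(\mb{X}^+|\hat U) = h(\mb{X}^-|\mb{X}^+, \hat U) = h(\mb{X}^+|\mb{X}^-, \hat U)$, hence $\mb{X}^+ \perp \mb{X}^- | \hat U$; Lemma~\ref{SDcharacterization} then makes $\mb{X}|\{U^* = u\}$ Gaussian with covariance independent of $u$ for $P_{U^*}$-a.e.\ $u$. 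Setting $\mb{U}^* := \mathbb{E}[\mb{X}|U^*]$ and writing $\mb{X} = \mb{U}^* + \mb{Z}$ with $\mb{Z}$ Gaussian independent of $\mb{U}^*$, Cram\'er's theorem on decompositions of Gaussians forces $\mb{U}^*$ to be Gaussian. Since $G$ depends on $U^*$ only through $P_{\mb{X}|U^*}$ and this conditional law is entirely parametrized by $\mathbb{E}[\mb{X}|U^*]$ together with the constant $\bsigma_0$, the Gaussian $\mb{U}^*$ attains the same supremum, completing part~(1).

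Part~(2) follows a Gaussian-exhaustibility argument paralleling Theorem~\ref{thm:exhaustibility}: I perturb each channel $Q_{\mb{Y}_j|\mb{X}}$ by adding an independent Gaussian noise of covariance $\epsilon \mb{I}$ to $\mb{Y}_j$, making the perturbed system non-degenerate, apply part~(1) to obtain a jointly Gaussian optimizer, and let $\epsilon \downarrow 0$. The main obstacle I anticipate is verifying that the suprema converge as $\epsilon\downarrow 0$ and that the limiting jointly Gaussian optimizer remains meaningful; this rests on the same entropy-continuity facts (entropy power inequality for the lower bound, dominated convergence on Gaussian-smoothed densities for the upper bound) used in the proof of Theorem~\ref{thm:exhaustibility}, and is easier here because $P_{\mb{X}}$ is fixed throughout the perturbation, so the constraint $\bsigma_{\mb{X}|U}\preceq\bsigma_{\mb{X}}$ and the regularity estimates on $P_{\mb{Y}_j}$ are automatic.
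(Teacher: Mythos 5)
Your reduction $G(P_{U|\mb{X}})=F(P_{\mb{X}U})+\sum_j c_j h(\mb{Y}_j)-h(\mb{X})$ is exactly the paper's starting identity, and your observations that $\mb{X}^{\pm}$ again have law $\mathcal{N}(\mb{0},\bsigma_{\mb{X}})$, that the Markov chain $\mb{Y}_j^{-}-(\mb{X}^+,\hat U)-\mb{Y}_j^{+}$ holds, and that Skitovic--Darmois plus a Cram\'er decomposition makes $\mb{U}^*=\mathbb{E}[\mb{X}|U^*]$ Gaussian are all sound. The genuine gap is the very first step of your part (1): the existence of a maximizer of $G$ over $P_{U|\mb{X}}$ with the $\mb{X}$-marginal fixed (and, since you invoke Lemma~\ref{lem:invariance}, you in fact need maximizers for almost every coefficient vector along the ray). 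You claim this is ``analogous to Appendix~\ref{app_exist}'', but that argument opens with a Carath\'eodory reduction to $|\mathcal{U}|\le D$, which is available only because the constraint there, $\bsigma_{\mb{X}|U}\preceq\bsigma$, is a finite-dimensional moment constraint; under the exact-marginal constraint $P_{\mb{X}}=Q_{\mb{X}}$ the relevant convex hull sits in an infinite-dimensional space, no finite cardinality bound on $\mathcal{U}$ follows, and the subsequent tightness/semicontinuity scheme (extracting convergent subsequences of $P_{U_n}$ on a fixed finite alphabet and of each conditional $P_{\mb{X}_n|U_n=u}$, with the per-atom bound $\Cov(\mb{X}_n|U_n=u)\preceq\bsigma/P_{U_n}(u)$) has no direct analogue. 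Since attainment is itself part of the assertion of part (1), and the doubling argument cannot start without an optimizer, this step needs a genuinely different (and considerably more delicate) argument, not an adaptation.

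The gap is avoidable, and the paper's proof shows how: the fixed marginal forces $\bsigma_{\mb{X}|U}\preceq\bsigma_{\mb{X}}$ by the law of total covariance, so your identity already yields $G(P_{U|\mb{X}})\le\sum_j c_j h(\mb{Y}_j)-h(\mb{X})+\sup_{P_{\mb{X}'U'}:\,\bsigma_{\mb{X}'|U'}\preceq\bsigma_{\mb{X}}}F(P_{\mb{X}'U'})$, a relaxation to which Theorem~\ref{thm:GaussEntropy} applies verbatim. The relaxed optimum is attained by constant $U'$ and $\mb{X}'\sim\mathcal{N}(\mb{0},\bsigma')$ with $\bsigma'\preceq\bsigma_{\mb{X}}$, and it is realized inside your original feasible set by taking $\mb{U}^{\star}\sim\mathcal{N}(\mb{0},\bsigma_{\mb{X}}-\bsigma')$ independent of $\mb{X}'$ and $\mb{X}=\mb{U}^{\star}+\mb{X}'$, which restores the prescribed Gaussian marginal; this simultaneously proves attainment and produces the Gaussian $\mb{U}^{\star}$ with no new existence result and no re-run of the doubling machinery. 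Your part (2) outline ($\epsilon$-smoothing of the channels, data processing giving monotonicity and pointwise convergence of the mutual informations) is workable but inherits the dependence on part (1); the paper instead passes through the same relaxation and invokes Theorem~\ref{thm:exhaustibility} directly.
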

\begin{proof}
Set $\mb{\Sigma}:=\bsigma_{\mb{X}}$.
Note that for any $P_{U|\mb{X}}$, we have
\begin{align}
    &  \sum_{j=1}^m c_j I(\mb{Y}_j;U)
    -I(\mb{X};U)
    -c_0 \Tr [\mb{M} \bsigma_{\mb{X}|U}] \\
 &=  \sum_{j=1}^m c_j h(\mb{Y}_j)
  -h(\mb{X})
  +\left( h(\mb{X}|U) -\sum_{j=1}^m c_j h(\mb{Y}_j|U)-c_0 \Tr [\mb{M} \bsigma_{\mb{X}|U}]\right)\\
 &\le \sum_{j=1}^m c_j h(\mb{Y}_j) - h(\mb{X}) +\sup_{P_{\mb{X}'U'}\colon \bsigma_{\mb{X}'|U'}\preceq \mb{\Sigma}} F(P_{\mb{X}'U'}).
 \label{e_105}
\end{align}
In the non-degenerate case, by Theorem \ref{thm:GaussEntropy}, the supremum is attained in the last line by constant $U'$ and $\mb{X}'\sim\mathcal{N}(\mb{0},\mb{\Sigma}')$, where $\mb{\Sigma}'\preceq \mb{\Sigma}$.
Hence, taking $\mb{U}^{\star}\sim \mathcal{N}(\mb{0},\mb{\Sigma-\Sigma}')$ and $P_{\mb{X}|\mb{U}^{\star}=\mb{u}}\sim \mathcal{N}(\mb{u},\bsigma')$, we have $P_{\mb{X}}\sim \mathcal{N}(\mb{0},\bsigma)$ as required, and \eqref{e_105} reads as
\begin{align}
 \sum_{j=1}^m c_j I(\mb{Y}_j;U)
 -I(\mb{X};U) - c_0 \Tr [\mb{M} \bsigma_{\mb{X}|U}] \le
 \sum_{j=1}^m c_j I(\mb{Y}_j;\mb{U}^{\star})
 -I(\mb{X};\mb{U}^{\star})
 -c_0 \Tr [\mb{M} \bsigma_{\mb{X}|\mb{U}^{\star}}] ,
\end{align}
and \ref{p2}) follows since $P_{U|\mb{X}}$ is arbitrary. The claim for the general (possibly degenerate) case follows by invoking Theorem~\ref{thm:exhaustibility} and using a similar argument.
\end{proof}

\subsection{Optimization of Differential Entropies}
The $F_0(\cdot)$ defined in Section~\ref{sec_cond} is a linear combination of differential entropies and second order moments, which is closely related to the Brascamp-Lieb inequality. Immediately from Theorem~\ref{thm:GaussEntropy} and Theorem~\ref{thm:exhaustibility}, we have the following result regarding maximization of $F_0(\cdot)$:
\begin{cor}\label{cor_dif}
For any $\bsigma\succeq \mb{0}$,
\begin{align}
\sup_{P_{\mb{X}}\colon
\bsigma_{\mb{X}}\preceq \bsigma
}F_0(P_{\mb{X}})
=\sup_{P_{\mb{X}}\colon\textrm{ Gaussian},\bsigma_{\mb{X}}\preceq \bsigma}F_0(P_{\mb{X}}).
\label{e107}
\end{align}
\begin{align}
\sup_{P_{\mb{X}}
}F_0(P_{\mb{X}})
=\sup_{P_{\mb{X}}\colon\textrm{ Gaussian}}F_0(P_{\mb{X}}),
\label{e108}
\end{align}
where $F_0(\cdot)$ is defined in \eqref{e_103}.
In addition, in the non-degenerate case \eqref{e107} is always finite and (up to a translation) uniquely achieved by a Gaussian $P_{\mb{X}}$; \eqref{e108} is finite and (up to a translation) uniquely achieved by a Gaussian $P_{\mb{X}}$ if $\mb{M}$ in \eqref{e_103} is positive definite.
\end{cor}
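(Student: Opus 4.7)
The plan is to obtain Corollary~\ref{cor_dif} as a direct specialization of Theorems~\ref{thm:GaussEntropy} and~\ref{thm:exhaustibility} to the case where the auxiliary variable $U$ is taken to be a deterministic constant. For such a $U$, the objective~\eqref{gFunc} collapses to $F(P_{\mb{X}U})=F_0(P_{\mb{X}})$ and the constraint $\bsigma_{\mb{X}|U}\preceq\bsigma$ collapses to $\bsigma_{\mb{X}}\preceq\bsigma$, so both sides of the corollary sit naturally inside the more general setting already analyzed.

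For identity~\eqref{e107}, I would sandwich
\begin{align*}
\sup_{P_{\mb{X}}\,\textrm{Gaussian},\,\bsigma_{\mb{X}}\preceq\bsigma}F_0(P_{\mb{X}})
\le \sup_{P_{\mb{X}}\colon\bsigma_{\mb{X}}\preceq\bsigma}F_0(P_{\mb{X}})
\le \sup_{P_{\mb{X}U}\colon\bsigma_{\mb{X}|U}\preceq\bsigma}F(P_{\mb{X}U}),
\end{align*}
where the first step is restriction of the feasible set and the second step uses the ``$U$ constant'' observation above. Part~1 of Theorem~\ref{thm:exhaustibility} identifies the rightmost supremum with the leftmost one, forcing equality throughout. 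Identity~\eqref{e108} is obtained by exactly the same sandwiching argument applied to part~2 of Theorem~\ref{thm:exhaustibility}.

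For extremisability and uniqueness in the non-degenerate setting I would invoke Theorem~\ref{thm:GaussEntropy}, which asserts that $\sup_{P_{\mb{X}U}\colon\bsigma_{\mb{X}|U}\preceq\bsigma}F(P_{\mb{X}U})$ is finite, is attained by a Gaussian $\mb{X}$ with constant $U$, and that the covariance of any such attaining $\mb{X}$ is unique. Restricted to $U\equiv\textrm{const}$, this immediately yields the corresponding statements for~\eqref{e107}. To handle~\eqref{e108} under the additional hypothesis $\mb{M}\succ\mb{0}$, I would use that the penalty $-c_0\Tr[\mb{M}\bsigma_{\mb{X}}]$ is coercive in $\bsigma_{\mb{X}}$, so any maximizing sequence has uniformly bounded covariance; choosing $\bsigma$ large enough to contain all candidate maximizers reduces~\eqref{e108} to an instance of~\eqref{e107}, and the uniqueness just established transfers. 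The only point requiring genuine care is excluding \emph{non-Gaussian} maximizers of $F_0$: here I would note that the doubling-trick argument inside the proof of Theorem~\ref{thm:GaussEntropy} shows that every maximizer of $F$ under the covariance constraint must have $U$ constant and $\mb{X}$ Gaussian, so specializing again to $U\equiv\textrm{const}$ rules out non-Gaussian maximizers of $F_0$ as well.
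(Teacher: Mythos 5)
Your proposal is correct and takes essentially the same route as the paper, which states the corollary as following ``immediately'' from Theorems~\ref{thm:GaussEntropy} and~\ref{thm:exhaustibility} by specializing to constant $U$, so that $F(P_{\mb{X}U})=F_0(P_{\mb{X}})$ and $\bsigma_{\mb{X}|U}\preceq\bsigma$ becomes $\bsigma_{\mb{X}}\preceq\bsigma$ — exactly your sandwich argument. Your additional remarks (the coercivity of $-c_0\Tr[\mb{M}\bsigma_{\mb{X}}]$ for \eqref{e108} when $\mb{M}\succ\mb{0}$, and the appeal to the doubling-trick inside the proof of Theorem~\ref{thm:GaussEntropy} to exclude non-Gaussian maximizers) merely spell out details the paper leaves implicit.
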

\begin{rem}
Since the left side of \eqref{e_78} is obviously concave as a function of $\bsigma$, \eqref{e_78} and \eqref{e107} combined shows that the left side of \eqref{e107} is concave in $\bsigma$.
This is not obvious from the definition; in particular it is not clear wether the concavity still holds for non-gaussian $Q_{\mb{X}}$ and $Q_{\mb{Y}_j|\mb{X}}$.
\end{rem}

\section{Gaussian Optimality in the Forward-Reverse Brascamp-Lieb Inequality}\label{sec_frg}
In this section, some notations and terminologies from Sections~\ref{sec_fr} and \ref{sec_gaussian} will be used.
Moreover, consider the following parameters/data:
\begin{itemize}
\item Fix Lebesgue measures $(\mu_j)_{j=1}^m$ and Gaussian measures $(\nu_i)_{i=1}^l$
on $\mathbb{R}$;
\item non-degenerate (Definition~\ref{defn_ND}) linear Gaussian random transformation $(P_{Y_j|\mb{X}})_{j=1}^m$ (where ${\bf X}:=(X_1,\dots,X_l)$) associated with conditional expectation operators $(T_j)_{j=1}^m$;
\item positive $(c_j)$ and $(b_i)$.
\end{itemize}
Given Borel measures $P_{X_i}$ on $\mathbb{R}$, $i=1,\dots,l$,
define
%\footnote{For simplicity and clarity the of notations, we use the notation of scalar random variables $X_i$ in this section, even though the argument works for random vectors as well, as we see in Theorem~\ref{sec_gaussian}.}
\begin{align}
F_0((P_{X_i}))
:=\inf_{P_{\mb{X}}}\sum_{j=1}^m c_jD(P_{Y_j}\|\mu_j)
-\sum_{i=1}^lb_iD(P_{X_i}\|\nu_i)
\label{e_f0_2}
\end{align}
where the infimum is over Borel measures $P_{\bf X}$ that has $(P_{X_i})$ as marginals.
The aim of this section is to prove the following:
\begin{thm}\label{thm_fr_optimality}
$\sup_{(P_{X_i})}F_0((P_{X_i}))$,
where the supremum is over Borel measures $P_{X_i}$ on $\mathbb{R}$, $i=1,\dots,l$, is achieved by some Gaussian $(P_{X_i})_{i=1}^l$.
\end{thm}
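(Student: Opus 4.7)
The plan is to adapt the doubling-trick argument from Theorems~\ref{thm:GaussEntropy} and \ref{thm1} to the sup-inf structure of $F_0$, exploiting that both $(\mu_j)$ (Lebesgue) and $(\nu_i)$ (Gaussian on $\mathbb{R}$), as well as the additive noise in each linear Gaussian channel $P_{Y_j|\mb{X}}$, are invariant under the orthogonal rotation $(a,b)\mapsto\left((a+b)/\sqrt{2},\,(a-b)/\sqrt{2}\right)$.

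First I would address existence. The non-degeneracy assumption on $(P_{Y_j|\mb{X}})$ together with the equivalence in Theorem~\ref{thm_unification} (applied with a suitable admissible $d$) gives an a priori finite value for $\sup F_0$. The quadratic piece hidden inside $D(P_{X_i}\|\nu_i)$, namely $\mathbb{E}[X_i^2]/(2\sigma_i^2)$ where $\sigma_i^2$ is the variance of $\nu_i$, provides second-moment tightness for any near-maximizing sequence, so weak-$*$ compactness of the outer variables, weak-$*$ lower-semicontinuity of $D(\cdot\|\nu_i)$, and weak-$*$ continuity of the Gaussian channels yield an outer maximizer $(P_{X_i}^{\star})$; Prokhorov's theorem applied to the set of couplings with fixed marginals then produces an inner minimizer $P_{\mb{X}}^{\star}$.

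Given two such optimizers $(P_{X_i}^{(k)},P_{\mb{X}}^{(k)})$, $k=1,2$, I would draw $\mb{X}^{(1)},\mb{X}^{(2)}$ independently with laws $P_{\mb{X}}^{(k)}$, pass them through the channels with independent copies of the noise, and rotate to $\mb{X}^{\pm}:=(\mb{X}^{(1)}\pm\mb{X}^{(2)})/\sqrt{2}$ and $Y_j^{\pm}:=(Y_j^{(1)}\pm Y_j^{(2)})/\sqrt{2}$. Because the noise is Gaussian, $Y_j^{\pm}=\mb{B}_j\mb{X}^{\pm}+W_j^{\pm}$ with $W_j^{\pm}$ independent of $(\mb{X}^+,\mb{X}^-)$ and distributed as the original noise, so $P_{Y_j^{\pm}|\mb{X}^{\pm}}=Q_{Y_j|\mb{X}}$. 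Rotational invariance of $\mu_j\otimes\mu_j$ and $\nu_i\otimes\nu_i$, combined with the chain rule for relative entropy and the independence of $\mb{X}^{(1)},\mb{X}^{(2)}$, gives
\begin{align}
D(P_{Y_j^{(1)}}\|\mu_j)+D(P_{Y_j^{(2)}}\|\mu_j)&=D(P_{Y_j^+}\|\mu_j)+D(P_{Y_j^-|Y_j^+}\|\mu_j\,|\,P_{Y_j^+}),\\
D(P_{X_i^{(1)}}\|\nu_i)+D(P_{X_i^{(2)}}\|\nu_i)&=D(P_{X_i^+}\|\nu_i)+D(P_{X_i^-|X_i^+}\|\nu_i\,|\,P_{X_i^+}).
\end{align}
Summing with weights $c_j$ and $b_i$, $2F_0((P_{X_i}^{\star}))$ decomposes into a ``$+$'' piece and a conditional ``$-$'' piece. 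Since $P_{\mb{X}^+}$ is a feasible coupling for the marginals $(P_{X_i^+})$, the ``$+$'' piece is $\ge F_0((P_{X_i^+}))$. Conditioning-reduces-entropy applied with the Markov chain $Y_j^{+}-\mb{X}^{+}-(\mb{X}^{-},Y_j^{-})$, together with convexity of $D(\cdot\|\nu_i)$, bounds the conditional ``$-$'' piece by the natural conditional analogue $F_0^{(\mb{X}^+)}((P_{X_i^-|\mb{X}^+}))$, which in turn is at most $F_0((P_{X_i^-}))$. Optimality of $(P_{X_i}^{\star})$ forces equality throughout the chain.

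These equality cases are then rigidified by the coefficient-perturbation trick: replace $(c_j,b_i)$ with $(tc_j,tb_i)$ and apply Lemma~\ref{lem:invariance} to the resulting support function in $t$, as in the proof of Theorem~\ref{thm:GaussEntropy}. This forces $h(X_i^-|X_i^+)=h(X_i^-)$ for each $i$ and almost every $t$, hence $X_i^+\perp X_i^-$ componentwise. Combined with $\mb{X}^{(1)}\perp\mb{X}^{(2)}$ built into the construction, the Skitovic-Darmois characterization (Lemma~\ref{SDcharacterization}) applied componentwise forces each $X_i^{(k)}$ to be Gaussian, so the optimal $(P_{X_i}^{\star})$ is Gaussian. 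I expect the main obstacle to be the decomposition step: unlike Theorem~\ref{thm:GaussEntropy}, $F_0$ already contains an inner infimum, so one must show that the conditional-on-$\mb{X}^+$ sub-problem does not exceed the unconditional $F_0$. Making this comparison tight --- together with the existence argument, where coercivity must be extracted from $D(P_{X_i}\|\nu_i)$ rather than imposed as a covariance constraint --- is where non-degeneracy of $(P_{Y_j|\mb{X}})$ and the Gaussianity of the $\nu_i$ (ensuring rotation invariance) are essential.
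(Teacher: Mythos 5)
Your existence sketch and the endgame (coefficient perturbation via Lemma~\ref{lem:invariance}, then Lemma~\ref{SDcharacterization}) are in line with the paper, but the central doubling step has a genuine gap: your sandwich points the wrong way. You evaluate the rotated decomposition at the \emph{fixed} coupling given by the product of the two single-letter optimal couplings, then lower-bound the ``$+$'' piece by $F_0((P_{X_i^+}))$ (correct, since $P_{\mb{X}^+}$ is merely feasible) and try to bound the ``$-$'' piece by a conditional analogue. To force equality, however, you need the whole decomposition to be \emph{upper}-bounded by a sum of $F_0$-values, each of which is then $\le \sup F_0$; lower bounds of the form $(+)\ge F_0((P_{X_i^+}))$ and $(-)\ge \int F_0(\cdot)\,{\rm d}P_{\mb{X}^+}$ only yield $2\sup F_0 \ge$ (terms each trivially $\le \sup F_0$), which forces nothing. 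Moreover, the comparison of the ``$-$'' piece with the conditional analogue cannot be made one-sided with your tools: conditioning on $\mb{X}^+$ increases the $c_j$-terms, $D(P_{Y_j^-|Y_j^+}\|\mu_j|P_{Y_j^+})\le D(P_{Y_j^-|\mb{X}^+}\|\mu_j|P_{\mb{X}^+})$, while the terms entering with $-b_i$ move in the opposite direction under the same conditioning; and the further claim that the conditional $F_0$-piece is at most $F_0((P_{X_i^-}))$ would need a concavity of $F_0$ in the marginals that is not available (nor needed in the paper, which only uses $\le\sup F_0$ and recovers that $(P_{X_i^-})$ is a maximizer by symmetry afterwards).

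What closes the loop in the paper are two ingredients missing from your proposal. First, Lemma~\ref{lem_inner_tens}: the inner infimum tensorizes, so $\sum_{t}F_0((P_{X_i^{(t)}}))$ \emph{equals} the two-letter infimum over all couplings of the product marginals (the point being $D(P_{Y_j^{(1,2)}}\|\mu_j^{\otimes2})-\sum_t D(P_{Y_j^{(t)}}\|\mu_j)=I(Y_j^{(1)};Y_j^{(2)})\ge 0$); starting from this identity, every subsequent step in \eqref{e_rbl_inner_tens}--\eqref{e_rbl65} is an upper bound. Second, after rotating one does not keep the product coupling: the relaxed, conditioned-on-$\mb{X}^+$ objective is evaluated at a freshly constructed coupling, namely an optimal coupling of $(P_{X_i^+})$ together with, for each $\mb{x}^+$, an optimal coupling of the conditional marginals $(P_{X_i^-|X_i^+=x_i^+})$ (justified by a measurable-selection argument). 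This construction simultaneously makes the conditional piece exactly $\int F_0((P_{X_i^-|\mb{X}^+}))\,{\rm d}P_{\mb{X}^+}$ and enforces the Markov chain $X_i^+-\mb{X}^+-X_i^-$, so that $D(P_{X_i^-|X_i^+}\|\nu_i|P_{X_i^+})=D(P_{X_i^-|\mb{X}^+}\|\nu_i|P_{\mb{X}^+})$; without it the chain-rule terms do not match. A smaller omission: for existence of an outer maximizer you also need upper semicontinuity of the inner infimum with respect to weak convergence of the marginals, which the paper obtains from the coupling-stability Lemma~\ref{lem_couple}, not merely from lower semicontinuity of $D(\cdot\|\nu_i)$ and weak continuity of the Gaussian channels.
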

Naturally, one would expect that Gaussian optimality can be established when $(\mu_j)_{j=1}^m$ and $(\nu_i)_{i=1}^l$ are either Gaussian or Lebesgue. We made the assumption that the former is Lebesgue and the latter is Gaussian so that certain technical conditions can be justified conveniently, while the crux of the matter--the tensorization steps can still be demonstrated. More precisely, we have the following observation:
\begin{prop}\label{prop13}
$\sup_{(P_{X_i})}F_0((P_{X_i}))$ is finite and there exist $\sigma_i^2\in (0,\infty)$, $i=1,\dots,l$ such that it equals
\begin{align}
\sup_{(P_{X_i})\colon \mathbb{E}[X_i^2]\le \sigma_i^2}F_0((P_{X_i})).
\label{e159}
\end{align}
\end{prop}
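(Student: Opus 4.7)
The plan is to show $\sup F_0$ is bounded above by a channel-dependent constant, is bounded below by a finite achievable value, and that $F_0((P_{X_i}))$ is driven to $-\infty$ once any $\mathbb{E}[X_i^2]$ grows large; these three facts together yield both the finiteness and the moment restriction.

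Let $\nu_i = \mathcal{N}(m_i,\tau_i^2)$ and write $Y_j = \sum_{i=1}^l a_{ji}X_i + W_j$, where $W_j$ is the Gaussian channel noise (independent of $\mb{X}$) of variance $\gamma_j^2 > 0$. For the uniform upper bound, I would substitute the product coupling $P_{\mb{X}} = \prod_i P_{X_i}$ into the inner infimum in \eqref{e_f0_2}: since $h(Y_j) \ge h(Y_j|\mb{X}) = h(W_j) = \tfrac{1}{2}\log(2\pi e \gamma_j^2)$, and $\mu_j$ is Lebesgue so that $D(P_{Y_j}\|\mu_j) = -h(Y_j)$, and since $D(P_{X_i}\|\nu_i) \ge 0$, one obtains $F_0((P_{X_i})) \le C$ with $C := -\sum_j \tfrac{c_j}{2}\log(2\pi e \gamma_j^2)$ uniformly in $(P_{X_i})$. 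For a finite achievable value, take $P_{X_i} = \nu_i$: the second term in \eqref{e_f0_2} vanishes, and Cauchy--Schwarz bounds $\mathbb{E}[Y_j^2]$ uniformly over all couplings of $(\nu_i)$ in terms of $(\tau_i^2)$ and $\gamma_j^2$, so Gaussian maximum entropy yields a finite upper bound on $h(Y_j)$, whence $A := F_0((\nu_i)) > -\infty$. In particular, $-\infty < A \le \sup F_0 \le C < \infty$.

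For the moment restriction, I would use that each $\nu_i$ is Gaussian, so that whenever $P_{X_i}$ has a density (else $D(P_{X_i}\|\nu_i)=+\infty$ and $F_0 = -\infty$),
\begin{align*}
D(P_{X_i}\|\nu_i) = -h(X_i) + \tfrac{1}{2}\log(2\pi\tau_i^2) + \tfrac{v_i + (\mu_i-m_i)^2}{2\tau_i^2},
\end{align*}
where $v_i := \mathrm{Var}(X_i)$, $\mu_i := \mathbb{E}[X_i]$. Applying $h(X_i) \le \tfrac{1}{2}\log(2\pi e v_i)$ gives $D(P_{X_i}\|\nu_i) \ge g_i(v_i) + (\mu_i-m_i)^2/(2\tau_i^2)$, where $g_i(v) := \tfrac{v}{2\tau_i^2} - \tfrac{1}{2}\log(v/\tau_i^2) - \tfrac{1}{2}$ is nonnegative and tends to $+\infty$ at both $v \to 0^+$ and $v \to \infty$. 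Hence $D(P_{X_i}\|\nu_i) \to +\infty$ uniformly as $\mathbb{E}[X_i^2] = v_i + \mu_i^2 \to \infty$. Choosing $\sigma_i^2 > 0$ large enough that $\mathbb{E}[X_i^2] > \sigma_i^2$ forces $b_i D(P_{X_i}\|\nu_i) > C - A + 1$ then yields $F_0((P_{X_i})) \le C - b_i D(P_{X_i}\|\nu_i) < A - 1 < \sup F_0$, so the supremum is unchanged by the restriction $\mathbb{E}[X_i^2] \le \sigma_i^2$ for every $i$.

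The principal subtlety I anticipate is the uniformity in the coupling in the upper bound: it relies essentially on $\mu_j$ being Lebesgue (so that $D(P_{Y_j}\|\mu_j) = -h(Y_j)$) together with the presence of the Gaussian noise $W_j$, which gives the coupling-free bound $h(Y_j) \ge h(W_j)$. The quadratic penalty in $\mathbb{E}[X_i^2]$ driving the moment restriction then comes automatically from the Gaussian tails of $\nu_i$.
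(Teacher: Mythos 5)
Your proposal is correct and follows essentially the same route as the paper: the uniform upper bound via $D(P_{Y_j}\|\mu_j)=-h(Y_j)\le -h(Y_j|\mathbf{X})$ together with $D(P_{X_i}\|\nu_i)\ge 0$, the finite lower bound by evaluating at $(P_{X_i})=(\nu_i)$ with a second-moment/covariance bound on $h(Y_j)$, and the moment restriction from a lower bound on $D(P_{X_i}\|\nu_i)$ that diverges with $\mathbb{E}[X_i^2]$. Your explicit bound $g_i(v_i)+(\mu_i-m_i)^2/(2\tau_i^2)$ is exactly the paper's $D(\nu_i'\|\nu_i)$ with $\nu_i'$ the moment-matched Gaussian, so the arguments coincide.
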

\begin{proof}
when $\mu_j$ is Lebesgue and $P_{Y_j|{\bf X}}$ is non-degenerate,
$D(P_{Y_j}\|\mu_j)=-h(P_{Y_j})\le -h(P_{Y_j}|{\bf X})$ is bounded above (in
terms of the variance of additive noise of $P_{Y_j|{\bf X}}$).
Moreover, $D(P_{X_i}\|\nu_i)\ge 0$ when $\nu_i$ is Gaussian, so $\sup_{(P_{X_i})}F_0((P_{X_i}))<\infty$. Further, choosing $(P_{X_i})=(\nu_i)$ and applying a covariance argument to lower bound the first term in \eqref{e_f0_2} shows that $\sup_{(P_{X_i})}F_0((P_{X_i}))>-\infty$.

To see \eqref{e159}, notice that
\begin{align}
D(P_{X_i}\|\nu_i)&=D(P_{X_i}\|\nu_i')
+\mathbb{E}[\imath_{\nu_i'\|\nu_i}(X)]
\\
&=
D(P_{X_i}\|\nu_i')+D(\nu_i'\|\nu_i)
\\
&\ge D(\nu_i'\|\nu_i)
\end{align}
where $\nu_i'$ is a Gaussian distribution with the same first and second moments as $X_i\sim P_{X_i}$. Thus $D(P_{X_i}\|\nu_i)$ is bounded below by some function of the second moment of $X_i$ which tends to $\infty$ as the second moment of $X_i$ tends to $\infty$. Moreover, as argued in the preceding paragraph the first term in \eqref{e_f0_2} is bounded above by some constant depending only on $(P_{Y_j|{\bf X}})$.
Thus, we can choose $\sigma_i^2>0$, $i=1,\dots,l$ large enough such that if $\mathbb{E}[X_i^2]>\sigma_i^2$ for some of $i$ then $F_0((P_{X_i}))<\sup_{(P_{X_i})}F_0((P_{X_i}))$, irrespective of the choices of $P_{X_1},\dots,P_{X_{i-1}},P_{X_{i+1}},\dots,P_{X_l}$.
Then these $\sigma_1,\dots,\sigma_l$ are as desired in the proposition.
\end{proof}

As we saw in Section~\ref{sec_gaussian}, the regularization ensures that the supremum is achieved, although it might be possible to prove Gaussian \emph{exhaustibility} results by taking limits.
\begin{prop}\label{prop_exist2}
\begin{enumerate}
\item For any $(P_{X_i})_{i=1}^l$, the infimum in \eqref{e_f0_2} is attained.
\item If $(P_{Y_j|X^l})_{j=1}^m$ are non-degenerate (Definition~\ref{defn_ND}), then the supremum in \eqref{e159} is achieved by some $(P_{X_i})_{i=1}^l$.
\end{enumerate}
\end{prop}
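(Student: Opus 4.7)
For Part 1, the plan is a standard weak$^*$-compactness plus lower-semicontinuity argument. Given $(P_{X_i})_{i=1}^l$, let $\Pi((P_{X_i}))$ denote the set of Borel probability measures on $\mathbb{R}^l$ with these prescribed marginals. This set is weak$^*$-closed, since each marginalization $S_i^*$ is weak$^*$-continuous, and uniformly tight, because each $P_{X_i}$ is individually tight on $\mathbb{R}$ and the tails of any coupling are controlled by the tails of its marginals via a union bound; hence it is weak$^*$-compact by Prokhorov's theorem. The map $P_{\mb{X}} \mapsto \sum_j c_j D(T_j^* P_{\mb{X}}\|\mu_j)$ is weak$^*$-lower semicontinuous, since $D(\cdot\|\mu_j)$ is LSC by the Donsker--Varadhan formula \eqref{e_dv} and each pushforward $T_j^*$ is weak$^*$-continuous. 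The infimum is therefore attained.

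For Part 2, I would write $F_0 = \phi - \psi$ with $\phi((P_{X_i})) := \inf_{P_{\mb{X}} \in \Pi((P_{X_i}))} \sum_j c_j D(T_j^* P_{\mb{X}}\|\mu_j)$ and $\psi((P_{X_i})) := \sum_i b_i D(P_{X_i}\|\nu_i)$. The regularized set $K := \{(P_{X_i}) : \mathbb{E}[X_i^2] \le \sigma_i^2, \ i=1,\dots,l\}$ is weak$^*$-compact, since the second-moment bound gives uniform tightness of each marginal by Markov. Because $\psi$ is always weak$^*$-LSC, it would be enough to prove that $\phi$ is weak$^*$-continuous on $K$; then $F_0$ is weak$^*$-upper semicontinuous on the compact $K$ and attains its supremum.

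I would establish continuity of $\phi$ by applying Berge's maximum theorem to the parametric inner optimization, with cost $J_0(P_{\mb{X}}) := \sum_j c_j D(T_j^* P_{\mb{X}}\|\mu_j)$ and constraint correspondence $(P_{X_i}) \mapsto \Pi((P_{X_i}))$. Two ingredients are required: (i) weak$^*$-continuity of $J_0$ on the set of $P_{\mb{X}}$ whose marginals satisfy $\mathbb{E}[X_i^2] \le \sigma_i^2$ for each $i$, and (ii) both upper- and lower-hemicontinuity of $\Pi$. Item (ii) splits cleanly: upper-hemicontinuity is immediate from the continuity of marginalization, while lower-hemicontinuity would be obtained by a constructive selection. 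Namely, given $P_{\mb{X}}^\star \in \Pi((P_{X_i}^\star))$ and a sequence $(P_{X_i}^n) \to (P_{X_i}^\star)$, push $P_{\mb{X}}^\star$ forward through the coordinatewise monotone rearrangement $\mb{x} \mapsto \bigl((F_{n,i})^{-1}(F_{\star,i}(x_i))\bigr)_{i=1}^l$ built from the one-dimensional CDFs; since the CDFs converge at all continuity points (and hence $P_{X_i}^\star$-almost everywhere), this map converges to the identity $P_{\mb{X}}^\star$-a.e., so the pushforward couplings converge weakly to $P_{\mb{X}}^\star$ and lie in $\Pi((P_{X_i}^n))$ by construction.

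The main obstacle is item (i), the weak$^*$-continuity of $J_0$. The point is that $F_0 = \phi - \psi$ is inherently a difference of two naturally LSC functionals, and without two-sided continuity of $\phi$ one cannot conclude USC of $F_0$. This is precisely where the non-degeneracy hypothesis enters: each $T_j^* P_{\mb{X}}$ is then a convolution of a pushforward of $P_{\mb{X}}$ with a nondegenerate Gaussian, yielding a uniformly bounded, smooth density; combined with the uniform second-moment bound on the marginals (which transfers to a uniform second-moment bound on $T_j^*P_{\mb{X}}$), the differential entropy $h(T_j^* P_{\mb{X}})$ is continuous under weak convergence, by dominated convergence applied to the integrand $-f\log f$ together with a uniform tail estimate coming from the second-moment bound. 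Without this Gaussian-smoothing continuity, $J_0$ would be only lower semicontinuous and the argument for attainment would collapse.
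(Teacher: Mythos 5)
Your overall architecture is the same as the paper's (moment-induced tightness plus Prokhorov to extract weak limits, lower semicontinuity for the $\sum_i b_i D(\cdot\|\nu_i)$ terms, and a coupling-stability argument combined with weak continuity of the output differential entropy of non-degenerate Gaussian channels to handle the inner infimum), but there is a genuine gap in your Part 1. You justify lower semicontinuity of $P_{\mathbf{X}}\mapsto\sum_j c_j D(T_j^*P_{\mathbf{X}}\|\mu_j)$ by the Donsker--Varadhan formula \eqref{e_dv}, but in this section the $\mu_j$ are \emph{Lebesgue} measures, and \eqref{e_dv} is stated (and valid as an identity) only for finite reference measures. For an infinite reference measure, $D(P\|\lambda)=-h(P)$ is \emph{not} weakly lower semicontinuous: take $P_n=(1-\tfrac1n)\,\mathrm{Unif}[0,1]+\tfrac1n\,\mathrm{Unif}[0,e^{n^2}]$, which converges weakly to $\mathrm{Unif}[0,1]$ while $h(P_n)\ge n$, so $D(P_n\|\lambda)\to-\infty$ although the limit has $D=0$; the supremum over $C_b$ test functions only gives a strict lower bound on $D(\cdot\|\lambda)$ here, by the Gibbs argument. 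The semicontinuity you need is exactly what the paper supplies through Lemma~\ref{lem28}: because the marginals are fixed, every coupling (and hence every output $T_j^*P_{\mathbf{X}}$) inherits a uniform second-moment bound, and under that bound differential entropy is weakly upper semicontinuous. Your Part 1 conclusion is correct, but only after replacing the Donsker--Varadhan step by this moment-constrained argument; as written, the step fails, and you yourself identify in Part 2 that this continuity issue is ``the main obstacle,'' without noticing that it already undercuts your Part 1.

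Your Part 2 is closer to being complete and is a genuinely different packaging of the paper's argument: where the paper proves Lemma~\ref{lem_couple} by dyadic quantization and an explicit mixture coupling, you obtain lower hemicontinuity of the coupling correspondence by pushing the target coupling through coordinatewise quantile maps, and you wrap everything in Berge's maximum theorem rather than arguing along the supremizing sequence. Two caveats. First, the map $x_i\mapsto F_{n,i}^{-1}(F_{\star,i}(x_i))$ transports $P_{X_i}^\star$ onto $P_{X_i}^n$ only when $P_{X_i}^\star$ is atomless (otherwise $F_{\star,i}(X_i^\star)$ is not uniform and the pushforward has the wrong marginal), so the claimed continuity of $\phi$ on all of $K$ via Berge is not justified; you should either use the randomized distributional transform (or the paper's quantization construction, which needs no atomlessness), or note that you only need hemicontinuity at the limit point of the supremizing sequence, where $D(P_{X_i}^\star\|\nu_i)<\infty$ forces $P_{X_i}^\star\ll\lambda$. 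Second, weak closedness of $K$ (not just tightness) needs the truncation/monotone-convergence argument the paper uses at \eqref{e_mono2}. With these repairs your route works and is arguably cleaner than Lemma~\ref{lem_couple}, at the price of one-dimensional marginals being essential to the quantile construction, whereas the paper's quantization argument does not rely on the order structure of $\mathbb{R}$.
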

\begin{proof}
\begin{enumerate}
\item
For any $\epsilon>0$, by the continuity of measure there exists $K>0$ such that
\begin{align}
P_{X_i}([-K,K])\ge 1-\epsilon/l,\quad i=1,\dots,l.
\end{align}
By the union bound,
\begin{align}
P_{\bf X}([-K,K]^l)\ge 1-\epsilon
\label{e_tight}
\end{align}
wherever $P_{\bf X}$ is a coupling of $(P_{X_i})$.
Now let $P_{\bf X}^{(n)}$, $n=1,2,\dots$ be a such that
\begin{align}
\lim_{n\to\infty}\sum_{j=1}^m c_jD(P_{Y_j}^{(n)}\|\mu_j)
=\inf_{P_{\mb{X}}}\sum_{j=1}^m c_jD(P_{Y_j}\|\mu_j)
\end{align}
where $P_{Y_j}:=T_j^* P_{\bf X}$, $j=1,\dots,m$.
The sequence $(P_{\bf X}^{(n)})$ is tight by \eqref{e_tight},
Thus invoking Prokhorov theorem and by passing to a subsequence, we may assume that $(P_{\bf X}^{(n)})$ converges weakly to some $P_{\bf X}^{\star}$.
Therefore $P_{Y_j}^{(n)}$ converges to $P_{Y_j}^{\star}$ weakly, and by the semicontinuity property in Lemma~\ref{lem28} we have
\begin{align}
\sum_{j=1}^m c_jD(P_{Y_j}^{\star}\|\mu_j)
\le
\lim_{n\to\infty}\sum_{j=1}^m c_jD(P_{Y_j}^{(n)}\|\mu_j)
\end{align}
establishing that $P_{\bf X}^{\star}$ is an infimizer.

\item Suppose $(P_{X_i}\n)_{1\le i\le l,n\ge 1}$ is such that
$\mathbb{E}[X_i^2]\le \sigma_i^2$, $X_i\sim P^{(n)}_{X_i}$,
where $(\sigma_i)$ is as in Proposition~\ref{prop13}
and
\begin{align}
\lim_{n\to\infty}F_0\left((P_{X_i}\n)_{i=1}^l\right)=
\sup_{(P_{X_i})\colon \bsigma_{X_i}\preceq \sigma_i^2}F_0((P_{X_i})_{i=1}^l).
\end{align}
The regularization on the covariance implies that for each $i$, $(P_{X_i}\n)_{n\ge 1}$ is a tight sequence. Thus upon the extraction of subsequences, we may assume that for each $i$, $(P_{X_i}\n)_{n\ge 1}$ converges to some $P_{X_i}^{\star}$,
and a simple truncation and min-max inequality argument (see e.g.~\eqref{e_mono2}) shows that $\mathbb{E}[X_i^2]\le \sigma_i^2$, $X_i\sim P^{\star}_{X_i}$.
Then by Lemma~\ref{lem28},
\begin{align}
\sum_ib_iD(P_{X_i}^{\star}\|\nu_i)
\le \lim_{n\to\infty}\sum_ib_iD(P_{X_i}\n\|\nu_i)
\label{e_semi}
\end{align}
Under the covariance regularization and the non-degenerateness assumption, we showed in Proposition~\ref{prop13} that the value of \eqref{e159} cannot be $+\infty$ or $-\infty$. This implies that we can assume (by passing to a subsequence) that
$P_{X_i}\n\ll \lambda$, $i=1,\dots,l$ since otherwise $F((P_{X_i}))=-\infty$. Moreover, since $\left( \sum_jc_jD(P_{Y_j}\n\|\mu_j)\right)_{n\ge 1}$ is bounded above under the non-degenerateness assumption, the sequence $\left( \sum_ib_iD(P_{X_i}\n\|\nu_i)\right)_{n\ge 1}$ must also be bounded from above, which implies, using \eqref{e_semi}, that
\begin{align}
\sum_ib_iD(P_{X_i}^{\star}\|\nu_i)<\infty.
\end{align}
In particular, we have $P_{X_i}^{\star}\ll \lambda$ for each $i$.
Let $P_{\bf X}^{\star}$ be an infimizer as in Part~1) for marginals $(P_{X_i}^{\star})$.
In view of Lemma~\ref{lem_couple}, by possibly passing to subsequences we can assume that each $(P_{X_i}\n)_{i=1}^l$ admits a coupling $P_{\bf X}\n$, such that $P_{\bf X}\n\to P_{\bf X}^{\star}$ weakly as $n\to\infty$.
In the non-degenerate case the output differential entropy is weakly continuous in the input distribution under the covariance constraint (see for example \cite[Proposition~18]{geng2014yanlin}), which establishes that
\begin{align}
\inf_{P_{\bf X}\colon S_i^*P_{\bf X}=P_{X_i}^{\star}}\sum_j c_jD(T_j^*P_{\bf X}\|\mu_j)
&=\sum_j c_jD(P_{Y_j}^{\star}\|\mu_j)
\\
&=\lim_{n\to\infty} \sum_j c_jD(P_{Y_j}\n\|\mu_j)
\\
&\ge \lim_{n\to\infty}\inf_{P_{\bf X}\colon S_i^*P_{\bf X}=P_{X_i}\n} \sum_j c_jD(T_j^*P_{\bf X}\|\mu_j)
\label{e169}
\end{align}
Thus \eqref{e_semi} and \eqref{e169} show that $(P_{X_i}^{\star})$ is in fact a maximizer.
\end{enumerate}
\end{proof}

\begin{lem}\label{lem_couple}
Suppose that for each $i=1,\dots,l$ ($l\ge 2$), $P_{X_i}$ is a Borel measure on $\mathbb{R}$ and $P^{(n)}_{X_i}$ converges weakly to $P_{X_i}$ as $n\to\infty$. If $P_{\bf X}$ is a coupling of $(P_{X_i})_{1\le i\le l}$, then, upon extraction of a subsequence, there exist couplings $P^{(n)}_{{\bf X}}$ for $(P^{(n)}_{X_i})_{1\le i\le l}$ which converge weakly to $P_{\bf X}$ as $n\to\infty$.
\end{lem}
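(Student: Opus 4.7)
The idea is to produce the couplings by a coordinate-wise Skorokhod representation, using an auxiliary randomization to handle possible atoms of the limit marginals $P_{X_i}$. No subsequence extraction will actually be needed for this approach, but any sub-sequence is, of course, acceptable.

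First, I would enlarge the probability space on which ${\bf X}=(X_1,\dots,X_l)\sim P_{\bf X}$ is defined (by taking a product with $([0,1]^l,\lambda^{\otimes l})$ if necessary) so that it also carries an independent uniform vector $(U_1,\dots,U_l)$. For each $i$, let $F_i$ denote the CDF of $P_{X_i}$ and $F_i^{-1}(u):=\inf\{x:F_i(x)\ge u\}$ its left-continuous inverse; define the \emph{randomized distributional transform}
\begin{align}
V_i := F_i(X_i^-)+U_i\bigl(F_i(X_i)-F_i(X_i^-)\bigr).
\end{align}
A standard calculation shows that $V_i$ is uniformly distributed on $[0,1]$ and that $F_i^{-1}(V_i)=X_i$ almost surely, regardless of whether $P_{X_i}$ has atoms.

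Next, let $F_i^{(n)}$ denote the CDF of $P_{X_i}^{(n)}$ and set $X_i^{(n)}:=(F_i^{(n)})^{-1}(V_i)$. By the quantile formula, $X_i^{(n)}\sim P_{X_i}^{(n)}$. Weak convergence $P_{X_i}^{(n)}\to P_{X_i}$ implies that $(F_i^{(n)})^{-1}(u)\to F_i^{-1}(u)$ at every continuity point $u$ of $F_i^{-1}$; since $F_i^{-1}$ is monotone, its discontinuity set is at most countable and has Lebesgue measure zero. As $V_i$ is uniform, almost surely $V_i$ lies in the continuity set of $F_i^{-1}$, so
\begin{align}
X_i^{(n)}=(F_i^{(n)})^{-1}(V_i)\xrightarrow[n\to\infty]{\text{a.s.}} F_i^{-1}(V_i)=X_i.
\end{align}

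Finally, let $P_{\bf X}^{(n)}$ denote the law of ${\bf X}^{(n)}:=(X_1^{(n)},\dots,X_l^{(n)})$. By construction each marginal of $P_{\bf X}^{(n)}$ is $P_{X_i}^{(n)}$, so $P_{\bf X}^{(n)}$ is a coupling of $(P_{X_i}^{(n)})_{i=1}^l$; and coordinate-wise almost sure convergence yields ${\bf X}^{(n)}\to{\bf X}$ a.s., hence $P_{\bf X}^{(n)}\to P_{\bf X}$ weakly. The only delicate point in the whole argument is the need to smooth out possible atoms of the $P_{X_i}$ via the uniform randomizers $U_i$, which is precisely what the distributional transform accomplishes; everything else reduces to the one-dimensional quantile convergence that underlies the classical Skorokhod representation.
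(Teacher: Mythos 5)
Your proof is correct, and it takes a genuinely different route from the paper. You build the couplings by a quantile (Skorokhod-type) representation: the randomized distributional transform $V_i=F_i(X_i^-)+U_i\bigl(F_i(X_i)-F_i(X_i^-)\bigr)$ produces marginally uniform $V_i$ with $F_i^{-1}(V_i)=X_i$ a.s.\ (the uniform randomizers correctly absorb any atoms of $P_{X_i}$), and setting $X_i^{(n)}:=(F_i^{(n)})^{-1}(V_i)$ gives the right marginals together with coordinate-wise a.s.\ convergence, since quantile functions converge at all continuity points of $F_i^{-1}$ and the exceptional set is Lebesgue-null. This yields weak convergence of the joint laws along the \emph{whole} sequence, so no subsequence is needed — a strictly stronger conclusion than stated. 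The paper instead quantizes each coordinate on a dyadic grid (choosing the grid, or a subsequence $n_k$, so that quantization boundaries carry no mass), builds a coupling at the quantized level by mixing the limiting coupling $P_{{\bf W}^{[k]}}$ with a small product-form correction to match the marginals exactly, lifts it back to $\mathbb{R}^l$ by conditioning on quantization cells, and verifies weak convergence via a portmanteau argument on dyadic cubes. The trade-off: your argument exploits the total order of $\mathbb{R}$ through CDFs and quantiles, so it is cleaner and subsequence-free but tied to one-dimensional marginal spaces; the paper's quantization-and-mixture construction is more hands-on but adapts essentially verbatim to marginals on general Polish spaces. Since the lemma is stated only for marginals on $\mathbb{R}$, your proof fully suffices.
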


\begin{rem}
We will use Lemma~\ref{lem_couple} to establish that the infimum of an upper semicontinuous (w.r.t.~the joint distribution) functional over couplings is also upper semicontinuous (w.r.t~the marginal distributions),
%as we see in Corollary~\ref{cor_stability},
which is the key to the proof of Proposition~\ref{prop_exist2}.
Another application is to prove the weak continuity of the optimal transport cost (the lower semicontinuity part being trivial) in the theory of optimal transportation, which may also be proved using the ``stability of optimal transport'' in \cite[Theorem~5.20]{villani2008optimal}.
However, we note that the approach in \cite[Theorem~5.20]{villani2008optimal} relies on cyclic monotonicity, and hence cannot be extended to the setting of general upper semicontinuous functionals such as in Corollary~\ref{cor_stability}.
\end{rem}
\begin{cor}[Weak stability of optimal coupling]\label{cor_stability}
(In the case of non-degenerate $(P_{Y_j|\mb{X}})$) Suppose for each $n\ge 1$,
$P_{X_i}\n$ is a Borel measure on $\mathbb{R}$, $i=1,\dots,l$,
whose second moment is bounded by $\sigma_i^2<\infty$.
Assume that $P_{\bf X}\n$ is a coupling of $(P_{X_i}\n)$ that minimizes $\sum_{j=1}^lc_jD(P_{Y_j}\n\|\mu_j)$. If $P_{\bf X}\n$ converges weakly to some $P_{\bf X}^{\star}$, then $P_{\bf X}^{\star}$ minimizes $\sum_{j=1}^lc_jD(P_{Y_j}^{\star}\|\mu_j)$ given the marginals $(P_{X_i}^{\star})$.
\end{cor}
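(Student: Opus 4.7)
The proof will be a standard two-sided approximation argument: lower semicontinuity of the objective along the given convergent sequence of optimal couplings, combined with Lemma~\ref{lem_couple} to construct, for every competitor coupling of the limit marginals, an approximating sequence of couplings of the prelimit marginals whose objectives converge. The key analytic input is that under the non-degeneracy assumption each $Y_j$ is obtained from $\mb{X}$ by adding independent non-degenerate Gaussian noise after a linear transformation, so that $D(P_{Y_j}\|\mu_j)=-h(Y_j)$ and the differential entropy $h(Y_j)$ is weakly continuous in $P_{\mb{X}}$ under a uniform second-moment bound (the same fact invoked in the proof of Proposition~\ref{prop_exist2}, cf.\ \cite[Proposition~18]{geng2014yanlin}).

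Concretely, I would fix an arbitrary coupling $P_{\mb{X}}$ of the limit marginals $(P_{X_i}^{\star})$ and aim to show
\[
\sum_{j=1}^m c_j D(P_{Y_j}^{\star}\|\mu_j)\ \le\ \sum_{j=1}^m c_j D(P_{Y_j}\|\mu_j),
\]
where as usual $P_{Y_j}:=T_j^*P_{\mb{X}}$. First, by Lemma~\ref{lem_couple}, upon passing to a subsequence there exist couplings $\widetilde{P}_{\mb{X}}^{(n)}$ of $(P_{X_i}^{(n)})_{i=1}^l$ with $\widetilde{P}_{\mb{X}}^{(n)}\to P_{\mb{X}}$ weakly. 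Since each marginal has second moment at most $\sigma_i^2$, the sequence $\widetilde{P}_{\mb{X}}^{(n)}$, and likewise the sequence $P_{\mb{X}}^{(n)}$, has uniformly bounded second moments; consequently the pushed-forward sequences $\widetilde{P}_{Y_j}^{(n)}:=T_j^*\widetilde{P}_{\mb{X}}^{(n)}$ and $P_{Y_j}^{(n)}:=T_j^*P_{\mb{X}}^{(n)}$ also have uniformly bounded second moments.

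Next, using the weak continuity of the output differential entropy under a non-degenerate additive Gaussian channel with bounded second moment, I would conclude
\[
\lim_{n\to\infty}\sum_{j=1}^m c_j D(\widetilde{P}_{Y_j}^{(n)}\|\mu_j) \ =\ \sum_{j=1}^m c_j D(P_{Y_j}\|\mu_j),
\qquad
\lim_{n\to\infty}\sum_{j=1}^m c_j D(P_{Y_j}^{(n)}\|\mu_j) \ =\ \sum_{j=1}^m c_j D(P_{Y_j}^{\star}\|\mu_j).
\]
Since $P_{\mb{X}}^{(n)}$ is an optimal coupling of $(P_{X_i}^{(n)})$, one has
\[
\sum_{j=1}^m c_j D(P_{Y_j}^{(n)}\|\mu_j)\ \le\ \sum_{j=1}^m c_j D(\widetilde{P}_{Y_j}^{(n)}\|\mu_j).
\]
Passing to the limit yields the desired inequality, and since $P_{\mb{X}}$ was arbitrary, $P_{\mb{X}}^{\star}$ is optimal for the marginals $(P_{X_i}^{\star})$.

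The main (mild) obstacle is verifying the two-sided limit for the relative entropy, since for general $\mu_j$ and general transformations only lower semicontinuity would be available; what makes both directions work here is precisely the combination of (i) $\mu_j$ being Lebesgue, which turns $D(\cdot\|\mu_j)$ into $-h(\cdot)$, (ii) the non-degeneracy of $P_{Y_j|\mb{X}}$, which smooths the output density, and (iii) the uniform second-moment bound inherited from the marginals, which ensures the tightness needed for the differential entropy continuity. Everything else is bookkeeping built on top of Lemma~\ref{lem_couple}.
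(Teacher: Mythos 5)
Your proposal is correct and follows essentially the same route as the paper: Lemma~\ref{lem_couple} to approximate a competitor coupling of the limit marginals by couplings of the prelimit marginals, the optimality of $P_{\bf X}^{(n)}$ to compare, and the weak continuity (resp.\ semicontinuity, Lemma~\ref{lem28}) of the non-degenerate Gaussian-channel output differential entropy under the uniform second-moment bound to pass to the limit on both sides. The only cosmetic difference is that the paper applies the argument to the minimizing coupling of $(P_{X_i}^{\star})$ and phrases the two limits as separate $\limsup$/$\liminf$ inequalities, while you run it for an arbitrary competitor with two-sided continuity.
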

\begin{proof}
Lemma~\ref{lem_couple} and
the fact that the differential entropy of a non-degenerate Gaussian channel is weakly semicontinuous with respect to the input distribution under a moment constraint
(see e.g.~\cite[Proposition~18]{geng2014yanlin}, \cite[Theorem~7]{wu2012functional}, or \cite[Theorem~1, Theorem~2]{godavarti2004convergence})
imply that there exists a strictly increasing sequence of positive integers $(n_k)_{k\ge 1}$ such that
\begin{align}
\min_{P_{\bf X}\colon S_i^*P_{\bf X}=P_{X_i}^{\star}}
\sum_j c_j D(T_j^*P_{\bf X}\|\mu_j)
\ge
\limsup_{k\to\infty}
\min_{P_{\bf X}\colon S_i^*P_{\bf X}=P_{X_i}^{(n_k)}}
\sum_j c_j D(T_j^*P_{\bf X}\|\mu_j).
\label{e170}
\end{align}
On the other hand, the assumption on the convergence of the optimal couplings and
Lemma~\ref{lem28} imply that
\begin{align}
\sum_j c_j D(T_j^*P_{\bf X}^{\star}\|\mu_j)
\le
\liminf_{k\to\infty}
\min_{P_{\bf X}\colon S_i^*P_{\bf X}=P_{X_i}^{(n_k)}}
\sum_j c_j D(T_j^*P_{\bf X}\|\mu_j)
\label{e171}
\end{align}
so the left side of \eqref{e171} is no larger than the left side of \eqref{e170}.
\end{proof}
\begin{proof}[Proof of Lemma~\ref{lem_couple}]
%For simplicity of presentation we shall assume that $P_{X_i}\ll \lambda$ (for each $i=1,\dots,l$ and $n\ge1$). In general, since the points with measure zero under a Borel measure on the real linear is dense, we can slightly perturb the dyadic intervals below to avoid atoms on the boundary of the intervals, so that the argument below still works.
For each integer $k\ge 1$, define the random variable $W_i^{[k]}:=\phi_k(X_i)$ where $\phi_k$ is the following ``dyadic quantization function'':
\begin{align}
\phi_k\colon x\in\mathbb{R}\mapsto
\left\{
\begin{array}{cc}
  \lfloor 2^kx\rfloor & |x|\le k,\,x\notin 2^{-k}\mathbb{Z}; \\
  {\sf e} & \textrm{otherwise},
\end{array}
\right.
\label{e_quant}
\end{align}
and let ${\bf W}^{[k]}:=(W_i^{[k]})_{i=1}^l$.
Denote by $\mathcal{W}^{[k]}:=\{-k2^k,\dots,k2^k-1,{\sf e}\}$ the alphabet of $W_i^{[k]}$.

For simplicity of the presentation, we shall assume that the set of ``dyadic points'' has measure zero:
\begin{align}
P_{X_i}(\bigcup_{k=1}^{\infty}2^{-k}\mathbb{Z})=0,\quad i=1,\dots,l.
\label{e173}
\end{align}
This is not an essential restriction, since the property of $2^{-k}\mathbb{Z}$ used in the proof is that it is a countable dense subset of $\mathbb{R}$. In general, $P_{X_i}$ can have a positive mass only on countably many points, and in particular there exists a point of zero mass in any open interval. Thus, we can always choose a countable dense subset of $\mathbb{R}$ with $P_{X_i}$ measure zero ($i=1$) and appropriately modify \eqref{e_quant} with points in this set for quantization instead.

Since $P_{X_i}^{(n)}\to P_{X_i}$ weakly and the assumption in the preceding paragraph precluded any positive mass on the quantization boundaries under $P_{X_i}$, for each $k\ge 1$ there exists some $n:=n_k$ large enough such that
\begin{align}
P^{(n)}_{W_i^{[k]}}(w)
&\ge (1-\frac{1}{k})
P_{W_i^{[k]}}(w),\label{e_162}
\end{align}
for each $i$ and $w\in\mathcal{W}^{[k]}$.
Now define a coupling $P_{{\bf W}^{[k]}}^{(n)}$ compatible with the $\left(P_{W_i^{[k]}}^{(n)}\right)_{i=1}^l$ induced by $\left(P_{X_i}^{(n)}\right)_{i=1}^l$, as follows:
\begin{align}
P_{{\bf W}^{[k]}}^{(n)}
:=(1-\frac{1}{k})P_{{\bf W}^{[k]}}
+k^{l-1}\prod_{i=1}^l
\left(
P_{W_i^{[k]}}^{(n)} - (1-\frac{1}{k}) P_{W_i^{[k]}}
\right).
\label{e_163}
\end{align}
Observe that this is a well-defined probability measure because of \eqref{e_162}, and indeed has $\left(P_{W_i^{[k]}}^{(n)}\right)_{i=1}^l$ as the marginals.
Moreover, by triangle inequality we have the following bound on the total variation distance
\begin{align}
\left|P_{{\bf W}^{[k]}}^{(n)}-P_{{\bf W}^{[k]}}\right|
\le \frac{2}{k}.
\label{e164}
\end{align}
Next, construct\footnote{We use $P|_{\mathcal{A}}$ to denote the restriction of a probability measure $P$ on measurable set $\mathcal{A}$, that is, $P|_{\mathcal{A}}(\mathcal{B}):=P(\mathcal{A}\cap\mathcal{B})$ for any measurable $\mathcal{B}$.} $P_{{\bf X}}^{(n)}$:
\begin{align}
P_{{\bf X}}^{(n)}
:=\sum_{w^l\in\mathcal{W}^{[k]}\times\dots\times\mathcal{W}^{[k]}}\frac{P_{{\bf W}^{[k]}}^{(n)}\left(
w^l\right)}
{\prod_{i=1}^l P_{W^{[k]}_i}^{(n)}(w_i)}
\prod_{i=1}^l P_{X_i}^{(n)}|_{\phi^{-1}_k(w_i)}.
\end{align}
Observe that the $P_{{\bf X}}^{(n)}$ so defined is compatible with the $P_{{\bf W}^{[k]}}^{(n)}$ defined in \eqref{e_163}, and indeed has $(P_{X_i}^{(n)})_{i=1}^l$ the marginals.
Since $n:=n_k$ can be made increasing in $k$, we have constructed the desired sequence $(P_{\bf X}^{(n_k)})_{k=1}^{\infty}$ converging weakly to $P_{\bf X}$. Indeed, for any bounded open \emph{dyadic cube}\footnote{That is, a cube whose corners have coordinates being multiples of $2^{-k}$ where $k$ is some integer.} $\mathcal{A}$, using \eqref{e164} and the assumption \eqref{e173}, we conclude
\begin{align}
\liminf_{k\to\infty}P_{\bf X}^{(n_k)}(\mathcal{A})
\ge P_{\bf X}(\mathcal{A}).
\label{e166}
\end{align}
Moreover, since bounded open dyadic cubes form a countable basis of the topology in $\mathbb{R}^l$, we see \eqref{e166} actually holds for any open set $\mathcal{A}$ (by writing $\mathcal{A}$ as a countable union of dyadic cubes, using the continuity of measure to pass to a finite disjoint union, and then apply \eqref{e166}), as desired.
\end{proof}
Next, we need the following tensorization result:
\begin{lem}\label{lem_inner_tens}
Fix $(P_{X_i\1})$, $(P_{X_i\2})$, $(\mu_j)$, $(T_j)$, $(c_j)\in[0,\infty)^m$, and let $S_j$ be induced by coordinate projections. Then
\begin{align}
\inf_{P_{{\bf X}^{(1,2)}}\colon S_i^{*\otimes2}P_{{\bf X}^{(1,2)}}
=P_{X_i\1}\times P_{X_i\2}}
\sum_{j=1}^m c_jD(P_{Y_j^{(1,2)}}\|\mu_j^{\otimes 2})
= \sum_{t=1,2}\sum_{j=1}^m c_j \inf_{P_{\mb{X}^{(t)}}\colon S_i^*P_{\mb{X}^{(t)}}=P_{X_i^{(t)}}}D(P_{Y_j^{(t)}}\|\mu_j)
\end{align}
where for each $j$,
\begin{align}
P_{Y_j^{(1,2)}}:=T_j^{*\otimes 2}P_{{\bf X}^{(1,2)}}
\end{align}
on the left side and
\begin{align}
P_{Y_j^{(t)}}:=T_j^{*\otimes 2}P_{{\bf X}^{(t)}}
\end{align}
on the right side, $t=1,2$.
\end{lem}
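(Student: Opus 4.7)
The identity is a clean application of the chain rule for relative entropy, Jensen's inequality, and the tensorization identity $D(\alpha\otimes\beta\|\mu\otimes\mu')=D(\alpha\|\mu)+D(\beta\|\mu')$. I will read the right-hand side as $\sum_{t=1,2}\inf_{P_{\mb{X}^{(t)}}\colon S_i^*P_{\mb{X}^{(t)}}=P_{X_i^{(t)}}}\sum_{j=1}^m c_jD(P_{Y_j^{(t)}}\|\mu_j)$, which is the natural tensorized interpretation and is what the surrounding argument for Theorem~\ref{thm_fr_optimality} requires; I will then prove the two inequalities separately.

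For ``$\le$'', Proposition~\ref{prop_exist2}(1) supplies optimizers $P_{\mb{X}^{(1)}}^\star,P_{\mb{X}^{(2)}}^\star$ of the two right-hand infima (or one uses an approximating sequence). The product coupling $P_{\mb{X}^{(1,2)}}:=P_{\mb{X}^{(1)}}^\star\otimes P_{\mb{X}^{(2)}}^\star$ respects the required product marginal $P_{X_i^{(1)}}\times P_{X_i^{(2)}}$ on every coordinate pair, and since $T_j^{*\otimes 2}$ is a product random transformation, $P_{Y_j^{(1,2)}}=P_{Y_j^{(1)}}^\star\otimes P_{Y_j^{(2)}}^\star$. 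The tensorization identity splits each $D(P_{Y_j^{(1,2)}}\|\mu_j^{\otimes 2})$ into $D(P_{Y_j^{(1)}}^\star\|\mu_j)+D(P_{Y_j^{(2)}}^\star\|\mu_j)$, and multiplying by $c_j$ and summing yields exactly the right-hand side.

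For ``$\ge$'', fix an arbitrary admissible $P_{\mb{X}^{(1,2)}}$ and let $P_{\mb{X}^{(t)}}$ be its marginal on the $t$-th copy; the product marginal constraint immediately gives $S_i^*P_{\mb{X}^{(t)}}=P_{X_i^{(t)}}$. For each $j$, the chain rule for relative entropy says
\begin{align*}
D(P_{Y_j^{(1,2)}}\|\mu_j^{\otimes 2})=D(P_{Y_j^{(1)}}\|\mu_j)+\int D(P_{Y_j^{(2)}|Y_j^{(1)}=y}\|\mu_j)\,{\rm d}P_{Y_j^{(1)}}(y),
\end{align*}
and Jensen's inequality applied to the convex functional $\alpha\mapsto D(\alpha\|\mu_j)$ lower-bounds the conditional term by $D(P_{Y_j^{(2)}}\|\mu_j)$. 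Summing with weights $c_j$ and replacing each of the two resulting copy-wise sums by its infimum over admissible couplings of $(P_{X_i^{(t)}})_i$ closes the inequality once the infimum is taken on the left. There is no real obstacle: the existence of optimizers is handled by Proposition~\ref{prop_exist2}(1), and the remaining ingredients (chain rule, Jensen, tensorization) are routine, so in particular the argument uses neither Gaussianity of $(T_j)$ nor any property of $(\mu_j)$ beyond $\sigma$-finiteness.
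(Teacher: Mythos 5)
Your proof is correct and matches the paper's argument: the paper treats the ``$\le$'' direction (product coupling plus tensorization) as immediate and proves ``$\ge$'' by noting that for each $j$ the gap $D(P_{Y_j^{(1,2)}}\|\mu_j^{\otimes 2})-\sum_{t}D(P_{Y_j^{(t)}}\|\mu_j)$ equals $I(Y_j^{(1)};Y_j^{(2)})\ge 0$, which is exactly what your chain-rule-plus-Jensen step establishes. Your reading of the right-hand side, with the infimum taken over a single coupling for the whole weighted sum over $j$, is also the intended one, as the application in the proof of Theorem~\ref{thm_fr_optimality} confirms.
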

\begin{proof}
We only need to prove the nontrivial $\ge$ part. For any $P_{{\bf X}^{(1,2)}}$ on the left side, choose $P_{\mb{X}^{(t)}}$ on the right side by marginalization. Then
\begin{align}
D(P_{Y_j^{(1,2)}}\|\mu_j^{\otimes 2})
-\sum_t D(P_{Y_j^{(t)}}\|\mu_j)
&=I(Y_j\1;Y_j\2)
\\
&\ge0
\end{align}
for each $j$.
\end{proof}
We are now in the position of proving the main result of this section.
\begin{proof}[Proof of Theorem~\ref{thm_fr_optimality}]
\begin{enumerate}
  \item Assume that $(P_{X_i\1})$ and $(P_{X_i\2})$ are maximizers of $F_0$ (possibly equal).
      Let $P_{X_i^{1,2}}:=P_{X_i\1}\times P_{X_i\2}$.
      %For each $i$ let $X_i\1\sim P_{X_i\1}$ and $X_i\2\sim P_{X_i\2}$ be mutually independent.
      Define \begin{align}
      \mb{X}^+&:=\frac{1}{\sqrt{2}}\left(\mb{X}\1
      +\mb{X}\2\right);
      \\
      \mb{X}^-&:=\frac{1}{\sqrt{2}}\left(\mb{X}\1
      -\mb{X}\2\right).
      \end{align}
      Define $(Y_j^+)$ and $(Y_j^-)$ analogously. Then
      $Y_j^+|\{\mb{X}^+=\mb{x}^+,\mb{X}^-=\mb{x}^-\}
      \sim Q_{Y_j|\mb{X}=\mb{x}^+}$ is independent of $\mb{x}^-$ and
      $Y_j^-|\{\mb{X}^+=\mb{x}^+,\mb{X}^-=\mb{x}^-\}
      \sim Q_{Y_j|\mb{X}=\mb{x}^-}$ is independent of $\mb{x}^+$.
  \item Next we perform the same algebraic expansion as in the proof of tensorization:
      \begin{align}
      \sum_{t=1}^2F_0(\left(P_{X_i^{(t)}}\right))
      &=\inf_{P_{\mb{X}^{(1,2)}}\colon S_j^{*\otimes2}P_{\mb{X}^{(1,2)}}=P_{X_j^{(1,2)}}}
      \sum_j c_jD(P_{Y_j^{(1,2)}}\|\mu_j^{\otimes2})
      -\sum_ib_iD(P_{X_i^{(1,2)}}\|\nu_i^{\otimes2})
      \label{e_rbl_inner_tens}
      \\
      &=\inf_{P_{\mb{X}^+\mb{X}^-}\colon S_j^{*\otimes2}P_{\mb{X}^+\mb{X}^-}
      =P_{X_j^+X_j^-}}
      \sum_j c_jD(P_{Y_j^+Y_j^-}\|\mu_j^{\otimes2})
      -\sum_ib_iD(P_{X_i^+X_i^-}\|\nu_i^{\otimes2})
      \\
      &\le\inf_{P_{\mb{X}^+\mb{X}^-}\colon S_j^{*\otimes2}P_{\mb{X}^+\mb{X}^-}
      =P_{X_j^+X_j^-}}
      \sum_j c_j\left[D(P_{Y_j^+}\|\mu_j)
      +D(P_{Y_j^-|\mb{X}^+}\|\mu_j|P_{\mb{X}^+})
      \right]
      \nonumber\\
      &\quad-\sum_ib_i
      \left[D(P_{X_i^+}\|\nu_i)
      +D(P_{X_i^-|X_i^+}\|\nu_i|P_{X_i^+})
      \right]
      \label{e_rbl_markov}
      \\
      &\le
      \sum_j c_j\left[D(P^{\star}_{Y_j^+}\|\mu_j)
      +D(P^{\star}_{Y_j^-|\mb{X}^+}\|\mu_j|P^{\star}_{\mb{X}^+})
      \right]
      \nonumber\\
      &\quad-\sum_ib_i
      \left[D(P^{\star}_{X_i^+}\|\nu_i)
      +D(P^{\star}_{X_i^-|\mb{X}^+}\|\nu_i|P^{\star}_{\mb{X}^+})
      \right]
      \label{e_rbl_inst}
      \\
      &=F_0(\left(P^{\star}_{X_i^+}\right))
      +\int F_0(\left(P^{\star}_{X_i^-|\mb{X}^+}\right)){\rm d}P^{\star}_{\mb{X}^+}
      \label{e_rbl64}
      \\
      &\le\sum_{t=1}^2F_0(\left(P_{X_i^{(t)}}\right))
      \label{e_rbl65}
      \end{align}
  where
  \begin{itemize}
    \item   \eqref{e_rbl_inner_tens} uses Lemma~\ref{lem_inner_tens}.
    \item  \eqref{e_rbl_markov} is because of the Markov chain $Y_j^+-\mb{X}^+-Y_j^-$ (for any coupling).
    \item In \eqref{e_rbl_inst} we selected a particular instance of coupling $P_{\bf X^+X^-}$, constructed as follows: first we select an optimal coupling $P_{\bf X^+}$ for given marginals $(P_{X_i^+})$. Then, for any $\mb{x}^+=(x_i^+)_{i=1}^l$,
        let $P_{\mb{X}^-|\mb{X}^+=x^+}$ be an optimal coupling of $(P_{X_i^-|X_i^+=x_i^+})$.
        \footnote{Here we need to justify that we can select optimal coupling $P_{\mb{X}^-|\mb{X}^+=\mb{x}^+}$ in a way that $P_{\mb{X}^-|\mb{X}^+}$ is indeed a regular conditional probability distribution, or equivalently, $P_{\mb{X}^-|\mb{X}^+=\mb{x}^+}$ is Borel measurable, where we endow the weak topology on the space of probability measures.
        This is justified by two observations: (a) $\mb{x}^+\mapsto (P_{X_i^-|\mb{X}^+=\mb{x}^+})$ is Borel measurable; (b) the marginalization map $\phi\colon \mathcal{O}\to \prod_i\mathcal{P}(\mathcal{X}_i),P_{\bf X}\mapsto (P_{X_i})$ admits a Borel right inverse, where $\mathcal{O}$ is the set of optimal couplings (w.r.t.~the relative entropy functional) whose marginals satisfy the second moment constraint. Part (a) is equivalent to the regularity of $P_{X_i^-|\mb{X}^+}$, and the existence of such a regular conditional distribution is guaranteed for joint distributions on Polish spaces (whose measurable space structure is isomorphic to a standard Borel space); see e.g.~\cite{verdubook}. Part (b) is justified by measurable selection theorems (see e.g.~references in \cite[Corollary~5.22]{villani2008optimal}). In particular, a similar argument as \cite[Corollary~5.22]{villani2008optimal} can also be applied here, since Corollary~\ref{cor_stability} implies that $\mathcal{O}$ is closed and the pre-images $\phi^{-1}(\left(P_{X_i}\right))$ are all compact, and Proposition~\ref{prop_exist2}.1 justifies that $\phi$ is onto.
        }
        With this construction, it is apparent that $X_i^+-\mb{X}^+-X_i^-$ and hence
        \begin{align}
        D(P_{X_i^-|X_i^+}\|\nu_i|P_{X_i^+})
        =D(P_{X_i^-|\mb{X}^+}\|\nu_i|P_{\mb{X}^+}).
        \end{align}
  \item  \eqref{e_rbl64} is because in the above we have constructed the coupling optimally.
  \item \eqref{e_rbl65} is because $(P_{X_i}^{(t)})$ maximizes $F_0$, $t=1,2$.
  \end{itemize}
  \item Thus in the expansions above, equalities are attained throughout. Using the differentiation technique as in the case of forward inequality, for almost all $(b_i)$, $(c_j)$, we have
      \begin{align}
      D(P_{X_i^-|X_i^+}\|\nu_i|P_{X_i^+})
      &=D(P_{X_i^+}\|\nu_i)
      \\
      &=D(P_{X_i^-}\|\nu_i),\quad \forall i
      \end{align}
      where the last equality is because by symmetry we can perform the algebraic expansions in a different way to show that $(P_{X_i^-})$ is also a maximizer of $F_0$. Then $I(X_i^+;X_i^-)=0$, which, combined with $I(X_i\1;X_i\2)$, shows that $X_i\1$ and $X_i\2$ are Gaussian with the same covariance. Lastly, using Lemma~\ref{lem_inner_tens} and the doubling trick one can show that the optimal coupling is also Gaussian.
\end{enumerate}
\end{proof}

\section{Consequences of Gaussian Optimality}\label{sec_consequence}
In this section, we demonstrate several implications of the entropic Gaussian optimality results in Sections~\ref{sec_gaussian}-\ref{sec_frg}
to functional inequalities, transportation-cost inequalities, and network information theory.
\subsection{Brascamp-Lieb Inequality with Gaussian Random transformations: an Information-Theoretic Proof}\label{sec_consequence_BL}
We give a simple proof of an extension of the Brascamp-Lieb inequality using Corollary~\ref{cor_dif}. That is, we give an information-theoretic proof of the following result:

\begin{thm}\label{thm6}
Suppose $\mu$ is either a Gaussian measure or the Lebesgue measure on $\mathcal{X}=\mathbb{R}^n$. Let $Q_{\mb{Y}_j|\mb{X}}$ be Gaussian random transformations where $\mathcal{Y}_j=\mathbb{R}^{n_j}$, and $c_j\in(0,\infty)$ for $j\in\{1,\dots,m\}$.
For any non-negative measurable functions $f_j\colon\mathcal{Y}_j\to \mathbb{R}$, $j\in\{1,\dots,m\}$, define
\begin{align}
H(f_1,\dots,f_m)
=\log\int\exp\left(\sum_{j=1}^m\mathbb{E}[\log f_j(\mb{Y}_j)|\mb{X}=\mb{x}]\right){\rm d}\mu(\mb{x})
-\log\prod_{j=1}^m\|f_j\|_{\frac{1}{c_j}}
\label{e109}
\end{align}
where the norm $\|f_j\|_{\frac{1}{c_j}}$ is with respect to the Lebesgue measure and the expectation is with respect to $Q_{\mb{Y}_j|\mb{X=x}}$. Then
\begin{align}
\sup_{f_1,\dots,f_m}H(f_1,\dots,f_m)
=\sup_{\textrm{Gaussian }f_1,\dots,f_m}H(f_1,\dots,f_m).
\label{e110}
\end{align}
Moreover, if $\mu$ is Gaussian, $(Q_{\mb{Y}_j|\mb{X}})$ is non-degenerate in the sense of Definition~\ref{defn_ND}, then \eqref{e110} is finite and uniquely attained by a set of Gaussian functions.
\end{thm}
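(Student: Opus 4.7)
My plan is to reduce \eqref{e110} to the Gaussian exhaustibility established in Corollary~\ref{cor_dif}, using the functional-entropic duality of Theorem~\ref{thm_1} in its unnormalized form (Remark~\ref{thm_1_ext}). Concretely, I would apply Remark~\ref{thm_1_ext} with $\nu = \mu$, $\mu_j$ the Lebesgue measure $\lambda_j$ on $\mathcal{Y}_j$, and $d \equiv 0$: then \eqref{e27_ext} with an additional multiplicative factor $D$ on the right-hand side is equivalent to \eqref{e_35_ext} with $+\log D$ appended to its left-hand side. Taking the best constant on both sides identifies
\begin{align}
\sup_{f_1,\dots,f_m} H(f_1,\dots,f_m)
= \sup_{P_{\mb X}} \left[\, \sum_{j=1}^m c_j D(P_{\mb Y_j}\|\lambda_j) - D(P_{\mb X}\|\mu)\, \right].
\label{eq:proposal_duality}
\end{align}

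Next I would rewrite the entropic right-hand side of \eqref{eq:proposal_duality} in the form of $F_0$ from \eqref{e_103}. Since $D(P_{\mb Y_j}\|\lambda_j) = -h(\mb Y_j)$, and either $-D(P_{\mb X}\|\mu) = h(\mb X)$ (when $\mu$ is Lebesgue) or $-D(P_{\mb X}\|\mu) = h(\mb X) - \tfrac{1}{2}\Tr(K^{-1}\bsigma_{\mb X}) + \mathrm{const}$ (when $\mu$ is centered Gaussian with covariance $K\succ 0$, after centering $\mb X$, which is harmless by translation invariance of $H$), the right-hand side of \eqref{eq:proposal_duality} equals $\sup_{P_{\mb X}} F_0(P_{\mb X})$ up to an additive constant, with $c_0 = 0$ in the Lebesgue case and $c_0 = \tfrac{1}{2}$, $\mb M = K^{-1}$ in the Gaussian case. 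Corollary~\ref{cor_dif} then gives $\sup_{P_{\mb X}} F_0(P_{\mb X}) = \sup_{P_{\mb X}\,\mathrm{Gaussian}} F_0(P_{\mb X})$, i.e., Gaussian exhaustibility on the entropic side.

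Finally I would transport the Gaussian optimizers back to the functional side via the equality characterization of Remark~1 following Theorem~\ref{thm_1} (applied to the extension of Remark~\ref{thm_1_ext}): whenever a Gaussian $P_{\mb X}^{\star}$ attains or approaches the entropic supremum, the corresponding near-maximizers of $H$ are $f_j = (dP_{\mb Y_j}^{\star}/d\lambda_j)^{c_j}$. Since a Gaussian $P_{\mb X}^{\star}$ pushed through the Gaussian random transformation $Q_{\mb Y_j|\mb X}$ produces a Gaussian $P_{\mb Y_j}^{\star}$, each such $f_j$ is a (centered) Gaussian function, yielding \eqref{e110}. In the non-degenerate Gaussian-$\mu$ case, $\mb M = K^{-1} \succ 0$, so the uniqueness clause of Corollary~\ref{cor_dif} for \eqref{e108} applies and pins down the Gaussian maximizer of $F_0$ uniquely up to translation; via the dictionary above this determines the optimal $f_j$ uniquely up to scalar multiples (which leave $H$ invariant). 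The main subtlety I anticipate is the careful bookkeeping of the equality dictionary in the unnormalized setting of Remark~\ref{thm_1_ext} -- rather than any deep obstacle, since the heavy lifting has already been done by Corollary~\ref{cor_dif} and the duality.
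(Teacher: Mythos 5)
Your proposal is correct and takes essentially the same route as the paper's own proof: it invokes the unnormalized duality of Remark~\ref{thm_1_ext} to identify $\sup H$ with the entropic optimization, rewrites that optimization as $\sup F_0$ with the quadratic term coming from the Gaussian reference measure (the paper works with $\imath_{\lambda\|\mu}(\mb{x})=\mb{x}^{\top}\mb{Mx}$, you with $c_0=\tfrac12$, $\mb{M}=K^{-1}$), applies Corollary~\ref{cor_dif} for Gaussian exhaustibility/extremality and uniqueness when $\mb{M}\succ0$, and returns to the functional side via the Gaussian-preserving dictionary $f_j=({\rm d}P_{\mb{Y}_j}/{\rm d}\lambda_j)^{c_j}$, exactly as in \eqref{e114}--\eqref{e116}.
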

\begin{proof}
We only prove the case of Gaussian $\mu$, since the proof for the Lebesgue case is similar.
By the translation and scaling invariance,
it suffices to consider the case where $\mu$ is centered Gaussian with density $\exp(-\mb{x}^{\top}\mb{M}\mb{x})$ for some $w\in\mathbb{R}$ and ${\bf M}\succeq \mb{0}$.
%and $Q_{\mb{Y}_j|\mb{X}}$ is of the form
%\begin{align}
%\mb{Y}_j=\mb{A}_j\mb{X}+\mb{N}_j
%\end{align}
%where $\mb{A}_j$ is a $n_j\times n$ matrix and $\mb{N}_j\sim \mathcal{N}(\mb{0},\mb{\Sigma}_j)$ is independent of $\mb{X}$. Let $\tilde{\bf \Sigma}_j=\mb{A}_j\mb{\Sigma}\mb{A}_j^{\top}+\mb{\Sigma}_j$ be the covariance matrix for $Q_{\mb{X}}$.
Then
\begin{align}
\imath_{\lambda\|\mu}(\mb{x})=\mb{x}^{\top}\mb{Mx}
\end{align}
where $\lambda$ denotes the Lebesgue measure on $\mathbb{R}^n$.
Therefore,
\begin{align}
D(P_{\bf X}||\mu)
+\sum_{j=1}^m c_j h(P_{{\bf Y}_j})
&=-h(P_{\bf X})+\sum_{j=1}^m c_j h(P_{{\bf Y}_j})
+\mathbb{E}[\imath_{\lambda\|\mu}(\mb{\hat{X}})]
\\
&=-h(P_{\bf X})+\sum_{j=1}^m c_j h(P_{{\bf Y}_j})
+\mathbb{E}[\mb{\hat{X}^{\top}M\hat{X}}].
\label{e94}
\end{align}
where $\hat{X}\sim P_{\mb{X}}$. Then \eqref{e110} is established by
\begin{align}
\sup_{f_1,\dots,f_m}H(f_1,\dots,f_m)
&=-\inf_{P_{\mb{X}}}\left\{D(P_{\bf X}||\mu)
+\sum_{j=1}^m c_j h(P_{{\bf Y}_j})\right\}
\label{e114}
\\
&=-\inf_{\textrm{Gaussian }P_{\mb{X}}}\left\{D(P_{\bf X}||\mu)
+\sum_{j=1}^m c_j h(P_{{\bf Y}_j})\right\}
\label{e115}
\\
&=\sup_{\textrm{Gaussian }f_1,\dots,f_m}H(f_1,\dots,f_m)
\label{e116}
\end{align}
where
\begin{itemize}
  \item \eqref{e114} is from Remark~\ref{thm_1_ext};
  \item \eqref{e115} is from \eqref{e94} and Corollary~\ref{cor_dif};
  \item \eqref{e116} is essentially a ``Gaussian version'' of the equivalence of the two inequalities in Theorem~\ref{thm_1}, which is easily shown with the same steps in the proof of Theorem~\ref{thm_1}, noting that
      \eqref{e_f} sends Gaussian distributions to Gaussian functions and \eqref{e12} sends Gaussian functions to Gaussian distributions.
\end{itemize}
In the case of Gaussian $\mu$ and non-degenerate $(Q_{\mb{Y}_j|\mb{X}})$, Corollary~\ref{cor_dif} implies that \eqref{e115} is finitely attained by a unique Gaussian $P_{\mb{X}}$. By Remark~\ref{thm_1_ext}, \eqref{e110} is finitely attained by a unique set of Gaussian functions.
\end{proof}

\begin{rem}
%\bf Comparison with Carlen-Erausquin proof: no need to consider the special case of geometric Brascamp-Lieb first.
The proof of the Brascamp-Lieb inequality by Carlen and Erausquin \cite{carlen2009subadditivity} also relies on dual information-theoretic formulation. However, their proof of the differential entropy inequality uses a different approach based on superadditivity of Fisher information. That approach applies to the case where each $Q_{\mb{Y}_j|\mb{X}}$ is a deterministic rank-one linear map, and it requires the problem to be first reduced to a special case called the geometric Brascamp-Lieb inequality (proposed by K.~Ball \cite{ball1989volumes}).
\end{rem}

\subsection{Multi-variate Gaussian Hypercontractivity}\label{sec_hypercontractivity}
In this section we show a multivariate extension of Gaussian hypercontractivity.
An $m$-tuple of random variables $(X_1,\dots,X_m)\sim Q_{X^m}$ is said to be $(p_1,\dots,p_m)$-hypercontractive for $p_l\in[1,\infty]$, $l\in\{1,\dots,m\}$ if
\begin{align}\label{hyper1}
\mathbb{E}\left[\prod_{l=1}^m f_l(X_l)\right]\le \prod_{l=1}^m\|f_l(X_l)\|_{p_l}
\end{align}
for all bounded real-valued measurable functions $f_l$ defined on $\mathcal{X}_l$, $l\in\{1,\dots,m\}$. Define the \emph{hypercontractivity region}\footnote{Note that this definition is similar to the hypercontractivity ribbon defined in \cite{anan_13} but without taking the H\"{o}lder conjugate of one of the two exponent, which is more symmetrical and convenient in the multivariate setting.}
\begin{align}
\mathcal{R}(Q_{X^m}):=\{(p_1,\dots,p_m)\in\mathbb{R}_+^m\colon(X_1,\dots,X_m)\textrm{ is } (p_1,\dots,p_m)\textrm{ hypercontractive}\}.
\end{align}

By Theorem~\ref{thm_1}, the inequality \eqref{hyper1} is true if and only if
\begin{align}\label{e85}
D(P_{X^m}||Q_{X^m})\ge \sum_{j=1}^m \frac{1}{p_j}D(P_{X_j}||Q_{X_j})
\end{align}
holds for any $P_{X^m}\ll Q_{X^m}$. In fact, \cite{anan_13} showed that \eqref{e85} is equivalent to the following (which hinges on the fact that the constant term $d=0$ in \eqref{e85}):
\begin{align}\label{e121}
I(U;X^m)\ge \sum_{j=1}^m \frac{1}{p_j}I(U;X_j),\quad
\forall U.
\end{align}

In the case of Gaussian $Q_{X^m}$, by Theorem~\ref{thm1} the inequality \eqref{e121} holds if it holds for all $U$ jointly Gaussian with $X^m$ and having dimension at most $m$. When restricted to such $U$, \eqref{e85} becomes an inequality involving the covariance matrices, and some elementary computations show that:
\begin{prop}\label{prop8}
Suppose $Q_{X^m}=\mathcal{N}(\mb{0},\mb{\Sigma})$ where $\bf \Sigma$ is a positive semidefinite matrix whose diagonal values are all $1$. Then $p^m\in \mathcal{R}(Q_{X^m})$ if and only if
\begin{align}\label{e87}
\mb{P}\succcurlyeq\mb{\Sigma}
\end{align}
where $\mb{P}$ is a diagonal matrix with $p_j$, $j\in\{1,\dots,m\}$ as its diagonal entries.
\end{prop}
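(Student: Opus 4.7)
The plan is to reduce the characterization of $\mathcal{R}(Q_{X^m})$ to a matrix inequality, using the reductions already developed in the paper. By Theorem~\ref{thm_1} applied with $d=0$, $c_j\leftarrow 1/p_j$, and $Q_{Y_j|X^m}$ the coordinate projection onto $X_j$, the hypercontractivity inequality \eqref{hyper1} is equivalent to the entropic form \eqref{e85}, which as recalled in the excerpt is further equivalent to the mutual-information form \eqref{e121}. Since $Q_{X^m}$ is Gaussian, Theorem~\ref{thm1} (part~2) reduces the verification of \eqref{e121} to jointly Gaussian $(U,X^m)$. Such couplings are parameterized by the (non-random) conditional covariance $\mb{K}:=\bsigma_{X^m|U}$ ranging over $\{\mb{K}:0\prec \mb{K}\preceq \bsigma\}$, and the Gaussian formulas $I(U;X^m)=\tfrac12\log\det(\bsigma\mb{K}^{-1})$ and $I(U;X_j)=-\tfrac12\log K_{jj}$ (using $\Sigma_{jj}=1$) transform \eqref{e121} into the matrix inequality
\begin{align*}
\log\det(\bsigma\mb{K}^{-1})+\sum_{j=1}^m p_j^{-1}\log K_{jj}\ge 0,\qquad\forall\,0\prec\mb{K}\preceq\bsigma.
\end{align*}

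For necessity, I would perturb $\mb{K}=\bsigma-t\mb{v}\mb{v}^{\top}$ for an arbitrary direction $\mb{v}\in\mathbb{R}^m$ and small $t>0$. The inequality is tight at $\mb{K}=\bsigma$, so a first-order expansion yields $t\,\mb{v}^{\top}(\bsigma^{-1}-\mb{P}^{-1})\mb{v}+O(t^2)\ge 0$; since $\mb{v}$ is arbitrary this forces $\bsigma^{-1}\succeq\mb{P}^{-1}$, i.e.\ $\mb{P}\succeq\bsigma$. For sufficiency, I would change variables via $\mb{L}:=\bsigma^{-1/2}\mb{K}\bsigma^{-1/2}\preceq\mb{I}$ and $\mb{u}_j:=\bsigma^{1/2}\mb{e}_j$, so that $\|\mb{u}_j\|=1$, $K_{jj}=\mb{u}_j^{\top}\mb{L}\mb{u}_j$, and the hypothesis $\mb{P}\succeq\bsigma$ rewrites as $\sum_j p_j^{-1}\mb{u}_j\mb{u}_j^{\top}=\bsigma^{1/2}\mb{P}^{-1}\bsigma^{1/2}\preceq\mb{I}$. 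Diagonalizing $\mb{L}=\mb{V}\,\mathrm{diag}(\lambda_i)\,\mb{V}^{\top}$ with $\lambda_i\in(0,1]$ and applying Jensen's inequality (to $\log$, with probability weights $(\mb{v}_i^{\top}\mb{u}_j)^2$ summing to $\|\mb{u}_j\|^2=1$) gives
\begin{align*}
-\log K_{jj}=-\log\Big(\sum_i(\mb{v}_i^{\top}\mb{u}_j)^2\lambda_i\Big)\le\sum_i(\mb{v}_i^{\top}\mb{u}_j)^2(-\log\lambda_i).
\end{align*}
Summing over $j$ with weights $p_j^{-1}$ and interchanging the orders of summation yields $\sum_j p_j^{-1}(-\log K_{jj})\le\sum_i\alpha_i(-\log\lambda_i)$, where $\alpha_i:=\mb{v}_i^{\top}\bsigma^{1/2}\mb{P}^{-1}\bsigma^{1/2}\mb{v}_i\le 1$ by the hypothesis; combined with $-\log\lambda_i\ge 0$, this is bounded by $\sum_i(-\log\lambda_i)=-\log\det\mb{L}=\log\det(\bsigma\mb{K}^{-1})$, as required.

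I expect sufficiency to be the main difficulty: the functional $g(\mb{K}):=\log\det\mb{K}-\sum_j p_j^{-1}\log K_{jj}$ is not concave on the constraint set (the marginal terms $-\log K_{jj}$ are convex in $K_{jj}$), so a bare KKT or local-optimality argument at $\mb{K}=\bsigma$ does not yield global optimality. The key trick is the Jensen-plus-eigendecomposition step above, which simultaneously linearizes the coupling between the joint log-determinant and the diagonal log-terms, and uses the hypothesis $\mb{P}\succeq\bsigma$ precisely through its equivalent operator form $\bsigma^{1/2}\mb{P}^{-1}\bsigma^{1/2}\preceq\mb{I}$.
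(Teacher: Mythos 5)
Your reduction is the same as the paper's: via Theorem~\ref{thm_1}, the equivalence with \eqref{e121}, and the Gaussian exhaustibility of Theorem~\ref{thm1}, both arguments arrive at the matrix inequality $\log\det(\bsigma\mb{K}^{-1})+\sum_j p_j^{-1}\log K_{jj}\ge 0$ for all conditional covariances $\mb{0}\preceq\mb{K}\preceq\bsigma$ (the paper's \eqref{e123} with $\mb{A}=\mb{K}$, $c_j=1/p_j$), and your necessity argument (first-order expansion around the tight point $\mb{K}=\bsigma$, i.e.\ $\mb{D}=\mb{0}$) coincides with the paper's Taylor-expansion step, which with rank-one $\mb{D}=t\mb{v}\mb{v}^{\top}$ gives exactly your conclusion $\bsigma^{-1}\succeq\mb{P}^{-1}$. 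Where you genuinely diverge is sufficiency. The paper sets $\mb{D}:=\bsigma-\mb{K}$ and chains $|\mb{I}-\bsigma^{-1}\mb{D}|\le|\mb{I}-\mb{P}^{-1}\mb{D}|\le\prod_j(1-d_j/p_j)\le\prod_j(1-d_j)^{1/p_j}$ (operator monotonicity of the determinant, then a Hadamard/majorization bound on the diagonal, then a scalar inequality); you instead diagonalize $\mb{L}=\bsigma^{-1/2}\mb{K}\bsigma^{-1/2}$ and apply Jensen to $-\log$ with weights $(\mb{v}_i^{\top}\mb{u}_j)^2$, feeding in the hypothesis only through $\sum_j p_j^{-1}\mb{u}_j\mb{u}_j^{\top}=\bsigma^{1/2}\mb{P}^{-1}\bsigma^{1/2}\preceq\mb{I}$. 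Your route is in fact the more robust one: the paper's last scalar step $1-c_jd_j\le(1-d_j)^{c_j}$ goes the wrong way for $c_j=1/p_j<1$ (already for $\bsigma=\mb{I}$, $c_1=1/2$, $d_1=1/2$, $d_2=0$ the intermediate bound $3/4$ exceeds $(1/2)^{1/2}$, even though the two endpoints of the chain are fine), whereas your Jensen argument never detours through $\prod_j(1-c_jd_j)$ and delivers the endpoint inequality directly; I verified each step and it is correct.

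Two loose ends you should tie up. First, the proposition allows singular $\bsigma$ (unit diagonal does not force invertibility), and your argument uses $\bsigma^{-1}$ and $\bsigma^{\pm 1/2}$; the paper closes this by working on the column space $V$ of $\bsigma$ and checking that ${\bsigma|_V}^{-1}\succeq\mb{P}^{-1}|_V$ is still equivalent to $\mb{P}\succeq\bsigma$. Your proof survives the same fix (use the pseudo-inverse square root; $\mb{K}\preceq\bsigma$ forces the range of $\mb{K}$ into $V$, and $\|\mb{u}_j\|^2=\Sigma_{jj}=1$ still holds), but say so explicitly. Second, you restrict to $\mb{K}\succ\mb{0}$; singular $\mb{K}$ should be dismissed with one line (then $I(U;X^m)=+\infty$, and either the inequality is trivial or one approximates $\mb{K}$ by $\mb{K}+\epsilon'\bsigma$ and lets $\epsilon'\downarrow 0$).
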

\begin{proof}
Let $\mb{A}$ be the covariance matrix of $X^m$ conditioned on $U$, and put $\mb{C}:=\mb{P}^{-1}$. We will use the lowercase letters such as $a_1,\dots,a_m$ to denote the diagonal entries of the corresponding matrices. Then in view of \eqref{e121}, we see that the goal is to show that \eqref{e87} is a necessary and sufficient condition for
\begin{align}
\log\frac{|\bsigma|}{|{\bf A}|}
\ge \sum_j c_j \log\frac{1}{a_i},
\quad\forall\, \mb{0}\preceq\mb{A}\preceq\bsigma.
\label{e123}
\end{align}
Let us first assume that $\mb{\Sigma}$ is invertible.
Define $\bf D:=\bsigma-A$, then rewrites as
\begin{align}
|\mb{I}-\bsigma^{-1}{\bf D}|
\le \prod_j (1-d_j)^{c_j},
\quad\forall\, \mb{0}\preceq\mb{D}\preceq\bsigma.
\label{e124}
\end{align}
If \eqref{e87}, then $\mb{C}\preceq \bsigma^{-1}$, and we have
\begin{align}
|\mb{I}-\bsigma^{-1}{\bf D}|
&\le
|\mb{I}-\mb{C}{\bf D}|
\\
&\le\prod_j (1-c_jd_j)
\label{e126}
\\
&\le \prod_j (1-d_j)^{c_j}
\label{e127}
\end{align}
where \eqref{e126} is because $\{1-c_jd_j\}$ is the diagonal entries of $\mb{I}-\mb{C}{\bf D}$, which is majorized by the eigenvalues of that matrix. Inequality \eqref{e127} is because $\mb{C}\preceq \bsigma^{-1}$ and $\mb{D}\preceq \bsigma$ imply the nonnegativity of $1-c_jd_j$.

Conversely, if \eqref{e124} holds, we can apply Taylor expansions on both sides of \eqref{e124}:
\begin{align}
1-\Tr(\bsigma^{-1}{\bf D})+o(\|\mb{D}\|)
\le 1-\sum_j c_j d_j+o(\|\mb{D}\|),
\end{align}
where $\|\mb{D}\|$ can be chosen as, say, the trace norm. Comparing the first order terms, we see that
\begin{align}
\Tr((\bsigma^{-1}-\mb{C})\mb{D})\succeq \mb{0}
\end{align}
must hold for any $\mb{0}\preceq\mb{D}\preceq\bsigma$.
Therefore \eqref{e87} holds.

More generally, if $\bsigma$ is not necessarily invertible, we can consider the inverse of its restriction $\bsigma|_{V}$ where $V$ is its column space. Then the above arguments still carry through with trivial modifications, and we can show that \eqref{e123} holds if and only if
\begin{align}
&{\bsigma|_V}^{-1}\succeq \mb{C}|_V
\\
\Longleftrightarrow \,
&{\bsigma|_V}^\half \mb{C}|_V{\bsigma|_V}^\half
\preceq \mb{I}
\\
\Longleftrightarrow \,
&{\bsigma}^\half \mb{C}{\bsigma}^\half
={\bsigma}^\half \mb{P}^{-1}{\bsigma}^\half
\preceq \mb{I}
\\
\Longleftrightarrow \,
&\mb{P}^{-\half}\bsigma \mb{P}^{-\half}\preceq \mb{I}
\label{e133}
\\
\Longleftrightarrow \,
&\bsigma\preceq \mb{P}
\end{align}
where \eqref{e133} is because ${\bsigma}^\half \mb{P}^{-1}{\bsigma}^\half$ and $\mb{P}^{-\half}\bsigma \mb{P}^{-\half}$ have the same set of eigenvalues.
\end{proof}
\begin{rem}
When $m=2$, Proposition~\ref{prop8} reduces to Nelson's hypercontractivity theorem for a pair of Gaussian scalar random variables $X_1$ and $X_2$, that is, $(p_1,p_2)\in\mathcal{R}(Q_{X^2})$ if and only if
\begin{align}
(p_1-1)(p_2-1)\ge\rho^2(X_1;X_2),
\end{align}
where $\rho^2$ denotes the squared Pearson correlation.
\end{rem}
%\begin{rem}
%In principle Proposition~\ref{prop8} can also be obtained by applying the known version of Brascamp-Lieb (e.g.~\cite{brascamp1976best}) to the functional inequality \eqref{hyper1}, therefore Proposition~\ref{prop8} cannot be considered as a major achievement. Nevertheless, we find the final explicit characterization of Gaussian hypercontractivity region in \eqref{e87} quite simple and satisfying.
%\end{rem}

\subsection{${\rm T}_2$ Inequality for Gaussian Measures}\label{sec_talagrand}
Consider the Euclidean space endowed with $\ell_2$-norm $(\mathbb{R}^n,\|\cdot\|_2)$. Talagrand \cite{talagrand1996transportation} showed that the standard Gaussian measure $Q=\mathcal{N}(\mb{0},\mb{I}_n)$ satisfies the ${\rm T}_2(1)$ inequality (see Definition~\ref{defn_tp}). Below we give a new proof using the Gaussian optimality in the forward-reverse Brascamp-Lieb inequality.
By the tensorization of ${\rm T}_2$ inequality \cite{marton1986simple}, it suffices to prove the $n=1$ case.
By continuity, it suffices to prove ${\rm T}_2(\lambda)$ for any $\lambda>1$.
Moreover, as one can readily check, when $\lambda>1$ and $Q=\mathcal{N}(0,1)$, \eqref{e_tp_defn} is satisfied for all Gaussian $P$ (in which case the optimal coupling in \eqref{e_tp_defn} is also Gaussian), so it suffices to prove Gaussian extremisability in \eqref{e_tp_defn}.

Let $\lambda\in(1,+\infty)$, $b_2\in(0,+\infty)$, and
\begin{align}
F_0(P_{X_1},P_{X_2}):=\inf_{P_{X_1X_2}}\mathbb{E}[|X_1-X_2|^2]
-2\lambda D(P_{X_1}\|Q)-b_2D(P_{X_2}\|Q),
\label{e255}
\end{align}
where the infimum is over coupling $P_{X_1X_2}$ of $P_{X_1}$ and $P_{X_2}$.
Using the rotation invariance argument/doubling trick in the proof of Theorem~\ref{thm_fr_optimality},
we can show that if $(P_{X_1},P_{X_2})$ maximizes $F_0(P_{X_1},P_{X_2})$, then they must be Gaussian, and the optimal coupling $P_{X_1X_2}$ is also Gaussian\footnote{In Theorem~\ref{thm_fr_optimality} the first term of the objective function is the infimum of the relative entropy, rather than the infimum of the expectation of a quadratic cost function. However, the argument in Theorem~\ref{thm_fr_optimality} also works in the latter case, since the expectation functional has a similar tensorization property, and the quadratic cost function also has a rotational invariance property.}. Letting $b_2\to\infty$, we see the Gaussian optimality in the ${\rm T}_2$ inequality.
To make the above argument rigorous, we need to take care of two technical issues:
\begin{description}
\item[(a)] We approximated the functional \eqref{e_119} with $b_2D(P_{X_2}\|Q)$ and let $b_2\to\infty$, but we want that Gaussian optimality continue to hold when the last term in \eqref{e255} is exactly \eqref{e_119}.
\item[(b)] We want to show the existence of a maximizer $(P_{X_1},P_{X_2})$ for \eqref{e255}.
\end{description}
While it might be possible to provide a formal justification of the limit argument in (a),
a slicker way is to circumvent it by directly working with \eqref{e_119} instead of $b_2D(P_{X_2}\|Q)$ with $b_2\to\infty$.
From the tensorization of the functional \eqref{e255}, it is relatively easy to distill the tensorization of the ${\rm T}_2$ inequality (see also \cite{marton1986simple} for a direct proof of tensorization of ${\rm T}_2$ inequality), and then use the rotation invariance argument/doubling trick to conclude Gaussian optimality.

As for (b), note that if $b_2D(P_{X_2}\|Q)$ is replaced with \eqref{e_119}, we want to show the existence of a maximizer $(P_{X_1},Q)$ for \eqref{e255}. If $\sigma^2:=\mathbb{E}[|X_1|^2]$, then
\begin{align}
\inf_{P_{X_1X_2}}\mathbb{E}[|X_1-X_2|^2]
&\le
\sigma^2+
\mathbb{E}[|X_2|^2]
\\
&=\sigma^2+1.
\end{align}
On the other hand,
\begin{align}
D(P_{X_1}\|Q)=
D(P_{X_1}\|\mathcal{N}(0,\sigma^2))
+\mathbb{E}[\imath_{\mathcal{N}(0,\sigma^2)\|Q}(X_1)]
=\frac{\sigma^2}{2}+o(\sigma^2).
\end{align}
Therefore, if $\lambda>1$ and $(P_{X_1}^{(t)})_{t=1}^{\infty}$ is a supremizing sequence, then $(P_{X_1}^{(t)})_{t=1}^{\infty}$ must have bounded second moment, hence must be tight. Thus Prokhorov Theorem implies the existence of a subsequence weakly converging to some $P_{X_1}^{\star}$, and a semicontinuity argument similar to Proposition~\ref{prop_exist2} shows that $P_{X_1}^{\star}$ is in fact a maximizer.

\subsection{Wyner's Common Information for $m$ Dependent Random Variables}
Wyner's common information for $m$ dependent random variables $X_1,\dots,X_m$ is commonly defined as \cite{xu2013wyner}
\begin{align}
C(X^m):=\inf_U I(U;X^m)
\end{align}
where the infimum is over $P_{U|X^m}$ such that $X_1,\dots,X_m$ are independent conditioned on $U$. Previously, to the best of our knowledge, the common information for $m$ Gaussian scalars $X_1,\dots,X_m$ could only be obtained in the special case where the correlation coefficient between $X_i$ and $X_j$ are equal for all $1\le i,j\le m$ \cite[Corollary~1]{xu2013wyner} via a different approach.

Using Theorem~\ref{thm1} and setting $\mb{X}\leftarrow X^m$ and $\mb{Y}_j\leftarrow X_j$, we immediately obtain the following characterization of the multivariate common information for Gaussian sources:
\begin{thm}\label{thm_common}
The common information of $m$ Gaussian scalar random variables $X_1,\dots,X_m$ with covariance matrix $\bsigma\succ\mb{0}$ is given by
\begin{align}
C(X^m)=
\frac{1}{2}\inf_{\bf\Lambda}\log\frac{|\bsigma|}{|\bf\Lambda|}
\label{e137}
\end{align}
where the infimum is over all diagonal matrices $\bf \Lambda$ satisfying $\bf \Lambda\preceq\Sigma$.
More generally when $\bsigma$ is not necessarily invertible, then $\bsigma$ and $\bf \Lambda$ in \eqref{e137} should be replaced by the restrictions $\bsigma|_V$ and ${\bf \Lambda}|_V$ to the column space $V$ of $\bsigma$.
\end{thm}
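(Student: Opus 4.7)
The plan is to combine the Gaussian-extremality statement of Theorem~\ref{thm1} (specialized to $\mb{X}\leftarrow X^m$ and $\mb{Y}_j\leftarrow X_j$) with an explicit Gaussian construction, obtaining matching upper and lower bounds on $C(X^m)$ that collapse to the finite-dimensional formula in \eqref{e137}. I first treat the non-degenerate case $\bsigma\succ\mb{0}$; the degenerate case reduces to it by descending to the column space $V$ of $\bsigma$, where $X^m$ is supported almost surely.

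For the upper bound, any diagonal $\mb{\Lambda}$ with $\mb{0}\preceq\mb{\Lambda}\preceq\bsigma$ is the conditional covariance of some jointly Gaussian pair $(U,X^m)$ with marginal $X^m\sim\mathcal{N}(\mb{0},\bsigma)$; diagonality of $\mb{\Lambda}$ is then equivalent to $X_1,\dots,X_m$ being conditionally independent given $U$ in the jointly Gaussian model, and $I(U;X^m)=\half\log(|\bsigma|/|\mb{\Lambda}|)$, so $C(X^m)\le\half\log(|\bsigma|/|\mb{\Lambda}|)$. For the lower bound, write the conditional-independence constraint as $\sum_j h(X_j|U)-h(X^m|U)=0$ and form its Lagrangian with multiplier $\lambda\ge0$; after cancellation this reduces (up to an additive constant $\lambda\,I(X_1;\dots;X_m)$) to $\mathcal{L}(U,\lambda)=(1+\lambda)\,I(X^m;U)-\lambda\sum_j I(X_j;U)$. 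For each $\lambda\ge0$, Theorem~\ref{thm1} applied with $c_j=\lambda/(1+\lambda)$, $c_0=0$, $\mb{M}=\mb{0}$ shows that $\inf_{P_{U|X^m}}\mathcal{L}(U,\lambda)$ is attained by jointly Gaussian $(U,X^m)$. Strong duality then collapses the primal to a minimization over jointly Gaussian $(U,X^m)$ respecting conditional independence, parameterized by diagonal $\mb{\Lambda}=\bsigma_{X^m|U}\preceq\bsigma$ (the constraint $\mb{\Lambda}\preceq\bsigma$ coming from the law of total covariance), which is precisely the right side of \eqref{e137}.

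The main obstacle is justifying strong duality for the infinite-dimensional primal; this is handled in the usual way using a Slater-type condition, noting that every strictly positive diagonal $\mb{\Lambda}\prec\bsigma$ furnishes a strictly feasible Gaussian point. Alternatively, one can bypass duality altogether: for any $P_{U|X^m}$ satisfying conditional independence, set $\mb{\Lambda}:=\EE[\Cov(X^m|U)]$, which is diagonal (by conditional independence) and $\preceq\bsigma$ (by the law of total covariance); Gaussian max-entropy applied pointwise in $u$ combined with Jensen's inequality on the concave function $\log\det$ yields $h(X^m|U)\le\half\log((2\pi e)^m|\mb{\Lambda}|)$, and hence $I(U;X^m)\ge\half\log(|\bsigma|/|\mb{\Lambda}|)$ directly. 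Infimizing this inequality over diagonal $\mb{\Lambda}\preceq\bsigma$ matches the upper bound. The degenerate case is finished by the same argument after restricting all matrices to $V=\mathrm{col}(\bsigma)$.
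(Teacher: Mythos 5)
Your proposal is correct, but it is carried by your second, duality-free argument rather than the Lagrangian one. The direct route --- set $\mb{\Lambda}:=\EE[\Cov(X^m|U)]$, observe it is diagonal by conditional independence and satisfies $\mb{\Lambda}\preceq\bsigma$ by the law of total covariance, then bound $h(X^m|U)$ by the Gaussian maximum-entropy bound pointwise in $u$ followed by Jensen's inequality on the concave $\log\det$ --- together with your explicit jointly Gaussian construction for the upper bound, is a complete proof. It is not the paper's official proof, which is a one-line invocation of the Gaussian-exhaustibility result of Theorem~\ref{thm1} with $\mb{X}\leftarrow X^m$ and $\mb{Y}_j\leftarrow X_j$; it coincides exactly with the alternative proof of Chen and Nair that the paper records in the remark immediately following Theorem~\ref{thm_common}. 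What the direct route buys is complete elementarity (no infinite-dimensional machinery at all); what the paper's route buys is exhibiting the formula as an instance of a general Gaussian-optimality principle that also handles the constrained problems (e.g.\ $I(U;X_l)\ge a_l$) arising in the Gray--Wyner and key-generation applications.

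Your first route has a genuine gap at the strong-duality step. The constraint functional $g(P_{U|X^m}):=\sum_j h(X_j|U)-h(X^m|U)$ is nonnegative for \emph{every} $P_{U|X^m}$, so no strictly feasible point with $g<0$ exists, and a ``Gaussian point with diagonal $\mb{\Lambda}\prec\bsigma$'' only gives $g=0$; moreover $g$ is a difference of concave functionals of $P_{U|X^m}$, so the program is not convex and Slater's condition is not the right tool. Weak duality plus Theorem~\ref{thm1} does yield $C(X^m)\ge\sup_{\lambda>0}\inf_{\mb{0}\prec\mb{A}\preceq\bsigma}\bigl[\tfrac12\log\tfrac{|\bsigma|}{|\mb{A}|}+\tfrac{\lambda}{2}\log\tfrac{\prod_j A_{jj}}{|\mb{A}|}\bigr]$, but matching this with the diagonal-$\mb{\Lambda}$ primal requires a separate finite-dimensional argument (e.g.\ a compactness argument as $\lambda\to\infty$, where Hadamard's inequality forces near-diagonality of the optimizing $\mb{A}$), not an appeal to Slater. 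A further small imprecision: part 1) of Theorem~\ref{thm1} (attainment by a Gaussian $U$) requires non-degeneracy, which fails here because each $\mb{Y}_j=X_j$ is a deterministic coordinate projection; only part 2) (Gaussian exhaustibility, with no attainment claim) is available, which is all the argument needs. None of this undermines your submission, since you explicitly fall back on the direct argument, which is sound.
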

\begin{rem}
After we completed a draft of this paper,
%and communicated Theorem~\ref{thm_common} Chandra Nair,
Jun Chen and Chandra Nair independently found and showed us a simple proof of Theorem~\ref{thm_common} without invoking Theorem~\ref{thm1}:
using the fact that Gaussian distribution maximizes differential entropy given a covariance constraint and the concavity of the log-determinant function,
\begin{align}
I(U;X^m)
&=h(X^m)-h(X^m|U)
\\
&\ge\frac{1}{2}\log|\bsigma|
-\mathbb{E}\left[\frac{1}{2}\log|\Cov(\mb{X}|U)|\right]
\\
&\ge \frac{1}{2}\log|\bsigma|
-\frac{1}{2}\log|\bsigma_{\mb{X}|U}|.
\end{align}
Since $\bsigma_{\mb{X}|U}\preceq \bsigma$ and $\bsigma_{\mb{X}|U}$ is diagonal, we establishes the nontrivial ``$\ge$'' part of \eqref{e137} in the case of invertible $\bsigma$.
This argument is also related to the proof of Theorem~\ref{thm1} in the sense that both convert a mutual information optimization problem (without a covariance constraint) to a conditional differential entropy optimization problem with a covariance constraint.
\end{rem}

\subsection{Key Generation with an Omniscient Helper}\label{sec_key}
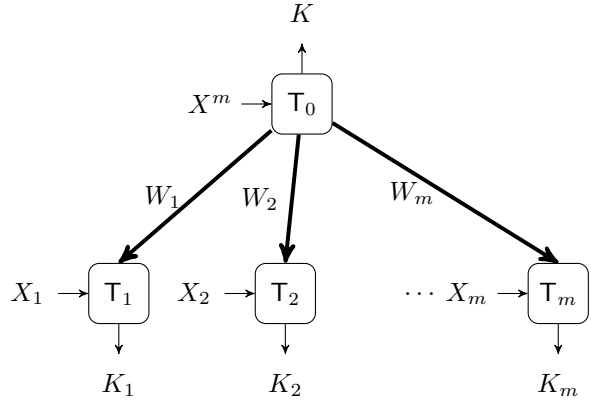
\begin{figure}[h!]
  \centering
\begin{tikzpicture}
[node distance=1cm,minimum height=8mm,minimum width=8mm,arw/.style={->,>=stealth'}]
%  \node[rectangle] (X) [left =of T] {$X^n$};
%  \node[rectangle] (Y) [right =of T] {$Y^n$};
  \node[rectangle,draw,rounded corners] (A) {${\sf T}_1$};
  \node[rectangle,draw,rounded corners] (B) [right= 1.4cm of A] {${\sf T}_2$};
  \node[rectangle] (C) [right =of B] {$\dots$};
  \node[rectangle,draw,rounded corners] (D) [right =of C] {${\sf T}_m$};
  \node[rectangle,draw,rounded corners] (T) [above right=of B, xshift=-13mm, yshift=10mm] {${\sf T}_0$};
  \node[rectangle] (Z) [left=0.4cm of T] {$X^m$};
  \node[rectangle] (K1) [below =0.4cm of A] {$K_1$};
  \node[rectangle] (K2) [below =0.4cm of B] {$K_2$};
  \node[rectangle] (Km) [below =0.4cm of D] {$K_m$};
  \node[rectangle] (K) [above =0.4cm of T] {$K$};
  \node[rectangle] (X1) [left =0.4cm of A] {$X_1$};
  \node[rectangle] (X2) [left =0.4cm of B] {$X_2$};
  \node[rectangle] (Xm) [left =0.4cm of D] {$X_m$};
%  \draw [arw] (X) to node[midway,above]{} (A);
%  \draw [arw] (Y) to node[midway,above]{} (B);
%  \draw [arw] (X) to node[midway,above]{} (T);
%  \draw [arw] (Y) to node[midway,above]{} (T);
 \draw[arw] (Z) to node[]{} (T);
 \draw[arw] (X1) to node[]{} (A);
  \draw[arw] (X2) to node[]{} (B);
   \draw[arw] (Xm) to node[]{} (D);
  \draw [arw] (A) to node[midway,above]{} (K1);
  \draw [arw] (B) to node[midway,above]{} (K2);
  \draw [arw] (D) to node[midway,above]{} (Km);
  \draw [arw] (T) to node[]{} (K);
  \draw [arw,line width=1.5pt] (T) to node[midway,left]{$W_1$} (A.north);
  \draw [arw,line width=1.5pt] (T) to node[midway,left]{$W_2$} (B.north);
  \draw [arw,line width=1.5pt] (T) to node[midway,left]{$W_m$} (D.north);
\end{tikzpicture}
\caption{Key generation with an omniscient helper}
\label{f_1com}
\end{figure}
As an example of applications in the network information theory, we give a simple characterization of the achievable rate region for secret key generation with an omniscient helper \cite{liu2015key} in the case of stationary memoryless Gaussian sources.
Let $Q_{X^m}$ be the per-letter joint distribution of sources $X_1,\dots,X_m$.
As in Figure~\ref{f_1com}, the Terminals ${\sf T}_1,\dots,{\sf T}_m$ observe i.i.d.~realizations of $X_1,\dots,X_m$, respectively, whereas the omniscient helper ${\sf T}_0$ has access to all the sources.
Suppose the terminals perform block coding with length $n$.
The communicator computes the integers $W_1((X^m)^n),\dots,W_m((X^m)^n)$ possibly stochastically and sends them to ${\sf T}_1,\dots,{\sf T}_m$, respectively. Then, all the terminals calculate integers $K((X^m)^n),K_1((X_1)^n,W_1),\dots,K_m((X_m)^n,W_m)$ possibly stochastically.
The goal is to make $K=K_1=\dots=K_m$ with high probability and $K$ almost equiprobable and independent of \emph{each} message $W_l$. In other words, we want to minimize the following quantities:
\begin{align}
\epsilon_n&=\max_{1\le l\le m}\mathbb{P}[K\neq K_l],\label{e_m1}
\\
\nu_n&=\max_{1\le l\le m}\{\log|\mathcal{K}|-H(K|W_l)\}.\label{e_m2}
\end{align}

An $(m+1)$-tuple $(R,R_1,\dots,R_m)$ is said to be \emph{achievable} if a sequence of key generation schemes can be designed to fulfill the following conditions:
\begin{align}
\liminf_{n\to\infty}\frac{1}{n}\log|\mathcal{K}|&\ge R;
\\
\limsup_{n\to\infty}\frac{1}{n}\log|\mathcal{W}_l|&\le R_l,\quad l\in\{1,\dots,m\};
\\
\lim_{n\to\infty}\epsilon_n&=0;\\
\lim_{n\to\infty}\nu_n &=0.
\label{e_145}
\end{align}
Notice that a small $\nu_n$ does not imply that $K$ is nearly independent with \emph{all} the messages $W^m$; the problem appears to be harder to solve if a the stronger requirement that
\begin{align}
\lim_{n\to\infty}(\log|\mathcal{K}|-H(K|W^m))=0
\end{align}
is imposed in place of \eqref{e_145} \cite{liu2015key}.

\begin{thm}\cite{liu2015key}\label{thm_2}
The set of achievable rates is the closure of
\begin{align}
\bigcup_{Q_{{ U|X}^m}}
\left\{
\begin{array}{c}
  (R,R_1,\dots,R_m):  \\
  R\le \min\{I({{ U;X}^m}),H({ X}_1),\dots,H({ X}_m)\}; \\
  R_l\ge I({ U};{ X}^m)-I({ U};{ X}_l),\quad 1\le l\le m
\end{array}\right\}.
\label{e_om}
\end{align}
\end{thm}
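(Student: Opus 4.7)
The plan is to prove the inclusion in both directions separately using standard network-information-theoretic tools, and then note how the Gaussian specialization reduces to a finite-dimensional optimization via Theorem~\ref{thm1}.

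For achievability, I would fix $Q_{U|X^m}$ and $\delta>0$, then have the helper generate a random codebook of $2^{n(I(U;X^m)+\delta)}$ i.i.d.\ $U^n$ sequences drawn from $Q_U^{\otimes n}$. Given $X^{m,n}$, the covering lemma produces an index $M$ with $U^n(M)$ jointly $\delta$-typical with $X^{m,n}$. For each $l$, the helper independently bins the codebook into $2^{nR_l}$ bins and sends $W_l=B_l(M)$ to terminal $l$. Since $R_l>I(U;X^m)-I(U;X_l)=I(U;X^m\mid X_l)$ (chain rule plus the fact that $X_l$ is a subvector of $X^m$), the packing lemma allows terminal $l$ to recover $M$ w.h.p.\ from $(W_l,X_l^n)$. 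Finally, a uniformly random hash $h\colon[2^{nR_0}]\to\mathcal{K}$ produces the key $K=h(M)$; the leftover hash lemma delivers $K$ that is $\epsilon$-close to uniform and $\epsilon$-close to independent of each $W_l$ provided $\log|\mathcal{K}|<\min\{nI(U;X^m),nH(X_l)\}-o(n)$.

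For the converse, the bound $R\le H(X_l)$ follows directly from the chain $nR\le\log|\mathcal{K}|\le H(K\mid W_l)+o(n)\le H(K_l\mid W_l)+o(n)\le H(X_l^n\mid W_l)+o(n)\le nH(X_l)+o(n)$, using the secrecy condition $\log|\mathcal{K}|-H(K\mid W_l)\to 0$, Fano's inequality applied to $K\approx K_l$, and the fact that $K_l$ is a function of $(W_l,X_l^n)$. For the mutual-information bounds I would exploit the identity $I(K;X^{m,n})-I(K;X_l^n)=I(K;X^{m,n}_{-l}\mid X_l^n)\le I(W_l;X^{m,n}_{-l}\mid X_l^n)+o(n)\le H(W_l)+o(n)$, whose middle step combines Fano with $K\approx f(W_l,X_l^n)$, and then single-letterize with the time-sharing auxiliary $U_i:=(K,X^{m,i-1})$ and $U:=(U_T,T)$ for a uniform $T$, under which $I(K;X^{m,n})=nI(U;X^m)$ and the analogous single-letter expression for $I(K;X_l^n)$ hold up to $o(n)$ corrections, yielding $R\le I(U;X^m)$ and $R_l\ge I(U;X^m)-I(U;X_l)$ with the \emph{same} $U$ for every $l$.

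The main obstacle will be producing, in the converse, a single auxiliary $U$ that simultaneously witnesses all $m$ constraints $R_l\ge I(U;X^m)-I(U;X_l)$ together with $R\le I(U;X^m)$. The crucial enabler is that the secrecy requirement \eqref{e_m2} is imposed against each \emph{individual} message $W_l$ rather than against the joint $(W_1,\ldots,W_m)$, so that a common time-sharing auxiliary suffices for every terminal; under the stronger joint-secrecy constraint the region is generally strictly smaller and a per-$l$ identification becomes necessary. Once \eqref{e_om} is in place, invoking Theorem~\ref{thm1} with $\mb{X}\leftarrow X^m$ and $\mb{Y}_j\leftarrow X_j$ shows that for jointly Gaussian $X^m$ the supremum over $Q_{U|X^m}$ is exhausted by jointly Gaussian $(U,X^m)$, reducing the Gaussian rate region to a finite-dimensional covariance optimization.
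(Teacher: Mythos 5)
First, note that the paper itself does not prove this statement: Theorem~\ref{thm_2} is imported verbatim from \cite{liu2015key}, and the present paper only uses it (together with Theorem~\ref{thm1}) to obtain the Gaussian region of Theorem~\ref{thm23}. So your proposal is being measured against the cited work rather than against anything in this manuscript. Your converse sketch is essentially the standard argument and looks sound in outline: the chain $nR\le H(K|W_l)+o(n)\le H(X_l^n|W_l)+o(n)$ gives $R\le H(X_l)$, the bound $I(K;X^{m,n})-I(K;X_l^n)\le I(W_l;X^{m,n}_{-l}\mid X_l^n)+o(n)\le nR_l+o(n)$ is correct (via Fano and $K\approx K_l=K_l(W_l,X_l^n)$), and the identification $U_i=(K,X^{m,i-1})$, $U=(U_T,T)$ single-letterizes all constraints with a common $U$; only routine care with stochastic encoders/decoders is missing.

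The genuine gap is in the achievability. With your scheme the key is $K=h(M)$, where $M$ is the covering index of rate $I(U;X^m)+\delta$ and $W_l$ is a bin index of rate $R_l\approx I(U;X^m)-I(U;X_l)$. The leftover hash lemma applied against $W_l$ is limited by $H(M\mid W_l)\approx n\,I(U;X_l)$, \emph{not} by $\min\{nI(U;X^m),nH(X_l)\}$ as you assert; indeed $H(X_l)$ cannot enter at all, since $M$ carries only about $nI(U;X^m)$ bits and conditioning on $W_l$ removes all but $\approx nI(U;X_l)$ of them. Hence your construction only yields $R\le\min_l I(U;X_l)$, which is strictly inside the claimed region: take $m=2$, $X_1\perp X_2$, $U=X_2$; then the theorem promises $R=\min\{H(X_1),H(X_2)\}>0$ with $R_1=H(X_2)$, $R_2=0$, whereas in your scheme $W_1$ (the full index) determines $M$ and hence $K$, so $H(K|W_1)\approx 0$ and no positive secret-key rate survives. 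Reaching $\min\{I(U;X^m),H(X_l)\}$ requires an extra mechanism exploiting the local randomness $H(X_l\mid U)$ at terminal $l$: e.g., the portion of the key exceeding $I(U;X_l)$ is delivered inside $W_l$ encrypted by a one-time pad extracted from $X_l^n$ conditioned on $U^n$ (or, as in \cite{liu2015key}, via a likelihood-encoder/random-binning construction that builds this in), which is exactly why the cap is $H(X_l)=I(U;X_l)+H(X_l\mid U)$ and why the communication rate remains $I(U;X^m)-I(U;X_l)$ only after a careful accounting. Your observation that individual (rather than joint) secrecy is what makes a single $U$ suffice is correct and consistent with the paper's remark, but the coding argument as written does not establish the stated region.
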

A priori, computing the rate region from \eqref{e_om} requires solving an optimization with possibly infinite dimensions. However, using Theorem~\ref{thm1} we easily see that the problem can be reduced to a matrix optimization in the case of Gaussian sources:
\begin{thm}\label{thm23}
When $(X_1,\dots,X_m)$ is jointly Gaussian with non-degenerate covariance matrix $\bsigma$, the achievable region can be represented as the closure of
\begin{align}
\bigcup_{\mb{0}\preceq \bsigma'\preceq \bsigma}
\left\{
\begin{array}{c}
  (R,R_1,\dots,R_m):  \\
  R\le \half\log\frac{|\bsigma|}{|\bsigma'|}; \\
  R_l\ge \half\log\frac{|\bsigma|}{|\bsigma'|}
  -\half \log\frac{\Sigma_{ll}}{\Sigma'_{ll}},\quad 1\le l\le m
\end{array}
\right\}.
\label{e145}
\end{align}
\end{thm}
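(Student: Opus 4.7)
The plan is to identify the region in \eqref{e_om} as a closed convex set whose supporting hyperplanes are captured by the Gaussian-optimality result of Theorem~\ref{thm1}. For the Gaussian sources considered here, the bounds $R\le H(X_l)$ in \eqref{e_om} are vacuous since the Shannon entropies of continuous random variables are infinite, so only the constraints $R\le I(U;X^m)$ and $R_l\ge I(U;X^m)-I(U;X_l)$ remain. The inclusion of the Gaussian region \eqref{e145} in the general region \eqref{e_om} is immediate: given $\mb{0}\preceq \bsigma'\preceq \bsigma$, take $U$ jointly Gaussian with $X^m$ such that $\bsigma_{\mb{X}|U}=\bsigma'$, and use $I(U;X^m)=\half\log\tfrac{|\bsigma|}{|\bsigma'|}$ and $I(U;X_l)=\half\log\tfrac{\Sigma_{ll}}{\Sigma'_{ll}}$.

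For the reverse inclusion, I would first argue that \eqref{e_om} is convex by time-sharing, is closed under decreasing $R$ and increasing each $R_l$, and is therefore characterized by its supporting hyperplanes: a point $(R,R_1,\dots,R_m)$ lies in its closure iff for all $\alpha\ge 0$ and $\beta_1,\dots,\beta_m\ge 0$,
\begin{align*}
\alpha R-\sum_{l=1}^m\beta_l R_l\le c(\alpha,\beta^m):=\sup_{Q_{U|X^m}}\left[\Bigl(\alpha-\sum_{l=1}^m\beta_l\Bigr) I(U;X^m)+\sum_{l=1}^m\beta_l\, I(U;X_l)\right].
\end{align*}
When $\alpha\ge \sum_l\beta_l$ and at least one coefficient is positive, the coefficient of $I(U;X^m)$ in the displayed expression (after writing $I(U;X_l)\le I(U;X^m)$) leads to $c(\alpha,\beta^m)=+\infty$ by letting $U$ approach $X^m$, so such hyperplanes impose no constraint on either region. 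The nontrivial case is $\alpha<\sum_l\beta_l$, in which we set $c_0:=\sum_l\beta_l-\alpha>0$ and obtain
\begin{align*}
c(\alpha,\beta^m)=c_0\sup_{Q_{U|X^m}}\left[\sum_{l=1}^m\frac{\beta_l}{c_0}\,I(U;X_l)-I(U;X^m)\right].
\end{align*}

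This is exactly the quantity $G(P_{U|\mb{X}})$ from Theorem~\ref{thm1} (with $\mb{X}\leftarrow X^m$, $\mb{Y}_j\leftarrow X_j$, $\mb{M}=\mb{0}$ so the trace term drops, and $c_j\leftarrow \beta_j/c_0$), whose supremum is attained by a $U^\star$ jointly Gaussian with $X^m$. Consequently the same value $c(\alpha,\beta^m)$ is achieved inside the Gaussian region \eqref{e145} by taking $\bsigma'=\bsigma_{\mb{X}|U^\star}$ and substituting the closed-form expressions for the Gaussian mutual informations. Since \eqref{e_om} and \eqref{e145} are both closed, convex, and monotone, and share identical supporting hyperplanes, they coincide.

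The substantive content of the argument is bundled into Theorem~\ref{thm1}, so no new hard work is needed there. The only delicate steps I foresee are (i) verifying the convexity and monotonicity carefully enough to justify the supporting-hyperplane characterization, and (ii) making sure the boundary case $\alpha=\sum_l\beta_l$ and the vacuous-hyperplane claim for $\alpha\ge\sum_l\beta_l$ are handled cleanly; in the latter, one shows the supremum blows up by letting $U$ become essentially $X^m$ (for instance, $U=X^m+\varepsilon Z$ with $Z$ standard Gaussian and $\varepsilon\downarrow 0$), which makes $I(U;X^m)$, and hence the objective, diverge to $+\infty$ on both sides of the equivalence.
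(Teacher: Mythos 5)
Your proposal is correct and takes essentially the same route as the paper, which obtains Theorem~\ref{thm23} precisely by passing to the Lagrange dual (supporting hyperplanes of the convex, monotone region \eqref{e_om}) and invoking the Gaussian optimality of Theorem~\ref{thm1} to reduce the optimization over $Q_{U|X^m}$ to one over $\mb{0}\preceq\bsigma'\preceq\bsigma$. One small precision: since the maps $X^m\mapsto X_l$ are deterministic and hence degenerate in the sense of Definition~\ref{defn_ND}, you should cite part~2 of Theorem~\ref{thm1} (Gaussian exhaustibility of the supremum, which may be $+\infty$ in some directions, with any terms having $\beta_l=0$ simply dropped) rather than attainment by a Gaussian $U^\star$; this costs nothing, because your argument only needs equality of the support-function values of the two closed convex regions, not that they are attained.
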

%\begin{proof}
%Theorem~\ref{thm1} (with $\bf M=0$) implies that the achievable region is the closure of \eqref{e145}. To see that \eqref{e145} itself is closed, consider any $(r, r_1,\dots, r_m)$ in its closure. Denote by $f(\bsigma')$ and $f_l(\bsigma')$ be the bounds on $R$ and $R_l$ appearing in \eqref{e145}.
%There exists a sequence $(\bsigma'_n)$ such that
%\begin{align}
%r&\le\lim_{n\to\infty}f(\bsigma'_n),
%\\
%r_l&\ge\lim_{n\to\infty}f_l(\bsigma'_n).
%\end{align}
%By the compactness of $\{\bsigma'\colon \mb{0}\preceq \bsigma'\preceq \bsigma\}$, we can pass to a subsequence of $(\bsigma'_n)$ which converges (say, in the trace norm) to some $\bsigma^{\star}$, which is non
%\end{proof}
A related but simpler problem is the common randomness (CR) generation problem, where there is no secrecy (independence) assumption imposed. The CR generation counterpart of Theorem~\ref{thm_2} is derived in \cite[Theorem~4.2]{ahlswede1998common}, which can be expressed by replacing the first bound in \eqref{e_om} with
\begin{align}
R\le I({{ U;X}^m}).
\end{align}
Despite the superficial similarity of the two regions, the achievability part of Theorem~\ref{thm_2} requires more sophisticated coding technique to guarantee secrecy. We observe that Theorem~\ref{thm23} also applies to CR generation from Gaussian sources, because in that case $H(X_j)=\infty$ and hence does not effectively change the bound on $R$.

A generalization of the omniscient helper problem is called the one communicator problem in \cite{liu2015key}, where in Figure~\ref{f_1com} the terminal ${\sf T}_0$ does not see all the random variables $X^m$ but instead another random variable $Z$ which can be arbitrarily correlated with $X^m$. The achievable rate region for one communicator CR generation is known \cite[Theorem~4.2]{ahlswede1998common} to be the closure of
\begin{align}
\bigcup_{Q_{U|Z}}
\left\{
\begin{array}{c}
  (R,R_1,\dots,R_m):  \\
  R\le I(U;Z); \\
  R_l\ge I(U;Z)-I(U;X_l),\quad 1\le l\le m
\end{array}\right\}
\label{e_CRonecom}
\end{align}
and obviously we can use Theorem~\ref{thm1} to reduce \eqref{e_CRonecom} to a matrix optimization problem in the case of Gaussian $(Z,X_1,\dots,X_m)$. The achievable rate region is also known for \emph{key} generation with one communicator \cite{liu2015key}. But the expression of that region is more complicated involving $m+1$ auxiliary random variables, and it is not immediate to conclude that Gaussian auxiliary random variables suffice merely using Theorem~\ref{thm1}.

\section{Discussion}
We have seen that the information-theoretic formulation of the Brascamp-Lieb inequality is often more convenient for proving certain properties, including data processing, tensorization, convexity and the Gaussian optimality for Gaussian distributions and Gaussian random transformations. A point not elaborated in this paper is that, in contrast, the functional formulation of Brascamp-Lieb inequality has the advantage of allowing us to prove strong converses of certain coding theorems, which strengthens the traditional weak converses obtained through manipulations of information-theoretic formulas and Fano's inequality. We examine this complementary viewpoint in \cite{ISIT_lccv_smooth2016} where an idea called smoothing is introduced which is essential for obtaining the strong converse of the full rate region. Other recent applications of related functional inequalities in proving impossibility bounds are summarized in \cite{pw_2015}.

Although Theorem~\ref{thm6} shows the Gaussian extremisability in the non-degenerate case, it is not a necessary condition. Since Gaussian exhaustibility holds in general, in principle verifying extremisability is reduced to a merely finite dimensional optimization problem. However, it is nontrivial to give a ``closed form'' condition for extremisability and finiteness of the optimal value in terms of the structure of linear subspaces involved \cite{bennett2008brascamp}. In \cite{bennett2008brascamp}
such a condition is given for the case of deterministic linear $(Q_{\mb{Y}_j|\bf X})$ (the traditional Brascamp-Lieb inequality).

\section{Acknowledgements}
We thank Sudeep Kamath for many stimulating discussions during the course of this work.
This work was supported in part by NSF Grants CCF-1528132, CCF-0939370 (Center for Science of Information), CCF-1319299, CCF-1319304,
CCF-1350595 and AFOSR FA9550-15-1-0180.

\appendices
\section{A Generalization of Legendre-Fenchel Duality to More than Two functions}\label{app_fr}
The Fenchel-Rockafellar duality (see \cite[Theorem~1.9]{villani2003topics}, or \cite{rockafellar2015convex} in the case of finite dimensional vector spaces) usually refers to the $k=1$ special case of the following result.
\begin{thm}\label{thm_FRduality}
Assume that $A$ is a topological vector space whose dual is $A^*$.
Let $\Theta_j\colon A\to \mathbb{R}\cup\{+\infty\}$, $j=0,1,\dots,k$, for some positive integer $k$. Suppose there exist some $(u_j)_{j=1}^k$ and $u_0:=-(u_1+\dots+u_k)$ such that
\begin{align}
\Theta_j(u_j)<\infty,\quad j=0,\dots,k
\label{e253}
\end{align}
and $\Theta_0$ is upper semicontinuous at $u_0$.
Then
\begin{align}
-\inf_{\ell\in A^*}\left[\sum_{j=0}^k\Theta_j^*(\ell)\right]
=\inf_{u_1,\dots,u_k\in A}
\left[\Theta_0\left(-\sum_{j=1}^ku_j\right)+\sum_{j=1}^k \Theta_j(u_j)\right].
\label{e62}
\end{align}
\end{thm}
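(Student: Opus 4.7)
The plan is to reduce the stated identity to the classical two-function Fenchel-Rockafellar duality (the $k=1$ case, e.g., \cite[Theorem~1.9]{villani2003topics}) via a product-space trick. I would introduce $\tilde A:=A^k$ with the product topology, whose continuous dual is identified with $(A^*)^k$, and define two convex functionals on $\tilde A$ by
\begin{align*}
\tilde\Theta(v_1,\dots,v_k)&:=\Theta_0\Bigl(-\sum_{j=1}^k v_j\Bigr), &
\tilde\Phi(v_1,\dots,v_k)&:=\sum_{j=1}^k\Theta_j(v_j).
\end{align*}
The classical identity applied to $(\tilde\Theta,\tilde\Phi)$ reads
\begin{align*}
\inf_{v\in\tilde A}[\tilde\Theta(v)+\tilde\Phi(v)]=\sup_{\boldsymbol\ell\in(A^*)^k}\bigl[-\tilde\Theta^*(-\boldsymbol\ell)-\tilde\Phi^*(\boldsymbol\ell)\bigr],
\end{align*}
whose left-hand side is manifestly the right-hand side of \eqref{e62}. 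The remaining work is then to compute the two Legendre-Fenchel transforms on $\tilde A^*$ and to verify that the hypothesis of the classical theorem transfers to $\tilde A$.

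A direct unpacking gives $\tilde\Phi^*(\ell_1,\dots,\ell_k)=\sum_j\Theta_j^*(\ell_j)$. For $\tilde\Theta^*$ I would perform the change of variables $w:=-\sum_j v_j$, treating (say) $v_k$ as determined by $w$ and $v_1,\dots,v_{k-1}$; since $\tilde\Theta$ depends on these variables only through $w$, the inner supremum over the free $v_1,\dots,v_{k-1}$ of the residual linear functional forces $\ell_1=\cdots=\ell_k=:\ell$ (otherwise the sup is $+\infty$), and the remaining supremum over $w$ collapses to $\Theta_0^*(-\ell)$. Substituting back, the right-hand side of the classical identity becomes $\sup_\ell\bigl[-\Theta_0^*(\ell)-\sum_{j=1}^k\Theta_j^*(\ell)\bigr]=-\inf_\ell\sum_{j=0}^k\Theta_j^*(\ell)$, which is exactly what the theorem asserts.

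The qualification hypothesis of the classical two-function theorem amounts to exhibiting a single point of $\tilde A$ at which both functionals are finite and at which one of them is upper semicontinuous. Taking $v_0:=(u_1,\dots,u_k)$ supplied by \eqref{e253}, one has $\tilde\Phi(v_0)=\sum_j\Theta_j(u_j)<\infty$ and $\tilde\Theta(v_0)=\Theta_0(u_0)<\infty$, while upper semicontinuity of $\tilde\Theta$ at $v_0$ follows from that of $\Theta_0$ at $u_0$ composed with the continuous linear map $v\mapsto -\sum_j v_j$. For completeness, the easy direction $(\ge)$ of \eqref{e62} is a Young-type weak-duality inequality that requires no regularity: summing $\Theta_j(u_j)+\Theta_j^*(\ell)\ge\ell(u_j)$ for $j=1,\dots,k$ together with $\Theta_0(-\sum_j u_j)+\Theta_0^*(\ell)\ge -\sum_j\ell(u_j)$ causes all linear terms to cancel, and the bound $\Theta_0(-\sum_j u_j)+\sum_{j=1}^k\Theta_j(u_j)\ge -\sum_{j=0}^k\Theta_j^*(\ell)$ yields the inequality after taking $\inf$ in $u$ and $\sup$ in $\ell$. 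The main obstacle is therefore the reverse direction $(\le)$, for which the black-box invocation of the classical Fenchel-Rockafellar theorem (itself a Hahn-Banach separation argument) is the substantive step, so the genuinely new technical content reduces to the LFT computation sketched above and the straightforward transfer of the upper semicontinuity hypothesis to the product space.
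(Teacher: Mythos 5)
Your proposal is correct, and it takes a genuinely different route from the paper. The paper does not reduce to the two-function case: it runs the Hahn--Banach separation directly in $A\times\mathbb{R}$, forming the strict epigraphs $C_j=\{(u,r):r>\Theta_j(u)\}$, taking their Minkowski sum $C=C_0+\cdots+C_k$ (which has nonempty interior because $\Theta_0$ is upper semicontinuous at $u_0$), separating $C$ from the ray $D=\{(0,m):m\le m_0\}$, and using \eqref{e253} together with the upper semicontinuity once more to rule out a vertical separating hyperplane. Your reduction replaces that geometric argument by two conjugate computations on $A^k$, and these check out: $(A^k)^*=(A^*)^k$, $\tilde\Phi^*(\boldsymbol\ell)=\sum_j\Theta_j^*(\ell_j)$, and since $\Theta_0$ is proper by \eqref{e253}, any nonzero residual functional $\ell_j-\ell_k$ drives the supremum to $+\infty$, so $\tilde\Theta^*$ is finite only on the diagonal, where it equals $\Theta_0^*(-\ell)$; substituting recovers \eqref{e62}, and your transfer of the regularity hypothesis via the continuous linear map $v\mapsto-\sum_j v_j$ is fine. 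What your approach buys is economy: the $k$-function statement is exposed as formally equivalent to the classical $k=1$ duality, with the only new content being the conjugate bookkeeping. What the paper's approach buys is that it is self-contained and delivers the stated generality directly. On that last point, one caveat for your write-up: the theorem here is asserted for an arbitrary topological vector space with $\Theta_0$ merely upper semicontinuous at $u_0$, whereas the result you cite (\cite[Theorem~1.9]{villani2003topics}) is stated for normed spaces with continuity at the reference point; moreover both arguments tacitly require the $\Theta_j$ to be convex (the paper leaves this implicit as well, since convexity of the $C_j$ is what it uses). Neither issue is a real obstruction --- a convex function finite and upper semicontinuous at a point is bounded above on a neighborhood and hence continuous there, and the separation proof of the two-function case works in any topological vector space --- but you should either invoke a version of Fenchel--Rockafellar in that generality or add this one-line remark to make the black-box step legitimate.
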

For completeness, we provide a proof of this result, which is based on the Hahn-Banach theorem (Theorem~\ref{thm_hb}) and is similar to the proof of \cite[Theorem~1.9]{villani2003topics}.
\begin{proof}
Let $m_0$ be the right side of \eqref{e62}.
The $\le$ part of \eqref{e62} follows trivially from the (weak) min-max inequality since
\begin{align}
m_0&=\inf_{u_0,\dots,u_k\in A}\sup_{\ell\in A^*}
\left\{\sum_{j=0}^k \Theta_j(u_j)
-\ell(\sum_{j=0}^k u_j)
\right\}
\\
&\ge \sup_{\ell\in A^*}\inf_{u_0,\dots,u_k\in A}
\left\{\sum_{j=0}^k \Theta_j(u_j)
-\ell(\sum_{j=0}^k u_j)
\right\}
\\
&=
-\inf_{\ell\in A^*}\left[\sum_{j=0}^k\Theta_j^*(\ell)\right].
\end{align}
It remains to prove the $\ge$ part,
and it suffices to assume without loss of generality that $m_0>-\infty$.
Note that \eqref{e253} also implies that $m_0<+\infty$.
Define convex sets
\begin{align}
C_j&:=\{(u,r)\in A\times \mathbb{R}\colon \,r>\Theta_j(u)\},\quad
j=0,\dots,k;
\\
D&:=\{(0,m)\in A\times \mathbb{R}\colon \,
m\le m_0\}.
\label{e256}
\end{align}
Observe that these are nonempty sets by the assumption \eqref{e253}. Also $C_0$ has nonempty interior by the assumption that $\Theta_0$ is upper semicontinuous at $u_0$. Thus, the Minkowski sum
\begin{align}
C:=C_0+\dots+C_k
\end{align}
is a convex set with a nonempty interior.
Moreover, $C\cup D=\emptyset$.
By the Hahn-Banach theorem (Theorem~\ref{thm_hb}), there exists $(\ell,s)\in A^*\times \mathbb{R}$ such that
\begin{align}
sm\le \ell(\sum_{j=0}^ku_j)+s\sum_{j=0}^k r_j.
\label{e258}
\end{align}
For any $m\le m_0$ and $(u_j,r_j)\in C_j$, $j=0,\dots,k$.
From \eqref{e256} we see \eqref{e258} can only hold when $s\ge0$.
Moreover, from \eqref{e253} and the upper semicontinuity of $\Theta_0$ at $u_0$ we see the $\sum_{j=0}^ku_j$ in \eqref{e258} can take value in a neighbourhood of $0\in A$, hence $s\neq0$.
Thus, by dividing $s$ on both sides of \eqref{e258} and setting $\ell\leftarrow-\ell/s$, we see that
\begin{align}
m_0&\le \inf_{u_0,\dots,u_k\in A}\left[-\ell(\sum_{j=0}^ku_j)+\sum_{j=0}^k \Theta_j(u_j)\right]
\\
&=-\left[\sum_{j=0}^k\Theta_j^*(\ell)\right]
\end{align}
which establishes the $\ge$ part in \eqref{e62}.
\end{proof}

\begin{thm}[Hahn-Banach]\label{thm_hb}
Let $C$ and $D$ be convex, nonempty disjoint subsets of a topological vector space $A$. If the interior of $C$ is non-empty, then there exists $\ell\in A^*$, $\ell\neq0$ such that
\begin{align}
\sup_{u\in D}\ell(u)\le \inf_{u\in C}\ell(u).
\end{align}
\end{thm}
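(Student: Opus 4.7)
The plan is to reduce the problem to the canonical geometric situation of separating a nonempty open convex set from the origin, and then to invoke the analytic (extension) form of the Hahn--Banach theorem via a Minkowski gauge. The proof will proceed in four stages, which I outline below.

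First, I would form the \emph{algebraic difference} $E := \mathrm{int}(C) - D = \bigcup_{d \in D}(\mathrm{int}(C) - d)$. Since translation is a homeomorphism in any topological vector space, each set $\mathrm{int}(C) - d$ is open, so $E$ is open; it is convex as the Minkowski sum of two convex sets; it is nonempty by the interior assumption on $C$; and crucially $0 \notin E$, for $0 \in E$ would force $\mathrm{int}(C) \cap D \neq \emptyset$, contradicting $C \cap D = \emptyset$. The desired separation $\sup_D \ell \le \inf_C \ell$ is equivalent (modulo a sign) to the existence of a nonzero continuous linear $\ell$ with $\ell < 0$ on $E$.

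Second, fix any $e_0 \in E$ and pass to the translated set $V := E - e_0$, which is an open convex neighborhood of the origin with $-e_0 \notin V$ (the translate of the statement $0 \notin E$). I would then construct the Minkowski gauge
\begin{equation}
p(x) := \inf\{t > 0 \colon x \in tV\},
\end{equation}
which is a sublinear functional on $A$; openness of $V$ gives the identity $V = \{p < 1\}$, and $-e_0 \notin V$ yields $p(-e_0) \ge 1$.

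Third, on the one-dimensional subspace $\mathbb{R}\, e_0$, I would define the linear functional $\ell_0(t e_0) := -t$, which satisfies $\ell_0 \le p$ on the line: for $t \le 0$ this is trivial since $p \ge 0$, and for $t > 0$ it follows from $p(-e_0) \ge 1$ and positive homogeneity. The analytic Hahn--Banach extension theorem then produces a linear $\ell$ on all of $A$ extending $\ell_0$ and satisfying $\ell \le p$ everywhere. Continuity of $\ell$ is automatic: $\ell \le p < 1$ on the open neighborhood $V$ of the origin forces $\ell$ to be bounded on a symmetric neighborhood of $0$, hence continuous.

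Finally, for any $c \in \mathrm{int}(C)$ and $d \in D$, the point $c - d - e_0$ lies in $V$, so $\ell(c-d-e_0) < 1$; combined with $\ell(e_0) = -\ell_0(-e_0)\cdot(-1) = -1$ this yields $\ell(c) < \ell(d)$. Replacing $\ell$ by $-\ell$ and taking infimum over $c \in \mathrm{int}(C)$ and supremum over $d \in D$ gives $\sup_{d \in D}(-\ell)(d) \le \inf_{c \in \mathrm{int}(C)}(-\ell)(c)$; the inequality extends from $\mathrm{int}(C)$ to $C$ using the standard fact that $C \subseteq \overline{\mathrm{int}(C)}$ whenever $C$ is convex with nonempty interior, together with continuity of $\ell$. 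Nonvanishing of the extension follows from $\ell(e_0) = -1 \neq 0$. The only nontrivial ingredient is the analytic Hahn--Banach extension itself; everything else is routine verification that the geometric data (openness, convexity, avoidance of the origin) translates cleanly into a sublinear gauge whose subspace values are controlled.
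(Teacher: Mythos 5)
Your proof is correct, and it is worth noting that the paper itself offers no proof of this statement: Theorem~\ref{thm_hb} is quoted there as the standard geometric Hahn--Banach separation theorem, used as a black box in the proof of Theorem~\ref{thm_FRduality}. Your argument is the classical textbook derivation of that geometric form from the analytic extension theorem: pass to $E=\mathrm{int}(C)-D$ (open, convex, avoiding $0$), translate to a convex open neighborhood $V=E-e_0$ of the origin, take its Minkowski gauge $p$ (finite because $V$ is absorbing, sublinear because $V$ is convex and contains $0$, with $V=\{p<1\}$ by openness), dominate a suitably chosen linear functional on $\mathbb{R}e_0$ by $p$, extend, and read off continuity from boundedness of $\ell$ on $V\cap(-V)$; the passage from $\mathrm{int}(C)$ to $C$ via $C\subseteq\overline{\mathrm{int}(C)}$ and continuity is also right. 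Two small slips in the write-up, neither a genuine gap: in checking $\ell_0\le p$ on the line you have the cases swapped --- for $t>0$ one has $\ell_0(te_0)=-t<0\le p(te_0)$ trivially, while it is the case $t<0$ (write $t=-s$, $s>0$) that needs $p(-e_0)\ge 1$ and positive homogeneity, since $\ell_0(te_0)=s$ and $p(te_0)=s\,p(-e_0)\ge s$; and the parenthetical computation ``$\ell(e_0)=-\ell_0(-e_0)\cdot(-1)$'' is garbled, though the value $\ell(e_0)=\ell_0(e_0)=-1$ you use afterwards is correct directly from the definition of $\ell_0$. With those labels fixed, the argument is complete.
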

\begin{rem}
The assumption in Theorem~\ref{thm_hb} that $C$ has nonempty interior is only necessary in the infinite dimensional case.
However, even if $A$ in Theorem~\ref{thm_FRduality} is finite dimensional, the assumption in Theorem~\ref{thm_FRduality} that $\Theta_0$ is upper semicontinuous at $u_0$ is still necessary, because this assumption was not only used in applying Hahn-Banach, but also in concluding that $s\neq0$ in \eqref{e258}.
\end{rem}

\section{Existence of Maximizer in Theorem~\ref{thm:GaussEntropy}}\label{app_exist}
%Notation: following the conventions in the literature, we use $\Cov(\mb{Z}|U)$ to denote a random matrix as a function of $U$, and $\Sigma_{\mb{Z}|U}:=\EE[\Cov(\mb{Z}|U)]$.

\begin{prop}
In the non-degenerate case,
for any $\bsigma\succeq 0$,
\begin{align}
\phi(\bsigma):=\sup_{P_{U\mb{X}}\colon\Sigma_{\mb{X}|U}\preceq \mb{\Sigma}}F(P_{U\mb{X}})
\label{e_obj}
\end{align}
is finite and is attained by some $P_{U\mb{X}}$ with $|\mathcal{U}|<\infty$.
\end{prop}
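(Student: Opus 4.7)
The strategy decomposes into three stages: a finiteness bound, a cardinality reduction via Carath\'eodory, and a compactness/upper-semicontinuity argument producing the maximizer.

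First, I would show $\phi(\bsigma)<\infty$. Under $\bsigma_{\mb{X}|U}\preceq\bsigma$, conditional Gaussian maximization combined with Jensen applied to $\log\det$ gives $h(\mb{X}|U)\le\frac{1}{2}\log((2\pi e)^n|\bsigma|)$. By non-degeneracy, each kernel factors as $\mb{Y}_j=\mb{B}_j\mb{X}+\mb{W}_j$ with $\mb{W}_j$ Gaussian of invertible covariance, so $h(\mb{Y}_j|U)\ge h(\mb{Y}_j|U,\mb{X})=h(\mb{W}_j)>-\infty$. Combined with $-c_0\Tr[\mb{M}\bsigma_{\mb{X}|U}]\le 0$, this bounds $F$ from above uniformly on the feasible set.

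Next, setting $g(u):=h(\mb{X}|U=u)-\sum_j c_j h(\mb{Y}_j|U=u)$, both $F(P_{U\mb{X}})$ and the constraint $\bsigma_{\mb{X}|U}\preceq\bsigma$ depend on $P_U$ only through the pair $(\mathbb{E}_U[g(U)],\mathbb{E}_U[\Cov(\mb{X}|U)])$, a point in a $(1+n(n+1)/2)$-dimensional vector space. The Fenchel--Eggleston--Carath\'eodory theorem then lets me replace any feasible $P_U$ by a distribution on at most $D:=n(n+1)/2+2$ atoms while preserving both $F$ and feasibility, reducing the problem to $|\mathcal{U}|\le D$.

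Finally, I would take a maximizing sequence $(P^{(k)})$ with $|\mathcal{U}^{(k)}|\le D$, relabel atoms to $\{1,\dots,D\}$, and, by translation invariance of $F$ and the constraint, assume each $\mathbb{E}[\mb{X}^{(k)}|U^{(k)}=u]=\mb{0}$. Passing to subsequences, the simplex weights $p_u^{(k)}:=P_{U^{(k)}}(\{u\})$ converge to some $p_u^*$; for each $u$ with $p_u^*>0$ the constraint forces $p_u^{(k)}\Cov(\mb{X}|U=u)^{(k)}\preceq\bsigma$, yielding a uniform second-moment bound and hence Chebyshev tightness of $\mu_u^{(k)}:=P_{\mb{X}^{(k)}|U^{(k)}=u}$; extract weak limits $\mu_u^*$ and form $P^*:=\sum_u p_u^*\delta_u\otimes\mu_u^*$. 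Fatou applied to the quadratic forms $(\mb{v}^\top\mb{x})^2$ then yields feasibility $\bsigma_{\mb{X}|U}\preceq\bsigma$ for $P^*$. To conclude $F(P^*)\ge\limsup_k F(P^{(k)})=\phi(\bsigma)$, I would use three semi-continuity facts: (i) upper semi-continuity of $-c_0\Tr[\mb{M}\Cov(\mb{X}|U=u)]$ by Fatou; (ii) weak continuity of $\mu_u\mapsto h(\mb{Y}_j|U=u)$ on tight second-moment-bounded sequences, which holds under non-degeneracy because the output density is a smooth Gaussian convolution of $\mu_u$ (cf.\ \cite[Proposition~18]{geng2014yanlin}); and (iii) upper semi-continuity of $\mu_u\mapsto h(\mb{X}|U=u)$, which I would obtain via the de Bruijn identity $h(\mu)=h(\mu*N_\epsilon)-\int_0^\epsilon\frac{1}{2}I(\mu*N_s)\,ds$---the first term is continuous by (ii), while the Fisher-information integral is non-negative and vanishes uniformly as $\epsilon\downarrow 0$ under the second-moment bound. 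Contributions from atoms with $p_u^*=0$ vanish in the limit by the finiteness bounds of Step~1. The main obstacle will be the unmollified upper semi-continuity (iii): a bare second-moment bound allows mass to escape to infinity and the covariance to ``drop'' in the weak limit, so the Gaussian-mollification/de Bruijn device---or, equivalently, a Donsker--Varadhan argument with $\lambda$-integrable test functions---is essential to supply the missing regularity.
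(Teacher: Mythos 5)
Your overall architecture is the same as the paper's Appendix~B proof: a uniform upper bound on $F$ from the covariance constraint and the noise floor of the non-degenerate kernels, a Carath\'eodory reduction to $|\mathcal{U}|\le D=2+\tfrac{n(n+1)}{2}$ (the paper first passes from general $U$ to finite $\mathcal{U}$ via concavity of $\phi$ in $\bsigma$, which is the same barycenter observation you make), translation-normalization of the conditional means, per-atom tightness and Prokhorov, weak continuity of the smoothed output entropies $h(\mb{Y}_j|U=u)$ via \cite[Proposition~18]{geng2014yanlin}, lower semicontinuity of the trace term by truncation, and a $p\log(1/p)$ bound disposing of atoms whose weights vanish. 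The one place where your argument does not close is precisely the crux, your item~(iii): upper semicontinuity of $\mu\mapsto h(\mu)$ along the weakly convergent conditional laws. The paper isolates this as Lemma~\ref{lem28} and proves it by restricting to a ball $\{\|\mb{x}\|\le r\}$, where weak lower semicontinuity of relative entropy with respect to a \emph{normalized} reference measure applies, and by controlling the outside-the-ball contribution with the Gaussian maximum-entropy bound, which yields a $p_n(r)\log\tfrac{1}{p_n(r)}$ term vanishing uniformly in $n$ by Chebyshev.

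Your de Bruijn justification of (iii) is incorrect as stated: a second-moment bound does not make $\int_0^\epsilon\tfrac{1}{2}I(\mu*N_s)\,{\rm d}s$ vanish uniformly, or even be finite. For $\mu$ equal to (or an absolutely continuous approximation of) a point mass, $I(\mu*N_s)\approx n/s$, so the integral diverges or is of order $\log(1/\delta)$ for variance-$\delta$ approximants; hence no uniform control over $\{\mu:\int\|\mb{x}\|^2{\rm d}\mu\le\sigma^2\}$ is possible, and if the weak limit $\mu^{*}_u$ is singular or has $h(\mu^{*}_u)=-\infty$ (which cannot be excluded a priori), the integral is $+\infty$ for every $\epsilon>0$. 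What de Bruijn and monotonicity legitimately give is only $h(\mu_n)\le h(\mu_n*N_\epsilon)$, whence $\limsup_n h(\mu_n)\le h(\mu^{*}_u*N_\epsilon)$ by your item~(ii); you must then still prove $\lim_{\epsilon\downarrow0}h(\mu^{*}_u*N_\epsilon)=h(\mu^{*}_u)$, and this is exactly the nontrivial semicontinuity content rather than a consequence of the moment bound. It can be supplied, for instance, by writing $-h(\nu)=D(\nu\|\gamma)-\mathbb{E}_\nu\bigl[\tfrac{\|\mb{X}\|^2}{2}\bigr]-\tfrac{n}{2}\log 2\pi$ with $\gamma$ the standard Gaussian and noting that mollification changes the second moment by exactly $n\epsilon$ (so no moment escapes and weak lower semicontinuity of $D(\cdot\|\gamma)$ applies), or more directly by the truncation argument of Lemma~\ref{lem28}; with either repair, your proof coincides with the paper's.
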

\begin{proof}
First, observe that if we let $\tilde{\phi}(\cdot)$ be the supremum in \eqref{e_obj} with the additional restriction that $|\mathcal{U}|<\infty$,
then $\tilde{\phi}(\cdot)$ is a concave function on a convex set of \emph{finite} dimension. Hence Jensen's inequality\footnote{Luckily, $\tilde{\phi}$ is defined on a finite dimensional set of matrices (rather than a possibly infinite dimensional set of distributions $P_{\mb{X}}$). In the infinite dimensional case without further continuity assumptions, Jensen's inequality can fail; see the example in \cite[equation (1.3)]{perlman1974}.\label{f18}} implies that $\phi(\cdot)\le \tilde{\phi}(\cdot)$, while $\tilde{\phi}(\cdot)\le\phi(\cdot)$ is obvious from the definition. Thus $\tilde{\phi}(\cdot)=\phi(\cdot)$.

The set
\begin{align}
\mathcal{C}:=\bigcup_{P_{\mb{X}}}
\{(F_0(P_{\mb{X}}),\Cov(\mb{X}))\}
\end{align}
lies in a linear space of dimension $1+\frac{\dim(\mathcal{X})(\dim(\mathcal{X})+1)}{2}$. By Carath\'{e}odory's theorem \cite[Theorem~17.1]{rockafellar2015convex}, each point in the convex hull of $\mathcal{C}$ is a convex combination of at most $D:=2+\frac{\dim(\mathcal{X})(\dim(\mathcal{X})+1)}{2}$ points in $\mathcal{C}$:
\begin{align}
\bigcup_{P_{U\mb{X}}\colon \mathcal{U}\textrm{ finite} }\{(F(P_{U\mb{X}}),\Sigma_{\mb{X}|U})\}
=\bigcup_{P_{U\mb{X}}\colon|\mathcal{U}|\le D }\{(F(P_{U\mb{X}}),\Sigma_{\mb{X}|U})\}
\end{align}
hence
\begin{align}\label{e_inf}
\phi(\bsigma)
&=\sup_{P_{U\mb{X}}\colon\mathcal{U}= \{1,\dots,D\},\Sigma_{\mb{X}|U}\preceq \mb{\Sigma}}F(P_{U\mb{X}}).
\end{align}
Now suppose $\{P_{U_n\mb{X}_n}\}_{n\ge 1}$ is a sequence
satisfying $\bsigma_{\mb{X}_n|U_n}\preceq \mb{\Sigma}$, $|\mathcal{U}_n|=\{1,\dots,D\}$ for each $n$, and
\begin{align}
\lim_{n\to\infty}F(P_{U_n\mb{X}_n})=\eqref{e_inf}.
\end{align}
We can assume without loss of generality that $P_{U_n}$ converges to some $P_{U^{\star}}$, since otherwise we can pass to one convergent subsequence instead. Moreover, by the translation invariance we can assume without loss of generality that
\begin{align}
\EE[\mb{X}_n|U_n=u]=0
\end{align}
for each $u$ and $n$.

If $u\in\{1,\dots,D\}$ is such that $P_{U^{\star}}(u)>0$, then for $n$ sufficiently large, we have $P_{U_n}>\frac{P_{U^{\star}}(u)}{2}$ and
\begin{align}
\Cov(\mb{X}_n|U_n=u)\preceq \frac{2\mb{\Sigma}}{P_{U^{\star}}(u)}.
\label{e_109}
\end{align}
Thus $\{P_{\mb{X}_n|U_n=u}\}_{n\ge 1}$ is a tight sequence of measures by Chebyshev's inequality, and Prokhorov's theorem \cite{prokhorov1956convergence} guarantees the existence of a subsequence of $\{P_{\mb{X}_n|U_n=u}\}_{n\ge 1}$ converging weakly to some Borel measure $P_{\mb{X}^{\star}_u}$. We might as well assume that $\{P_{\mb{X}_n|U_n=u}\}_{n\ge 1}$ converges to $P_{\mb{X}^{\star}_u}$ since otherwise we pass to a convergent subsequence instead. This argument can applied to each $u\in\{1,\dots,D\}$ satisfying $P_{U^{\star}}(u)>0$ iteratively, hence we can assume the existence of the weak limits
\begin{align}\label{e_wconv}
\lim_{n\to\infty}P_{\mb{X}_n|U_n=u}=P_{\mb{X}^{\star}_u}
\end{align}
for all such $u$. Next, we show that
\begin{align}
\limsup_{n\to\infty}F_0(P_{\mb{X}_n|U_n=u})\le F_0(P_{\mb{X}^{\star}_u})
\label{e_s1}
\end{align}
for all such $u$. Using Lemma~\ref{lem28} below, we obtain
\begin{align}
\limsup_{n\to\infty} h(\mb{X}_n|U_n=u)
\le h(\mb{X}^{\star}_u).
\label{e_c1}
\end{align}
Because of the moment constraint \eqref{e_109}, the differential entropy of the output distribution, which is smoothed by the Gaussian kernel, enjoys weak continuity in the input distribution (see e.g.~\cite[Proposition~18]{geng2014yanlin}, \cite[Theorem~7]{wu2012functional}, or \cite[Theorem~1, Theorem~2]{godavarti2004convergence}):
\begin{align}
\lim_{n\to\infty} h(\mb{Y}_{jn}|U_n=u)
= h(\mb{Y}^{\star}_j|U^{\star}=u)
\quad
\textrm{for each $u\in\{1,\dots,D\}$}
\label{e_c2}
\end{align}
where $(U,\mb{X}_n,\mb{Y}_{jn})\sim P_{\mb{X}_nU_n}Q_{\mb{Y}_j|\mb{X}}$ and $(U^{\star},\mb{X}^{\star},\mb{Y}^{\star}_j)\sim P_{\mb{X}^{\star}U^{\star}}Q_{\mb{Y}_j|\mb{X}}$.
As for the trace term, consider
\begin{align}
\liminf_{n\to\infty}\Tr[\mb{M}\Cov(\mb{X}_n|U_n=u)]
&=\liminf_{n\to\infty}\EE[\Tr[\mb{M}\mb{X}_n
\mb{X}_n^{\top}]|U_n=u]
\\
&=\liminf_{n\to\infty}\sup_{K>0} \EE[\Tr[\mb{M}\mb{X}_n\mb{X}_n^{\top}]\wedge K|U_n=u]
\label{e_mono}
\\
&\ge\sup_{K>0}\liminf_{n\to\infty}
\EE[\Tr[\mb{M}\mb{X}_n\mb{X}_n^{\top}]\wedge K|U_n=u]
\\
&\ge\sup_{K>0}\EE[\Tr[\mb{M}\mb{X}_u^{\star}\mb{X}_u^{{\star}\top}]
\wedge K]
\label{e_wconv1}
\\
&=\EE[\Tr[\mb{M}\mb{X}_u^{\star}\mb{X}_u^{{\star}\top}]]
\label{e_mono2}
\end{align}
where ``$\wedge$'' takes the minimum of two numbers, \eqref{e_mono} and \eqref{e_mono2} are from monotone convergence theorem, and \eqref{e_wconv1} uses the weak convergence \eqref{e_wconv}. The proof of \eqref{e_s1} is finished by combining  \eqref{e_c1} \eqref{e_c2} and \eqref{e_mono2}.

The final step deals with any $u\in\{1,\dots,D\}$ satisfying $P_{U^{\star}}(u)=0$. The variance constraint implies that
\begin{align}
\Cov(\mb{X}_n|U_n=u)\preceq \frac{1}{P_{U_n}(u)}\bsigma,
\end{align}
hence by the fact that Gaussian distribution maximizes the differential entropy under a covariance constraint, we have the bound
\begin{align}
h(\mb{X}_n|U_n=u)
&\le \frac{\dim(\mathcal{X})}{2}\log(2\pi)
+\frac{1}{2}\log e
+\frac{1}{2}
\log\left|
\frac{1}{P_{U_n}(u)}\bsigma\right|
\\
&=\frac{\dim(\mathcal{X})}{2}\log(2\pi)
+\frac{1}{2}\log e
+\frac{1}{2}\log\left|\bsigma\right|
+\frac{\dim(\mathcal{X})}{2}\log \frac{1}{P_{U_n}(u)}.
\end{align}
This combined with the fact that $h({\bf Y}_{jn}|U=u)
\ge h({\bf Y}_{jn}|{\bf X}_n)$ is bounded below, $j=1,\dots,m$ in the non-degenerate case, implies that if $P_{U_n}(u)$ converges to zero, then
\begin{align}
\limsup_{n\to\infty}P_{U_n}(u)F_0(P_{\mb{X}_n|U_n=u})\le0.
\label{e_s2}
\end{align}
Combining \eqref{e_s1} and \eqref{e_s2}, we see
\begin{align}
F(P_{U^{\star}\mb{X}^{\star}})\ge \limsup_{n\to\infty}F(P_{U_n\mb{X}_n}),
\end{align}
where $P_{\mb{X}^{\star}|U^{\star}=u}:=P_{\mb{X}^{\star}_u}$ for each $u=1,\dots,D$.
\end{proof}
\begin{lem}\label{lem28}
Suppose $(P_{{\bf X}_n})$ is a sequence of distributions on $\mathbb{R}^d$ converging weakly to $P_{{\bf X}^{\star}}$, and
\begin{align}
\mathbb{E}[{\bf X}_n{\bf X}_n^{\top}]\preceq \bsigma
\label{e175}
\end{align}
for all $n$. Then
\begin{align}
\limsup_{n\to\infty} h({\bf X}_n) \le h({\bf X}^{\star}).
\end{align}
\end{lem}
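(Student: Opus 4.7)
The plan is to turn the differential entropy into a relative entropy against a very dispersed isotropic Gaussian reference measure, and then invoke the weak lower semicontinuity of relative entropy.

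Concretely, let $\gamma_\tau$ denote the density of $\mathcal{N}(\mb{0},\tau\mb{I}_d)$ on $\mathbb{R}^d$, and recall the elementary identity
\begin{align}
h(\mb{X}) \;=\; -D(P_{\mb{X}}\|\gamma_\tau) + \frac{d}{2}\log(2\pi\tau) + \frac{\mathbb{E}[\|\mb{X}\|^2]}{2\tau},
\end{align}
valid whenever $P_{\mb{X}}$ is absolutely continuous with finite second moment (it is obtained by adding and subtracting $\int f_{\mb{X}}\log\gamma_\tau$ in the definition of $h$). Applied to $\mb{X}_n$ this gives
\begin{align}
\limsup_n h(\mb{X}_n) \;\le\; -\liminf_n D(P_{\mb{X}_n}\|\gamma_\tau) + \frac{d}{2}\log(2\pi\tau) + \frac{\Tr(\bsigma)}{2\tau},
\end{align}
where the moment bound $\mathbb{E}[\|\mb{X}_n\|^2]=\Tr\mathbb{E}[\mb{X}_n\mb{X}_n^\top]\le \Tr(\bsigma)$ is used on the last term. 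By the Donsker--Varadhan representation \eqref{e_dv}, $D(\cdot\|\gamma_\tau)$ is a supremum of weakly continuous affine functionals $P\mapsto P(f)-\log\gamma_\tau(e^f)$ over $f\in C_b(\mathbb{R}^d)$, hence weakly lower semicontinuous, so $\liminf_n D(P_{\mb{X}_n}\|\gamma_\tau)\ge D(P_{\mb{X}^\star}\|\gamma_\tau)$.

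If $P_{\mb{X}^\star}$ is absolutely continuous with respect to Lebesgue measure, then the same identity applies to $\mb{X}^\star$, and substituting yields
\begin{align}
\limsup_n h(\mb{X}_n) \;\le\; h(\mb{X}^\star) + \frac{\Tr(\bsigma) - \mathbb{E}[\|\mb{X}^\star\|^2]}{2\tau}.
\end{align}
A truncation+Fatou argument (applied to the bounded continuous functionals $\mb{x}\mapsto \|\mb{x}\|^2\wedge K$ followed by monotone convergence in $K$) shows $\mathbb{E}[\|\mb{X}^\star\|^2]\le \Tr(\bsigma)<\infty$, so the correction term is a bounded $O(1/\tau)$, and sending $\tau\to\infty$ concludes this case.

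The main subtlety is the singular case: if $P_{\mb{X}^\star}$ is not absolutely continuous then $D(P_{\mb{X}^\star}\|\gamma_\tau)=+\infty$ (since $\gamma_\tau$ is equivalent to Lebesgue measure), so the first display already forces $\limsup_n h(\mb{X}_n)=-\infty$, matching the convention $h(\mb{X}^\star)=-\infty$. The delicate ingredient in the whole plan is that we do not need weak continuity of $\mb{X}\mapsto \mathbb{E}[\|\mb{X}\|^2]$---which can fail because $\|\mb{x}\|^2$ is unbounded---but only Fatou's inequality in the favourable direction together with the uniform upper bound $\Tr(\bsigma)$; the ``missing'' mass at infinity is absorbed into the vanishing $(2\tau)^{-1}$ factor.
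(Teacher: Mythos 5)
Your proof is correct, and it takes a genuinely different (though closely related) route from the paper's. Both arguments ultimately rest on the same two ingredients -- weak lower semicontinuity of relative entropy and the uniform second-moment bound \eqref{e175} -- but they deploy them through different decompositions. The paper localizes in space: it conditions $\mb{X}_n$ on the ball $\{\|\mb{X}_n\|\le r\}$, applies lower semicontinuity to the truncated (normalized) measures to get $\limsup_n h(\mb{X}_n\mid\|\mb{X}_n\|\le r)\le h(\mb{X}^{\star}\mid\|\mb{X}^{\star}\|\le r)$, and uses the moment bound to show that the tail contribution $p_n(r)\,h(\mb{X}_n\mid\|\mb{X}_n\|>r)$ vanishes uniformly as $r\to\infty$, reassembling $h(\mb{X}_n)$ via the chain-rule identity with the binary entropy $h(p_n(r))$. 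You instead change the reference measure: writing $h(\mb{X}_n)=-D(P_{\mb{X}_n}\|\gamma_\tau)+\tfrac{d}{2}\log(2\pi\tau)+\tfrac{\mathbb{E}[\|\mb{X}_n\|^2]}{2\tau}$, applying lower semicontinuity of $D(\cdot\|\gamma_\tau)$ directly to the untruncated laws, and absorbing the discrepancy into a nonnegative $O(\Tr(\bsigma)/\tau)$ correction (using the truncation-plus-monotone-convergence bound $\mathbb{E}[\|\mb{X}^{\star}\|^2]\le\Tr(\bsigma)$) before sending $\tau\to\infty$. Your version buys a cleaner treatment of the degenerate case: when $P_{\mb{X}^{\star}}$ is not absolutely continuous, $D(P_{\mb{X}^{\star}}\|\gamma_\tau)=+\infty$ immediately forces $\limsup_n h(\mb{X}_n)=-\infty$, consistent with the convention $h(\mb{X}^{\star})=-\infty$, and it avoids conditional distributions altogether (in particular any care about radii $r$ charging the boundary sphere under $P_{\mb{X}^{\star}}$, which the paper's conditional weak-convergence step implicitly requires). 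The paper's version, in exchange, works with a single fixed reference (Lebesgue, via scaling) rather than a family $\gamma_\tau$. One cosmetic remark: your identity requires $P_{\mb{X}_n}\ll\lambda$ with finite second moment, but if some $P_{\mb{X}_n}$ is singular then $h(\mb{X}_n)=-\infty$ and the inequality you need holds trivially, so nothing is lost.
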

\begin{rem}
The result fails without the condition \eqref{e175}. Also, related results when the weak convergence is replaced with pointwise convergence of density functions and certain additional constraints was shown in \cite[Theorem~1, Theorem~2]{godavarti2004convergence} (see also the proof of \cite[Theorem~5]{geng2014yanlin}). Those results are not applicable here since the density functions of ${\bf X}_n$ do not converge pointwise. They are applicable for the problems discussed in \cite{geng2014yanlin} because the density functions of the output of the Gaussian random transformation enjoy many nice properties due to the smoothing effect of the ``good kernel''.
\end{rem}
\begin{proof}
It is well known that in metric spaces and for probability measures, the relative entropy is weakly lower semicontinuous (cf.~\cite{verdubook}). This fact and a scaling argument immediately show that, for any $r>0$,
\begin{align}
\limsup_{n\to\infty} h({\bf X}_n|\|\mb{X}_n\|\le r) \le h({\bf X}^{\star}|\|\mb{X}^{\star}\|\le r).
\label{e177}
\end{align}
Let $p_n(r):=\mathbb{P}[\|\mb{X}_n\|> r]$, then \eqref{e175} implies
\begin{align}
\mathbb{E}[\mb{X}\mb{X}^{\top}|\|\mb{X}_n\|>r]\le \frac{1}{p_n(r)}\bsigma.
\end{align}
Therefore, since the Gaussian distribution maximizes differential entropy given a second moment upper bound, we have
\begin{align}
h(\mb{X}_n|\|\mb{X}_n\|>r)\le
\half \log\frac{(2\pi)^d e|\bsigma|}{p_n(r)}.
\end{align}
Since $\lim_{r\to\infty}\sup_n p_n(r)=0$ by \eqref{e175} and Chebyshev's inequality, the above implies that
\begin{align}
\lim_{r\to\infty} \sup_n p_n(r)h(\mb{X}_n|\|\mb{X}_n\|>r)=0.
\label{e180}
\end{align}
The desired result follows from \eqref{e177}, \eqref{e180} and the fact that
\begin{align}
h(\mb{X}_n)=p_n(r)h(\mb{X}_n|\|\mb{X}_n\|>r)
+(1-p_n(r))h(\mb{X}_n|\|\mb{X}_n\|\le r)+h(p_n(r)).
\end{align}
\end{proof}

\bibliographystyle{ieeetrans}
\bibliography{ref_om}
\end{document}